\newtheorem{definition}{Definition}
\newtheorem{theorem}{Theorem}
\newtheorem{lemma}{Lemma}
\newtheorem{corollary}{Corollary}
\newtheorem{proof}{Proof}
\newtheorem{remark}{Remark}
\def\beq{\begin{equation}}
\def\eeq{\end{equation}}
\def\beqa{\begin{eqnarray}}
\def\eeqa{\end{eqnarray}}
\def\beqan{\begin{eqnarray*}}
	\def\eeqan{\end{eqnarray*}}
\def\EE{{\mathbb{E}}}
\def\PP{{\mathbb{P}}}
\def\Wsf{ {  W}}  
\def\Xsf{ { X}}
\def\Usf{ {\sf U}}
\def\Vsf{ {\sf V}}
\def\Xc{\mathcal{X}}
\def\Yc{\mathcal{Y}}
\def\Uc{\mathcal{U}}
\def\Dc{\mathcal{D}}
\def\dbf{\mathbf{d}}
\def\R{\rho}
\def\Rscr{\boldsymbol\varrho}
\def\RX{R_{ach}} 
\def\REX{\bar{R}_{ach}} 
\def\Rstar{R^*(M)}                       
\def\regop{\mathfrak R^{*}}         
\def\Rlb{R^{LB}(M)}                     
\def\RGWstar{{R}_{GW\!\text{-}MR}^*(M)}                       
\def\RGW{{R}_{GW\!\text{-}MR}(M)}                       
\def\RGWstarR{R_{MR}^*(M,\Rscr)}            
\def\RGWRlb{R_{MR}^{LB}(M,\Rscr)}          
\def\Rach{R_{MR}}
\def\regopGW{\mathfrak R^{*}_{MR}}
\def\RstarE{{\bar R}^*(M)}                       
\def\regopE{ \bar{\mathfrak R}^{*}}        
\def\RlbE{\bar{R}^{LB}(M)}                     
\def\RGWstarE{\bar{R}_{GW\!\text{-}MR}^*(M)}                      
 \def\RGWE{\bar{R}_{GW\!\text{-}MR}(M)}                       
\def\RGWstarRE{\bar{R}_{MR}^*(M,\Rscr)}   
\def\RGWRlbE{\bar{R}_{MR}^{LB}(M,\Rscr)}   
\def\RachE{\bar{R}_{MR}}       							
\def\regopGWE{\bar{\mathfrak R}_{MR}^{*}}         
\def\GWregion{\mathfrak S_{GW}} 
\def\GWregionS{{\mathfrak{S}}_{SGW}}
\begin{document}

\bibliographystyle{IEEEtran}


\title{Rate-Memory Trade-Off for Caching and Delivery of Correlated Sources}

\author{Parisa Hassanzadeh, Antonia M. Tulino, Jaime Llorca, Elza Erkip
	\thanks{This work has been supported in part by NSF under grant \#1619129, and in part by NYU WIRELESS.}
\thanks{P. Hassanzadeh  and  E. Erkip are with the ECE Department of New York University, Brooklyn, NY. Email: \{ph990, elza\}@nyu.edu}
\thanks{J. Llorca  and A. Tulino are with Nokia Bell Labs, Holmdel, NJ, USA. Email:  \{jaime.llorca, a.tulino\}@nokia-bell-labs.com}
\thanks{A. Tulino is with the DIETI, University of Naples Federico II, Italy. Email:  antoniamaria.tulino@unina.it}
}

\maketitle

\begin{abstract}
	This paper studies the fundamental limits of content delivery in a cache-aided broadcast network for correlated content generated by a discrete memoryless source with arbitrary joint distribution. Each receiver is equipped with a cache of equal capacity, and the requested files are delivered over a shared error-free broadcast link. A class of achievable correlation-aware schemes based on a two-step source coding approach is proposed. Library files are first compressed, and then cached and delivered using a combination of correlation-unaware multiple-request cache-aided coded multicast schemes. The first step uses Gray-Wyner source coding to represent the library via private descriptions and descriptions that are common to more than one file. The second step then becomes a multiple-request caching problem, where the demand structure is dictated by the configuration of the compressed library, and it is interesting in its own right. 
	The performance of the proposed two-step scheme is evaluated by comparing its achievable rate with a lower bound on the optimal peak and average rate-memory trade-offs in a two-file multiple-receiver network, and in a three-file two-receiver network.
	Specifically, in a network with two files and two receivers, the achievable rate matches the lower bound for a significant memory regime and it is within half of the conditional entropy of files for all other memory values. In the three-file two-receiver network, the two-step strategy achieves the lower bound for large cache capacities, and it is within half of the joint entropy of two of the sources conditioned on the third one for all other cache sizes.

	{ Keywords: Caching, Coded Multicast, Gray-Wyner Network, Distributed Lossless Source Coding, Correlated Content Distribution}
	
\end{abstract}

\section{Introduction}~\label{sec:Introduction}
Prefetching portions of popular content into cache memories distributed throughout the network in order to enable coded multicast transmissions useful for multiple receivers is regarded as a highly effective technique for reducing traffic load 
in wireless networks. 
The fundamental rate-memory trade-off in cache-aided broadcast networks with independent content has been studied in numerous works, including~\cite{maddah14fundamental,maddah14decentralized,ji15order,ji2015caching}, and more recently, with {\em correlated} content in  \cite{timo2016rate,gastpar2017caching,ISTC2016,JSAC2018,ITW2016,hassanzadeh2017rate,yang2017centralized,asilomar2017}.  
Exploiting content correlations becomes particularly critical as we move from static content distribution towards real-time delivery of rapidly changing personalized data (as in news updates, social networks, immersive video, augmented reality, etc.) in which exact content reuse is almost non-existent \cite{Antony17aoi}.
Such correlations are especially
relevant among files of the same category, such as
episodes of a TV show or same-sport recordings, which, even
if personalized, may share common backgrounds and scene
objects, and between multiple versions of dynamic data (e.g., news or social media updates).  

The works in \cite{timo2016rate} and  \cite{gastpar2017caching} consider a single-receiver single-cache multiple-file network with lossy reconstructions, and characterize the trade-offs between rate, cache capacity, and reconstruction distortions. The analysis in \cite{timo2016rate} also considers two receivers and one cache, in which again only local caching gains can be explored. The work in \cite{gastpar2017caching}  models the caching problem in a way that resembles the Gray-Wyner network \cite{gray1974source}.
Our prior works in \cite{ISTC2016,JSAC2018,ITW2016} focus on lossless reconstruction in a setting with an arbitrary number of files and receivers that allows exploring untapped global caching gains under correlated sources. A correlation-aware scheme is proposed in \cite{ISTC2016} and \cite{JSAC2018}, in which content is cached according to both the popularity of files and their correlation with the rest of the library. Cached information is then used as references for the compression of requested files during the delivery phase. 
Alternatively, our work in \cite{ITW2016}  addresses the  content dependency by first compressing the correlated library. A subset of the files, most representative of the library, are selected as references, referred to as {\em I-files}, and the remaining files are inter-compressed with respect to the selected files and referred to as {\em P-files}. This results in a compressed library where each file is made up of an I-file and a P-file, leading to a multiple-request caching problem. Differently from previous multiple-request schemes \cite{ji14groupcast,ji15efficient,ji2015caching,daniel2017optimization,sengupta2017improved}, the demand in  \cite{ITW2016} has a specific structure dictated by the configuration of the resulting compressed library in terms of I-files and P-files.

The  first information-theoretic characterization of the rate-memory trade-off in a cache-aided broadcast network with correlated content was studied in \cite{hassanzadeh2017rate} for the setting of two files and two receivers, each equipped with a cache. This paper introduces an achievable
two-step scheme that exploits content correlations by first 
jointly compressing the library files using the Gray-Wyner network \cite{gray1974source}, and then treating
the compressed content as independent files. It is shown in  \cite{hassanzadeh2017rate}  that
this strategy is optimal for a large memory regime, while
the gap to optimality is quantified for other memory values.

Building on the idea introduced in \cite{ITW2016} and \cite{hassanzadeh2017rate}, concurrent work in \cite{yang2017centralized} 
proposes a caching scheme for a network with arbitrary number of files and receivers, where the library has a specific correlation structure, i.e., each file is composed of multiple independent subfiles that are common among a fix set of files in the library. This, as in \cite{ITW2016}, leads to a  multiple-request caching problem where the demand has a particular configuration dictated by  the specific library structure. 




All previously cited works provide achievable caching schemes without analytically quantifying the gap to optimality, except for the special case of two files and two receivers in \cite{hassanzadeh2017rate}. In this paper, by focusing on lossless reconstructions, we extend the information-theoretic analysis of the broadcast caching network done in \cite{hassanzadeh2017rate} to arbitrary number of receivers, each equipped with its own cache. 
Differently from \cite{ISTC2016,JSAC2018,ITW2016} and \cite{yang2017centralized}, we  characterize the peak and average rate-memory region for files generated by a discrete memoryless source with arbitrary joint distribution, and we propose a class of optimal or near-optimal two-step schemes, for which
preliminary results were presented in   \cite{asilomar2017}. 
Our main contributions are summarized as follows:
\begin{itemize}
	\item We formulate the problem of efficient delivery of multiple correlated files over a broadcast caching network with arbitrary number of receivers via information-theoretic tools.
	
	\item We propose a class of correlation-aware two-step schemes, in which the files are first encoded based on the Gray-Wyner network \cite{gray1974source}, and in the second step, they are cached and delivered through a correlation-unaware multiple-request cache-aided coded multicast scheme. While most of the literature focuses on equal-length files, our multiple-request scheme in the second step is general enough to account for files compressed at different rates. 
	
	\item We discuss the optimality of the proposed two-step scheme in a two-file and $K$-receiver network by characterizing an upper bound on the peak and average rate-memory trade-offs for this class of schemes, and comparing it with a lower bound on the optimal rate-memory trade-offs derived in \cite{ ITLowerBound} using a cut-set argument on the corresponding cache-demand augmented graph \cite{llorca2013network}. 
	We identify a set of operating points in the achievable Gray-Wyner region \cite{gray1974source} for which the proposed two-step scheme is optimal over a range of cache capacities, and  approximates the optimal rate in the two-file network within half of the conditional entropy for all cache sizes. 
	
	\item We then extend the analysis to the three-file scenario since it captures the essence of the multiple-file case  without involving the exponential complexity of the multiple-file Gray-Wyner network. We show that for two receivers the proposed scheme is optimal for high memory sizes and its gap to optimality is less than half of the joint entropy of two of the sources conditioned on the third source for other memory. 
	
	\item As a means to designing an achievable scheme for the second step of the proposed Gray-Wyner-based methodology, we also present a novel near-optimal multiple-request caching scheme for a network with two receivers and three independent files, where each receiver requests two of the files. The proposed scheme uses {\em coded} cache placement to achieve optimality for cache capacities up to half of the library size.

\end{itemize}

The paper is organized as follows. Sec. \ref{sec:Problem Formulation} presents the information-theoretic problem formulation. In Sec. \ref{sec: GW-CACM scheme}, we introduce a class of two-step schemes based on the Gray-Wyner network.  The multiple-request caching problem, arising from the Gray-Wyner network, 
is discussed in Sec.~\ref{sec: general MR}, and a multiple-request scheme for the two-file network is proposed and analyzed in detail in Sec.~\ref{sec: two files GW-CACM}. Sec.~\ref{sec:Order Optimality} combines the multiple-request scheme proposed in Sec.~\ref{sec: two files GW-CACM} with the Gray-Wyner encoding step, and analyzes the optimality of the overall two-step scheme with respect to a lower bound on the rate-memory trade-off in a two-file network.  
Extensions to a three-file network are analyzed in Sec.~\ref{sec:three files}. After numerically analyzing the rate-memory trade-off using an illustrative example, the paper is concluded in Sec.~\ref{sec:Conclusions}. 


\section{Network Model and Problem Formulation}\label{sec:Problem Formulation}
We consider a broadcast caching network composed of one sender (e.g., base station)
with access to a library of $N$ uniformly popular files generated by an $N$-component discrete memoryless source (N-DMS). The N-DMS model $\Big(\Xc_1 \times\dots\times\Xc_N, \, p(x_1,\dots,x_N)\Big)$ consists of $N$ finite alphabets $\Xc_1,\dots,\Xc_N$ and a joint pmf  $p(x_1,\dots,x_N)$ over $\Xc_1 \times \dots\times\Xc_N$. 
For a block length $F$, library file $j \in \{1,\dots,N\}$ is represented by a sequence $\Xsf_j^F = (\Xsf_{j1},\dots,\Xsf_{jF})$, where $\Xsf_j^F\in \Xc^F_j$, and $(\Xsf_{1i},\dots,\Xsf_{Ni})$, $i\in\{1,\dots,F\}$ is generated i.i.d. according distribution $ p(x_1,\dots,x_N)$. 
The sender communicates with $K$ receivers, $\{r_1,\, r_2,\, \dots,\, r_K\}$, over a shared error-free broadcast link. Each receiver is equipped with a cache of size $MF$ bits,
where $M$ denotes the (normalized) cache capacity.

We assume that the system operates in two phases:  
a {\em caching phase} and a {\em delivery phase}. During the {caching phase}, which takes place at off-peak hours when network resources are abundant, receiver caches are filled with functions of the library files, such that during the {delivery phase}, when receiver demands are revealed and resources are limited, the sender broadcasts the shortest possible codeword that allows each receiver to losslessy recover its requested file.
We refer to the overall scheme, in which functions of the content are prefetched into receiver local caches, and are later used to reduce the delivery rate by transmitting coded versions of the requested files, as a {\em cache-aided coded multicast scheme} (CACM).
A CACM scheme consists of the following components:
\begin{itemize}
	\item {\textbf{Cache Encoder:}} During the caching phase, the cache encoder designs the cache content of receiver $r_k$ using a mapping 
	$$f^{\mathfrak C}_{r_k}:    \Xc_1^F\times\dots\times\Xc_N^F \rightarrow [1: 2^{MF}).$$
	The cache configuration of  receiver $r_k$  is denoted  by $Z_{r_k} = f^{\mathfrak  C}_{r_k}\Big(\{X_1^F,\dots, X_N^F\}\Big)$.

	\item{\textbf{Multicast Encoder:}} During the delivery phase, each receiver requests a file from the library. The demand realization, denoted by $\dbf = (d_{r_1},d_{r_2},\dots,d_{r_K}) \in \Dc \equiv \{1,\dots,N\}^K$, is revealed to the sender, where $d_{r_k} \in\{1,\dots,N\}$ denotes the index of the file requested by receiver $r_k$. The sender  uses a fixed-to-variable mapping 
	$$f^{\mathfrak M}:{\mathcal D} \times [1: 2^{MF})^K  \times \Xc_1^F\times\dots\times \Xc_N^F  \rightarrow \Yc^\star \;\footnote{We use $\star$ to indicate variable length.}$$ to generate and transmit a multicast codeword $Y_{\dbf} = f^{\mathfrak  M}\Big(\dbf,\{Z_{r_1},\dots,Z_{r_K}\}, \{X_1^F,\dots,X_N^F\}\Big)$
	over the shared link.

	\item{\textbf{Multicast Decoders:}} Each receiver $r_k$ uses a mapping 
	$$g^{\mathfrak  M}_{r_k} : \Dc \times \Yc^\star \times [1: 2^{MF}) \rightarrow \Xc_{d_{r_k}}^F$$
	to recover its requested file, $X_{d_{r_k}}^F$, using the received multicast codeword and its cache content as $\widehat{X}_{d_{r_k}}^F = g^{\mathfrak  M}_{r_k} (\dbf, Y_{\dbf},Z_{{r_k}})$.
	
\end{itemize}

The worst-case probability of error of a CACM scheme is given by
\begin{align} \label{perr}
	& P_e^{(F)} = \max_{\dbf\in\mathcal D} \; \PP\left( \bigcup\limits_{r_k\in\{1,\dots,K\}} \Big\{  \widehat{X}_{{d_{r_k}}}^F  \neq X_{{d_{r_k}}}^F \Big\}  \right).
\end{align}

In this paper, we consider two performance criteria: 
\begin{itemize}
	\item[$i$)] The peak multicast rate, $R^{(F)}$, which corresponds to the worst-case demand,
	\begin{equation} \label{peak-rate}
		R^{(F)} =    \max_{\dbf \in \mathcal D} \; \frac{\EE[L(Y_{\dbf})]}{F}, 
	\end{equation}
	where $L(Y)$ denotes the length (in bits) of the multicast codeword $Y$, and the expectation is over the source distribution.
	
	\item[$ii$)] The average multicast rate, $\bar R^{(F)}$, over all possible demands
	\begin{equation} \label{average-rate}
		\bar R^{(F)} =   \frac{\EE[L(Y_{\dbf})]}{F},
	\end{equation}
	where the expectation is over demands and source distribution.

\end{itemize}

\begin{definition} \label{def:achievable-peak}
	A peak rate-memory pair $(R,M)$ is {\em achievable} if there exists a sequence of CACM schemes for cache capacity $M$ and
	increasing file size $F$, such that $\lim_{F \rightarrow \infty} P_e^{(F)} = 0 \notag$, and $\limsup_{F \rightarrow \infty}
	R^{(F)} \leq  R$. 
\end{definition}

\begin{definition} \label{def:infimum-peak}
	The peak rate-memory region $\regop$ is the closure of the set of achievable peak rate-memory pairs $(R,M)$ and the optimal peak rate-memory function, $\Rstar$, is
	$$\Rstar= \inf \{R:  (R,M) \in  \regop\}.$$
\end{definition}

\begin{definition} \label{def:achievable-average} 
	An average rate-memory pair $(R,M)$ is {\em achievable} if there exists a sequence of CACM schemes for cache capacity $M$ and increasing file size $F$, such that $\lim_{F \rightarrow \infty} P_e^{(F)} = 0 \notag$, and $\limsup_{F \rightarrow \infty} \bar R^{(F)} \leq  R$.
\end{definition}

\begin{definition} \label{def:infimum-average}
	The average rate-memory region $\regopE$ is the closure of the set of achievable average rate-memory pairs $(R,M)$ and the optimal average rate-memory function, $\RstarE$, is
	$$\RstarE= \inf \{R:  (R,M) \in  \regopE\}. $$
\end{definition}

\section{Gray-Wyner-Network-Based Two-Step Achievable  Schemes}\label{sec: GW-CACM scheme}

In this section, we propose a class of CACM schemes, based on a two-step lossless source coding setup, as depicted in Fig.~\ref{fig:caching schemes}.  
The first step involves lossless Gray-Wyner source coding \cite{gray1974source}, and the second step is a Multiple-Request CACM scheme. 

\begin{figure}[H] \centering
	\includegraphics[width=0.8\linewidth]{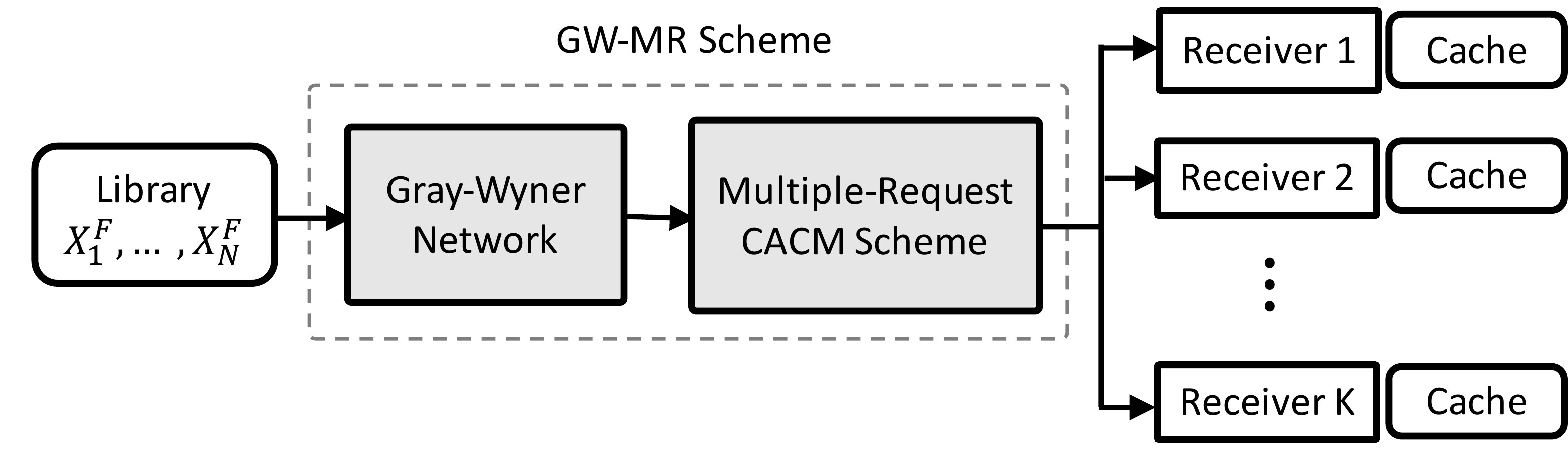}
	\caption{
		Two-step correlation-aware scheme, composed of a Gray-Wyner source coding step followed by a multiple-request CACM step.} 
	\label{fig:caching schemes}
\end{figure}

The proposed two-step scheme exploits the correlation among the library content by first compressing the library using the Gray-Wyner network, described in detail in Sec.~\ref{subsec: GW network}, and depicted for two and three files in Fig.~\ref{fig:GW}. 
The  Gray-Wyner network represents the library using $N$ private descriptions and $\binom{N}{\ell}$ descriptions that are common to $\ell$ files for $\ell\in\{2,\dots,N\}$, thereby transforming the caching problem with correlated content and receivers requesting only one file, into a caching problem with a larger number of files 
where receivers require multiple descriptions. 
Note that this multiple-request caching problem has a specific class of demands. We assume that the multiple-request CACM scheme in the second step is agnostic to the correlation among the content generated by the Gray-Wyner network, i.e., the second step is correlation-unaware. The two steps are jointly designed to optimize the performance of the overall scheme, which is referred to as the {\em Gray-Wyner Multiple-Request CACM}  (GW-MR) scheme.  Before formally describing the GW-MR scheme, we briefly review the Gray-Wyner Network.

\subsection{Gray-Wyner Network}\label{subsec: GW network}
The Gray-Wyner network was first introduced for two files in \cite{gray1974source}, in which a 2-DMS $(\Xsf_1, \Xsf_2 )$ is represented by three descriptions $ \{\Wsf_0, \Wsf_1, \Wsf_2\}$, where $\Wsf_0 \in [1:2^{F\R_0})$ is referred to as the common description, and $\Wsf_1 \in [1:2^{F\R_1})$ and $\Wsf_2 \in [1:2^{F\R_2})$ are the corresponding private descriptions as depicted in Fig.~\ref{fig:GW}(a). The descriptions are such that file $X_1^F$ can be losslessly recovered from descriptions $(\Wsf_0, \Wsf_1)$, and file $X_2^F$ can be losslessly recovered from descriptions $(\Wsf_0, \Wsf_2)$, both asymptotically, as block length $F \rightarrow \infty$. In \cite{gray1974source}, Gray and Wyner fully characterized the rate region for lossless reconstruction of both files, which is restated in the following Theorem.
\begin{theorem}[Gray-Wyner Rate Region]\label{thm:GW region}
	The optimal rate region for the two-file Gray-Wyner network, $\GWregion$, is 
	\begin{align}\label{eq:gwregion2}
		\GWregion = cl\bigg\{ \bigcup \Big\{  (\R_0,\, \R_1,\, \R_2) : \;  
		\R_0   \geq   I(\Xsf_1, \Xsf_2;\Usf),  \; 
		\R_1   \geq  H(\Xsf_1|\Usf),    \;
		\R_2  \geq   H(\Xsf_2|\Usf) \Big\} \bigg\},
	\end{align}
	where $cl\{S\}$ denotes the closure of set $S$, and the union is over all choices of $\Usf$  for some $p(u|x_1,x_2)$ with  $|\Uc| \leq |\Xc_1||\Xc_2| + 2$.
\end{theorem}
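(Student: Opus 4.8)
The plan is to establish the region in~\eqref{eq:gwregion2} by a standard random-coding achievability argument together with a single-letterized converse, in the spirit of Gray and Wyner's original development.

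For \emph{achievability}, fix a test channel $p(u|x_1,x_2)$ and rates $\R_0 > I(\Xsf_1,\Xsf_2;\Usf)$, $\R_1 > H(\Xsf_1|\Usf)$, $\R_2 > H(\Xsf_2|\Usf)$. I would draw a codebook of $2^{F\R_0}$ sequences $u^F$ i.i.d.\ from the induced marginal $p(u)$ and reveal it to all nodes. Given a source realization $(x_1^F,x_2^F)$, the encoder sets the common description $\Wsf_0$ to the index of some codeword $u^F$ jointly typical with $(x_1^F,x_2^F)$; by the covering lemma such a codeword exists with probability approaching $1$ since $\R_0 > I(\Xsf_1,\Xsf_2;\Usf)$. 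Conditioned on the selected $u^F$, the encoder then describes $x_1^F$ into $\Wsf_1$ and $x_2^F$ into $\Wsf_2$ by lossless conditional-typicality encoding; because $u^F$ is recoverable at both the encoder and decoder $j$ from $\Wsf_0$, the rates $\R_1 > H(\Xsf_1|\Usf)$ and $\R_2 > H(\Xsf_2|\Usf)$ suffice for the private descriptions. Decoder $j$ reconstructs $u^F$ from $\Wsf_0$ and then $x_j^F$ from $(\Wsf_0,\Wsf_j)$; a union bound over the covering-failure and conditional-decoding-error events yields vanishing reconstruction error, and the boundary of the region follows by taking the closure.

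For the \emph{converse}, take any sequence of codes with vanishing error achieving $(\R_0,\R_1,\R_2)$. Starting from $F\R_0 \ge H(\Wsf_0) \ge I(\Wsf_0;\Xsf_1^F,\Xsf_2^F)$, $F\R_1 \ge H(\Wsf_1\mid\Wsf_0) \ge I(\Wsf_1;\Xsf_1^F\mid\Wsf_0)$ and $F\R_2 \ge H(\Wsf_2\mid\Wsf_0) \ge I(\Wsf_2;\Xsf_2^F\mid\Wsf_0)$, and applying Fano's inequality at each decoder to get $H(\Xsf_j^F\mid\Wsf_0,\Wsf_j)\le F\epsilon_F$ with $\epsilon_F\to 0$, I would take the auxiliary $\Usf_i \eqdef (\Wsf_0,\Xsf_1^{i-1},\Xsf_2^{i-1})$. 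Because the source is memoryless, the chain-rule expansion of the common rate is exact, $I(\Wsf_0;\Xsf_1^F,\Xsf_2^F)=\sum_{i=1}^F I(\Xsf_{1i},\Xsf_{2i};\Usf_i)$, while expanding $H(\Xsf_1^F\mid\Wsf_0)$ term by term and dropping the extra conditioning on the $\Xsf_2$-past gives $I(\Wsf_1;\Xsf_1^F\mid\Wsf_0)\ge \sum_i H(\Xsf_{1i}\mid\Usf_i)-F\epsilon_F$, and symmetrically for $\Wsf_2$. Introducing a time-sharing variable $Q$ uniform on $\{1,\dots,F\}$ and independent of the source, and setting $\Usf\eqdef(\Usf_Q,Q)$, $\Xsf_1\eqdef\Xsf_{1Q}$, $\Xsf_2\eqdef\Xsf_{2Q}$ (whose joint law is precisely $p(x_1,x_2)$), turns the three bounds into $\R_0\ge I(\Xsf_1,\Xsf_2;\Usf)$, $\R_1\ge H(\Xsf_1\mid\Usf)-\epsilon_F$ and $\R_2\ge H(\Xsf_2\mid\Usf)-\epsilon_F$; letting $F\to\infty$ completes the converse. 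The cardinality bound $|\Uc|\le|\Xc_1||\Xc_2|+2$ then follows from the Fenchel--Eggleston--Carath\'eodory support lemma, applied so that $\Usf$ preserves the $|\Xc_1||\Xc_2|-1$ free parameters of $p(x_1,x_2)$ together with the three functionals $I(\Xsf_1,\Xsf_2;\Usf)$, $H(\Xsf_1\mid\Usf)$ and $H(\Xsf_2\mid\Usf)$.

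Since this is Gray and Wyner's theorem, no essentially new difficulty arises; the one point demanding care is the converse's choice of auxiliary, which must carry the \emph{joint} past $(\Xsf_1^{i-1},\Xsf_2^{i-1})$ so that a single $\Usf_i$ simultaneously single-letterizes all three inequalities, and one must check that discarding conditioning only ever relaxes the two private-rate bounds while the i.i.d.\ assumption keeps the common-rate expansion tight. Enumerating the functionals to be preserved for the cardinality bound is routine but must be done carefully.
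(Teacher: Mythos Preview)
The paper does not prove this theorem; it is restated from Gray and Wyner's original work \cite{gray1974source} with the preamble ``In \cite{gray1974source}, Gray and Wyner fully characterized the rate region\dots which is restated in the following Theorem,'' and no proof is given. Your sketch is correct and is essentially the standard argument: the achievability via covering plus conditional lossless coding and the converse via the auxiliary $\Usf_i=(\Wsf_0,\Xsf_1^{i-1},\Xsf_2^{i-1})$ with time sharing are exactly the textbook route, and your cardinality count ($|\Xc_1||\Xc_2|-1$ parameters for the joint pmf plus the three conditional-entropy functionals) is right. One minor wording quibble: in the converse you say you are ``dropping the extra conditioning on the $\Xsf_2$-past,'' but what you are actually doing is \emph{adding} that conditioning to pass from $H(\Xsf_{1i}\mid \Wsf_0,\Xsf_1^{i-1})$ down to $H(\Xsf_{1i}\mid \Usf_i)$; the inequality you state is in the correct direction, only the verbal description is inverted.
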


\begin{figure}
	\begin{subfigure}{0.5\linewidth}\centering 
		\includegraphics[width=\linewidth]{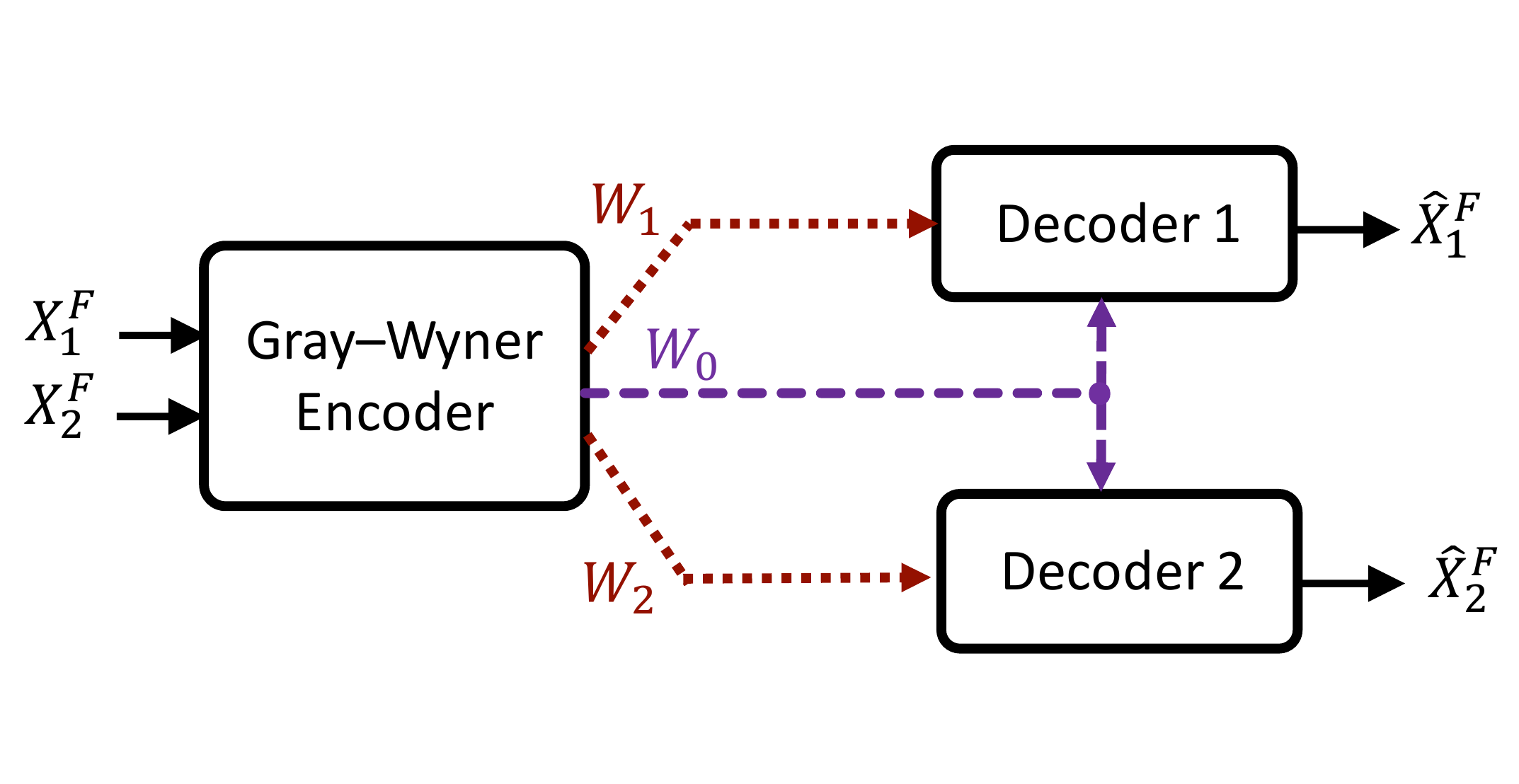} 
		\subcaption{}
	\end{subfigure}
	\begin{subfigure}{0.5\linewidth}\centering 
		\includegraphics[width=\linewidth]{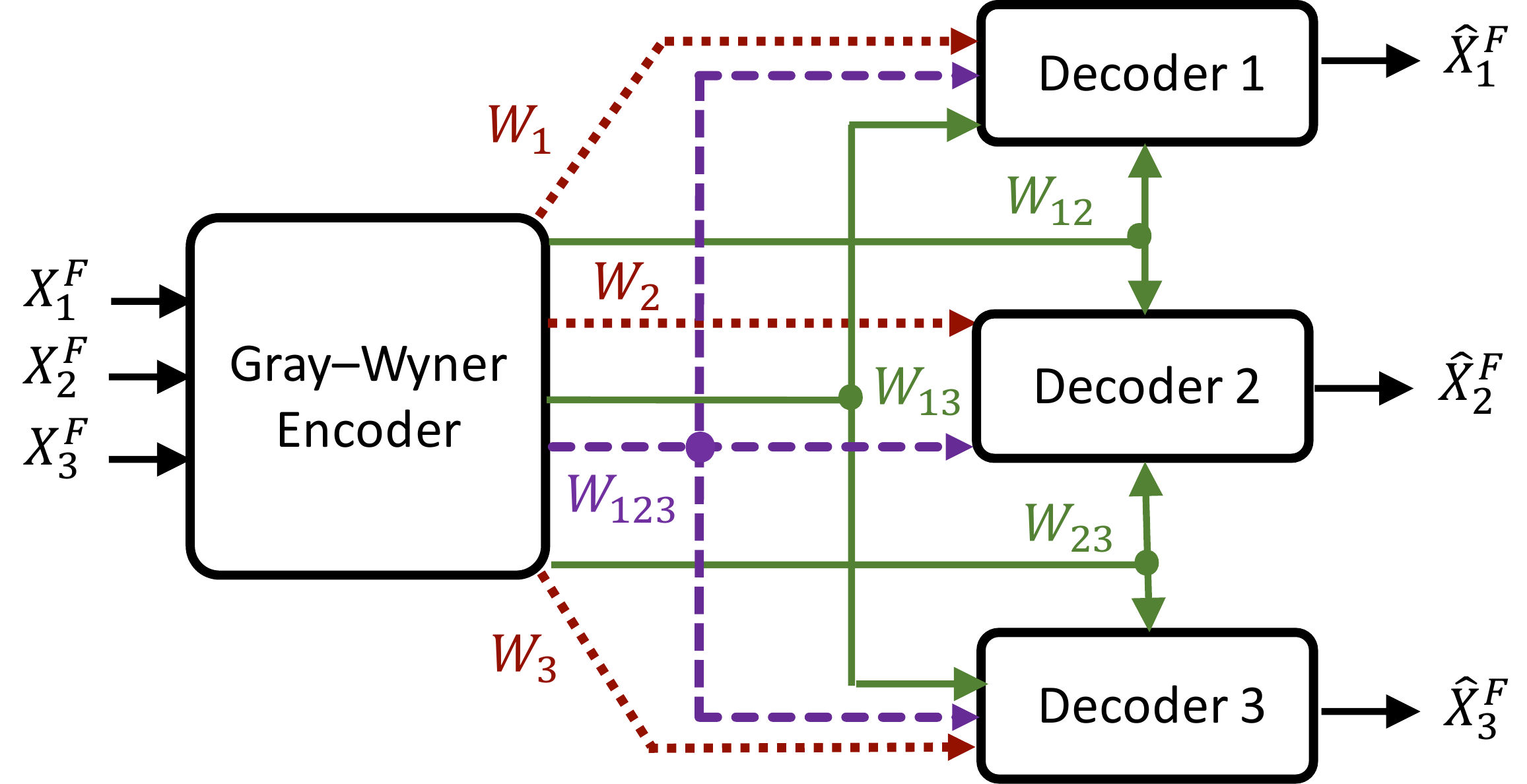}
		\subcaption{}
	\end{subfigure}
	\caption{Gray-Wyner network for (a) two files, and (b) three files.}
	\label{fig:GW}
\end{figure}

The Gray-Wyner network can be extended to $N$ files such that the Gray-Wyner encoder observes a $N$-DMS $(\Xsf_1,\dots,\Xsf_N)$ and communicates $\Xsf_i$ to decoder $i\in\{1,\, \dots\, N\}$. The encoder is connected to the decoders through $2^{N-1}$ error-free links, such that there is a link connecting the encoder to any subset $s \subseteq\{1,\dots,N\}$ of the decoders. In particular, there is one common
link connecting the encoder to all $N$ decoders, there are $\binom{N}{\ell}$ links common to any $\ell\in\{2,\dots, N-1\}$ of the decoders, and finally, there are $N$ private links connecting the encoder to each
decoder. For any nonempty set $ \mathcal A \subseteq\{1,\dots,N\}$, description $\Wsf_{\mathcal A} \in [1:2^{F\R_{\mathcal A}})$ is communicated to all decoders $i \in { \mathcal A}$. The Gray-Wyner rate region, $\GWregion$, is represented by the set of all rate-tuples  $\Rscr$ for the $2^N-1$ descriptions, 
for which file $X_i^F$, $i\in\{1,\dots,N\}$ can be losslessly reconstructed from the descriptions 
$$\Big\{   \Wsf_s: s  \subseteq \{1,\dots,N\},\; i \in s \Big\},$$
asymptotically, as $F \rightarrow \infty$. In general, $N$ files are encoded into $2^N-1$ descriptions, such that: i) $N$ of the descriptions contain information exclusive to only one file, and ii) the remaining descriptions comprise information common to more than one file.
In this paper we will study in detail the three-file Gray-Wyner network depicted in Fig.~\ref{fig:GW}. The encoded descriptions are such that\footnote{With an abuse of notation, the subscripts of $\Wsf$ and $\R$ denote sets.}
\begin{itemize}
	\item	$\Wsf_{123}$ $\in [1:2^{F\R_{123}})$,  
	\item	$\Wsf_{12} \in [1:2^{F\R_{12}})$, $\Wsf_{13} \in [1:2^{F\R_{13}})$,  $\Wsf_{23} \in [1:2^{F\R_{23}})$,  
	\item	$\Wsf_1 \in [1:2^{F\R_1})$, $\Wsf_2 \in [1:2^{F\R_2})$, $\Wsf_3 \in [1:2^{F\R_3})$, 
\end{itemize}
and the Gray-Wyner rate region is represented by the set of all rate-tuples
\begin{align}
	\Rscr =\Big(\R_{123},\R_{12},\R_{13},\R_{23},\R_{1},\R_{2},\R_{3}\Big),\label{eq:ratetuple}
\end{align}
for which file $X_i^F$, $i\in\{1,2,3\}$,  can be losslessly reconstructed from the descriptions $\Big\{\Wsf_{123},$ $\Wsf_{ij},$ $\Wsf_{ik},$ $\Wsf_i \Big\}$ with $j,k\in\{1,2,3\}\setminus \{i\}$  asymptotically, as $F \rightarrow \infty$. 

While the rate region of the $N$-file Gray-Wyner network has been studied in a number of papers \cite{tandon2011multi,liu2010common,viswanatha2012subset}, 
the optimal characterization for general sources is not known.

\subsection{Gray-Wyner Multiple-Request  CACM Scheme}\label{subsec: general GW-CACM} 
The proposed class of two-step schemes consists of:
\begin{itemize}
	\item {\bf Gray-Wyner Encoder}: Given the library $\{X_1^{F},$ $\dots,$ $X_N^{F}\}$,
	the Gray-Wyner encoder at the sender computes descriptions $\{W_s :\, s \in \mathcal S\}$, where $\mathcal S$ is the set of all nonempty subsets of $\{1,\dots,N\}$, using a mapping
	$${f}^{GW}: \Xc_1^{F}\times\dots\times\Xc_N^{F} \rightarrow \prod\limits_{s\in\mathcal S} \Big[1:2^{F\R_{\mathcal S}}\Big)  ,$$
	for $\Rscr\in\GWregion$.
	
	\item {\bf Multiple-Request Cache Encoder}: Given the compressed descriptions, the correlation-unaware cache
	encoder at the sender computes the cache content at receiver $r_k\in \{1,\dots,K\}$, as 
	$Z_{r_k} =   f_{r_k}^{{\mathfrak C_{MR}}}   \Big (\{W_s:\, s\in\mathcal S \}\Big)$.
	
	\item {\bf Multiple-Request Multicast Encoder}: For any demand realization $\dbf\in\mathcal D$ revealed to the sender, the correlation-unaware  
	multicast encoder generates and transmits the multicast codeword
	$Y_\dbf = f^{{\mathfrak M_{MR}}} \Big(\dbf,\{Z_{r_1},\dots,Z_{r_K}\}, \{W_s:\, s\in\mathcal S \}\Big)$.
	
	\item {\bf Multiple-Request Multicast Decoder}: Receiver $r_k$ decodes the descriptions corresponding to its requested file as
	$$\Big\{ \widehat{\Wsf}_{s}: s\in \mathcal S_{d_{r_k}}  \Big\} = g^{{\mathfrak M_{MR}}}_{r_k}\Big(\dbf, Y_\dbf, Z_{r_K}\Big),$$
	where $  \mathcal S_{d_{r_k}}  \triangleq \Big\{ s\in\mathcal S:\, d_{r_k} \in \, s  \Big\}$.
	\item {\bf Gray-Wyner Decoder}: Receiver $r_k$ decodes its requested file using the descriptions recovered by the multicast decoder, as $\widehat{X}_{d_{r_k}}^{F} = g^{GW}_{r_k}\Big(  \Big\{ \widehat{\Wsf}_{s}: s  \in \mathcal S_{d_{r_k}} \Big\}\Big)$, via the Gray-Wyner decoder 
	$$g^{GW}_{r_k}:   \prod\limits_{ s\in \mathcal S_{d_{r_k}} } [1:2^{F\R_{s} })  \rightarrow \Xc_{d_{r_k}}^F.$$
\end{itemize}

The Gray-Wyner encoder and decoder correspond to the first step, namely the encoder and decoder of the Gray-Wyner network, and the multiple-request cache encoder, multiple-request multicast encoder, and multiple-request multicast  decoder comprise the second multiple-request CACM (MR) step of the proposed two-step scheme. 

Note that the performance of the class of schemes described above depends on the operating point of the  Gray-Wyner network $\Rscr\in\GWregion$. For a given $\Rscr$, the performance of the overall two-step scheme is dictated by the  peak and average multicast rates of the MR scheme, 
which
similar to \eqref{peak-rate} and \eqref{average-rate}, 
are defined as
\begin{align} 
	& R_{MR}^{(F)}(\Rscr) =      \max_{\dbf\in\mathcal D} \;  \frac{\EE[L(Y_{\dbf})]}{F}, \label{eq:ach peak MR}\\
	& {\bar R}_{MR}^{(F)}(\Rscr) =   \frac{\EE[L(Y_{\dbf})]}{F}, \label{average-rate-GW}
\end{align}
respectively.

Furthermore, the worst-case probability of error of the class of two-step schemes depends on the probability of error of the Gray-Wyner source coding step and the probability of error of the multiple-request CACM step. Since $\Rscr \in \GWregion$,  and $ \Big\{ {\Wsf}_s: s \in \mathcal S_{d_{r_k}}  \Big\} $ is a Gray-Wyner description of $X_{d_{r_k}}^F$ with $d_{r_k}\in\{1,\dots,N \}$, it is guaranteed that Gray-Wyner decoding is asymptotically lossless with $F$.  Hence, the probability of error of a two-step scheme is approximately upper bounded by the probability of error of the MR scheme, given by
\begin{align}
	P_{e,MR}^{(F)}  = \max_{\dbf\in\mathcal D} \;  
	\PP\left( \bigcup\limits_{r_k \in\{1,\dots,K\}}  \Big\{ \widehat{\Wsf}_s \neq  \Wsf_s ,\, \forall s \in \mathcal S_{d_{r_k}}  \Big\}  \right) . \notag
\end{align}


\begin{definition} \label{def:achievable-peak GW}
	For a given rate-tuple $\Rscr\in\GWregion$, an MR peak rate-memory pair $(R,M)$ is {\em achievable} if there exists a sequence of MR schemes with rate ${R}_{MR}^{(F)}(\Rscr)$, for cache capacity $M$ and
	increasing file size $F$, such that $\lim_{F \rightarrow \infty}  P_{e,MR}^{(F)}  = 0 \notag$, and 
	$\lim\sup_{F\rightarrow \infty} R^{(F)}_{MR}(\Rscr) \leq R$. 
\end{definition}

\begin{definition} \label{def:infimum-peak GW}
	For a given rate-tuple $\Rscr \in\GWregion$, the MR peak rate-memory region $\regopGW(\Rscr)$ is the closure of the set of achievable MR peak rate-memory pairs $(R,M)$, and the MR peak rate-memory function $\RGWstarR$ is defined as
	\begin{align}
		&\RGWstarR= \inf \{R:  (R,M) \in  \regopGW(\Rscr)\}. \notag
	\end{align}
\end{definition}

In the class of two-step schemes, we refer to the scheme operating at the rate-tuple $\Rscr\in\GWregion$ 
that minimizes the MR peak rate-memory function 
as the {\em GW-MR scheme}. The peak rate-memory pair achieved by this scheme is the GW-MR peak rate-memory function defined below.
\begin{definition} \label{def:GW-CACM peak} 
	The GW-MR peak rate-memory function $\RGWstar$  is given by
	\begin{align}
		& \RGWstar = \inf\{   \RGWstarR:\,   \Rscr\in\GWregion\}. \notag
	\end{align}
\end{definition}

In line with Definitions~\ref{def:achievable-peak GW}-\ref{def:GW-CACM peak}, for the average rate criterion we have the following.
\begin{definition} \label{def:achievable-average GW} 
	For a given rate-tuple $\Rscr\in\GWregion$, an MR average rate-memory pair $(R,M)$ is {\em achievable}   if there exists a sequence of MR schemes, with rate ${\bar R}_{MR}^{(F)}(\Rscr)$, for cache capacity $M$ and increasing file size $F$, such that $\lim_{F \rightarrow \infty} P_{e,MR}^{(F)}  = 0 \notag$, and  $\lim\sup_{F\rightarrow \infty} {\bar R}_{MR}^{(F)}(\Rscr) \leq R$.
\end{definition}

\begin{definition} \label{def:infimum-average GW}
	For a given rate-tuple $\Rscr\in\GWregion$, the MR average rate-memory region $\regopGWE(\Rscr)$ is the closure of the set of achievable MR average rate-memory pairs $(R,M)$, and the MR  average rate-memory function $\RGWstarRE$ is defined as
	\begin{align}
		&\RGWstarRE= \inf \{R:  (R,M) \in  \regopGWE(\Rscr)\}. \notag
	\end{align}
\end{definition}

\begin{definition} \label{def:GW-CACM average} 
	The GW-MR average rate-memory function $\RGWstarE$ is given by
	\begin{align}
		& \RGWstarE = \inf\{   \RGWstarRE:\,   \Rscr\in\GWregion\}. \notag 
	\end{align}
\end{definition}


In the remainder of this paper, we first present the MR scheme in the second step for a general setting (Sec. \ref{sec: general MR}), and then describe and analyze the performance of the overall GW-MR scheme for the case of two files and $K$ receivers (Secs. \ref{sec: two files GW-CACM} and \ref{sec:Order Optimality}), and for the case of three files and two receivers (Sec.~\ref{sec:three files}).


\section{Multiple-Request CACM Scheme} \label{sec: general MR}
In this section, we focus on the second step of the GW-MR scheme depicted in Fig. 1, namely the  
MR scheme when the Gray-Wyner network operates at $\Rscr \in \GWregion$. 

Recall that the Gray-Wyner network converts the $N$-file library into $2^{N}-1$ descriptions, each required for the lossless reconstruction of a set of the files in the original library. The MR scheme arranges the descriptions generated by the Gray-Wyner encoder into $N$ groups, $L_1,\dots,L_N$, referred to as {\em sublibraries}. Sublibrary $L_\ell = \Big\{W_{s}:   s\subseteq \{1,\dots,N\}, |s| =\ell  \Big\}$ contains the descriptions that are common to exactly $\ell$ files. 
We refer to sublibrary $L_{N}=\{W_{12\dots N} \}$ as the {\em common-to-all} sublibrary, and to $L_1=\{W_1,\dots,W_N \}$, which contains all the private descriptions, as the {\em private} sublibrary. 
The MR scheme accounts for populating the receiver caches with content from sublibraries $L_1,\dots L_N$ and 
serving the demand realizations placed in the original library, which translate into a new class of demands from the compressed sublibraries. 
More specifically, each receiver demand corresponds to a demand for a set of descriptions from each of the sublibraries (hence the term multiple-request), 
such that the original library file $ X_{d_{r_k}}$ requested by receiver $r_k$ maps to descriptions $\Big\{ W_s\in L_\ell: d_{r_k}\in s   \Big\}$ from sublibrary $L_\ell$, $\ell \in\{1, \ldots N\}$.  
Even though receivers request single files from the original library independently and according to a uniform demand distribution, 
the Gray-Wyner encoding process leads to a non-uniform multiple-request demand for descriptions (files) from the compressed sublibraries.

Our proposed MR scheme treats each sublibrary independently during both caching and delivery phases. 
Specifically, the descriptions from each sublibrary are cached and delivered as follows: i) the description in the common-to-all sublibrary $L_N$, which is required for the reconstruction of all files, and hence requested by all receivers, is cached according to the Least Frequently Used (LFU)\footnote{LFU is a local caching policy that, in the setting of this paper, leads to all receivers caching the same part of the file.} strategy and delivered through naive (uncoded) multicasting, ii) sublibrary $L_1$ is cached and delivered according to any single-request correlation-unaware CACM scheme (such as \cite{zhang2015differentsize,cheng2017optimal,li2017rate} for unequal-length descriptions, and \cite{maddah14fundamental,lim2016information,tian2016caching,yu2016exact,chen2014fundamental,wang2016caching,gomez2016} for equal-length descriptions), 
and  iii) sublibrary $L_\ell$, $\ell\in \{N-1,\dots,2\}$ is cached and delivered according to any  correlation-unaware CACM scheme in which each receiver requests $\binom{N-1}{\ell-1}$  descriptions. 
Schemes where each receiver requests more than one file have been analyzed in previous works, e.g.,  \cite{ji14groupcast,ji15efficient,ji2015caching,daniel2017optimization,sengupta2017improved}. 
However, in addition to being limited to settings with equal-length files, they have been designed for arbitrary demand combinations and could therefore be suboptimal for the specific class of demands considered in our MR scheme.  One of the challenges addressed in the next sections is the design of 
near-optimal schemes for delivering the $\binom{N-1}{\ell-1}$ descriptions requested by each receiver from sublibrary $L_\ell$, $\ell\in \{N-1,\dots,1\}$. Since the ultimate goal is to characterize the performance of the overall two-step GW-MR scheme, and the optimal Gray-Wyner rate region is only known for two files \cite{gray1974source}, in the following sections, we first describe the proposed MR scheme and analyze the performance of the associated GW-MR scheme for the case of two files and $K$ receivers in Secs.~\ref{sec: two files GW-CACM} and \ref{sec:Order Optimality}, respectively. In addition, 
to illustrate how extensions to more files could be done, we focus on the setting with three files and two receivers in Sec.~\ref{sec:three files}.

It is worth noticing that the setting of our MR scheme, where  the descriptions generated by the Gray-Wyner network are modeled as independent subfiles common to a fix set of files in the original library, is a  generalization of  the problem considered in \cite{yang2017centralized}, where, differently from our setting, subfiles are assumed to have equal length.  Our results,  for two files and $K$ receivers, and for three files and two receivers, if specialized to equal-length descriptions (subfiles) are shown to yield optimal or near-optimal schemes for the problem formulation studied in \cite{yang2017centralized}. 



\section{Multiple-Request CACM Scheme for Two Files and $K$ Receivers}\label{sec: two files GW-CACM}
This section describes in more detail the MR scheme introduced in the previous section for a network with two files and $K$ receivers.
Let $\Rscr = (\R_0,\R_1,\R_2)\in \GWregion$ denote the operating point of the Gray-Wyner network, where $\R_0$ denotes the rate of the common description $W_{12}$, and $\R_1$ and $\R_2$ denote the rate of the private descriptions $W_1$ and $W_2$, respectively. As described in Sec.~\ref{sec: general MR}, the MR scheme for two files arranges the three descriptions generated by the Gray-Wyner network into a common-to-all (or simply common) sublibrary $L_2 = \{W_{12}\}$, and a private sublibrary $L_1=\{\Wsf_1,\Wsf_2\}$. Each receiver demand corresponds to requesting  two descriptions: one from the common sublibrary $L_2$, and one 
from the private sublibrary $L_1$. The specific caching and delivery strategies adopted for each sublibrary are provided in Sec.~\ref{subsec: scheme}.

In order to analyze the performance of the proposed MR scheme we also provide lower bounds on the MR peak and average rate-memory functions in Sec.~\ref{subsec: GW lower bound}, and compare the achievable rates of the proposed MR scheme with these lower bounds in Sec.~\ref{subsec: ML MR}.

\subsection{Lower Bounds on $\RGWstarR$ and $\RGWstarRE$}\label{subsec: GW lower bound}

\begin{theorem}\label{thm:LowerBoundGW}
	In the two-file  $K$-receiver network, for a given cache capacity $M$ and rate-tuple $\Rscr\in\GWregion$, a lower bound on $\RGWstarR$, the MR peak rate-memory function, is given by
	\begin{align}
		\RGWRlb = \inf \bigg \{ R: \quad
		&  R \,\geq\, \R_0 + \R_1 + \R_2 \,-\, 2M , \notag\\
		&  R \,\geq\,  \R_0 + \frac{1}{2}\Big(  \R_1 + \R_2 + \max \{\R_1 , \R_2 \} \Big)\,-\, M , \notag\\
		&  R \,\geq\, \frac{1}{2}\Big (\R_0 + \R_1 + \R_2 \,-\, M \Big)
		\bigg \}. \notag
	\end{align}
	A lower bound on $\RGWstarRE$, the MR average rate-memory function, is given by
	\begin{align}
		\RGWRlbE = \inf  \bigg \{ R: \quad 
		&  R \,\geq\, \R_0 + \Big ( 1-\frac{1}{2^K}\Big )\Big ( \R_1 + \R_2\Big) \,-\,2\Big ( 1-\frac{1}{2^K}\Big ) M , \notag\\
		&  R \geq \frac{3}{4}\R_0  +\frac{1}{2}( \R_1 + \R_2) +\frac{1}{4}\max \{\R_1 ,\R_2\}  - \frac{3}{4}M    , \notag\\
		&  R \,\geq\,  \R_0 + \frac{3}{4} ( \R_1 + \R_2    )\,-\, M , \notag\\
		&  R \,\geq\, \frac{1}{2}\Big( \R_0 + \R_1 + \R_2 \,-\, M \Big) \bigg \}. \notag
	\end{align}
\end{theorem}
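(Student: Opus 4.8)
The plan is to derive both lower bounds by specializing to the multiple‑request problem induced by the two‑file Gray‑Wyner network the cut‑set converse of \cite{ITLowerBound}, which applies a min‑cut argument to the cache‑demand augmented graph of \cite{llorca2013network}; I will describe the equivalent genie‑aided entropy argument. Fix a sequence of MR schemes achieving $(R,M)$ at the operating point $\Rscr=(\R_0,\R_1,\R_2)$. For block length $F$ write $\Wsf_{12},\Wsf_1,\Wsf_2$ for the three Gray‑Wyner descriptions; they are (asymptotically, after normalization by $F$) mutually independent with entropies $\R_0,\R_1,\R_2$, each cache $Z_{r_k}$ and each delivery word $Y_\dbf$ is a deterministic function of $(\Wsf_{12},\Wsf_1,\Wsf_2)$, and a receiver requesting file $i$ recovers $(\Wsf_{12},\Wsf_i)$ from $(Z_{r_k},Y_\dbf)$ with vanishing error, so by Fano $H(\Wsf_{12},\Wsf_i\mid Z_{r_k},Y_\dbf)\le F\epsilon_F$ with $\epsilon_F\to 0$. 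The backbone of every bound is the elementary ``collecting'' observation: if a set $\mathcal Z$ of caches together with a set $\mathcal Y$ of delivery words jointly determine (up to Fano) all three descriptions, then $(\R_0+\R_1+\R_2)F\le |\mathcal Z|\,MF+|\mathcal Y|\,R^{(F)}F+F\epsilon_F$, using $H(Z_{r_k})\le MF$ and $H(Y_\dbf)\le\EE[L(Y_\dbf)]\le R^{(F)}F$; likewise $(\R_0+\R_i)F\le H(Z)+H(Y)+F\epsilon_F$ whenever $(Z,Y)$ determines $(\Wsf_{12},\Wsf_i)$.

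For the peak‑rate region I would get the first inequality $R\ge\R_0+\R_1+\R_2-2M$ from a single worst‑case demand $\dbf$ with $d_{r_1}=1$, $d_{r_2}=2$: then $(Z_{r_1},Z_{r_2},Y_\dbf)$ determines all three descriptions, so $(\R_0+\R_1+\R_2)F\le 2MF+R^{(F)}F+F\epsilon_F$ and $F\to\infty$ gives the claim. The third inequality $R\ge\tfrac12(\R_0+\R_1+\R_2-M)$ follows from two demands $\dbf^{(1)},\dbf^{(2)}$ agreeing on all receivers except $r_1$, with $d^{(1)}_{r_1}=1$ and $d^{(2)}_{r_1}=2$: then $(Z_{r_1},Y_{\dbf^{(1)}},Y_{\dbf^{(2)}})$ determines all three descriptions, so $2R^{(F)}F+MF\ge(\R_0+\R_1+\R_2)F-F\epsilon_F$. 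The middle inequality $R\ge\R_0+\tfrac12(\R_1+\R_2+\max\{\R_1,\R_2\})-M$ is the one that does not follow from these elementary collecting bounds (one checks it is not a positive combination of the two above and of the single‑receiver bounds $R\ge\R_0+\R_i-M$); it comes from the sharper cut of \cite{ITLowerBound} — roughly, taking $\R_1=\max\{\R_1,\R_2\}$ and a demand with $d_{r_1}=1$, $d_{r_2}=2$, one combines $H(Y_\dbf,Z_{r_1})\ge(\R_0+\R_1)F$ with the fact that, since $\Wsf_2$ is independent of $(\Wsf_{12},\Wsf_1)$ and must still be delivered to the second receiver, the delivery word must carry, beyond $(\Wsf_{12},\Wsf_1)$, at least ``half'' of $\Wsf_2$ once one cache has been spent; in the augmented graph this is the cut carrying the common link, the whole larger‑private link, and half of the smaller‑private link against one cache edge and one delivery edge, and the $\max\{\R_1,\R_2\}$ merely records that this cut is taken with the larger private description in that role.

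For the average‑rate region the same cuts are used, but averaged over the $N^K=2^K$ equiprobable demands instead of evaluated at one worst case. The last bound $R\ge\tfrac12(\R_0+\R_1+\R_2-M)$ follows directly from the two‑demand collecting argument: for each of the $2^{K-1}$ choices of $(d_{r_2},\dots,d_{r_K})$ I pair the two demands differing only in $d_{r_1}$, sum the resulting inequality over all pairs, and use $\sum_\dbf\EE[L(Y_\dbf)]=2^K\bar R^{(F)}F$ together with the fact that $Z_{r_1}$ is charged $2^{K-1}$ times. The first bound $R\ge\R_0+(1-2^{-K})(\R_1+\R_2)-2(1-2^{-K})M$ comes from averaging the two‑cache cut: the common description is needed by every demand, a private description is needed only in the $1-2^{-K}$ fraction of demands in which its file is requested by at least one receiver, and the two caches can be charged only against those demands. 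The two middle bounds are the averaged versions of the two‑cache cut and of the $\max$‑carrying cut respectively, the term $\tfrac14\max\{\R_1,\R_2\}$ again being the refinement contributed by the sharper converse. In all cases one finally lets $F\to\infty$, using $\epsilon_F\to0$ and the convergence of the normalized description entropies to $(\R_0,\R_1,\R_2)$.

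I expect the main obstacle to be the middle peak‑rate inequality and its average counterpart — the inequalities carrying $\max\{\R_1,\R_2\}$: these are strictly stronger than the plain cut‑set/genie bounds obtained by merely ``collecting'' descriptions, so they must be established with the full strength of the augmented‑graph argument of \cite{ITLowerBound}, whose nontrivial feature is precisely the cut that splits a private description between the cache and the delivery link. The remaining difficulties are routine: the combinatorial bookkeeping of the averaging fractions for general $K$, and checking that the asymptotic independence and rates of the Gray‑Wyner descriptions let the finite‑$F$ estimates pass to the limit.
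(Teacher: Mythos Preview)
Your overall approach --- applying the cut-set converse of \cite{ITLowerBound} on the cache-demand augmented graph --- is the same tool the paper uses, and your derivations of the first and third peak-rate bounds are correct and match the paper's Cases~(i) and~(ii) in Appendix~\ref{app:LowerBound}. However, the paper takes a more economical route to Theorem~\ref{thm:LowerBoundGW}: rather than re-deriving each bound directly in the MR setting, it first proves the general lower bounds of Theorem~\ref{thm:LowerBound} for an arbitrary two-file source $(\Xsf_1,\Xsf_2)$, and then obtains Theorem~\ref{thm:LowerBoundGW} by the substitution $\Xsf_1=(\Wsf_{12},\Wsf_1)$, $\Xsf_2=(\Wsf_{12},\Wsf_2)$, which makes $H(\Xsf_1,\Xsf_2)=\R_0+\R_1+\R_2$, $H(\Xsf_i)=\R_0+\R_i$, and every inequality in Theorem~\ref{thm:LowerBoundGW} drop out by plain algebra. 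This means your ``main obstacle'' --- the $\max\{\R_1,\R_2\}$ bound --- is automatically handled once Theorem~\ref{thm:LowerBound} is in place.

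Your sketch of that middle bound is also slightly off: you describe it as coming from a \emph{single} demand $\dbf=(1,2)$ combined with a half-private-description cut. In fact (see Case~(iii) of Appendix~\ref{app:LowerBound}), the bound uses \emph{two} worst-case demands $\dbf^{(1)}=(1,2)$ and $\dbf^{(2)}=(2,1)$ with receiver subsets $\mathcal S_1=\{r_1\}$ and $\mathcal S_2=\{r_2\}$, so that both selected receivers request the \emph{same} file (the one with larger entropy). Applying the general formula of Theorem~\ref{thm: general LB rule} then gives $2R\ge 2H(\Xsf_1)-2M+H(\Xsf_1,\Xsf_2)-H(\Xsf_1)$, which after substitution is exactly the $\max$-carrying inequality. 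The same two-demand symmetry trick, averaged with the bound of Case~(iv), yields the $\frac{3}{4}\R_0+\frac{1}{4}\max\{\R_1,\R_2\}$ inequality in the average-rate list. So the refinement you anticipated is not a ``split-private-description'' cut but rather a clever choice of which receiver's demand to isolate across two swapped demand vectors.
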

\begin{proof} 
	The proof follows from the more general results presented in Theorem~\ref{thm:LowerBound} in Sec.~\ref{subsec: optimal lower bound}. By setting $\Xsf_1= (\Wsf_{12}, \Wsf_{1})$ and $\Xsf_2= (\Wsf_{12},\, \Wsf_2)$, the above results are readily obtained.
	
\end{proof}

\subsection{Proposed MR Scheme}\label{subsec: scheme}
As described in Sec.~\ref{sec: general MR}, the proposed MR scheme for two files treats the descriptions in $L_1$ and $L_2$ as independent content and operates as follows: $i)$ the cache capacity  $M$ is optimally divided among the two sublibraries, $ii)$ caching is done independently from each sublibrary, and $iii)$ the content requested from each sublibrary is delivered independently, i.e., with no further coding across them. 
While the common description $W_{12}$ in $L_2$ is cached according to the LFU strategy 
and delivered through uncoded multicasting,  the private descriptions in $L_1$ can be cached and delivered according to any correlation-unaware CACM scheme available in literature (e.g. \cite{maddah14fundamental,lim2016information,tian2016caching,yu2016exact,chen2014fundamental,wang2016caching,gomez2016}), properly generalized to a setting with unequal-length files. In the following, for a given generalized correlation-unaware CACM scheme adopted for sublibrary $L_1$, we describe how to optimally allocate the memory to each sublibrary, and we then  characterize the peak and average rate achieved by the corresponding MR scheme.

Let $\RX(M,\R_1,\R_2)$ and $\REX(M,\R_1,\R_2)$ denote the lower convex envelope of the peak and average rates achieved by the scheme adopted for sublibrary $L_1$, respectively.

Under the peak rate criterion, the 
cache allocation that minimizes the overall delivery rate  (i.e.,  the sum rate of each sublibrary) 
is as follows. Let 
\begin{align}
	M^* \triangleq \min\Big\{M:\, \Big|\frac{\partial_{-}}{\partial M}\RX(M,\R_1,\R_2) \Big| < 1\Big\}. \label{eq:m prime}
\end{align}  
denote the cache encoder threshold, where $\frac{\partial_{-}}{\partial M}$ denotes the left partial derivative with respect to $M$. For a given $\Rscr\in \GWregion$, 
the cache encoder allocates the memory to each sublibrary as follows: 
\begin{itemize}
	\item If $M\in \Big[0,\,  M^*\Big)$, the common description $\Wsf_{12}$ is not cached at either receiver, and descriptions $\{\Wsf_1, \Wsf_2\}$ from $L_1$ are cached according to the caching strategy of the adopted correlation-unaware CACM scheme.
	
	\item If $M\in \Big[ M^*,\,M^*+\R_0\Big)$, the first $F(M- M^*)$ bits of description $\Wsf_{12}$ are cached at both receivers (as per LFU caching), and descriptions $\{\Wsf_1, \Wsf_2\}$ are cached according to the scheme adopted for sublibrary $L_1$ over the remaining memory $M^*$. 
	
	\item If $M\in \Big[M^*+\R_0,\,  \R_0+\R_1+\R_2\Big]$, the common description $\Wsf_{12}$ is fully cached at both receivers, and descriptions $\{\Wsf_1, \Wsf_2\}$ are cached according to the adopted correlation-unaware CACM scheme over the remaining cache capacity $M-\R_0$.  
\end{itemize}


The optimality of the cache allocation described above is proved in Appendix~\ref{app:achievable rate}, and its graphical representation is depicted in Fig.~\ref{fig: waterfilling}. This representation can be understood as ``water-filling of two leaky buckets'', which is related to the well-known water-filling optimization problem \cite{Cover}. For small cache capacities up to $M^*$, all the memory is allocated to the private sublibrary. As the cache size increases, it is optimal to first store the common description until it is fully cached, and for cache capacities larger than $M^*$, the residual memory is allocated to storing the private descriptions. 
\begin{figure} [h!] 
	\includegraphics[width=1\linewidth]{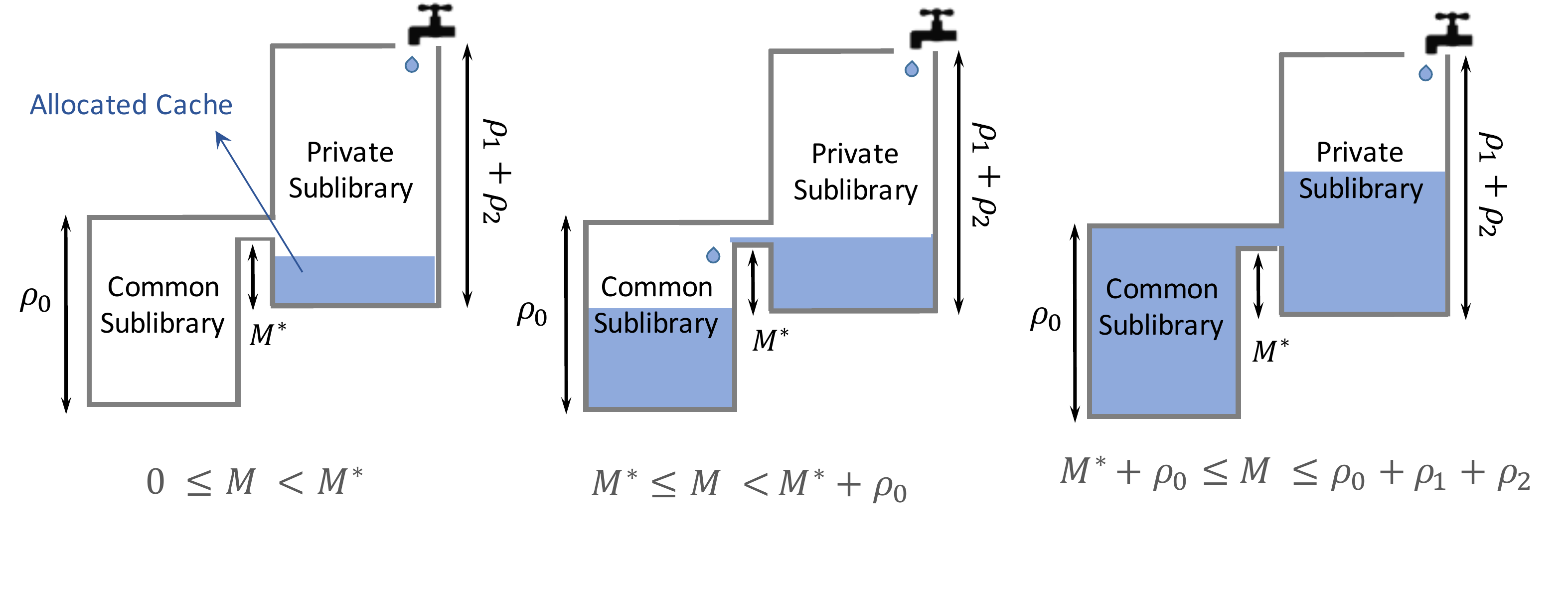}
	\caption{Optimal caching strategy in the MR scheme for two files with $M^*$ defined in \eqref{eq:m prime}.} 
	\label{fig: waterfilling}
\end{figure}

\begin{remark} 
	Differently from the single-cache setting analyzed in \cite{timo2016rate}, where it is always optimal to cache the common description, in our case, when the cache capacity is smaller than $M^*$, it is optimal to first cache the private descriptions. This is due to the fact that for small cache sizes 
	delivering the privare descriptions with coded multicast transmissions is more effective in reducing the overall reate compared to mutlicasting the commom description, and therefore it is preferable to prioratize caching from the private sublibrary. As the cache size increases, this difference in rate reduction diminishes to the extent that assigning memory to the common sublibrary is more effective in reducing the rate compared to the private sublibrary. 
	Therefore, it is preferable to fully store the common description prior to allocating additional memory for storing the private descriptions. 
\end{remark}

Under the average rate criterion, the optimal cache allocation among sublibraries $L_1$ and $L_2$ is  similar to the peak rate scenario described previously,   with respect to a cache encoder threshold
\begin{align}
	\bar{M}^*  \triangleq \min\Big\{M:\, \Big|\frac{\partial_{-}}{\partial M}\REX(M,\R_1,\R_2) \Big| < 1\Big\}, \label{eq:M bar star}
\end{align}
where $\REX(M,\R_1,\R_2)$ is the average delivery rate achieved  by the correlation-unaware CACM scheme adopted for sublibrary $L_1$. 


The following theorem provides the peak and average rates achieved by the proposed MR scheme for the optimal cache  allocation described above.
\begin{theorem}\label{thm:achievable rate peak}
	In the two-file $K$-receiver network, for a given cache capacity $M$, rate-tuple  $\Rscr$, and a given adopted correlation-unaware CACM scheme with peak rate $\RX(M,\R_1,\R_2)$ the peak rate achieved by the MR scheme, $\Rach(M,\Rscr)$, is given by
	\begin{equation}
		\Rach(M,\Rscr)  =
		\begin{cases}
			\R_0 + \RX(M,\R_1,\R_2) ,                   & \; M \in \Big[0,\,  M^*\Big)  \\
			\R_0 + \RX( M^*,\R_1,\R_2) + M^*-M,				       &\; M\in\Big[ M^*, \R_0+\, M^*\Big) \\
			\RX(M-\R_0,\R_1,\R_2),	    &\; M\in \Big[\R_0+ M^*,\, \R_0+\R_1+\R_2\Big] \label{eq:peak rate ach}
		\end{cases}
	\end{equation}
	where $M^*$ is defined in \eqref{eq:m prime}. Denoting by $\REX(M,\R_1,\R_2)$ the average rate achieved by an adopted correlation-unaware CACM scheme, the average delivery rate achieved by the corresponding MR scheme, $\RachE(M,\Rscr)$, is similar to that in \eqref{eq:peak rate ach} but with respect to $\bar{M}^* $ and $\REX(M,\R_1,\R_2)$.
\end{theorem}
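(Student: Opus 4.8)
The plan is to establish the achievable peak rate $\Rach(M,\Rscr)$ by directly analyzing the three-regime caching strategy described above, and then argue that the stated piecewise expression is exactly the rate that results. The starting observation is that, by construction, the MR scheme treats the common sublibrary $L_2=\{\Wsf_{12}\}$ and the private sublibrary $L_1=\{\Wsf_1,\Wsf_2\}$ independently: the total cache $M$ is split into $M_{\text{c}}$ (allocated to $L_2$) and $M_{\text{p}}=M-M_{\text{c}}$ (allocated to $L_1$), and no coding is performed across the two sublibraries, so the total delivery rate is simply the sum of the two per-sublibrary rates. For the common sublibrary, LFU caching stores the same $F M_{\text{c}}$ bits of $\Wsf_{12}$ at every receiver and the remaining $F(\R_0-M_{\text{c}})^+$ bits are sent uncoded, contributing rate $(\R_0-M_{\text{c}})^+$; since every receiver requests $\Wsf_{12}$, no multicast coding gain is available and this is the best one can do with a local caching policy. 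For the private sublibrary, the adopted correlation-unaware CACM scheme contributes rate $\RX(M_{\text{p}},\R_1,\R_2)$. Hence the overall peak rate for a given split is
\[
\Rach(M,\Rscr) \;=\; \min_{0\le M_{\text{c}}\le \min\{M,\R_0\}} \Big\{ (\R_0-M_{\text{c}})^+ + \RX\big(M-M_{\text{c}},\R_1,\R_2\big)\Big\}.
\]

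The next step is to solve this one-dimensional minimization, which is where the ``water-filling of two leaky buckets'' picture comes in. The function $M_{\text{c}}\mapsto(\R_0-M_{\text{c}})^+$ has slope $-1$ on $[0,\R_0)$ and $0$ afterwards, while $M_{\text{p}}\mapsto\RX(M_{\text{p}},\R_1,\R_2)$ is convex, nonincreasing, with left-derivative magnitude that starts above $1$ and decreases; the threshold $M^*$ in \eqref{eq:m prime} is precisely the point where $|\partial_-\RX/\partial M|$ drops below $1$. Comparing marginal rate reductions: when the private sublibrary still has slope magnitude exceeding $1$ (i.e.\ $M_{\text{p}}<M^*$), every bit of cache is better spent on $L_1$ than on $L_2$ (which only buys slope $1$); once the private slope magnitude falls below $1$, it becomes preferable to fill $\Wsf_{12}$ at unit rate of return until it is exhausted; beyond that, the residual memory again goes to $L_1$. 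This yields exactly the three cases: for $M\in[0,M^*)$ set $M_{\text{c}}=0$; for $M\in[M^*,M^*+\R_0)$ set $M_{\text{p}}=M^*$ and $M_{\text{c}}=M-M^*$, giving rate $\R_0+\RX(M^*,\R_1,\R_2)+M^*-M$; for $M\in[M^*+\R_0,\R_0+\R_1+\R_2]$ set $M_{\text{c}}=\R_0$ and $M_{\text{p}}=M-\R_0$. The formal justification that this allocation is optimal (as opposed to merely plausible) is carried out by the convexity/subgradient argument deferred to Appendix~\ref{app:achievable rate}; here one just has to verify that the candidate split satisfies the first-order optimality condition for the convex program, using that $\RX$ is the lower convex envelope (hence convex) of achievable rates.

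Finally, vanishing error probability and the $\limsup$ condition in Definition~\ref{def:achievable-peak GW} follow because $\Rscr\in\GWregion$ guarantees asymptotically lossless Gray-Wyner reconstruction, while the adopted correlation-unaware CACM scheme for $L_1$ and the uncoded multicast for $L_2$ each have vanishing error by assumption, so $P_{e,MR}^{(F)}\to 0$; the rate expression above is then achievable in the sense of the definition. The average-rate statement is obtained by repeating the identical argument with $\REX$ in place of $\RX$ and $\bar M^*$ (from \eqref{eq:M bar star}) in place of $M^*$; nothing in the water-filling reasoning uses the peak-versus-average distinction beyond the convexity of the envelope, so the proof transfers verbatim. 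I expect the main obstacle to be the optimality of the memory split — in particular handling the non-differentiable kinks of both $(\R_0-M_{\text{c}})^+$ and $\RX$ cleanly via left-derivatives, and checking the boundary between regimes (continuity of $\Rach$ at $M=M^*$ and at $M=M^*+\R_0$) — rather than the achievability bookkeeping, which is routine.
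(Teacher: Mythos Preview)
Your proposal is correct and follows essentially the same route as the paper: set up the two-sublibrary memory split, write the total rate as $(\R_0-\mu_0)+\RX(\mu,\R_1,\R_2)$, and solve the resulting one-dimensional convex program by comparing the slope $-1$ of the common-description term against the left-derivative of $\RX$, which yields the threshold $M^*$ and the three regimes. The paper invokes Lagrange multipliers where you argue directly via marginal-slope comparison (``water-filling''), but the two are equivalent formulations of the same first-order condition, and the resulting allocation $(\mu_0^*,\mu^*)$ and rate expression coincide.
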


\begin{proof}
	The proof is given in Appendix~\ref{app:achievable rate}.
\end{proof}

From Theorem~\ref{thm:achievable rate peak} 
it is observed that the performance of the proposed MR scheme depends on the correlation-unaware CACM scheme adopted for the private sublibrary.  In order to analyze the optimality of the proposed MR scheme in Sec.~\ref{subsec: ML MR}, for sublibrary $L_1$,  
we resort to a scheme that properly combines near optimal schemes available in literature, generalized to unequal file lengths, as described next.

\subsubsection{Peak Rate}
The scheme adopted for the private sublibrary $L_1$ is based on memory sharing among generalizations  of the correlation-unaware CACM schemes proposed in  \cite{tian2016caching} and  \cite{yu2016exact} to files with unequal lengths. Specifically, for small cache sizes, the adopted scheme uses the scheme in \cite{tian2016caching} (first introduced in \cite{chen2014fundamental} for a specific cache capacity), which prefetches coded content during the caching phase, 
while for large cache sizes, it uses the scheme in \cite{yu2016exact} with uncoded cache placement.
For ease of exposition, the specific details of the scheme adopted for the private sublibrary is given in Appendix \ref{app:CACM private peak}, where we also derive an upper bound on its achievable rate, $\RX(M,\R_1,\R_2)$, given in \eqref{eq:K user UB}. By combining $\RX(M,\R_1,\R_2)$ with Theorem~\ref{thm:achievable rate peak}, an upper bound on the peak rate achieved by the proposed MR scheme is given in the following theorem.
\begin{theorem}\label{thm:peak rate MR}
	In the two-file $K$-receiver network, for a given cache capacity $M$ and rate-tuple $\Rscr$, an upper bound on $\Rach(M,\Rscr)$, the peak  rate achieved with the proposed MR scheme, is given by
	\begin{equation}
		\Rach(M, \Rscr) \leq
		\begin{cases}
			\R_0 + \R_1+\R_2 -2M ,                   & \; M \in \Big[0,  \,\gamma_{K} \Big)  \\
			\R_0 + \R_1+\R_2- \gamma_{K}  -M,				       &\; M\in \Big[ \gamma_{K} , \, \lambda_{K} \Big)  \\
			\frac{1}{2}(\R_0+\R_1+\R_2-M)  ,	    &\; M\in \Big[\lambda_{K} , \, \R_0+\R_1+\R_2\Big] \label{eq: rate peak K}
		\end{cases} 	
	\end{equation}
	where
	\begin{align}
		& \gamma_K \triangleq  \frac{1}{K} \min\{\R_1,\R_2\},  \quad \lambda_K \triangleq \R_0+\R_1+\R_2 - 2 \gamma_K. \label{eq: gamma lambda K}
	\end{align}
\end{theorem}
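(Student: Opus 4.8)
The plan is to combine the two ingredients that have already been assembled: the general cache-allocation result of Theorem~\ref{thm:achievable rate peak}, which expresses $\Rach(M,\Rscr)$ piecewise in terms of the private-sublibrary peak rate $\RX(M,\R_1,\R_2)$ and the threshold $M^*$; and an explicit upper bound on $\RX(M,\R_1,\R_2)$ that is established in Appendix~\ref{app:CACM private peak} (equation~\eqref{eq:K user UB}). So the proof is essentially a substitution-and-bookkeeping argument: plug the appropriate upper bound for $\RX$ into each of the three branches of \eqref{eq:peak rate ach}, identify what $M^*$ becomes for the specific adopted scheme, and verify that the resulting expressions coincide with the three branches of \eqref{eq: rate peak K} with the breakpoints $\gamma_K$ and $\lambda_K$ as defined in \eqref{eq: gamma lambda K}.

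First I would recall the adopted private-sublibrary scheme: memory-sharing between the coded-placement scheme of \cite{tian2016caching}/\cite{chen2014fundamental} for small $M$ and the uncoded-placement scheme of \cite{yu2016exact} for large $M$, generalized to unequal file lengths $\R_1,\R_2$. For a two-file, $K$-receiver private sublibrary the relevant regimes are: a linear-in-$M$ segment of slope $-2$ near the origin (both descriptions essentially uncached, delivered with a coded transmission serving both receiver classes), and eventually the segment $\tfrac12(\R_1+\R_2-M)$ once the cache is large enough that coded multicast halves the load. The key computation is to show that for this scheme the cache-encoder threshold $M^*$ of \eqref{eq:m prime} — the smallest $M$ at which the magnitude of the left derivative of $\RX$ drops below $1$ — equals $\gamma_K=\tfrac1K\min\{\R_1,\R_2\}$. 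Intuitively, for $M<\gamma_K$ the private-sublibrary rate is decreasing faster than $1$ per unit memory (so memory is better spent there, giving the slope-$-2$ behavior of $\Rach$), while at $M=\gamma_K$ the marginal benefit drops below $1$, which is exactly the crossover point in the ``two leaky buckets'' picture of Fig.~\ref{fig: waterfilling} at which it becomes preferable to start caching $\Wsf_{12}$.

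Given $M^*=\gamma_K$, I would then evaluate the three branches of \eqref{eq:peak rate ach}. On $[0,\gamma_K)$ we need $\RX(M,\R_1,\R_2)\le \R_1+\R_2-2M$, so that $\Rach = \R_0+\RX \le \R_0+\R_1+\R_2-2M$; this is the small-$M$ slope-$-2$ bound from the coded-placement scheme. On $[\gamma_K,\gamma_K+\R_0)$ the cache-allocation formula gives $\Rach=\R_0+\RX(\gamma_K,\R_1,\R_2)+\gamma_K-M$; substituting $\RX(\gamma_K,\R_1,\R_2)\le \R_1+\R_2-2\gamma_K$ yields $\Rach\le \R_0+\R_1+\R_2-\gamma_K-M$, matching the middle branch (and one checks $\gamma_K+\R_0 = \lambda_K$ using the definition $\lambda_K=\R_0+\R_1+\R_2-2\gamma_K$ only if $\R_0 = \R_1+\R_2-\gamma_K$ — so more carefully, the middle branch of \eqref{eq: rate peak K} runs up to $\lambda_K$, and one must confirm that beyond $M=\gamma_K+\R_0$ the third branch $\tfrac12(\R_0+\R_1+\R_2-M)$ is already the governing bound via the third branch $\RX(M-\R_0,\R_1,\R_2)$ of \eqref{eq:peak rate ach}, patched by the lower convex envelope). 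On $[\lambda_K,\R_0+\R_1+\R_2]$ we use $\RX(M-\R_0,\R_1,\R_2)\le \tfrac12(\R_1+\R_2-(M-\R_0)) = \tfrac12(\R_0+\R_1+\R_2-M)$, which is exactly the third branch.

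The main obstacle I anticipate is not any single inequality but the bookkeeping around the breakpoints and the convexification: one has to make sure the three branches of the claimed bound \eqref{eq: rate peak K} are continuous at $\gamma_K$ and $\lambda_K$, that they are convex (so that taking a lower convex envelope of $\RX$ and of $\Rach$ does not spoil the bound), and that the middle-regime expression genuinely dominates whichever branch of \eqref{eq:peak rate ach} is active over $[\gamma_K,\lambda_K]$ — since $\gamma_K+\R_0$ need not equal $\lambda_K$ in general, the transition from the ``filling the common bucket'' regime to the ``coded-multicast halving'' regime requires comparing the two candidate bounds $\R_0+\R_1+\R_2-\gamma_K-M$ and $\tfrac12(\R_0+\R_1+\R_2-M)$ and checking they cross precisely at $M=\lambda_K$. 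Once that alignment is verified, the theorem follows; the details are carried out in Appendix~\ref{app:achievable rate} and Appendix~\ref{app:CACM private peak}.
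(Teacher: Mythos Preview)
Your overall plan---combine Theorem~\ref{thm:achievable rate peak} with the private-sublibrary bound \eqref{eq:K user UB}---is exactly the paper's approach. But you misidentify the threshold $M^*$. From \eqref{eq:K user UB} the left-derivative of $\RX$ has magnitude $2$ on $[0,\gamma_K)$, then magnitude exactly $1$ on $[\gamma_K,\,\R_1+\R_2-2\gamma_K)$, and only drops to $1/2$ beyond that. The definition \eqref{eq:m prime} requires the magnitude to be \emph{strictly} less than $1$, so $M^*=\R_1+\R_2-\tfrac{2}{K}\min\{\R_1,\R_2\}=\R_1+\R_2-2\gamma_K$, not $\gamma_K$ as you claim. (At $M=\gamma_K$ the marginal benefit of private caching drops \emph{to} $1$, not below it.)

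This misidentification is precisely what generates the bookkeeping trouble you anticipate. With the correct $M^*$ one has the clean identity $\R_0+M^*=\lambda_K$, so the three intervals $[0,M^*)$, $[M^*,\R_0+M^*)$, $[\R_0+M^*,\R_0+\R_1+\R_2]$ of \eqref{eq:peak rate ach} become $[0,M^*)$, $[M^*,\lambda_K)$, $[\lambda_K,\R_0+\R_1+\R_2]$. The first of these splits at $\gamma_K$ into the first two branches of \eqref{eq: rate peak K} (using the first two branches of \eqref{eq:K user UB} for $\RX(M,\R_1,\R_2)$); the second continues the middle branch (since $\RX(M^*,\R_1,\R_2)\le\gamma_K$ gives $\R_0+\gamma_K+M^*-M=\R_0+\R_1+\R_2-\gamma_K-M$); and the third gives $\RX(M-\R_0,\R_1,\R_2)\le\tfrac12(\R_0+\R_1+\R_2-M)$ directly, because $M-\R_0\ge M^*$ puts you in the last branch of \eqref{eq:K user UB}. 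No convex-envelope patching or breakpoint comparison is needed.
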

\begin{proof}
	The proof follows from Theorem~\ref{thm:achievable rate peak} and from adopting the scheme described in Appendix~\ref{app:CACM private peak} for the private sublibrary. An upper bound on the peak rate achieved with this scheme, $\RX(M,\R_1,\R_2)$, is given in \eqref{eq:K user UB} and when replaced in \eqref{eq:m prime}, the cache encoder threshold becomes $M^* = \R_1+\R_2-\frac{2}{K}\min\{\R_1,\R_2\}$. Combining \eqref{eq:peak rate ach} with \eqref{eq:K user UB}, and using the fact that $\gamma_K\leq M^*\leq \lambda_K$, \eqref{eq: gamma lambda K} is obtained after algebraic manipulation.
\end{proof}
Specializing the upper bound on $\Rach(M,\Rscr) $ given in \eqref{eq: rate peak K} to  $K=2$ receivers results in a tight upper bound\footnote{By a  tight upper bound we mean that for $K=2$ equation \eqref{eq: rate peak K} holds with equality.} for this setting as per the following corollary.

\begin{corollary}\label{col:peak rate 2}
	In the two-file two-receiver network, for a given cache capacity $M$ and rate-tuple $\Rscr$, the proposed MR scheme achieves the following peak rate
	\begin{align}
		\Rach(M, &\Rscr) = \begin{cases}
			\R_0 + \R_1+\R_2  -2 M ,                   &  M \in \Big[0, \, \frac{1}{2}\min\{\R_1,\R_2\}\Big)  \\
			\R_0  + \frac{1}{2}\Big(\R_1+\R_2 +  \max\{\R_1,\R_2\}\Big ) -  M,				      &  M\in\Big[\frac{1}{2}\min\{\R_1,\R_2\}, \R_0+\max\{\R_1,\R_2\} \Big) \\
			\frac{1}{2}(\R_0  + \R_1 + \R_2 -M),	    &  M\in \Big[\R_0 + \max\{\R_1,\R_2\}  ,\, \R_0 + \R_1+\R_2\Big] . \notag
		\end{cases}
	\end{align}
\end{corollary}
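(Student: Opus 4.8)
The corollary is simply the $K=2$ instance of Theorem~\ref{thm:peak rate MR}, so the plan is to substitute $K=2$ into \eqref{eq: rate peak K} and \eqref{eq: gamma lambda K} and verify that the three pieces rewrite into the claimed form, checking that the breakpoints and slopes match. First I would record the thresholds from \eqref{eq: gamma lambda K}: with $K=2$, $\gamma_2 = \tfrac12\min\{\R_1,\R_2\}$ and $\lambda_2 = \R_0+\R_1+\R_2 - 2\gamma_2 = \R_0 + \R_1+\R_2 - \min\{\R_1,\R_2\} = \R_0 + \max\{\R_1,\R_2\}$, using the identity $\R_1+\R_2 - \min\{\R_1,\R_2\} = \max\{\R_1,\R_2\}$. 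This already produces exactly the interval endpoints appearing in the corollary, so the middle and last intervals are $\big[\tfrac12\min\{\R_1,\R_2\},\, \R_0+\max\{\R_1,\R_2\}\big)$ and $\big[\R_0+\max\{\R_1,\R_2\},\, \R_0+\R_1+\R_2\big]$ respectively.

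Next I would rewrite each branch of \eqref{eq: rate peak K}. The first branch, $\R_0+\R_1+\R_2-2M$ on $[0,\gamma_2)$, is already in the stated form. For the middle branch, $\R_0+\R_1+\R_2-\gamma_2-M$, substitute $\gamma_2 = \tfrac12\min\{\R_1,\R_2\}$ and use $\R_1+\R_2-\tfrac12\min\{\R_1,\R_2\} = \tfrac12(\R_1+\R_2) + \tfrac12\max\{\R_1,\R_2\}$ (again from $\R_1+\R_2-\min\{\R_1,\R_2\}=\max\{\R_1,\R_2\}$), giving $\R_0 + \tfrac12(\R_1+\R_2+\max\{\R_1,\R_2\}) - M$, matching the corollary. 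The last branch, $\tfrac12(\R_0+\R_1+\R_2-M)$, is unchanged. So the equality claimed in the corollary follows directly once we know \eqref{eq: rate peak K} holds \emph{with equality} for $K=2$.

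The one substantive point — flagged in the footnote to \eqref{eq: rate peak K} — is \emph{tightness}: Theorem~\ref{thm:peak rate MR} gives an upper bound on $\Rach(M,\Rscr)$ in general, and the corollary upgrades ``$\leq$'' to ``$=$'' for $K=2$. By Theorem~\ref{thm:achievable rate peak}, $\Rach(M,\Rscr)$ is determined by the lower convex envelope $\RX(M,\R_1,\R_2)$ of the peak rate of the adopted correlation-unaware CACM scheme for the private sublibrary $L_1$, which for $N=2$ equal-role private descriptions is a two-file caching problem. The point to establish is that for two files (private descriptions of lengths $\R_1,\R_2$) and $K=2$ receivers, the generalized scheme of Appendix~\ref{app:CACM private peak} achieves $\RX(M,\R_1,\R_2)$ \emph{exactly equal} to the convex expression used in \eqref{eq:K user UB}, so that the inequality in \eqref{eq: rate peak K} is not lossy. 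I would invoke the exact rate characterization for two-file two-receiver caching (the $N=K=2$ case of the schemes in \cite{tian2016caching,yu2016exact,chen2014fundamental}, suitably memory-shared and extended to unequal lengths as in Appendix~\ref{app:CACM private peak}), whose convex envelope is piecewise linear with a single interior breakpoint at $M^* = \R_1+\R_2-\min\{\R_1,\R_2\} = \max\{\R_1,\R_2\}$ (after the substitution $\tfrac2K = 1$), and then feed this equality into the three-branch formula of Theorem~\ref{thm:achievable rate peak}. The algebra reorganizing $\RX(\,\cdot\,)+\R_0$, $\RX(M^*)+M^*-M+\R_0$, and $\RX(M-\R_0)$ into the corollary's three branches is then routine, exactly as in the proof of Theorem~\ref{thm:peak rate MR}.

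\textbf{Main obstacle.} The only real work is the tightness claim: arguing that for $K=2$ the upper bound of Appendix~\ref{app:CACM private peak} on the private-sublibrary rate is met with equality, so that \eqref{eq: rate peak K} holds with equality rather than merely as ``$\leq$''. Everything else — computing $\gamma_2$, $\lambda_2$, and rewriting the three branches via $\R_1+\R_2-\min=\max$ — is immediate.
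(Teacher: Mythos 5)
Your reduction of the corollary to Theorem~\ref{thm:peak rate MR} at $K=2$ is correct: $\gamma_2=\tfrac12\min\{\R_1,\R_2\}$, $\lambda_2=\R_0+\max\{\R_1,\R_2\}$, and the identity $\R_1+\R_2-\min\{\R_1,\R_2\}=\max\{\R_1,\R_2\}$ turn the three branches of \eqref{eq: rate peak K} into exactly the displayed formula, and you correctly isolate the only substantive issue, namely upgrading ``$\leq$'' to ``$=$''. Where you diverge from the paper is in how that tightness is certified. You argue it at the level of the component scheme: the private-sublibrary rate $\RX(M,\R_1,\R_2)$ for two descriptions and two receivers is exactly the piecewise-linear expression in \eqref{eq:K user UB} (the $N=K=2$ exact characterization, cf.\ the remark in Appendix~\ref{app:CACM private peak} citing \cite{ITLowerBound,li2017rate}), and then Theorem~\ref{thm:achievable rate peak} propagates this equality. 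The paper instead obtains the equality by a converse on the whole MR problem: the specialization of the lower bound $\RGWRlb$ of Theorem~\ref{thm:LowerBoundGW} to $K=2$ coincides with the displayed three-branch formula, so the achievable rate is sandwiched (this is exactly the content of Theorem~\ref{thm:ach and GW lower bound peak}/Corollary~\ref{col:peak and lower bound 2}, where the gap $(\tfrac12-\tfrac1K)\min\{\R_1,\R_2\}$ vanishes at $K=2$). Both routes are valid; yours pins down the exact rate of the sublibrary scheme and needs the unequal-length two-file optimality result, while the paper's route buys optimality among all MR schemes simultaneously and avoids arguing exactness of the component scheme. One small slip: for $K=2$ the convex envelope of $\RX$ has \emph{two} interior breakpoints, at $\tfrac12\min\{\R_1,\R_2\}$ and at $\max\{\R_1,\R_2\}$ (the latter being the threshold $M^*$ of \eqref{eq:m prime}), not a single one; this does not affect your conclusion since your branch-by-branch rewriting already used the correct breakpoints $\gamma_2$ and $\lambda_2$.
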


\subsubsection{Average Rate} In this case, for the private sublibrary, we adopt a generalization of the close-to-optimal correlation-unaware CACM scheme in \cite{yu2016exact} to unequal-length files. The specific details of the adopted scheme are provided in Appendix \ref{app:CACM private avg}, where we also derive an upper bound on its achievable rate, $\REX(M,\R_1,\R_2)$, given in \eqref{eq:K user UB avg}. By combining $\REX(M,\R_1,\R_2)$ with the average counterpart of \eqref{eq:peak rate ach} given in Theorem~\ref{thm:achievable rate peak}, an upper bound on the average rate achieved with the proposed MR scheme is given in the following theorem.
\begin{theorem}\label{thm:avg rate MR}
	In the two-file $K$-receiver network, for a given cache capacity $M$ and rate-tuple $\Rscr$, an upper bound on $\RachE(M,\Rscr)$, the average  rate achieved with the proposed MR scheme, is given by 
	\begin{equation}
		\RachE(M, \Rscr) \leq
		\begin{cases}
			\R_0+\Big(1-\frac{1}{2^K}\Big)  (\R_1+\R_2)-\Big(\frac{3}{2}-\frac{2}{2^K}  \Big)M  ,        & \; \mu \in \Big[0, \; 2\, \gamma_K\Big)   \\
			\R_0+\Big(  1-\frac{1}{2^K}  \Big) ( \R_1+\R_2) -   \Big( 1-\frac{4}{2^K}  \Big)   \gamma_K-M  ,				     & \;   
			\mu \in \Big[2\,\gamma_K, \; \bar\gamma_K \Big)  \\
			\Big(  1-\frac{1}{2^K}  \Big)( \R_0+\R_1+\R_2-M )-  \Big(1-\frac{2}{2^K}  \Big) \gamma_K  ,				     & \;   
			\mu \in \Big[\bar\gamma_K, \;  \lambda_K  \Big)  \\
			\frac{1}{2} (\R_0+\R_1 + \R_2 - M) ,	    &\; \mu \in \Big[ \lambda_K    , \;\R_0+\R_1+\R_2\Big]   \label{eq: rate avg K}
		\end{cases} 	
	\end{equation}
	for $\gamma_K$ and $\lambda_K$ defined in Theorem~\ref{thm:peak rate MR}, and  	with
	\begin{align}
		\bar\gamma_K \triangleq \R_0+\frac{2}{K}\min\{\R_1,\R_2\} =  \R_0+2\gamma_K.	 \label{eq: gamma bar K}
	\end{align}
\end{theorem}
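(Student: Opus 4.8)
The plan is to mirror, for the average-rate criterion, the argument already used for Theorem~\ref{thm:peak rate MR}. The starting point is the average counterpart of \eqref{eq:peak rate ach} supplied by Theorem~\ref{thm:achievable rate peak}: the average rate of the MR scheme is the ``water-filling of two leaky buckets'' of the lower convex envelope $\REX(M,\R_1,\R_2)$ against the size-$\R_0$ bucket associated with the common description $\Wsf_{12}$, with cache-encoder threshold $\bar M^*$ as in \eqref{eq:M bar star}. I would therefore (i) insert the explicit envelope $\REX(M,\R_1,\R_2)$ derived in Appendix~\ref{app:CACM private avg}, eq.~\eqref{eq:K user UB avg}, for the generalized \cite{yu2016exact} scheme adopted on the private sublibrary $L_1$, (ii) evaluate $\bar M^*$, and (iii) collect terms on each memory interval and simplify.

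For step (ii): the envelope $\REX(M,\R_1,\R_2)$ in \eqref{eq:K user UB avg} is convex and piecewise linear on $[0,\R_1+\R_2]$ with slopes $-(3/2-2/2^K)$, $-(1-1/2^K)$ and $-1/2$ and corner points at $M=2\gamma_K$ and $M=\R_1+\R_2-2\gamma_K$, where $\gamma_K=\tfrac1K\min\{\R_1,\R_2\}$ as in \eqref{eq: gamma lambda K}. Since $3/2-2/2^K\ge 1>1-1/2^K$ for every $K\ge 2$, the magnitude of the left derivative drops below one precisely at $M=2\gamma_K$, so \eqref{eq:M bar star} evaluates to $\bar M^*=\tfrac2K\min\{\R_1,\R_2\}=2\gamma_K$. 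Consequently, in the water-filling allocation the common description is cached over $M\in[\bar M^*,\R_0+\bar M^*)=[2\gamma_K,\bar\gamma_K)$, in agreement with \eqref{eq: gamma bar K}, and for $M\ge\bar\gamma_K$ the residual memory $M-\R_0$ is devoted to $L_1$, so $\RachE(M,\Rscr)=\REX(M-\R_0,\R_1,\R_2)$; the corner of $\REX$ at $\R_1+\R_2-2\gamma_K$ then lands at $M=\R_0+\R_1+\R_2-2\gamma_K=\lambda_K$.

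For step (iii), substituting the three linear pieces of $\REX$ into the three cases of the average version of \eqref{eq:peak rate ach} gives the four-piece expression in \eqref{eq: rate avg K}: on $[0,2\gamma_K)$ one gets $\R_0+\REX(M,\R_1,\R_2)$, i.e.\ the slope-$(3/2-2/2^K)$ segment; on $[2\gamma_K,\bar\gamma_K)$ one gets $\R_0+\REX(\bar M^*,\R_1,\R_2)+\bar M^*-M$, which after using $\REX(2\gamma_K,\R_1,\R_2)=(1-1/2^K)(\R_1+\R_2)-(3-4/2^K)\gamma_K$ becomes the stated slope-$(-1)$ segment with the $(1-4/2^K)\gamma_K$ offset; and on $[\bar\gamma_K,\lambda_K)$ and $[\lambda_K,\R_0+\R_1+\R_2]$ one gets $\REX(M-\R_0,\R_1,\R_2)$, yielding the two remaining segments of slope $-(1-1/2^K)$ and $-1/2$. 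Continuity at the three junctions is automatic from the water-filling construction, and the intervals are nonempty because $2\gamma_K\le\bar\gamma_K\le\lambda_K\le\R_0+\R_1+\R_2$; the only non-trivial inequality, $\bar\gamma_K\le\lambda_K$, is equivalent to $\tfrac4K\min\{\R_1,\R_2\}\le\R_1+\R_2$, which holds for all $K\ge 2$.

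I expect the main obstacle to be bookkeeping rather than any new idea. Unlike the peak case, the average rate $\REX$ of the generalized \cite{yu2016exact} scheme has three linear pieces (the extra $1-1/2^K$ and $2/2^K$ factors reflecting the non-uniform statistics of the two requests drawn from $L_1$), so superposing the slope-$(-1)$ segment from filling the common description produces a four-piece function rather than three; care is needed to confirm that the corner of $\REX$ falls exactly at $\lambda_K$, that the additive constants match at every junction, and that no spurious corner is introduced. The optimality of the underlying memory split between $L_1$ and $L_2$ is not re-derived here — it is inherited from Theorem~\ref{thm:achievable rate peak}, proved in Appendix~\ref{app:achievable rate} — so the remaining work is essentially a careful algebraic simplification, as in the proof of Theorem~\ref{thm:peak rate MR}.
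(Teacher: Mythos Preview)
Your proposal is correct and follows essentially the same approach as the paper: invoke the average counterpart of \eqref{eq:peak rate ach} from Theorem~\ref{thm:achievable rate peak}, plug in the envelope $\REX(M,\R_1,\R_2)$ from \eqref{eq:K user UB avg}, read off $\bar M^*=2\gamma_K$ from \eqref{eq:M bar star}, and simplify piecewise. The paper's proof is a terse three-sentence version of exactly this; your write-up simply makes the bookkeeping (slope comparison, evaluation of $\REX(2\gamma_K,\R_1,\R_2)$, and the interval-nonemptiness check $\bar\gamma_K\le\lambda_K$) explicit.
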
	
\begin{proof} 
	The proof follows from Theorem~\ref{thm:achievable rate peak} and from adopting the scheme described in Appendix~\ref{app:CACM private avg} for the private sublibrary.  An upper bound on the average rate achieved with this scheme, $\REX(M,\R_1,\R_2)$, is given in \eqref{eq:K user UB avg} and when replaced in \eqref{eq:M bar star}, the cache encoder threshold becomes ${\bar M}^*= \frac{2}{K}\min\{\R_1,\R_2\} = 2\gamma_K$.  Eq \eqref{eq: gamma lambda K} is obtained by combining the average counterpart of \eqref{eq:peak rate ach} given in Theorem~\ref{thm:achievable rate peak} with \eqref{eq:K user UB avg}.
\end{proof}
Specializing the upper bound on $\RachE(M,\Rscr) $ given in \eqref{eq: rate avg K} to the setting with $K=2$ receivers leads to the following corollary.
\begin{corollary}\label{col:average rate 2}
	In the two-file two-receiver  network, for a given cache capacity $M$ and rate-tuple $\Rscr$, the proposed MR scheme achieves the following average  rate
	\begin{align}
		&\RachE(M, \Rscr) =\notag\\
		&\quad\begin{cases}
			\R_0+\frac{3}{4}(\R_1+\R_2) - M,				      &M \in \Big[0,\,\R_0 + \min\{\R_1,\R_2\} \Big ) \\
			\frac{3}{4} \R_0+\frac{1}{2}(\R_1+\R_2) +\frac{1}{4} \max\{\R_1,\R_2\}- \frac{3}{4} M,  			      &M \in \Big[\R_0 + \min\{\R_1,\R_2\} ,\, \R_0 + \max\{\R_1,\R_2\}  \Big )  \\
			\frac{1}{2}(\R_0+\R_1+\R_2-M),				     &M \in \Big[\R_0 + \max\{\R_1,\R_2\},\, \R_0+\R_1+\R_2  \Big ] 
		\end{cases}
	\end{align}
\end{corollary}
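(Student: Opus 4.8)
The plan is to obtain Corollary~\ref{col:average rate 2} as the $K=2$ instance of Theorem~\ref{thm:avg rate MR}, together with the arithmetic simplifications that this particular value of $K$ makes available. First I would substitute $2^{K}=4$ into the four-branch bound \eqref{eq: rate avg K}, so that the coefficients $1-\tfrac{1}{2^{K}}$, $\tfrac{3}{2}-\tfrac{2}{2^{K}}$, $1-\tfrac{4}{2^{K}}$, $1-\tfrac{2}{2^{K}}$ become $\tfrac{3}{4},\,1,\,0,\,\tfrac{1}{2}$ respectively, and record the breakpoints from \eqref{eq: gamma lambda K} and \eqref{eq: gamma bar K}: $\gamma_{2}=\tfrac{1}{2}\min\{\R_1,\R_2\}$, $\bar\gamma_{2}=\R_0+\min\{\R_1,\R_2\}$, and $\lambda_{2}=\R_0+\R_1+\R_2-2\gamma_{2}=\R_0+\max\{\R_1,\R_2\}$, where the last equality uses $\R_1+\R_2-\min\{\R_1,\R_2\}=\max\{\R_1,\R_2\}$.

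The key observation is that at $K=2$ the factor $1-\tfrac{4}{2^{K}}$ multiplying $\gamma_{K}$ in the second branch of \eqref{eq: rate avg K} vanishes, so the first two branches collapse to the single affine map $\R_0+\tfrac{3}{4}(\R_1+\R_2)-M$, now valid on $[0,2\gamma_{2})\cup[2\gamma_{2},\bar\gamma_{2})=[0,\,\R_0+\min\{\R_1,\R_2\})$; this is the first case of the corollary. For the third branch I would expand $\tfrac{3}{4}(\R_0+\R_1+\R_2-M)-\tfrac{1}{2}\gamma_{2}$ and rewrite $\tfrac{3}{4}(\R_1+\R_2)-\tfrac{1}{4}\min\{\R_1,\R_2\}=\tfrac{1}{2}(\R_1+\R_2)+\tfrac{1}{4}\max\{\R_1,\R_2\}$ (again using $\R_1+\R_2=\min\{\R_1,\R_2\}+\max\{\R_1,\R_2\}$), obtaining $\tfrac{3}{4}\R_0+\tfrac{1}{2}(\R_1+\R_2)+\tfrac{1}{4}\max\{\R_1,\R_2\}-\tfrac{3}{4}M$ on $[\R_0+\min\{\R_1,\R_2\},\,\R_0+\max\{\R_1,\R_2\})$, which is the second case; the fourth branch $\tfrac{1}{2}(\R_0+\R_1+\R_2-M)$ on $[\R_0+\max\{\R_1,\R_2\},\,\R_0+\R_1+\R_2]$ carries over verbatim as the third case. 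A brief continuity check at $M=\R_0+\min\{\R_1,\R_2\}$ and $M=\R_0+\max\{\R_1,\R_2\}$ confirms that the three simplified pieces agree at these shared endpoints and that the endpoints are exactly those appearing in the statement.

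Finally, to upgrade the inequality of Theorem~\ref{thm:avg rate MR} to the equality asserted in the corollary, I would invoke the fact that for two receivers the upper bound \eqref{eq:K user UB avg} on the average rate $\REX(M,\R_1,\R_2)$ of the correlation-unaware CACM scheme adopted for the private sublibrary (the unequal-length generalization of \cite{yu2016exact} developed in Appendix~\ref{app:CACM private avg}) is met with equality, since the two-user average rate of that scheme can be computed exactly, and that the cache split between $L_1$ and $L_2$ prescribed by Theorem~\ref{thm:achievable rate peak} is the optimal allocation, so that no additional slack enters on top of $\REX$. Hence $\RachE(M,\Rscr)$, assembled from $\R_0$ and $\REX$ as in Theorem~\ref{thm:achievable rate peak}, equals the simplified three-piece function. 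I expect the only genuinely delicate point to be this tightness claim for the two-user private-sublibrary scheme with unequal description lengths, namely that its exact average rate (not merely a bound on it) coincides with \eqref{eq:K user UB avg} at $K=2$; everything else is bookkeeping on constants and intervals.
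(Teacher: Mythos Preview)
Your proposal is correct and matches the paper's approach: the corollary is obtained by specializing Theorem~\ref{thm:avg rate MR} to $K=2$, and your arithmetic on the coefficients and breakpoints is exactly right, including the collapse of the first two branches via $1-\tfrac{4}{2^{K}}=0$. The paper is actually less explicit than you are about the upgrade from inequality to equality; it simply asserts (cf.\ the analogous footnote preceding Corollary~\ref{col:peak rate 2}) that the upper bound is tight at $K=2$, relying on the optimality of the two-user unequal-length scheme noted in the remark following Appendix~\ref{app:CACM private avg}, so your identification of that tightness as the only substantive point is well placed.
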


\subsection{Optimality of the Proposed MR Scheme}\label{subsec: ML MR}
In this section, we compare the performance of the proposed MR scheme charaterized in Sec.~\ref{subsec: scheme} 
with the lower bounds given in Sec.~\ref{subsec: GW lower bound}. 
The following theorems provide the memory regions for which the proposed MR scheme is optimal or near-optimal.
\subsubsection{Peak Rate}
\begin{theorem}\label{thm:ach and GW lower bound peak}  
	In the two-file $K$-receiver network and for a given rate-tuple $\Rscr$, when 
	\begin{align} 
		M\in \Big[ 0,\; \frac{1}{K}\min\{\R_1,\,\R_2\} \Big]\bigcup \Big[ \R_0+\R_1+\R_2-  \frac{2}{K}\min\{\R_1,\R_2\} , \;  \R_0+\R_1+\R_2 \Big], \notag
	\end{align}
	the proposed MR scheme is optimal under the peak rate criterion among all multiple-request schemes, i.e., $\Rach(M,\Rscr)=\RGWstarR$. For all other cache capacities,   
	$$\Rach(M, \Rscr)- \RGWstarR \leq \Big(\frac{1}{2}- \frac{1}{K}\Big) \min\{\R_1,\R_2\}.$$
\end{theorem}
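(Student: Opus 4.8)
The plan is to prove Theorem~\ref{thm:ach and GW lower bound peak} by directly comparing the achievable peak rate of the proposed MR scheme, as characterized in Theorem~\ref{thm:peak rate MR} (equivalently the upper bound in~\eqref{eq: rate peak K} with breakpoints $\gamma_K$ and $\lambda_K$ from~\eqref{eq: gamma lambda K}), against the lower bound $\RGWRlb$ from Theorem~\ref{thm:LowerBoundGW}. Write $\gamma_K = \tfrac{1}{K}\min\{\R_1,\R_2\}$ and $\lambda_K = \R_0+\R_1+\R_2 - 2\gamma_K$. Both the achievable rate and the lower bound are piecewise linear in $M$, so the comparison reduces to checking a finite set of linear inequalities on a finite set of sub-intervals; the main work is bookkeeping the breakpoints correctly.

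First I would handle the two optimality regions. On $M\in[0,\gamma_K]$, the achievable rate is $\R_0+\R_1+\R_2-2M$, and I would show this matches the first constraint $R\ge \R_0+\R_1+\R_2-2M$ in $\RGWRlb$, verifying that on this interval that constraint is the binding (largest) one among the three — i.e.\ that $\R_0+\R_1+\R_2-2M$ dominates both $\R_0+\tfrac12(\R_1+\R_2+\max\{\R_1,\R_2\})-M$ and $\tfrac12(\R_0+\R_1+\R_2-M)$ for $M\le\gamma_K$. A short calculation shows the first dominates the second exactly when $M\le \tfrac12\min\{\R_1,\R_2\}$, which contains $[0,\gamma_K]$ since $K\ge2$; domination of the third is even easier. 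Symmetrically, on the high-memory interval $[\lambda_K,\R_0+\R_1+\R_2]$ the achievable rate is $\tfrac12(\R_0+\R_1+\R_2-M)$, and I would check that here the third constraint of $\RGWRlb$ is the binding one, so achievability meets the bound. That establishes $\Rach(M,\Rscr)=\RGWstarR$ on the claimed union.

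Next, for the intermediate regime $M\in(\gamma_K,\lambda_K)$, I would bound the gap $\Rach(M,\Rscr)-\RGWRlb$ piece by piece. Over $(\gamma_K,\lambda_K)$ the achievable curve has at most two linear pieces (the middle and third cases of~\eqref{eq: rate peak K}, with internal breakpoint at $M^*=\R_1+\R_2-\tfrac2K\min\{\R_1,\R_2\}$), and $\RGWRlb$ similarly switches which constraint is active. The gap function is therefore piecewise linear and continuous; since a continuous piecewise-linear function on a closed interval attains its maximum at a breakpoint, it suffices to evaluate the gap at $M=\gamma_K$, at $M=\lambda_K$, and at the interior breakpoints where either curve bends. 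At the endpoints $\gamma_K$ and $\lambda_K$ the gap is $0$ by the optimality just proved (by continuity). I expect the maximum to occur at an interior breakpoint — most naturally where the achievable curve transitions into its final slope-$(-\tfrac12)$ segment — and a direct substitution should yield the gap value $\bigl(\tfrac12-\tfrac1K\bigr)\min\{\R_1,\R_2\}$.

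The main obstacle, and where I'd be most careful, is the interleaving of breakpoints: the ordering of $\gamma_K$, $M^*$, $\lambda_K$ relative to the breakpoints of $\RGWRlb$ (which depend on $\R_0,\R_1,\R_2$ and on $\max/\min$ of $\R_1,\R_2$) can in principle vary, so I would either argue a canonical ordering always holds — e.g.\ $\gamma_K \le M^* \le \lambda_K$ is already used in the proof of Theorem~\ref{thm:peak rate MR} — or split into the finitely many cases and observe the gap bound is the same in each. Also, since~\eqref{eq: rate peak K} is stated as an upper bound on $\Rach$ (with equality only for $K=2$), the inequality $\Rach(M,\Rscr)-\RGWstarR \le (\tfrac12-\tfrac1K)\min\{\R_1,\R_2\}$ follows from bounding the right-hand side of~\eqref{eq: rate peak K} minus $\RGWRlb$, and the optimality claim on the two extreme intervals needs the matching lower bound $\Rach \ge \RGWRlb$, which holds because $\RGWRlb$ is a valid lower bound on $\RGWstarR \le \Rach$; so on those intervals the sandwiching forces equality. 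Once the ordering of breakpoints is pinned down, the remaining steps are routine linear algebra on each sub-interval.
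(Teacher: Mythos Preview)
Your approach is essentially the same as the paper's: both compare the piecewise-linear upper bound from Theorem~\ref{thm:peak rate MR} against the piecewise-linear lower bound $\RGWRlb$ from Theorem~\ref{thm:LowerBoundGW}, establish equality on the two extreme intervals, and bound the gap in between. The paper carries this out by first writing $\RGWRlb$ explicitly with breakpoints $\gamma=\tfrac12\min\{\R_1,\R_2\}$ and $\lambda=\R_0+\max\{\R_1,\R_2\}$, observing the fixed ordering $\gamma_K\le\gamma\le\lambda\le\lambda_K$ for $K\ge2$, and then computing the gap on each of the five resulting sub-intervals; your breakpoint-evaluation argument is an equivalent way to organize the same computation.

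One small correction: over $(\gamma_K,\lambda_K)$ the achievable curve in~\eqref{eq: rate peak K} is a \emph{single} linear piece (slope $-1$), not two; the quantity $M^*=\R_1+\R_2-\tfrac{2}{K}\min\{\R_1,\R_2\}$ is the cache-encoder threshold from~\eqref{eq:m prime}, not a breakpoint of~\eqref{eq: rate peak K} (indeed $M^*=\lambda_K-\R_0$). This only simplifies your case analysis. The interior breakpoints that matter are $\gamma$ and $\lambda$ from $\RGWRlb$, and once you use $\gamma_K\le\gamma\le\lambda\le\lambda_K$ the ``interleaving of breakpoints'' obstacle you flag disappears and no further case split is needed.
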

\begin{proof} 
	The proof is given in Appendix \ref{app:ach and GW lower bound peak}.
\end{proof}

\begin{corollary}\label{col:peak and lower bound 2}
	In the two-file two-receiver network, for any cache capacity $M$ and rate-tuple $\Rscr$, the proposed MR scheme is optimal under the peak rate criterion among all multiple-request schemes, i.e.,  $\Rach(M,\Rscr)=   \RGWstarR.$
\end{corollary}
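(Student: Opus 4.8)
The plan is to prove the corollary by a sandwich argument: I will bound $\RGWstarR$ from below by the lower bound of Theorem~\ref{thm:LowerBoundGW} and from above by the exact rate achieved by the proposed MR scheme, and then show that the two bounds coincide for every admissible cache capacity. Concretely, achievability of the MR scheme gives $\RGWstarR \le \Rach(M,\Rscr)$, where for $K=2$ the right-hand side is the piecewise-linear expression computed in Corollary~\ref{col:peak rate 2}, while Theorem~\ref{thm:LowerBoundGW} gives $\RGWstarR \ge \RGWRlb$. Hence it suffices to establish $\Rach(M,\Rscr) = \RGWRlb$ for all $M \in [0,\, \R_0+\R_1+\R_2]$; this pins all three quantities together and in particular yields $\Rach(M,\Rscr) = \RGWstarR$, as claimed.

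The first step is to put the lower bound in closed form. Since $\RGWRlb$ is the infimum of all $R$ satisfying the three stated inequalities, it equals the pointwise maximum of their right-hand sides, which are affine in $M$:
\[
\ell_1(M) = \R_0+\R_1+\R_2-2M,\quad \ell_2(M) = \R_0 + \tfrac{1}{2}\big(\R_1+\R_2+\max\{\R_1,\R_2\}\big) - M,\quad \ell_3(M) = \tfrac{1}{2}\big(\R_0+\R_1+\R_2-M\big),
\]
so that $\RGWRlb = \max\{\ell_1(M),\ell_2(M),\ell_3(M)\}$. The second step is to locate the breakpoints of this upper envelope: a direct computation gives $\ell_1(M)=\ell_2(M)$ at $M = \tfrac{1}{2}\min\{\R_1,\R_2\}$ and $\ell_2(M)=\ell_3(M)$ at $M = \R_0+\max\{\R_1,\R_2\}$. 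Because the slopes $-2,\,-1,\,-\tfrac12$ are strictly decreasing in magnitude, and because the two crossover abscissas are ordered (using $\R_0,\R_1,\R_2\ge 0$ one has $\tfrac12\min\{\R_1,\R_2\} \le \R_0+\max\{\R_1,\R_2\}$), the maximum is realized by $\ell_1$ on $[0,\tfrac12\min\{\R_1,\R_2\}]$, by $\ell_2$ on $[\tfrac12\min\{\R_1,\R_2\},\,\R_0+\max\{\R_1,\R_2\}]$, and by $\ell_3$ on $[\R_0+\max\{\R_1,\R_2\},\,\R_0+\R_1+\R_2]$. Comparing these three pieces term by term with the three cases of $\Rach(M,\Rscr)$ in Corollary~\ref{col:peak rate 2} shows they are identical, closing the sandwich.

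I do not expect a substantive obstacle here: the real content is simply the observation that the cut-set lower bound specialized to two files and two receivers and the exact achievable peak rate of the proposed MR scheme are the very same ``max of affine functions'' expression. The only point that needs a little care is to verify that the breakpoints of the lower-bound envelope fall exactly at the cache-encoder thresholds $\tfrac12\min\{\R_1,\R_2\}$ and $\R_0+\max\{\R_1,\R_2\}$ that appear in Corollary~\ref{col:peak rate 2}, and to confirm that the ordering of these thresholds (hence the three-piece description of $\RGWRlb$) holds for every $\Rscr\in\GWregion$; both follow immediately from nonnegativity of the Gray-Wyner rates. A minor bookkeeping detail is to handle the two orderings of $\R_1$ and $\R_2$ uniformly, which the $\max/\min$ notation already does.
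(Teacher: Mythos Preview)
Your proposal is correct and follows essentially the same approach as the paper: the paper's proof simply sets $K=2$ in Theorem~\ref{thm:ach and GW lower bound peak}, whose proof (Appendix~\ref{app:ach and GW lower bound peak}) performs exactly the piecewise comparison you describe between the achievable rate and the lower bound $\RGWRlb$, finding the gap $(\tfrac12-\tfrac1K)\min\{\R_1,\R_2\}$, which vanishes at $K=2$. You are just carrying out that same comparison directly for $K=2$ using Corollary~\ref{col:peak rate 2}, rather than specializing the general-$K$ computation.
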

\begin{proof}
	The optimality of the scheme with respect to the MR peak rate-memory function follows from setting $K=2$ in Theorem~\ref{thm:ach and GW lower bound peak}, for which the achievable rate meets the lower bound for the entire region of the memory.
\end{proof}

\subsubsection{Average Rate} 

\begin{theorem}\label{thm:ach and GW lower bound avg}  
	In the two-file $K$-receiver network and for a given rate-tuple $\Rscr$, when
	\begin{align} 
		M\in  \Big[ \R_0+\R_1+\R_2-  \frac{2}{K}\min\{\R_1,\R_2\} , \;  \R_0+\R_1+\R_2 \Big], \notag
	\end{align}
	the proposed MR scheme is optimal under the average rate criterion  among all multiple-request schemes, i.e,  $\RachE(M,\Rscr)=\RGWstarRE.$ For all other cache capacities,  
	$$\RachE(M, \Rscr)- \RGWstarRE \leq \Big(\frac{1}{4}- \frac{1}{2^K}\Big) (\R_1+\R_2).$$
\end{theorem}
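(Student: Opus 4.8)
The plan is to compare the explicit piecewise-linear upper bound on $\RachE(M,\Rscr)$ from Theorem~\ref{thm:avg rate MR} with the lower bound $\RGWRlbE$ of Theorem~\ref{thm:LowerBoundGW}. Since $\RGWRlbE \leq \RGWstarRE \leq \RachE(M,\Rscr)$ holds for the proposed scheme, it suffices to (i) identify the memory interval on which $\RachE(M,\Rscr)$ equals $\RGWRlbE$, which then forces equality with $\RGWstarRE$ as well, and (ii) bound $\RachE(M,\Rscr) - \RGWRlbE$ from above on the complementary interval. Throughout I would assume without loss of generality that $\R_1 \leq \R_2$, so that $\min\{\R_1,\R_2\}=\R_1$, $\gamma_K=\tfrac1K\R_1$, $\bar\gamma_K=\R_0+\tfrac2K\R_1$, and $\lambda_K=\R_0+\R_1+\R_2-\tfrac2K\R_1$; both functions are then affine on each of finitely many subintervals of $[0,\R_0+\R_1+\R_2]$.

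First I would treat the high-memory regime $M\in[\lambda_K,\ \R_0+\R_1+\R_2]$. By \eqref{eq: rate avg K}, here $\RachE(M,\Rscr)=\tfrac12(\R_0+\R_1+\R_2-M)$, which is exactly the fourth affine constraint in the definition of $\RGWRlbE$; hence it is enough to show that this constraint is binding, i.e.\ it is the maximum of the four affine right-hand sides, on this interval. Since the first three constraints have slopes $-2(1-2^{-K})$, $-\tfrac34$, $-1$ in $M$, all at most $-\tfrac12$, the fourth one overtakes them monotonically, so it suffices to verify at the left endpoint $M=\lambda_K$ that each of the first three right-hand sides is at most $\tfrac12(\R_0+\R_1+\R_2-\lambda_K)=\tfrac1K\R_1$; substituting $\lambda_K$ reduces each comparison to an elementary inequality in the nonnegative quantities $\R_0,\R_1,\R_2$ with $\R_1\leq\R_2$. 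This gives $\RGWRlbE=\RachE(M,\Rscr)$, and the sandwich $\RGWRlbE\leq\RGWstarRE\leq\RachE(M,\Rscr)$ yields $\RachE(M,\Rscr)=\RGWstarRE$ on $[\lambda_K,\ \R_0+\R_1+\R_2]$.

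For the remaining regime $M\in[0,\lambda_K)$, both $\RachE(M,\Rscr)$ (three affine pieces with breakpoints $2\gamma_K$ and $\bar\gamma_K$) and $\RGWRlbE$ (the upper envelope of four affine constraints) are piecewise-linear, so the difference $\RachE(M,\Rscr)-\RGWRlbE$ is piecewise-linear and attains its maximum at one of finitely many breakpoints, namely $M\in\{0,\,2\gamma_K,\,\bar\gamma_K,\,\lambda_K\}$ together with the kinks of the lower-bound envelope. I would then evaluate both functions at each such point and check that the gap never exceeds $\big(\tfrac14-\tfrac{1}{2^K}\big)(\R_1+\R_2)$; I expect the maximizing point to be $M=\bar\gamma_K$ (the switch between the second and third achievable pieces), where the check collapses to an elementary inequality. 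The main obstacle is bookkeeping: one must first determine which of the four constraints realizes $\RGWRlbE$ on each subinterval, i.e.\ compute the upper envelope of the four affine pieces, since the achievable and lower-bound breakpoints interleave differently depending on the relative sizes of $\R_0,\R_1,\R_2$ and on $K$, so the argument splits into a few cases before the slope comparisons and endpoint evaluations apply. As a consistency check, for $K=2$ the claimed gap becomes $\big(\tfrac14-\tfrac14\big)(\R_1+\R_2)=0$, matching the exact characterization of $\RachE(M,\Rscr)$ in Corollary~\ref{col:average rate 2} against the $K=2$ specialization of $\RGWRlbE$.
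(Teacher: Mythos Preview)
Your proposal is correct and follows essentially the same high-level strategy as the paper: sandwich $\RGWstarRE$ between $\RGWRlbE$ and $\RachE(M,\Rscr)$, match them exactly on $[\lambda_K,\R_0+\R_1+\R_2]$, and bound the gap on the complement. The execution differs in two minor but noteworthy ways. First, the paper does not work with the full four-constraint lower bound of Theorem~\ref{thm:LowerBoundGW}; it immediately discards the first inequality (the one with slope $-2(1-2^{-K})$) and keeps only the remaining three, which already suffice for the stated gap and make the upper envelope of $\RGWRlbE$ explicit with breakpoints $\bar\gamma=\R_0+\min\{\R_1,\R_2\}$ and $\lambda=\R_0+\max\{\R_1,\R_2\}$. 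Second, rather than invoking the abstract ``piecewise-linear difference is maximized at a breakpoint'' principle and then evaluating at vertices, the paper simply partitions $[0,\lambda_K)$ into the subintervals determined by $2\gamma_K,\bar\gamma_K,\bar\gamma,\lambda$ (handling the possible orderings of $\bar\gamma_K$ relative to $\bar\gamma$) and bounds the difference by direct algebra on each piece, using in particular the elementary inequality $\tfrac1K(1-\tfrac{2}{2^K})\ge\tfrac{1}{2^K}$ for $K\ge2$. Your breakpoint argument would certainly work, but the paper's simplification of the lower bound avoids having to compute the four-line upper envelope and keeps the case split short.
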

\begin{proof} 
	The proof is given in Appendix \ref{app:ach and GW lower bound avg}.
\end{proof}



\begin{corollary}\label{col:average and lower bound 2} 
	In the two-file two-receiver network, for any cache capacity $M$ and rate-tuple $\Rscr$, the proposed MR scheme is optimal under the average rate criterion among all multiple-request schemes, i.e.,  $\RachE(M,\Rscr)=   \RGWstarRE.$
\end{corollary}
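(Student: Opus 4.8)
The plan is to prove the corollary in two complementary ways, either of which suffices, and I would present the direct one as the main argument. The quickest route is simply to set $K=2$ in Theorem~\ref{thm:ach and GW lower bound avg}: the optimality region $\big[\R_0+\R_1+\R_2-\tfrac{2}{K}\min\{\R_1,\R_2\},\,\R_0+\R_1+\R_2\big]$ collapses to $\big[\R_0+\R_1+\R_2-\min\{\R_1,\R_2\},\,\R_0+\R_1+\R_2\big]$, while outside it the stated gap bound is $\big(\tfrac14-\tfrac1{2^K}\big)(\R_1+\R_2)=0$; hence $\RachE(M,\Rscr)=\RGWstarRE$ for every $M$ and every $\Rscr\in\GWregion$. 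The remainder of the plan is the self-contained verification, which also serves as a check of Theorem~\ref{thm:ach and GW lower bound avg} at $K=2$.

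For the direct verification I would first specialize Theorem~\ref{thm:LowerBoundGW} to $K=2$ (so that $1-2^{-K}=\tfrac34$ and $2(1-2^{-K})=\tfrac32$), which exhibits $\RGWRlbE$ on $[0,\R_0+\R_1+\R_2]$ as the pointwise maximum of the four affine functions of $M$
\begin{align*}
\ell_1(M)&=\R_0+\tfrac34(\R_1+\R_2)-\tfrac32 M, &\ell_2(M)&=\tfrac34\R_0+\tfrac12(\R_1+\R_2)+\tfrac14\max\{\R_1,\R_2\}-\tfrac34 M,\\
\ell_3(M)&=\R_0+\tfrac34(\R_1+\R_2)-M, &\ell_4(M)&=\tfrac12\big(\R_0+\R_1+\R_2-M\big)
\end{align*}
(for $M>\R_0+\R_1+\R_2$ both the bound and the achievable rate are $0$). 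I would then observe that the three branches of $\RachE(M,\Rscr)$ in Corollary~\ref{col:average rate 2} are exactly $\ell_3$, $\ell_2$, and $\ell_4$, with breakpoints $M=\R_0+\min\{\R_1,\R_2\}$ and $M=\R_0+\max\{\R_1,\R_2\}$. It therefore remains to show that on each of these three intervals the corresponding $\ell_i$ dominates the other three.

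This last step is a routine batch of pairwise comparisons of affine functions. Writing $m=\min\{\R_1,\R_2\}$ and using $(\R_1+\R_2)-\max\{\R_1,\R_2\}=m$, each needed inequality reduces to a single threshold condition on $M$ — for instance $\ell_3-\ell_2=\tfrac14(\R_0+m-M)$, $\ell_2-\ell_4=\tfrac14(\R_0+\max\{\R_1,\R_2\}-M)$, $\ell_3-\ell_1=\tfrac12 M$, and $\ell_3-\ell_4=\tfrac12\R_0+\tfrac14(\R_1+\R_2)-\tfrac12 M$ — and the thresholds fall exactly at the branch breakpoints because of the elementary ordering $\min\{\R_1,\R_2\}\le\tfrac12(\R_1+\R_2)\le\max\{\R_1,\R_2\}$ (this is what makes $\ell_3\ge\ell_4$ throughout $[0,\R_0+m)$ and $\ell_4\ge\ell_3$ throughout $[\R_0+\max\{\R_1,\R_2\},\R_0+\R_1+\R_2]$). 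I do not expect a genuine obstacle here; the only care required is bookkeeping the comparisons and the interval endpoints. Assembling them gives $\RachE(M,\Rscr)=\max\{\ell_1,\ell_2,\ell_3,\ell_4\}(M)=\RGWRlbE$ for all $M$, and since $\RGWRlbE\le\RGWstarRE\le\RachE(M,\Rscr)$ by the lower bound and by achievability, we conclude $\RachE(M,\Rscr)=\RGWstarRE$, as claimed.
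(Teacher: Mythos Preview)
Your proposal is correct and your main argument---setting $K=2$ in Theorem~\ref{thm:ach and GW lower bound avg} so that the gap bound $\big(\tfrac14-\tfrac1{2^K}\big)(\R_1+\R_2)$ vanishes---is exactly the paper's proof. The direct verification you add (matching the branches of Corollary~\ref{col:average rate 2} to the lower-bound lines $\ell_2,\ell_3,\ell_4$ from Theorem~\ref{thm:LowerBoundGW}) is a correct and self-contained elaboration beyond what the paper writes out.
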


\begin{proof}
	The optimality of the scheme with respect to the MR average rate-memory function follows from setting $K=2$ in Theorem~\ref{thm:ach and GW lower bound avg},  for which the achievable rate meets the lower bound for the entire region of the memory.

\end{proof}

\begin{remark}
	Corollaries~\ref{col:peak and lower bound 2} and \ref{col:average and lower bound 2}  imply that in the two-file two-receiver network, caching and delivering content independently across the sublibraries, as described in Sec.~\ref{subsec: scheme}, is sufficient to achieve optimality among all MR schemes, rendering coding across the multicast codewords pertaining to each sublibrary unnecessary.
\end{remark}

\section{Optimality of the GW-MR Scheme for Two Files}\label{sec:Order Optimality}
In this section, we first provide lower bounds on the optimal peak and average rate-memory functions, $\Rstar$ and $\RstarE$, given in Definitions \ref{def:infimum-peak} and \ref{def:infimum-average}, respectively, for the two-file $K$-receiver network 
described in Sec.~\ref{sec:Problem Formulation}.  Then, by using the results of the MR scheme for two files (Sec.~\ref{sec: two files GW-CACM}), we evaluate the performance of the proposed GW-MR scheme (Sec.~\ref{sec: GW-CACM scheme}), by comparing the presented lower bounds with the peak and average rates achieved by the proposed GW-MR, 
\begin{align}
	&\RGW = \inf\{ \Rach(M, \Rscr) : \Rscr\in\GWregion\}, \label{eq:ach peak GWMR}\\
	& \RGWE = \inf\{ \RachE(M, \Rscr) : \Rscr\in\GWregion\}, \label{eq:ach avg GWMR}
\end{align}
respectively,  where $\Rach(M, \Rscr)$ and $\RachE(M, \Rscr)$ are the peak and average rates achieved by the MR scheme (Sec.~\ref{subsec: scheme}).


\subsection{Lower bounds on $\Rstar$ and $\RstarE$}\label{subsec: optimal lower bound}
\begin{theorem}\label{thm:LowerBound}
	In the two-file $K$-receiver network with library distribution $p(x_1,x_2)$, for a given cache capacity $M$, a lower bound on $\Rstar$, the optimal peak rate-memory function, is given by
	\begin{align}
		\Rlb  =  \inf \bigg \{ R: \;\;\; 
		& R \geq H(\Xsf_1, \Xsf_2)\,-\,2M,\notag\\
		& R \geq \frac{1}{2}\Big( H(\Xsf_1, \Xsf_2) \,+\,   \max\Big\{H(\Xsf_1), H(\Xsf_2) \Big\}\Big) \,-\, M , \notag\\
		& R \geq \frac{1}{2}\Big(H(\Xsf_1, \Xsf_2)\,-\,M\Big)   \bigg \}. \notag
	\end{align}
	A lower bound on $\RstarE$, the optimal average rate-memory function, is given by
	\begin{align} 
		\RlbE  =  \inf \bigg \{ R: \;\;\;
		& R \geq \Big ( 1-\frac{2}{2^K}\Big )  H(\Xsf_1, \Xsf_2)+\frac{1}{2^K} \Big(H(\Xsf_1)+H(\Xsf_2) \Big) \,-\, 2\Big ( 1-\frac{1}{2^K}\Big ) M ,\notag\\
		& R \geq \frac{1}{2}H(\Xsf_1, \Xsf_2) \,+\,  \frac{1}{4}\Big( H(\Xsf_1)+H(\Xsf_2) \Big) \,-\, M  , \notag\\
		& R \geq \frac{1}{2}H(\Xsf_1, \Xsf_2) + \frac{1}{4}\max\Big\{H(\Xsf_1) ,H(\Xsf_2)  \Big\}  - \frac{3}{4}M , \notag\\ 
		& R \geq \frac{1}{2}\Big(H(\Xsf_1, \Xsf_2)-M\Big)  \bigg \}. \notag
	\end{align}
	
\end{theorem}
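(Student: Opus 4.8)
The plan is to derive $\Rlb$ and $\RlbE$ as the pointwise maximum of several affine lower bounds on $\Rstar$ and $\RstarE$, each obtained by a genie‑aided cut‑set argument on the cache‑demand augmented graph, i.e. by specializing the converse of \cite{ITLowerBound} (built on \cite{llorca2013network}) to two files. The common template is: fix a short list of demand vectors $\dbf^{(1)},\dots,\dbf^{(t)}\in\Dc$ and a subset of receivers; by construction the caches $Z_{r_k}$ of those receivers together with the transmissions $Y_{\dbf^{(1)}},\dots,Y_{\dbf^{(t)}}$ must allow lossless recovery of every file requested across those demands. Since each $H(Z_{r_k})\le MF$, each $H(Y_{\dbf^{(j)}})\le R^{(F)}F$ (resp. averages to $\bar R^{(F)}F$), and the decoding error vanishes, Fano's inequality and the chain rule yield an affine inequality in $(R^{(F)},M)$, and letting $F\to\infty$ removes the $o(F)$ terms.

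For the peak bounds, $R\ge H(\Xsf_1,\Xsf_2)-2M$ comes from two receivers under a single demand $(d_{r_1},d_{r_2})=(1,2)$, so $(Z_{r_1},Z_{r_2},Y_{\dbf})$ determines $(\Xsf_1^F,\Xsf_2^F)$, giving $H(\Xsf_1^F,\Xsf_2^F)\le 2MF+R^{(F)}F+o(F)$; and $R\ge\tfrac12(H(\Xsf_1,\Xsf_2)-M)$ comes from one receiver under two demands in which it requests file $1$ and then file $2$, so $(Z_{r_1},Y_{\dbf^{(1)}},Y_{\dbf^{(2)}})$ determines $(\Xsf_1^F,\Xsf_2^F)$, giving $H(\Xsf_1^F,\Xsf_2^F)\le MF+2R^{(F)}F+o(F)$. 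The remaining peak inequality, $R\ge\tfrac12\big(H(\Xsf_1,\Xsf_2)+\max\{H(\Xsf_1),H(\Xsf_2)\}\big)-M$, is the delicate one: a plain cut‑set argument with two caches and two transmissions only reproduces $2R+2M\ge H(\Xsf_1,\Xsf_2)$, since every cache and every transmission is a deterministic function of $(\Xsf_1^F,\Xsf_2^F)$ and so cannot jointly carry more than $H(\Xsf_1^F,\Xsf_2^F)$ bits. The extra $\max\{H(\Xsf_1),H(\Xsf_2)\}$ has to come from a Tian--Chen‑style refinement: taking two receivers and two matched demands (e.g. $(1,2)$ and $(1,1)$, so that in the second demand \emph{both} caches are required to decode the same file), one lower bounds $H(Y_{\dbf^{(1)}})+H(Y_{\dbf^{(2)}})$ by conditioning each transmission's contribution on a single cache and then assembling the pieces so that the file attaining the $\max$ is effectively decoded twice while only two cache terms $MF$ are spent in total.

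For the average bounds I would use $\bar R^{(F)}F\ge\frac{1}{N^K}\sum_{\dbf\in\Dc}H(Y_{\dbf})$ and average the per‑demand cut‑set bounds over the $2^K$ equiprobable demand vectors. The first average inequality follows by noting that exactly two of the $2^K$ demand vectors have all receivers requesting the same file (then a single cache gives $H(Y_{\dbf})\ge H(\Xsf_i^F)-MF$), while the other $2^K-2$ vectors have both files requested (then two receivers asking for different files give $H(Y_{\dbf})\ge H(\Xsf_1^F,\Xsf_2^F)-2MF$); summing and normalizing produces the $1-2\cdot2^{-K}$, $2^{-K}$ and $2(1-2^{-K})$ coefficients. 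The inequality $R\ge\tfrac12(H(\Xsf_1,\Xsf_2)-M)$ follows by pairing each demand vector with the one obtained by flipping a fixed receiver's request and applying the one‑receiver/two‑transmission bound to each of the $2^{K-1}$ pairs. The two middle average inequalities are obtained analogously, by averaging over the four request types of a fixed pair of receivers the two‑cache/two‑transmission bound and the refined ``$\max$'' bound (the latter only on the request types where it is valid, which is what produces the $\tfrac14$ and $\tfrac34$ coefficients), combined with weaker bounds for the remaining types.

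The main obstacle is the $\max\{H(\Xsf_1),H(\Xsf_2)\}$ term in the second peak inequality (and its reflection in the third average inequality): it is strictly stronger than any combination of the other listed bounds and it is not delivered by the standard Maddah-Ali--Niesen cut‑set, so the work lies in identifying the right cut of the augmented graph, equivalently the right genie‑aided conditioning across two matched demands. A secondary, more routine difficulty is the combinatorial bookkeeping over demand types needed to pin down the exact $K$‑dependent coefficients in the average bounds, and tracking the Fano $o(F)$ terms so that they vanish as $F\to\infty$.
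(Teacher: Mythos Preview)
Your proposal is correct and follows essentially the same approach as the paper: both instantiate the augmented-graph converse of \cite{ITLowerBound} (restated here as Theorem~\ref{thm: general LB rule}) for suitable sequences of demands and receiver subsets, and your bookkeeping for the peak and average inequalities matches the paper's cases. The only cosmetic difference is that for the ``$\max$'' peak bound the paper pairs the two worst-case demands $(1,2)$ and $(2,1)$ with disjoint receiver subsets $\{r_1\}$ and $\{r_2\}$ (so the selected receiver requests file~$1$ in each) rather than your suggested $(1,2)$ and $(1,1)$; both choices produce the same inequality via the same lemma, and your pair is in fact precisely what the paper uses later (Case~$(vi)$ of Appendix~\ref{app:LowerBound}) to obtain the $\tfrac14\big(H(\Xsf_1)+H(\Xsf_2)\big)$ average-rate term.
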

\begin{proof}  
	The proof is given in Appendix~\ref{app:LowerBound}.
	
\end{proof}

\begin{remark} 
	When particularized to i.i.d. sources, the lower bounds in Theorem \ref{thm:LowerBound} match the corresponding best known bounds derived in \cite{yu2017characterizing}.  
\end{remark}

\subsection{Optimality of the Proposed GW-MR Scheme}\label{subsec: optimality}

The following theorems characterizes the performance of the proposed GW-MR scheme described in Sec.~\ref{sec: GW-CACM scheme} for different regions of $M$, and delineates the rate-memory region for which the scheme is optimal or near-optimal.  
\begin{theorem}\label{thm:optimality peak} 
	In the two-file $K$-receiver network, let 
	\begin{align}
		& M_K \triangleq  \max\limits_{\Xsf_1 - \Usf -\Xsf_2} \frac{1}{K}  \min \Big\{H(\Xsf_1|\Usf),H(\Xsf_2|\Usf) \Big\}, \label{eq:Mk}
	\end{align} 
	for $\Usf$ with $ |\mathcal U|\leq |\mathcal X_1|.|\mathcal X_2| +2$.
	When $M\in \Big[ 0, \, {M}_K \Big] \bigcup \Big[ H(\Xsf_1, \Xsf_2)- 2  {M}_K , \,  H(\Xsf_1, \Xsf_2) \Big] $, the proposed GW-MR scheme is optimal under the peak rate criterion, i.e, $\RGW= \Rstar $. For all other cache capacities, we have
	\begin{align}
		&\RGW- \Rstar \leq \frac{1}{2} \min \Big\{H(\Xsf_1| \Xsf_2),H(\Xsf_2| \Xsf_1)\Big\} -  {M}_K .\label{eq:gap peak}
	\end{align}
\end{theorem}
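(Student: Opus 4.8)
The plan is to sandwich $\RGW$ between the achievability bound of Theorem~\ref{thm:peak rate MR} and the converse bound $\Rlb$ of Theorem~\ref{thm:LowerBound}, after upper-bounding the infimum in \eqref{eq:ach peak GWMR} by evaluating $\Rach(M,\Rscr)$ at a single judiciously chosen operating point. Let $\Usf^{*}$ attain the maximum in \eqref{eq:Mk}; since that maximization ranges over channels with $\Xsf_1-\Usf^{*}-\Xsf_2$, Theorem~\ref{thm:GW region} places the corner point $\Rscr^{*}=\big(I(\Xsf_1,\Xsf_2;\Usf^{*}),\,H(\Xsf_1|\Usf^{*}),\,H(\Xsf_2|\Usf^{*})\big)$ inside $\GWregion$; by conditional independence $H(\Xsf_1|\Usf^{*})+H(\Xsf_2|\Usf^{*})=H(\Xsf_1,\Xsf_2|\Usf^{*})$, so its three coordinates sum to $I(\Xsf_1,\Xsf_2;\Usf^{*})+H(\Xsf_1,\Xsf_2|\Usf^{*})=H(\Xsf_1,\Xsf_2)$, and the quantities in \eqref{eq: gamma lambda K} become $\gamma_K=\tfrac1K\min\{H(\Xsf_1|\Usf^{*}),H(\Xsf_2|\Usf^{*})\}=M_K$ and $\lambda_K=H(\Xsf_1,\Xsf_2)-2M_K$. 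Substituting $\Rscr^{*}$ into \eqref{eq: rate peak K} and using $\RGW\le\Rach(M,\Rscr^{*})$ then yields
\begin{align*}
	\RGW \;\le\;
	\begin{cases}
		H(\Xsf_1,\Xsf_2)-2M, & M\in[0,\,M_K),\\
		H(\Xsf_1,\Xsf_2)-M_K-M, & M\in[M_K,\,H(\Xsf_1,\Xsf_2)-2M_K),\\
		\tfrac12\big(H(\Xsf_1,\Xsf_2)-M\big), & M\in[H(\Xsf_1,\Xsf_2)-2M_K,\,H(\Xsf_1,\Xsf_2)].
	\end{cases}
\end{align*}

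Next I would record the one structural fact that drives the matching. Write $H^{c}\triangleq\min\{H(\Xsf_1|\Xsf_2),H(\Xsf_2|\Xsf_1)\}=H(\Xsf_1,\Xsf_2)-\max\{H(\Xsf_1),H(\Xsf_2)\}$. For any $\Usf$ with $\Xsf_1-\Usf-\Xsf_2$, the data-processing inequality gives $I(\Xsf_1;\Xsf_2)\le I(\Xsf_1;\Usf)$ and $I(\Xsf_2;\Xsf_1)\le I(\Xsf_2;\Usf)$, hence $H(\Xsf_i|\Usf)\le H(\Xsf_i|\Xsf_{3-i})$ for $i=1,2$; taking the minimum over $i$ and the maximum over admissible $\Usf$ gives $M_K\le\tfrac1K H^{c}\le\tfrac12 H^{c}$ for $K\ge2$. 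Since $\max\{H(\Xsf_1),H(\Xsf_2)\}\ge\tfrac12H(\Xsf_1,\Xsf_2)$ forces $H^{c}\le\tfrac12H(\Xsf_1,\Xsf_2)$, we also get $M_K\le\tfrac13H(\Xsf_1,\Xsf_2)$, so the three intervals above are nested as stated.

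The third step is to match the three pieces against $\Rlb$, which equals the largest of the three right-hand sides in Theorem~\ref{thm:LowerBound}. On $[0,M_K]$, the bounds $M\le M_K\le\tfrac12 H^{c}$ and $M\le\tfrac13H(\Xsf_1,\Xsf_2)$ make the first right-hand side the largest, so $\Rlb=H(\Xsf_1,\Xsf_2)-2M$, which equals the achievability bound; on $[H(\Xsf_1,\Xsf_2)-2M_K,H(\Xsf_1,\Xsf_2)]$ one has $M\ge H(\Xsf_1,\Xsf_2)-2M_K\ge\max\{H(\Xsf_1),H(\Xsf_2)\}\ge\tfrac13H(\Xsf_1,\Xsf_2)$, which makes the third right-hand side the largest, so $\Rlb=\tfrac12(H(\Xsf_1,\Xsf_2)-M)$, again equal to the achievability bound. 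In each of these regions the chain $\Rlb\le\Rstar\le\RGW\le(\text{achievability bound})=\Rlb$ collapses, giving $\RGW=\Rstar$. On the remaining interval $(M_K,H(\Xsf_1,\Xsf_2)-2M_K)$ I would use only the second right-hand side, i.e., $\Rstar\ge\tfrac12\big(H(\Xsf_1,\Xsf_2)+\max\{H(\Xsf_1),H(\Xsf_2)\}\big)-M=H(\Xsf_1,\Xsf_2)-\tfrac12 H^{c}-M$, and subtract it from the middle achievability piece $H(\Xsf_1,\Xsf_2)-M_K-M$, obtaining $\RGW-\Rstar\le\tfrac12 H^{c}-M_K$, which is \eqref{eq:gap peak}.

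The main obstacle is the achievability side: one must be comfortable that collapsing the infimum in \eqref{eq:ach peak GWMR} onto the single Markov corner point $\Rscr^{*}$ loses nothing in the two matching regions. This is in fact free, because the converse $\Rlb$ already equals $\Rach(M,\Rscr^{*})$ there, so no other operating point can do better; on the interior interval it is not needed, since only an upper bound on $\RGW$ is asserted. A minor point is to confirm graceful behaviour in the degenerate cases $M_K=0$ (fully dependent sources, where $H^{c}=0$ as well, so \eqref{eq:gap peak} gives exact optimality everywhere) and $H(\Xsf_1,\Xsf_2)-2M_K=M_K$ (empty interior interval), both consistent with the statement.
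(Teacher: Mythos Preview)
Your proof is correct and follows essentially the same approach as the paper's Appendix~\ref{app:optimality peak}: fix a Markov operating point with $\R_0+\R_1+\R_2=H(\Xsf_1,\Xsf_2)$, compare the piecewise upper bound from Theorem~\ref{thm:peak rate MR} against the piecewise lower bound from Theorem~\ref{thm:LowerBound}, and optimize over $\Usf$ at the end. Your organization is slightly leaner---you work directly with the optimal $\Usf^{*}$ and only three memory regions, whereas the paper keeps $\Usf$ generic until the last step and splits the middle interval into three sub-cases according to the breakpoints $\alpha=\tfrac12 H^{c}$ and $\beta=\max\{H(\Xsf_1),H(\Xsf_2)\}$ of $\Rlb$ (this is where the paper invokes its Lemma~\ref{lemma:markov chain symmetry} to order $\gamma_K\le\alpha\le\beta\le\lambda_K$)---but the computations and the resulting gap bound are identical.
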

\begin{proof}
	The proof is given in Appendix \ref{app:optimality peak}.
\end{proof}
We note that for the two-file network, using an MR scheme in the second step of GW-MR that treats the common and private sublibraries independently (Sec.~\ref{sec: general MR}), achieves optimality among all (correlation-aware) CACM schemes for small and large cache sizes.



\begin{theorem}\label{thm:optimality avg} 
	Let  $\Usf^*$ denote the auxiliary random variable that achieves $M_K$ defined in \eqref{eq:Mk}, and let  
	\begin{align}
		& \Delta_K \triangleq   \frac{1}{2^K}   \Big(H(\Xsf_1|\Usf^*)+H(\Xsf_2|\Usf^*) \Big). \label{eq:M2 bar}
	\end{align}
	Then, in the two-file $K$-receiver network, 
	when $M\in \Big[ H(\Xsf_1, \Xsf_2)- 2 {{M}_K }, \,  H(\Xsf_1, \Xsf_2) \Big] $, the proposed GW-MR scheme is optimal under the average rate criterion, i.e, $\RGWE= \RstarE$. For all other cache capacities, we have
	\begin{align}
		& \RGWE- \RstarE  \leq   \frac{1}{4} \Big(H(\Xsf_1| \Xsf_2)+H(\Xsf_2| \Xsf_1)\Big)- {   \Delta_K  } . \label{eq:gap avg}
	\end{align}
\end{theorem}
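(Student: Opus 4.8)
The strategy is to sandwich $\RGWE$ between the lower bound $\RlbE$ of Theorem~\ref{thm:LowerBound} and the achievable rate $\RachE(M,\Rscr)$ of Theorem~\ref{thm:avg rate MR}, evaluated at one judiciously chosen Gray--Wyner operating point. By Theorem~\ref{thm:GW region} every point of $\GWregion$ has the form $\Rscr(\Usf)=\big(I(\Xsf_1,\Xsf_2;\Usf),\,H(\Xsf_1|\Usf),\,H(\Xsf_2|\Usf)\big)$, so the infimum in \eqref{eq:ach avg GWMR} is, in effect, an infimum over test channels $p(u|x_1,x_2)$. I would fix $\Usf=\Usf^*$, a maximizer in \eqref{eq:Mk}; such a maximizer exists by compactness of the set of test channels with $|\mathcal U|\le|\mathcal X_1||\mathcal X_2|+2$, and, thanks to the Markov constraint $\Xsf_1-\Usf^*-\Xsf_2$, it satisfies (i) $I(\Xsf_1,\Xsf_2;\Usf^*)+H(\Xsf_1|\Usf^*)+H(\Xsf_2|\Usf^*)=H(\Xsf_1,\Xsf_2)$ and (ii) $H(\Xsf_i|\Usf^*)\le H(\Xsf_i|\Xsf_j)$ for $\{i,j\}=\{1,2\}$. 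Writing $\Rscr^\star$ for this operating point, identity (i) collapses the breakpoints of Theorem~\ref{thm:avg rate MR} to $\gamma_K=M_K$, $\bar\gamma_K=I(\Xsf_1,\Xsf_2;\Usf^*)+2M_K$ and $\lambda_K=H(\Xsf_1,\Xsf_2)-2M_K$, and turns the last achievable piece into $\tfrac12\big(H(\Xsf_1,\Xsf_2)-M\big)$.

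The optimality claim for $M\in[H(\Xsf_1,\Xsf_2)-2M_K,\,H(\Xsf_1,\Xsf_2)]$ is then immediate: on this interval $M\ge\lambda_K$, so Theorem~\ref{thm:avg rate MR} gives $\RGWE\le\RachE(M,\Rscr^\star)\le\tfrac12(H(\Xsf_1,\Xsf_2)-M)$, while the last constraint defining $\RlbE$ gives $\RlbE\ge\tfrac12(H(\Xsf_1,\Xsf_2)-M)$; since $\RlbE\le\RstarE\le\RGWE$ (Theorem~\ref{thm:LowerBound}, together with the fact that GW--MR is a legitimate CACM scheme), all four quantities coincide, i.e.\ $\RGWE=\RstarE$.

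For the remaining cache sizes I would use $\RGWE-\RstarE\le\RachE(M,\Rscr^\star)-\RlbE$ and show the right-hand side never exceeds $\tfrac14\big(H(\Xsf_1|\Xsf_2)+H(\Xsf_2|\Xsf_1)\big)-\Delta_K$. With $D\triangleq H(\Xsf_1|\Xsf_2)+H(\Xsf_2|\Xsf_1)$, the identity $\tfrac14 D=\tfrac12 H(\Xsf_1,\Xsf_2)-\tfrac14\big(H(\Xsf_1)+H(\Xsf_2)\big)$ together with $\Delta_K=2^{-K}\big(H(\Xsf_1|\Usf^*)+H(\Xsf_2|\Usf^*)\big)$ turns this into a comparison of two explicit piecewise-linear functions of $M$ on $[0,\lambda_K)$. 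I would partition $[0,\lambda_K)$ at $2\gamma_K$, at $\bar\gamma_K$, and at the kinks of $\RlbE$ (notably the point $M=\big(\tfrac14-2^{-K}\big)D/(1-2\cdot 2^{-K})$ at which the first two $\RlbE$-constraints exchange dominance), and bound, on each subinterval, the difference of the active pieces of $\RachE(\cdot,\Rscr^\star)$ and of $\RlbE$. This difference is nonnegative, vanishes at $M=\lambda_K$, and its maximum over $[0,\lambda_K)$ is attained at one of those breakpoints; using (i) to simplify the constraint $R\ge\tfrac12 H(\Xsf_1,\Xsf_2)+\tfrac14(H(\Xsf_1)+H(\Xsf_2))-M$ of $\RlbE$ to $R\ge H(\Xsf_1|\Usf^*)+H(\Xsf_2|\Usf^*)-2M_K-\tfrac14 D$ near $M=\bar\gamma_K$, a short computation shows this maximum equals $\tfrac14 D-\Delta_K$ minus a nonnegative term that vanishes when $K=2$, hence is at most $\tfrac14 D-\Delta_K$. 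This establishes \eqref{eq:gap avg}.

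The delicate step, and the one I expect to be the real work, is the last one: showing that the single operating point $\Rscr^\star$ is good enough over the entire complementary memory range and pinning down the worst-case gap. It is a finite but intricate case analysis over which of the (up to four) constraints of $\RlbE$ and which of the (up to four) pieces of Theorem~\ref{thm:avg rate MR} are active on each subinterval --- the relative order of $2\gamma_K$, $\bar\gamma_K$ and the $\RlbE$-kinks depends on the source, so several sub-cases arise --- and it leans on the structural facts $H(\Xsf_i|\Usf^*)\le H(\Xsf_i|\Xsf_j)$ and $H(\Xsf_1|\Usf^*)+H(\Xsf_2|\Usf^*)\ge 2KM_K$ to keep the estimate uniform in $K$. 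Once the active pieces are identified, the rest is routine algebra with the closed-form rates.
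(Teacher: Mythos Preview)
Your proposal is correct and follows essentially the same strategy as the paper: fix the Gray--Wyner operating point at $\Usf^*$, exploit the Markov identity $\R_0+\R_1+\R_2=H(\Xsf_1,\Xsf_2)$, read off optimality on $[\lambda_K,H(\Xsf_1,\Xsf_2)]$ directly, and bound the gap elsewhere by a piecewise-linear comparison of $\RachE(M,\Rscr^\star)$ against $\RlbE$.

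One tactical difference worth noting: the paper deliberately discards the first constraint of $\RlbE$ (the one with coefficient $1-2^{1-K}$ on $H(\Xsf_1,\Xsf_2)$) and works only with the three remaining ones, giving a looser lower bound but a cleaner case split at $\bar\alpha=\min\{H(\Xsf_1),H(\Xsf_2)\}$ and $\beta=\max\{H(\Xsf_1),H(\Xsf_2)\}$. You instead propose to keep all four constraints and locate the extra kink at $M=\big(\tfrac14-2^{-K}\big)D/(1-2\cdot2^{-K})$. Both routes reach the bound $\tfrac14 D-\Delta_K$, but the paper's simplification avoids tracking one more breakpoint whose position relative to $2\gamma_K$ and $\bar\gamma_K$ would add sub-cases; on the other hand, your approach would in principle yield a slightly tighter intermediate gap on the small-$M$ region. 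The paper also invokes its Lemma~\ref{lemma:markov chain symmetry} (in particular $\max\{\R_1,\R_2\}\le\max\{H(\Xsf_1|\Xsf_2),H(\Xsf_2|\Xsf_1)\}$) to control the ordering $2\gamma_K\le\bar\alpha\le\beta\le\lambda_K$, which is the structural input you allude to but do not name explicitly; you will need precisely that inequality in the sub-cases where $\bar\gamma_K$ and $\beta$ interleave.
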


\begin{proof}
	The proof is given in Appendix \ref{app:optimality avg}. 
\end{proof}

\begin{remark}\label{rmk: not wyner common} 
	From Theorems \ref{thm:optimality peak} and \ref{thm:optimality avg} it follows that a desirable operating point for the Gray-Wyner network is  the rate-tuple   $\Rscr^*=\arg \max_{\Rscr \in\GWregion} \min \{\R_1,\R_2 \}$,  with the constraint  that $\R_0+\R_1+\R_2 = H(X_1, X_2)$. In fact, from the proofs of Theorems \ref{thm:optimality peak} and \ref{thm:optimality avg} it is observed that a necessary condition for the GW-MR scheme to achieve the lower bound on the optimal rate-memory function for small and large cache capacities, is that the Gray-Wyner network operate at a rate-tuple  $\Rscr\in\GWregion$ such that $\R_0+\R_1+\R_2 = H(X_1, X_2)$. To satisfy  this condition, it is sufficient to choose a  $\Rscr\in\GWregion$ such that $\R_0 = I(X_1,X_2;\Usf)$, $\R_1 = H(X_1|\Usf)$ and $\R_2 = H(X_2|\Usf)$ as in \eqref{eq:gwregion2} with any $\Usf$ of the  form $\Xsf_1 - \Usf - \Xsf_2$. Among all such $\Usf$,  
	the one that achieves $M_K$ given in \eqref{eq:Mk} maximizes the region over which the gap to optimality is zero.
	Note that the operating point  $\Rscr^*\in\GWregion$ is related to Wyner's common information. In fact,  
	in Wyner's common information the goal is to minimize $\R_0$ subject to $\R_0+\R_1+\R_2 = H(X_1, X_2)$,
	while in our case the goal is to maximize  $\min\{\R_1,\R_2\}$ subject to $\R_0+\R_1+\R_2 = H(X_1, X_2)$. 

\end{remark}


In the next corollary, we particularize  Theorems \ref{thm:optimality peak} and \ref{thm:optimality avg} for a specific 2-DMS.
\begin{corollary}\label{cor:completoverlap}
	In the two-file $K$-receiver network with $\Xsf_1 = (\Xsf_1',\Vsf)$ and $\Xsf_2 =(\Xsf'_2,\Vsf)$ such that $\Xsf'_1$ and $\Xsf'_2$ are conditionally independent given $\Vsf$, the proposed GW-MR scheme is optimal under the peak rate criterion when $M\in \Big[ 0, \,\frac{1}{K}\Gamma\Big] \cup \Big[ H(\Xsf_1, \Xsf_2)-   \frac{2}{K}\Gamma, \,  H(\Xsf_1, \Xsf_2) \Big] $,  
	and it is optimal under the average rate criterion when $M\in \Big[ H(\Xsf_1, \Xsf_2)-   \frac{2}{K}\Gamma, \,  H(\Xsf_1, \Xsf_2)\Big]$, where
	$$ {\Gamma} \triangleq \min\Big\{ H(\Xsf_1|\Xsf_2),H(\Xsf_2|\Xsf_1)\Big\}.$$
	When $K=2$, i.e, in the two-file two-receiver network, the proposed GW-MR scheme is optimal for any $M\in \Big[0,\, H(\Xsf_1,\Xsf_2)\Big]$, i.e., $\RGW = \Rstar$ and $\RGWE = \RstarE$.
	
\end{corollary}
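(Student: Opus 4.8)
The plan is to apply Theorems~\ref{thm:optimality peak} and \ref{thm:optimality avg} to the particular source $\Xsf_1 = (\Xsf_1',\Vsf)$, $\Xsf_2 = (\Xsf_2',\Vsf)$ with $\Xsf_1' - \Vsf - \Xsf_2'$, and to show that for this source the generic bounds collapse to the stated endpoints and, when $K=2$, cover the entire memory interval. The first step is to identify a good auxiliary variable for the maximization in \eqref{eq:Mk}. The natural candidate is $\Usf = \Vsf$: since $\Xsf_1'$ and $\Xsf_2'$ are conditionally independent given $\Vsf$, and $\Vsf$ is a common component of both files, the Markov chain $\Xsf_1 - \Vsf - \Xsf_2$ holds, so $\Vsf$ is feasible. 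With this choice $H(\Xsf_1|\Vsf) = H(\Xsf_1'|\Vsf)$ and $H(\Xsf_2|\Vsf) = H(\Xsf_2'|\Vsf)$, and one checks that $H(\Xsf_i'|\Vsf) = H(\Xsf_i' , \Vsf) - H(\Vsf) = H(\Xsf_i) - H(\Vsf)$. The key observation is that for any feasible $\Usf$ in the Markov chain $\Xsf_1 - \Usf - \Xsf_2$, we have $H(\Xsf_i|\Usf) \geq H(\Xsf_i|\Xsf_1,\Xsf_2) + I(\Xsf_{3-i};\Xsf_i|\Usf) \ge H(\Xsf_i|\Xsf_{3-i})$ when the chain degenerates appropriately; more directly, one shows $\min\{H(\Xsf_1|\Usf),H(\Xsf_2|\Usf)\} \le \min\{H(\Xsf_1|\Xsf_2),H(\Xsf_2|\Xsf_1)\} = \Gamma$ with equality attained at $\Usf=\Vsf$, using that $H(\Xsf_i|\Xsf_{3-i}) = H(\Xsf_i'|\Vsf)$ here because $\Xsf_{3-i}$ contains $\Vsf$ and $\Xsf_i'$ is independent of $\Xsf_{3-i}'$ given $\Vsf$. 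Hence $M_K = \frac{1}{K}\Gamma$.

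The second step substitutes $M_K = \frac{1}{K}\Gamma$ into Theorem~\ref{thm:optimality peak}: the optimality intervals $[0,M_K]$ and $[H(\Xsf_1,\Xsf_2) - 2M_K,\, H(\Xsf_1,\Xsf_2)]$ become exactly the stated intervals $[0,\frac{1}{K}\Gamma]$ and $[H(\Xsf_1,\Xsf_2)-\frac{2}{K}\Gamma,\, H(\Xsf_1,\Xsf_2)]$, proving the peak-rate claim. For the average-rate claim I invoke Theorem~\ref{thm:optimality avg}: since the optimal $\Usf^*$ achieving $M_K$ is $\Vsf$, the optimality interval there is again $[H(\Xsf_1,\Xsf_2) - 2M_K,\, H(\Xsf_1,\Xsf_2)] = [H(\Xsf_1,\Xsf_2)-\frac{2}{K}\Gamma,\, H(\Xsf_1,\Xsf_2)]$, which is the stated interval.

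The third step handles $K=2$. Here the peak-rate optimality intervals are $[0,\frac{1}{2}\Gamma]$ and $[H(\Xsf_1,\Xsf_2)-\Gamma,\, H(\Xsf_1,\Xsf_2)]$; to conclude $\RGW = \Rstar$ for all $M$ it remains to show the residual gap from \eqref{eq:gap peak} is zero on the middle interval. With $K=2$ the right-hand side of \eqref{eq:gap peak} is $\frac{1}{2}\min\{H(\Xsf_1|\Xsf_2),H(\Xsf_2|\Xsf_1)\} - M_2 = \frac{1}{2}\Gamma - \frac{1}{2}\Gamma = 0$, so the GW-MR peak rate equals the lower bound everywhere, hence $\RGW = \Rstar$. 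Similarly, for the average rate, $\Delta_2 = \frac{1}{4}(H(\Xsf_1|\Vsf)+H(\Xsf_2|\Vsf)) = \frac{1}{4}(H(\Xsf_1|\Xsf_2)+H(\Xsf_2|\Xsf_1))$ here — using that $H(\Xsf_i|\Vsf) = H(\Xsf_i|\Xsf_{3-i})$ for this source — so the right-hand side of \eqref{eq:gap avg} is $\frac{1}{4}(H(\Xsf_1|\Xsf_2)+H(\Xsf_2|\Xsf_1)) - \Delta_2 = 0$, giving $\RGWE = \RstarE$ for all $M$. The main obstacle, and the only genuinely source-specific computation, is the second step of the first paragraph: verifying that $\Vsf$ actually maximizes $\min\{H(\Xsf_1|\Usf),H(\Xsf_2|\Usf)\}$ over all admissible $\Usf$ — equivalently that no auxiliary variable can do better than conditioning on the full shared component — and simultaneously that $H(\Xsf_i|\Vsf)$ coincides with the conditional entropy $H(\Xsf_i|\Xsf_{3-i})$ appearing in the gap bounds; both follow from the conditional-independence structure $\Xsf_1' - \Vsf - \Xsf_2'$, but the inequality direction showing $\Vsf$ is optimal requires the data-processing/Markov argument sketched above.
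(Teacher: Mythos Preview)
Your proposal is correct and follows essentially the same route as the paper: choose $\Usf=\Vsf$, verify the Markov chain $\Xsf_1-\Vsf-\Xsf_2$, compute $H(\Xsf_i|\Vsf)=H(\Xsf_i|\Xsf_{3-i})$, and then read off the intervals and the vanishing gaps from Theorems~\ref{thm:optimality peak} and~\ref{thm:optimality avg}. Two minor remarks: first, the garbled inequality chain ``$H(\Xsf_i|\Usf)\geq H(\Xsf_i|\Xsf_1,\Xsf_2)+I(\Xsf_{3-i};\Xsf_i|\Usf)\geq H(\Xsf_i|\Xsf_{3-i})$'' points the wrong way and should simply be dropped in favor of the clean statement you give immediately after (which is exactly Lemma~\ref{lemma:markov chain symmetry}); second, you do not actually need the upper bound $M_K\leq\frac{1}{K}\Gamma$ anywhere---the corollary only asserts optimality on the stated intervals, and for $K=2$ the lower bound $M_2\geq\frac{1}{2}\Gamma$ alone already forces the right-hand sides of \eqref{eq:gap peak} and \eqref{eq:gap avg} to be $\leq 0$, hence zero.
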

\begin{proof}
	By taking $\Usf= \Vsf$, we have
	\begin{align}
		& \R_0 =I(\Xsf_1,\Xsf_2;\Usf) = H(\Vsf) = I(\Xsf_1;\Xsf_2),\notag\\
		& \R_1=H(\Xsf_1|\Usf)= H(\Xsf'_1|\Vsf) = H(\Xsf_1|\Xsf_2),\notag\\
		& \R_2=H(\Xsf_2|\Usf)= H(\Xsf'_2|\Vsf) = H(\Xsf_2|\Xsf_1),\notag
	\end{align}
	with point $\Rscr = \Big(I(\Xsf_1;\Xsf_2),H(\Xsf_1|\Xsf_2), H(\Xsf_2|\Xsf_1)\Big)$ belonging to the Gray-Wyner rate region. The region of memory over which the GW-MR scheme is optimal in the $K$-receiver network can be readily obtained  from Theorems~\ref{thm:optimality peak} and \ref{thm:optimality avg}. From specializing these theorems to a setting with $K=2$ receivers, the gaps to optimality in \eqref{eq:gap peak} and \eqref{eq:gap avg} vanish; hence, $\RGW= \Rstar$ and $\RGWE = \RstarE$ for any cache capacity $M$.  
\end{proof}

\begin{remark}
	The 2-DMS  considered in Corollary~\ref{cor:completoverlap} leads to a set of descriptions generated by the  Gray-Wyner network operating at $\Rscr = \Big(I(\Xsf_1;\Xsf_2),H(\Xsf_1|\Xsf_2), H(\Xsf_2|\Xsf_1)\Big)$, that is 
	equivalent to the library considered in \cite{yang2017centralized}, where each file is composed of two independent subfiles, one of which is common among the two files. Hence, from Corollary~\ref{cor:completoverlap} it follows that our proposed GW-MR scheme provides an optimal solution to the two-file two-receiver setting considered in \cite{yang2017centralized}.
\end{remark}

\begin{remark}
	From  Corollaries~\ref{col:peak and lower bound 2} and \ref{col:average and lower bound 2} it follows that 
	for any given  $\Rscr\in\GWregion$ the proposed  MR scheme for two files and two receivers, as described in  Sec.~\ref{sec: general MR}, is optimal in the sense that 
	it achieves both MR peak and average rate-memory functions.  However,  as stated in Theorems~\ref{thm:optimality peak} and \ref{thm:optimality avg},  
	the overall GW-MR scheme does not meet the lower bounds  on $\Rstar$ and on $\RstarE$  for intermediate values of $M$, 
	leaving the optimality of the two-step approach in this region of memory unresolved.
	An interesting future direction would be to investigate {\em joint} compression and caching strategies or a different two-step approach for this setting.
\end{remark}

\section{Three Files and Two Receivers}\label{sec:three files}  
As explained in Sec.~\ref{subsec: GW network}, due to the exponential complexity of Gray-Wyner source coding with the number of files, the overall characterization of the GW-MR scheme with large number of files is exceedingly difficult. Therefore,  in the following we focus on the three-file scenario, as it captures the essence of caching in broadcast networks with multiple files.  
{Specifically, looking at the peak rate criterion  we characterize the GW-MR peak rate-memory function $\RGWstar$, and we provide a lower bound on the optimal peak rate-memory function $\Rstar$.  To this end, for a given $\Rscr\in\GWregion$ corresponding to the  three-file Gray-Wyner network described in Sec.~\ref{subsec: GW network} and depicted in Fig.~\ref{fig:GW}(b), we  present a detailed description of our proposed MR scheme for three files, which is used in the second step of the GW-MR scheme. We then upper bound the peak rate achieved by the proposed MR scheme, $\Rach(M,\Rscr)$ defined in \eqref{eq:ach peak MR}, and use it to evaluate the peak rate achieved by the GW-MR scheme, $\RGW$, defined in \eqref{eq:ach peak GWMR} as
	\begin{align}
		&\RGW = \inf\{ \Rach(M, \Rscr) : \Rscr\in\GWregion\},\label{eq:ach peak GWMR three}
	\end{align}	
	and finally, we compute its gap to optimality using the lower bound on $\Rstar$.
}


\subsection{Proposed MR Scheme for Three Files}\label{sec: achievable three}
This section describes the MR scheme for three files used in the second step of the GW-MR scheme, when the Gray-Wyner network in the first step is restricted to operate at a symmetric rate region,  denoted by $\GWregionS$, and defined as 
\begin{align}
	\GWregionS \triangleq \Big\{\Rscr\in\GWregion:\, \R_{12}=\R_{13}=\R_{23} = \R',\;  \R_{1}=\R_{2}=\R_{3} = \R  \Big\}. \label{eq:symmetric GW}
\end{align}
In the following, for notational simplicity, we use $\R_0$ instead of $\R_{123}$ to denote the rate of the common description $\Wsf_{123}$. 
As described in Sec.~\ref{sec: general MR}, the MR scheme for three files arranges the seven descriptions generated by the Gray-Wyner network into a common-to-all sublibrary $L_3=\{\Wsf_{123} \}$, a {\em common-to-two} sublibrary $L_2 = \{\Wsf_{12}, \Wsf_{13}, \Wsf_{23} \}$, and a private sublibrary $L_1=\{\Wsf_1 ,\Wsf_2 ,\Wsf_3\}$, and treats the sublibraries independently during the caching and delivery phases. Each receiver  demand corresponds to requesting multiple descriptions: one description from $L_3$, two descriptions from $L_2$, and one from $L_1$. Recall that even though receivers request files from the original library independently and according to a uniform demand distribution, the structure of the corresponding demand in the MR scheme is dictated by the collective of the requested files, resulting in a non-uniform multiple-request demand that is not independent across the receivers.


As described in Sec.~\ref{sec: general MR}, the proposed MR scheme for three files treats the descriptions in $L_1$, $L_2$ and $L_3$ as independent content, and can adopt any pair of appropriate correlation-unaware CACM schemes for sublibraries $L_1$ and $L_2$. Here, we select the schemes as follows: 
$i)$ description $\Wsf_{123}$ in $L_3$ is cached according to the LFU strategy and delivered through uncoded  transmissions, $ii)$ for the descriptions in $L_2$, we adopt the new two-request CACM scheme proposed in Sec.~\ref{sec: new scheme}, and finally $iii)$ the private descriptions in sublibrary $L_1$ are cached and delivered according to the scheme proposed in \cite{yu2016exact}. Similar to  Sec.~\ref{subsec: scheme}, we first present the optimal allocation of the memory to each sublibrary, and  characterize the peak  rate achieved by the corresponding MR scheme.

Under the peak rate criterion, the optimal cache allocation which minimizes the overall delivery rate is as follows. For a given symmetric $\Rscr\in\GWregionS$, 
the cache encoder allocates the memory to each sublibrary such that: 	
\begin{itemize}
	\item If $M\in \Big[0,\, \frac{3}{2}\R' \Big)$, the descriptions in $L_1$ and $L_3$ are not cached at either receiver, and only the descriptions in $L_2$ are cached according to the caching strategy of the two-request CACM scheme described in Sec.~\ref{sec: new scheme}.
	
	\item If $M\in \Big[ \frac{3}{2}\R'  ,\, \R_0+\frac{3}{2}(\R'+\R) \Big)$, the  receivers fill a portion equal to $\frac{3}{2}\R'$ from their cache with the descriptions in $L_2$ according to the caching strategy of the two-request CACM scheme. The remainder of the cache, $M-\frac{3}{2}\R'$, is first allocated to caching identical bits of $L_3=\{\Wsf_{123}\}$ at both receivers as per LFU caching, and the excess of the  capacity, if any, is used for storing the descriptions in $L_1$ according to the scheme in \cite{yu2016exact}.

	\item If $M\in \Big[ \R_0+\frac{3}{2}(\R'+\R)  ,\, \R_0+3\R'+\frac{3}{2}\R\Big)$, a portion equal to $M-\R_0-\frac{3}{2}\R$ of each receiver's cache is filled with the descriptions in $L_2$ according to the two-request CACM scheme, the common description $\Wsf_{123}$ is fully cached at both receivers, and $\frac{3}{2}\R$ of the capacity is allocated to storing the descriptions in $L_1$ according to the scheme in \cite{yu2016exact}. 
	
	\item If $M\in \Big[\R_0+3\R'+\frac{3}{2}\R ,\,  \R_0+3(\R'+\R)\Big]$, the descriptions in $L_2$ and $L_3$ are fully cached at both receivers, and the descriptions in $L_1$ are cached according to the scheme in \cite{yu2016exact} over the remaining memory $M-\R_0-3\R'$.  
\end{itemize}	
%
The optimality of the cache allocation described above is proved in Appendix~\ref{app:achievable rate three}. The following theorem provides the peak rate achieved by the proposed MR scheme for this cache  allocation.
\begin{theorem}\label{thm:achievable rate three}
	In the three-file two-receiver network, for a given cache capacity $M$ and symmetric rate-tuple $\Rscr\in\GWregionS$, 
	the peak rate achieved by the proposed MR scheme is given by
	\begin{align} 
		& \Rach(M,\Rscr)  =  
		\begin{cases} 
			\R_0 + 3\R'+2\R -2M,   & M \in \Big[0,\, \frac{1}{2}\R' \Big)  \\
			\R_0 + \frac{5}{2}\R'+2\R -M, 	&M\in\Big[ \frac{1}{2}\R' ,  \, \R_0+\frac{3}{2}(\R'+\R)   \Big) \\
			\frac{2}{3}\R_0 + 2\R'+\frac{3}{2}\R -\frac{2}{3}M,   &M\in\Big[ \R_0+\frac{3}{2}(\R'+\R)  ,  \, \R_0+3\R'+\frac{3}{2}\R  \Big) \\
			\frac{1}{3}\R_0 + \R'+\R -\frac{1}{3}M,	 & M\in \Big[\R_0+3\R'+\frac{3}{2}\R ,\,  \R_0+3\R'+3\R\Big]. \label{eq:rate three}
		\end{cases} 
	\end{align}
\end{theorem}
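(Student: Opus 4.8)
The plan is to prove Theorem~\ref{thm:achievable rate three} by combining the cache-allocation structure stated just before the theorem with the known peak-rate performance of the three correlation-unaware components (uncoded LFU for $L_3$, the new two-request scheme of Sec.~\ref{sec: new scheme} for $L_2$, and the scheme of \cite{yu2016exact} for $L_1$). Since the MR scheme treats the three sublibraries independently with no coding across them, the overall peak rate is simply the sum of the peak rates incurred on each sublibrary under the memory split currently in force, so the whole argument reduces to: (i) write down the per-sublibrary peak rates as functions of the memory portion assigned to that sublibrary; (ii) plug in the four-regime allocation; (iii) carry out the algebra and verify the break-points. The optimality of that allocation is relegated to Appendix~\ref{app:achievable rate three}, so here I only need to \emph{evaluate} the resulting sum.

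First I would record the ingredients. For $L_3=\{\Wsf_{123}\}$ of size $\R_0$, cached identically at both receivers via LFU over a memory portion $m_3$, the contribution to the peak rate is $\R_0-m_3$ (uncoded, so no multicasting gain), valid for $m_3\in[0,\R_0]$. For the private sublibrary $L_1=\{\Wsf_1,\Wsf_2,\Wsf_3\}$, three equal-length files of rate $\R$ each with two receivers requesting one private file each, the scheme of \cite{yu2016exact} with memory portion $m_1$ gives a piecewise-linear peak rate that I would state in the normalized form appropriate to total private-library size $3\R$ and per-file size $\R$; with two receivers the relevant corner points are at $m_1=0$, $m_1=\R$ (roughly $\frac{3}{2}\R \to \R \to \frac{1}{2}\R$ type behaviour), and $m_1=3\R$. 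For $L_2=\{\Wsf_{12},\Wsf_{13},\Wsf_{23}\}$, three equal-length files of rate $\R'$ where each of the two receivers requests exactly two of the three files (receiver $r_1$ wanting $d_{r_1}=i$ requests $\{\Wsf_{ij},\Wsf_{ik}\}$), the new two-request scheme of Sec.~\ref{sec: new scheme} gives a peak rate that starts at $3\R'-2m$ for small $m$ (two receivers each missing roughly the whole of their two requested files, with a multiplicative-$2$ coded-multicast-type saving in the slope) and decreases piecewise-linearly to $0$ at $m=3\R'$, with an intermediate corner around $m=\frac{3}{2}\R'$ that is exactly the $\frac{3}{2}\R'$ threshold appearing in the allocation. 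Here I would quote whatever exact expression Sec.~\ref{sec: new scheme} establishes; this is the only nontrivial external input.

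With these in hand, the proof is a bookkeeping exercise over the four regimes of the allocation. In $M\in[0,\tfrac{3}{2}\R')$ only $L_2$ is cached, so $\Rach=\R_0+ (\text{peak rate of }L_1\text{ at }m_1=0) + (\text{peak rate of }L_2\text{ at }m=M)$; the $L_1$ term is its uncached peak rate (a constant) and the $L_2$ term is the $3\R'-2M$ branch, and summing gives the two listed sub-branches — note the theorem splits this interval at $\tfrac12\R'$, which is where the $L_2$ scheme's own internal corner changes the slope of its peak rate from $-2$ to $-1$, turning $\R_0+3\R'+2\R-2M$ into $\R_0+\tfrac52\R'+2\R-M$. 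In $M\in[\tfrac{3}{2}\R',\R_0+\tfrac32(\R'+\R))$ the $L_2$ portion is frozen at $\tfrac32\R'$ (its peak rate there being $\tfrac32\R'$ less than at the corner — i.e. contributing a constant), and the surplus memory first fills $L_3$ (reducing the $\R_0$ term linearly, slope $-1$) and then $L_1$; as long as we are below the point where $L_3$ is full and $L_1$ starts, the combined slope is $-1$, matching the second branch. In $M\in[\R_0+\tfrac32(\R'+\R),\R_0+3\R'+\tfrac32\R)$ the common description is full, $L_1$ holds $\tfrac32\R$, and the remaining memory goes to $L_2$; here the slope becomes $-\tfrac23$ because a fraction of the memory increment converts into a coded-multicast saving on the still-partially-uncached $L_2$ — this gives the $\frac{2}{3}\R_0+2\R'+\frac32\R-\frac23 M$ branch, and I would check the value at the left endpoint matches the previous branch there. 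Finally in $M\in[\R_0+3\R'+\tfrac32\R,\R_0+3(\R'+\R)]$ both $L_2,L_3$ are full and all surplus goes to $L_1$ operating in its small-residual-rate regime with slope $-\tfrac13$ (two-receiver coded multicast over the three private files), giving $\frac13\R_0+\R'+\R-\frac13 M$, vanishing at $M=\R_0+3(\R'+\R)$ as it must. Continuity at all three internal break-points is the consistency check that ties the computation together.

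The main obstacle is not any of this arithmetic but pinning down the exact piecewise-linear peak-rate expression of the new two-request $L_2$ scheme (and, to a lesser extent, the correctly-normalized statement of \cite{yu2016exact} for three files and two receivers): the corner at $\tfrac12\R'$ inside the first regime, the slope-$2$-then-slope-$1$ behaviour, and the slope-$\tfrac23$ effective rate once $L_2$ is being co-filled with the common description must all come out of Sec.~\ref{sec: new scheme} precisely as quoted, and the cross-sublibrary slope changes (the transitions from $-1$ to $-\tfrac23$ to $-\tfrac13$) depend delicately on \emph{which} sublibrary the marginal memory bit is going to in each regime, which is exactly what the allocation in the bulleted list dictates. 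So the real work is to (a) invoke the Sec.~\ref{sec: new scheme} result and the \cite{yu2016exact} result with the right parameters, (b) substitute the bulleted allocation, and (c) verify continuity at $M=\tfrac12\R'$, $M=\tfrac32\R'$, $M=\R_0+\tfrac32(\R'+\R)$, and $M=\R_0+3\R'+\tfrac32\R$; everything else is immediate from additivity of rates across independently-treated sublibraries.
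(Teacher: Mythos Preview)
Your proposal is correct and follows essentially the same route as the paper. The paper's Appendix~\ref{app:achievable rate three} writes the per-sublibrary peak rates explicitly---$\R_0-\mu_0$ for $L_3$, the piecewise-linear $R_{L_2}(\mu',\R')$ from \eqref{eq:two request rate} for $L_2$, and $R_{L_1}(\mu,\R)=2\R-\mu$ on $[0,\tfrac32\R)$ and $\R-\tfrac13\mu$ on $[\tfrac32\R,3\R]$ for $L_1$---then substitutes the stated allocation in each regime and sums, exactly as you outline; the only addition is that the same appendix also carries out the slope-comparison argument establishing \emph{optimality} of that allocation, which you correctly set aside as a separate claim.
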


\begin{proof}
	The proof is given  in Appendix~\ref{app:achievable rate three}. 
\end{proof}

\subsection{Two-Request CACM Scheme Adopted for Sublibrary $L_2$} \label{sec: new scheme}
In this section, we describe in detail the CACM scheme adopted for the common-to-two sublibrary $L_2$. As mentioned in the previous section, for a given cache allocation among $L_1$, $L_2$ and $L_3$, caching and delivery of the content are done in an independent fashion across the sublibraries, i.e., there is no coding across the sublibraries in either phase.  As a result, the scheme adopted for sublibrary $L_2$ needs to be designed for a network with two receivers and a library composed of three independent files (descriptions), $\{W_{12},W_{13},W_{23}\}$ with length $\R' F$ bits, where each receiver requests two files from the library. We refer to this network as the {\em two-request network}. Specifically, in the worst-case scenario, the demand from sublibrary $L_2$ consists of $i)$ one file that is requested by both receivers, and $ii)$ two files, each requested only by one of the receivers. With a slight abuse of notation, we denote receiver $r_k$'s demand as $d_{r_k}\in \Big\{\{12,13\},\{12,23\},\{13,23\}\Big\}$. 
While CACM schemes available in the literature such as the scheme proposed in \cite{yu2016exact}, where an uncoded prefetching strategy is adopted, are optimal for a single-request three-file two-receiver network, they fall short to achieve optimality in this multiple-request setting. Furthermore, the schemes in \cite{ji2015caching} and \cite{ji15efficient,ji14groupcast,daniel2017optimization,sengupta2017improved}, where each receiver requests more than one file, are designed for arbitrary demand combinations and could be suboptimal for the specific class of demands considered in our MR scheme. Hence, a new CACM design is needed. The proposed scheme described next uses coding in the content placement for small cache capacities to further leverage the caches for reducing the network load. 

\noindent{\bf Caching and Delivery Strategy:\\} 
In the following, we focus on a  
few cache capacity values, and for each of them we describe the receiver cache configurations $Z_1$ and $Z_2$ and quantify the corresponding achievable peak rate  
for any worst-case demand realization, i.e., $d_{r_1}\neq d_{r_2}$. 
To this end, we provide the multicast codeword transmitted by the sender for the specific demand realization  $d_{r_1} = \{12,\,13\}$ and $d_{r_2}  = \{12,\,23\}$. For all other worst-case demands, the multicast codeword can be constructed analogously, and results in the same delivery rate.



\begin{itemize}
	\item[$\square$] When $M=\frac{1}{2}\R'$, each file is split into two packets of length $\frac{1}{2}\R'$, and receiver caches are filled as
	\begin{align}
		&Z_{r_1} = \{W_{12}^{(1)}\oplus W_{13}^{(1)}\oplus W_{23}^{(1)}\},\quad Z_{r_2} = \{W_{12}^{(2)}\oplus W_{13}^{(2)}\oplus W_{23}^{(2)} \}, \notag
	\end{align}
	where $W_{s}^{(i)}$ denotes packet $i$ of description $W_s$.  Codeword $Y = \{W_{12}^{(1)},\;    W_{12}^{(2)},\;   W_{13}^{(2)},   \; W_{23}^{(1)}\}$
	enables both receivers to losslessly recover their requested packets as follows:
	\begin{itemize}
		\item[-] In addition to receiving $\{W_{12}^{(1)},\;    W_{12}^{(2)},\;   W_{13}^{(2)}  \}$, receiver $r_1$  can decode $ W_{13}^{(1)}$ by combining its cached content with the received packets $W_{12}^{(1)}$ and $W_{23}^{(1)}$.	
		\item[-] Similarly, $r_2$ receives the requested packets $\{W_{12}^{(1)},\;    W_{12}^{(2)},\;   W_{23}^{(1)}  \}$, and is also able to decode $W_{23}^{(2)}$ using its cache content and the transmitted packets $W_{12}^{(2)}$ and $W_{13}^{(2)}$.
	\end{itemize}


	This cache placement results in a rate equal to $2\R'$, whereas an uncoded prefetching scheme, such as the one in \cite{yu2016exact}, achieves a delivery rate of $\frac{7}{3}\R'$.
	
	\item[$\square$] When $M=\R'$,  the cached content is
	\begin{align}
		&Z_{r_1} = \{W_{12}^{(1)}\oplus W_{13}^{(1)},\; W_{12}^{(1)} \oplus W_{23}^{(1)}\},\quad Z_{r_2} = \{W_{12}^{(2)}\oplus W_{13}^{(2)},\; W_{12}^{(2)}\oplus W_{23}^{(2)} \},\notag
	\end{align}
	which is symmetric across the three descriptions,  since $W_{13}^{(i)}\oplus W_{23}^{(i)}$ can be obtained from combining $W_{12}^{(i)}\oplus W_{13}^{(i)}$ with $ W_{12}^{(i)} \oplus W_{23}^{(i)}$ for any $i\in\{1,2\}$. For the worst-case demand considered, receiver $r_1$ needs $\{W_{12}^{(1)},W_{12}^{(2)},W_{13}^{(1)},W_{13}^{(2)}\}$, and receiver $r_2$ needs $\{W_{12}^{(1)},W_{12}^{(2)},W_{23}^{(1)},W_{23}^{(2)}\}$.  Codeword $Y = \{W_{12}^{(1)},\;    W_{12}^{(2)},\;   W_{13}^{(2)} \oplus W_{23}^{(1)}\}$  
	enables both receivers to losslessly recover their demands as follows:
	\begin{itemize}
		\item[-] Receiver $r_1$ combines its cached content with the received packet $W_{12}^{(1)}$ to recover $W_{13}^{(1)}$, and from combining its cache with $\{W_{12}^{(1)},\, W_{13}^{(2)} \oplus W_{23}^{(1)}\}$, the other requested packet $W_{13}^{(2)}$ can be recovered.
		
		\item[-] Similarly, $r_2$ combines its cached content with the received packets $W_{12}^{(2)}$, and $\{W_{12}^{(1)},\, W_{13}^{(2)} \oplus W_{23}^{(1)}\}$ to decode $W_{23}^{(2)}$ and $W_{23}^{(1)}$, respectively.
	\end{itemize}
	This strategy results in a delivery rate equal to $\frac{3}{2}\R'$.
	
	\item[$\square$] When $M=\frac{3}{2}\R'$,  the caches are filled with the following uncoded content
	\begin{align}
		&Z_{r_1} = \{W_{12}^{(1)},\,  W_{13}^{(1)},\, W_{23}^{(1)}\},\quad Z_{r_2} = \{W_{12}^{(2)},\,  W_{13}^{(2)},\, W_{23}^{(2)} \}, \notag
	\end{align}
	for which codeword 
	$Y = \{W_{12}^{(1)}\oplus  W_{12}^{(2)},\;   W_{13}^{(2)} \oplus W_{23}^{(1)}\}$ with rate $\R'$ is sent. 
	
	
\end{itemize}

The CACM scheme described above provides an {\em optimal} placement and delivery strategy for the two-request  network  at the memory-rate pairs
\begin{align}
	(M,\,R)\in\Big\{    \Big(0,\,3\R'\Big),\,  \Big(\frac{1}{2}\R', 2\R'\Big),\, \Big(\R', \frac{3}{2}\R'\Big),\, \Big(\frac{3}{2}\R',\,\R'\Big),\, \Big(3\R',\,0\Big)    \Big\},\label{eq: opt points of TR}
\end{align}
which is proved in Appendix~\ref{app:tworequest}. As in \cite{maddah14fundamental}, through memory-sharing, the lower convex envelope of the points given above is achievable, resulting in the following peak delivery rate $ R_{L_2}(M,\R')$: 
\begin{align} 
	R_{L_2}(M,\R') = \begin{cases}
		3\R'-  2M ,                
		& \; M \in  [0, \frac{1}{2}\R'  )  \\
		\frac{5}{2}\R'-M  ,                
		& \; M \in  [\frac{1}{2}\R', \frac{3}{2}\R'  )\\
		2\R'-\frac{2}{3} M ,                
		& \;M \in  [\frac{3}{2}\R', 3 \R' ].  \label{eq:two request rate}
	\end{cases}
\end{align}
It is shown in Appendix~\ref{app:tworequest} that the peak rate 
given in \eqref{eq:two request rate} results in zero gap to optimality for $M\in [0,\frac{3}{2}\R']$.

%
%
%

\subsection{Optimality Results}
As in Sec.~\ref{sec:Order Optimality}, we evaluate the performance of the proposed GW-MR scheme by comparing its achievable rate $\RGW$, defined in \eqref{eq:ach peak GWMR}, with a lower bound on the optimal peak rate-memory function, $\Rstar$. 

\subsubsection{Lower Bound on $\Rstar$}\label{subsec: optimal LB three}
\begin{theorem}\label{thm:LowerBound three}
	In the three-file two-receiver network with library distribution $p(x_1,x_2,x_3)$, for a given  cache capacity $M$, a lower bound on $\Rstar$, the optimal peak rate-memory function, is given by 
	\begin{align}
		\Rlb =  \inf \bigg \{ R: \;\;
		R \;&\; \geq   \max_{i,j }  H(\Xsf_i, \Xsf_j)-2M , \notag\\
		R \;&\; \geq \frac{1}{2} \Big(\max_{i,j}  H(\Xsf_i, \Xsf_j )-M\Big), \notag\\
		R \;&\; \geq \frac{1}{3}\Big(H(\Xsf_1, \Xsf_2 , \Xsf_3)-M\Big), \notag\\
		R \;&\; \geq \frac{1}{2}\Big( H(\Xsf_1, \Xsf_2, \Xsf_3) +   \max_{i} H(\Xsf_i) \Big) - M \bigg \}. \notag
	\end{align}
\end{theorem}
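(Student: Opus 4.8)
The plan is to establish each of the four lower bounds on $\Rstar$ by a cut-set / entropy argument applied to cleverly chosen collections of receivers, demands, cache contents, and multicast codewords, following the general methodology of \cite{ITLowerBound,llorca2013network} that underlies Theorem~\ref{thm:LowerBound}. In each case I would fix an (achievable) sequence of CACM schemes with vanishing error probability and rate approaching $\Rstar$, pick a suitable demand (or sequence of demands), and then bound $H(\Xsf_{\text{something}})$ from above by the total number of bits available to the relevant receivers — namely the bits in the caches $Z_{r_k}$ (each of size $MF$) plus the bits in the transmitted codewords $Y_\dbf$ (each of expected length roughly $R^{(F)}F$) — invoking Fano's inequality to absorb the decoding error terms into an $o(F)$ correction, then dividing by $F$ and taking $F\to\infty$.

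Concretely: (i) For $R \geq \max_{i,j} H(\Xsf_i,\Xsf_j) - 2M$, take two receivers requesting files $i$ and $j$ in a single demand; their two caches and one codeword must reconstruct both $\Xsf_i^F$ and $\Xsf_j^F$, so $F\,H(\Xsf_i,\Xsf_j) \le 2MF + F R^{(F)} + o(F)$. (ii) For $R \geq \tfrac12(\max_{i,j} H(\Xsf_i,\Xsf_j) - M)$, use a single receiver over two successive demands $i$ then $j$: one cache plus two codewords recover both files, giving $F\,H(\Xsf_i,\Xsf_j) \le MF + 2F R^{(F)} + o(F)$. (iii) For $R \geq \tfrac13(H(\Xsf_1,\Xsf_2,\Xsf_3) - M)$, use a single receiver over three demands (or, symmetrically, two receivers with carefully staggered demands) so that one cache plus three codewords reconstruct all three files: $F\,H(\Xsf_1,\Xsf_2,\Xsf_3) \le MF + 3F R^{(F)} + o(F)$. (iv) For the last bound $R \geq \tfrac12\big(H(\Xsf_1,\Xsf_2,\Xsf_3) + \max_i H(\Xsf_i)\big) - M$, combine the two receivers with two codewords in a way that recovers all three files \emph{and} charges an extra $H(\Xsf_i)$: e.g.\ first demand lets the two caches plus codeword~$Y_{\dbf_1}$ recover two of the files, a second codeword $Y_{\dbf_2}$ recovers the third, and one then adds a stand-alone copy of $\Xsf_i^F$ — the precise bookkeeping of which conditional entropies appear is the delicate part. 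In every case the infimum over all achievable $R^{(F)}$ and the closure operation yield $\Rstar \ge \Rlb$.

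The main obstacle I expect is bound (iv): getting the coefficient $1$ on $M$ (rather than $2$) together with the asymmetric $\max_i H(\Xsf_i)$ term requires a non-obvious choice of which receiver sees which demand and which codewords are "reused" across demands, and then a careful chain-rule expansion to verify that the surviving cross terms telescope to exactly $H(\Xsf_1,\Xsf_2,\Xsf_3) + \max_i H(\Xsf_i)$ without leftover mutual-information slack. I would handle this by writing the cache-demand augmented graph of \cite{llorca2013network} for the relevant multi-round demand pattern and reading off the cut that separates the sources from the decoders; the cut-set bound on that graph should produce the claimed inequality automatically, at which point the argument reduces to identifying the right vertex cut. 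The remaining bounds (i)--(iii) are direct three-file analogues of the two-file bounds in Theorem~\ref{thm:LowerBound}, so I expect them to follow with only routine modifications, and the whole proof would naturally be deferred to an appendix paralleling Appendix~\ref{app:LowerBound}.
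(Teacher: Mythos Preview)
Your approach is essentially the same as the paper's: all four bounds are obtained by applying the general cut-set lemma of \cite{ITLowerBound} (Theorem~\ref{thm: general LB rule}) to well-chosen sequences of demands and receiver subsets, and your choices for bounds (i)--(iii) match the paper's exactly.

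For bound (iv), however, your sketch (``two caches plus two codewords recover all three files, then add a stand-alone copy of $\Xsf_i^F$'') is not a valid entropy inequality and would not produce the extra $+\max_i H(\Xsf_i)$ term. The paper's cut is this: take the two demands $\dbf^{(1)}=(i,j)$ and $\dbf^{(2)}=(k,i)$, and select receiver subset $\mathcal S_1=\{r_1\}$ for the first and $\mathcal S_2=\{r_2\}$ for the second. The point is that \emph{both} selected receivers decode the \emph{same} file $\Xsf_i$ (from disjoint cache/codeword pairs), which contributes $2H(\Xsf_i)$ to the sum; meanwhile the full receiver set $\widetilde{\mathcal S}=\{r_1,r_2\}$ together with both codewords can decode all of $\Xsf_i,\Xsf_j,\Xsf_k$ (since $r_2$ with $Y_{\dbf^{(1)}}$ recovers $\Xsf_j$ and $r_1$ with $Y_{\dbf^{(2)}}$ recovers $\Xsf_k$), so the correction term in Theorem~\ref{thm: general LB rule} supplies $H(\Xsf_1,\Xsf_2,\Xsf_3)-H(\Xsf_i)$. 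Summing gives $2R\ge 2H(\Xsf_i)-2M+H(\Xsf_1,\Xsf_2,\Xsf_3)-H(\Xsf_i)$, which is exactly the claimed inequality. This is the ``non-obvious choice'' you anticipated; once you write it down the chain-rule bookkeeping is immediate, with no telescoping needed.
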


\begin{proof} 
	The proof is given in Appendix \ref{app:LowerBound three}.
\end{proof}

\begin{remark}   
	Particularizing the outer bound in Theorem \ref{thm:LowerBound three}  to i.i.d. sources meets the bound derived in \cite{yu2017characterizing}. 
\end{remark}

\subsubsection{Optimality of the Proposed  GW-MR Scheme}~\\
The following theorem characterizes the performance of the proposed GW-MR scheme described in Sec.~\ref{sec: achievable three}  for different regions of $M$, and delineates the cache capacity region for which the scheme is optimal or near-optimal. 

\begin{theorem}\label{thm:optimality three}
	In the three-file two-receiver network, let $\tilde \Rscr = (\tilde \R_0, \tilde\R',\tilde\R',\tilde\R',  \tilde\R,  \tilde\R,  \tilde\R) $ be a symmetric rate-tuple in the three-file Gray-Wyner rate region, such that
	\begin{align}\label{eq:rho tilde}
		\tilde \Rscr \in \Big\{   \Rscr \in \GWregionS :\;  \R_0+ 3 \R'+ 3\R =H(\Xsf_1, \Xsf_2,\Xsf_3) ,\;  \R \text{ is maximized}\Big\}
	\end{align}
	Then, for $M\in   \Big[  H(\Xsf_1, \Xsf_2,\Xsf_3)-\frac{3}{2}\tilde\R, \,  H(\Xsf_1, \Xsf_2,\Xsf_3) \Big]$, the proposed GW-MR scheme is optimal, i.e., $\RGW= \Rstar$.
	In addition, for $M\in \Big[ 0,\,  \tilde\R_0+ \frac{3}{2} (  \tilde\R'+ \tilde\R ) \Big) $, 
	\begin{align}
		&\RGW - \Rstar \leq  \frac{1}{2} \min_i H(\Xsf_j,\Xsf_k|\Xsf_i) -\tilde\R   , \notag
	\end{align}
	and for $M\in \Big[  \tilde\R_0+ \frac{3}{2} ( \tilde\R'+ \tilde\R ), \,  H(\Xsf_1, \Xsf_2,\Xsf_3)-\frac{3}{2}\tilde\R  \Big) $, we have
	\begin{align}
		&\RGW - \Rstar \leq  \frac{1}{4}\min_i H(\Xsf_j,\Xsf_k|\Xsf_i) - \frac{1}{2} \tilde\R . \notag
	\end{align}
\end{theorem}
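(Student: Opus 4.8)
The plan is to sandwich $\RGW$ between the lower bound of Theorem~\ref{thm:LowerBound three} and the achievable rate of Theorem~\ref{thm:achievable rate three} evaluated at the single operating point $\tilde\Rscr$. Since $\tilde\Rscr\in\GWregionS\subseteq\GWregion$, definition \eqref{eq:ach peak GWMR three} gives $\RGW\le\Rach(M,\tilde\Rscr)$, and since the GW-MR scheme is a particular CACM scheme, $\RGW\ge\Rstar\ge\Rlb$; so it suffices to compare the two explicit piecewise-affine functions $M\mapsto\Rach(M,\tilde\Rscr)$ and $M\mapsto\Rlb$. One first checks that a maximizing $\tilde\Rscr$ in \eqref{eq:rho tilde} exists, because the intersection of $\GWregionS$ with the Pangloss hyperplane $\{\tilde\R_0+3\tilde\R'+3\tilde\R=H(\Xsf_1,\Xsf_2,\Xsf_3)\}$ is compact. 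Throughout I would eliminate $\tilde\R_0$ using this Pangloss identity; two consequences are central: the third breakpoint of \eqref{eq:rate three} becomes $\tilde\R_0+3\tilde\R'+\tfrac32\tilde\R=H(\Xsf_1,\Xsf_2,\Xsf_3)-\tfrac32\tilde\R$ (which is why the optimality region starts there), and the fourth piece of \eqref{eq:rate three} collapses to $\Rach(M,\tilde\Rscr)=\tfrac13\big(H(\Xsf_1,\Xsf_2,\Xsf_3)-M\big)$.

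For $M\in[H(\Xsf_1,\Xsf_2,\Xsf_3)-\tfrac32\tilde\R,\,H(\Xsf_1,\Xsf_2,\Xsf_3)]$ the argument is immediate: $\Rach(M,\tilde\Rscr)=\tfrac13(H(\Xsf_1,\Xsf_2,\Xsf_3)-M)$ equals the value of the third face of the lower bound in Theorem~\ref{thm:LowerBound three}, so $\Rstar\le\RGW\le\Rach(M,\tilde\Rscr)=\tfrac13(H(\Xsf_1,\Xsf_2,\Xsf_3)-M)\le\Rstar$, hence $\RGW=\Rstar$. For the two remaining cache regions I would bound $\RGW-\Rstar\le\Rach(M,\tilde\Rscr)-c(M)$, where $c$ is an appropriately chosen face of $\Rlb$ (so that $\Rstar\ge c$), and exploit that $\Rach(M,\tilde\Rscr)-c(M)$ is affine on each subinterval, hence maximized at an endpoint or at the intersection of two faces. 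Concretely, on $M\in[0,\tilde\R_0+\tfrac32(\tilde\R'+\tilde\R))$ I would take $c(M)=\tfrac12\big(H(\Xsf_1,\Xsf_2,\Xsf_3)+\max_i H(\Xsf_i)\big)-M$; the difference is nonincreasing there, so its maximum is at $M=0$, and using the Pangloss identity and $\min_i H(\Xsf_j,\Xsf_k|\Xsf_i)=H(\Xsf_1,\Xsf_2,\Xsf_3)-\max_i H(\Xsf_i)$ this maximum equals $\tfrac12\min_i H(\Xsf_j,\Xsf_k|\Xsf_i)-\tilde\R$. On $M\in[\tilde\R_0+\tfrac32(\tilde\R'+\tilde\R),\,H(\Xsf_1,\Xsf_2,\Xsf_3)-\tfrac32\tilde\R)$ I would bound $\RGW-\Rstar$ by the minimum of $\Rach(M,\tilde\Rscr)-c(M)$ over the two choices $c(M)=\tfrac13(H(\Xsf_1,\Xsf_2,\Xsf_3)-M)$ and $c(M)=\tfrac12\big(H(\Xsf_1,\Xsf_2,\Xsf_3)+\max_i H(\Xsf_i)\big)-M$; since one of these differences decreases and the other increases, the worst case is at their crossing point, where the value works out to exactly $\tfrac14\min_i H(\Xsf_j,\Xsf_k|\Xsf_i)-\tfrac12\tilde\R$.

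The main obstacle is the bookkeeping that underpins this last step. One must (i) verify that the four breakpoints of \eqref{eq:rate three}, specialized to $\tilde\Rscr$ and simplified through the Pangloss identity, partition $[0,H(\Xsf_1,\Xsf_2,\Xsf_3)]$ exactly into the three ranges named in the theorem; (ii) determine which face (or pair of faces) of the four-face lower bound of Theorem~\ref{thm:LowerBound three} is binding on each subinterval — here the elementary Gray-Wyner decoding inequality $\max_i H(\Xsf_i)\le\tilde\R_0+2\tilde\R'+\tilde\R$ (and, to confirm tightness of the chosen face for the smallest caches, Han's inequality $\max_{i,j}H(\Xsf_i,\Xsf_j)\ge\tfrac23 H(\Xsf_1,\Xsf_2,\Xsf_3)$) is used to locate the intersection point inside the relevant interval; and (iii) carry out the endpoint and intersection-point evaluations of the affine gap and match them against the claimed closed forms. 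None of these steps is conceptually deep, but the case analysis over subintervals and lower-bound faces is where the bulk of the work lies, and it is the part most prone to algebraic slips.
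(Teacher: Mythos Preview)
Your proposal is correct and follows essentially the same route as the paper's proof (Appendix~\ref{app:optimality three}): fix the operating point $\tilde\Rscr$, use the Pangloss identity to simplify the pieces of \eqref{eq:rate three}, and compare them against faces of the lower bound in Theorem~\ref{thm:LowerBound three} over sub-intervals.

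The paper organizes the comparison as a six-case analysis indexed by the interleaving of the breakpoints $\eta_{\Rscr},\eta,\zeta,\zeta_{\Rscr},\chi_{\Rscr}$, and establishes this interleaving in a separate lemma (Lemma~\ref{lemma:threefiles}); for the first two cases it uses the face $\max_{i,j}H(\Xsf_i,\Xsf_j)-2M$ and then invokes Lemma~\ref{lemma:lem4}, $\min_i H(\Xsf_i|\Xsf_j,\Xsf_k)\le\tfrac12\min_i H(\Xsf_j,\Xsf_k|\Xsf_i)$, to reach the stated gap. Your version is a bit cleaner: by working throughout with the single face $c(M)=\tfrac12\big(H(\Xsf_1,\Xsf_2,\Xsf_3)+\max_i H(\Xsf_i)\big)-M$ on $[0,\zeta_{\Rscr})$ you bypass both Lemma~\ref{lemma:lem4} and the need to locate $\eta$; and on $[\zeta_{\Rscr},\chi_{\Rscr})$ your ``two-face'' argument identifies the worst case as the crossing point $\zeta=\tfrac14 H(\Xsf_1,\Xsf_2,\Xsf_3)+\tfrac34\max_i H(\Xsf_i)$, which is exactly the paper's $\zeta$. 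Note that you do not actually need to verify $\zeta\in[\zeta_{\Rscr},\chi_{\Rscr})$: since $\Rach-c_1$ and $\Rach-c_2$ are affine with opposite-sign slopes on that interval, $\min(\Rach-c_1,\Rach-c_2)$ is concave and hence its supremum over any sub-interval is at most its value at the global crossing point, so the bound $\tfrac14\min_i H(\Xsf_j,\Xsf_k|\Xsf_i)-\tfrac12\tilde\R$ follows either way. The Han-type inequality you mention is therefore not needed for the gap bounds, only if one wants to argue exactly which face of $\Rlb$ is binding.
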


\begin{proof}
	The proof is given in Appendix~\ref{app:optimality three}.
\end{proof}

\begin{remark}\label{rmk: not wyner common three} 
	Theorem \ref{thm:optimality three} suggests that operating at a symmetric point for which $\R_0+3\R'+3\R =$ $H(\Xsf_1,$ $\Xsf_2,\Xsf_3)$, and where the rate corrsponding to the descriptions in the private sublibrary is maximized, increases the memory region where the GW-MR scheme is optimal, and also decreases the gap to optimality for other values of the memory. This is analogous to the conclusion reached in Remark~\ref{rmk: not wyner common}, where we showed that for two correlated files it is desirable to maximize the rate of the smallest private description subject to an equivalent condition on the sum rate of the entire descriptions.  
\end{remark}

The next corollary particularizes Theorem~\ref{thm:optimality three} for a specific 3-DMS.
\begin{corollary}\label{cor:special source three}
	Consider a three-file two-receiver network with $\Xsf_1 = (\Vsf,\Usf_1,\Usf_2,\Xsf'_1)$, $\Xsf_2 =(\Vsf,\Usf_2,\Usf_3,\Xsf'_2)$, and
	$\Xsf_3 =(\Vsf, \Usf_1,\Usf_3,\Xsf'_3)$, such that 
	\begin{align}
		H(\Xsf'_1)=H(\Xsf'_2)=H(\Xsf'_3)=H_x, \;\;\;
		H(\Usf_1)=H(\Usf_2)=H(\Usf_3)=H_u,  \notag
	\end{align} 
	and where $\{\Xsf'_1$, $\Xsf_2'$, $\Xsf'_3$, $\Usf_1$, $\Usf_2$, $\Usf_3$, $\Vsf\}$ are all mutually independent. The proposed GW-MR scheme  is optimal under the peak rate criterion, i.e., $\RGW = \Rstar$, when 
	$$M\in \Big[0,\; \widetilde M\Big] \bigcup\Big[\widetilde M +\frac{3}{2}H_u ,\;H(\Xsf_1,\Xsf_2,\Xsf_3)
	\Big],\quad  \widetilde M \triangleq H(\Vsf)+\frac{3}{2}H_u+ \frac{3}{2} H_x.$$
	For any other $M$, we have
	\begin{align}
		&\RGW- \Rstar \leq  \frac{1}{4}  H_u  . \notag
	\end{align} 	
\end{corollary}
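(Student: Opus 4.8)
The plan is to instantiate Theorem~\ref{thm:optimality three} for the specific $3$-DMS defined in the corollary, compute the distinguished symmetric rate-tuple $\tilde\Rscr$ explicitly, and then match up the resulting thresholds and gaps with the claimed expressions. First I would identify a natural choice of common and common-to-two descriptions for the Gray-Wyner network. Since $\Xsf_1=(\Vsf,\Usf_1,\Usf_2,\Xsf'_1)$, $\Xsf_2=(\Vsf,\Usf_2,\Usf_3,\Xsf'_2)$, $\Xsf_3=(\Vsf,\Usf_1,\Usf_3,\Xsf'_3)$ with all seven components mutually independent, the component $\Vsf$ is shared by all three files, $\Usf_2$ is shared by files $1,2$, $\Usf_3$ by files $2,3$, $\Usf_1$ by files $1,3$, and $\Xsf'_i$ is private to file $i$. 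The obvious operating point is therefore $\Wsf_{123}=\Vsf$, $\Wsf_{12}=\Usf_2$, $\Wsf_{13}=\Usf_1$, $\Wsf_{23}=\Usf_3$, $\Wsf_i=\Xsf'_i$, which is symmetric and lies in $\GWregionS$ by the independence assumption, with $\R_0=H(\Vsf)$, $\R'=H_u$, $\R=H_x$. I would then argue this is the tuple $\tilde\Rscr$ of \eqref{eq:rho tilde}: its sum rate is $H(\Vsf)+3H_u+3H_x=H(\Xsf_1,\Xsf_2,\Xsf_3)$ (again by mutual independence, summing the entropies of the seven components), and among all symmetric tuples meeting this sum-rate constraint, $\R$ is maximized at $H_x$ — intuitively because the private sublibrary rate cannot exceed $H_x$ since more than $H_x$ bits per file would force replicating shared information, and a short matching/monotonicity argument (possibly via the Gray-Wyner region description in Theorem~\ref{thm:GW region} extended to three files, or directly via entropy inequalities) makes this rigorous. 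Hence $\tilde\R_0=H(\Vsf)$, $\tilde\R'=H_u$, $\tilde\R=H_x$.

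Next I would substitute these values into the three regimes of Theorem~\ref{thm:optimality three}. The optimality region $\big[H(\Xsf_1,\Xsf_2,\Xsf_3)-\tfrac32\tilde\R,\;H(\Xsf_1,\Xsf_2,\Xsf_3)\big]$ becomes $\big[H(\Xsf_1,\Xsf_2,\Xsf_3)-\tfrac32 H_x,\;H(\Xsf_1,\Xsf_2,\Xsf_3)\big]$. I need this to equal the upper interval $\big[\widetilde M+\tfrac32 H_u,\;H(\Xsf_1,\Xsf_2,\Xsf_3)\big]$ claimed in the corollary, so I must check $\widetilde M+\tfrac32 H_u=H(\Xsf_1,\Xsf_2,\Xsf_3)-\tfrac32 H_x$; plugging $\widetilde M=H(\Vsf)+\tfrac32 H_u+\tfrac32 H_x$ and $H(\Xsf_1,\Xsf_2,\Xsf_3)=H(\Vsf)+3H_u+3H_x$, both sides equal $H(\Vsf)+3H_u+\tfrac32 H_x$, confirming the match. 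Similarly the lower optimality interval $\big[0,\,\widetilde M\big]$ must equal $\big[0,\,\tilde\R_0+\tfrac32(\tilde\R'+\tilde\R)\big)$ up to closure, and indeed $\tilde\R_0+\tfrac32(\tilde\R'+\tilde\R)=H(\Vsf)+\tfrac32 H_u+\tfrac32 H_x=\widetilde M$. On this lower interval Theorem~\ref{thm:optimality three} gives the gap bound $\tfrac12\min_i H(\Xsf_j,\Xsf_k|\Xsf_i)-\tilde\R$; I would compute $\min_i H(\Xsf_j,\Xsf_k|\Xsf_i)$ for this source — by symmetry and independence, conditioning on $\Xsf_1$ removes $\Vsf,\Usf_1,\Usf_2,\Xsf'_1$ and leaves $H(\Xsf_2,\Xsf_3|\Xsf_1)=H(\Usf_3)+H(\Xsf'_2)+H(\Xsf'_3)=H_u+2H_x$, and the same value holds for every $i$ — so $\tfrac12(H_u+2H_x)-H_x=\tfrac12 H_u$, matching the corollary's stated bound $\tfrac14 H_u$... which it does \emph{not}. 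I would then recheck: on the lower regime Theorem~\ref{thm:optimality three} actually gives $\tfrac12\min_i H(\Xsf_j,\Xsf_k|\Xsf_i)-\tilde\R$, while on the \emph{middle} regime $\big[\tilde\R_0+\tfrac32(\tilde\R'+\tilde\R),\,H(\Xsf_1,\Xsf_2,\Xsf_3)-\tfrac32\tilde\R\big)$ it gives $\tfrac14\min_i H(\Xsf_j,\Xsf_k|\Xsf_i)-\tfrac12\tilde\R=\tfrac14(H_u+2H_x)-\tfrac12 H_x=\tfrac14 H_u$. So the point to resolve is whether the lower interval genuinely closes at $\widetilde M$ (making the lower-regime gap vacuous) so that the only nontrivial gap is the middle-regime one; this is exactly the content I would pin down, and it rests on the equality $\widetilde M=\tilde\R_0+\tfrac32(\tilde\R'+\tilde\R)$ verified above, together with the observation that at $M=\widetilde M$ the lower-regime bound $\tfrac12 H_u - $ (something) already vanishes or the regime is a single point.

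The main obstacle is therefore twofold and both parts are about bookkeeping rather than new ideas: first, rigorously establishing that the proposed operating point is \emph{the} $\tilde\Rscr$ maximizing $\R$ subject to the sum-rate constraint — this needs the (partially known) structure of the three-file Gray-Wyner region and a short extremal argument showing no symmetric feasible tuple can have $\R>H_x$; and second, carefully tracking which of the three gap regimes in Theorem~\ref{thm:optimality three} are actually nonempty for this source and confirming that the worst-case gap over nonempty regimes is exactly $\tfrac14 H_u$, i.e.\ that the lower regime collapses to the optimality boundary. Once those are settled, the remaining computations — evaluating $H(\Xsf_1,\Xsf_2,\Xsf_3)$, $H(\Xsf_i)=H(\Vsf)+2H_u+H_x$, and $H(\Xsf_j,\Xsf_k|\Xsf_i)=H_u+2H_x$ via the mutual-independence hypothesis — are immediate, and substitution into Theorem~\ref{thm:optimality three} yields the corollary. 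I would present the proof as: (i) specify $\tilde\Rscr$ and verify feasibility and extremality; (ii) compute the relevant entropies; (iii) substitute into the three cases of Theorem~\ref{thm:optimality three} and simplify, noting the interval endpoints coincide as claimed and the surviving gap is $\tfrac14 H_u$.
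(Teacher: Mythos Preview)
Your plan to instantiate Theorem~\ref{thm:optimality three} is natural, and your computations of $\tilde\R_0=H(\Vsf)$, $\tilde\R'=H_u$, $\tilde\R=H_x$, $H(\Xsf_j,\Xsf_k|\Xsf_i)=H_u+2H_x$, and the matching of interval endpoints are all correct. However, there is a genuine gap, and you have in fact already detected it: Theorem~\ref{thm:optimality three} alone is \emph{not} strong enough to prove optimality on $[0,\widetilde M]$. On that interval the theorem only guarantees a gap of at most $\tfrac12\min_i H(\Xsf_j,\Xsf_k|\Xsf_i)-\tilde\R=\tfrac12 H_u$, which is generically nonzero. The interval $[0,\widetilde M)$ is \emph{not} vacuous and does not ``collapse to the optimality boundary''; your suggested resolution does not work.

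What the paper actually does is return to the \emph{proof} of Theorem~\ref{thm:optimality three} (Appendix~\ref{app:optimality three}) and use the sharper intermediate bounds appearing there before they are coarsened via Lemma~\ref{lemma:lem4}. In cases~(i)--(iii) of that appendix the gap is bounded by $\min_i H(\Xsf_i|\Xsf_j,\Xsf_k)-\R$ (case~(i)) and by $\tfrac12\min_i H(\Xsf_j,\Xsf_k|\Xsf_i)-\tfrac12\R'-\R$ (case~(iii)); for this particular source one computes $H(\Xsf_i|\Xsf_j,\Xsf_k)=H_x=\R$, so both expressions vanish, yielding exact optimality on $[0,\zeta_{\Rscr})=[0,\widetilde M)$. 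The remaining middle region then gives the $\tfrac14 H_u$ gap you correctly computed. A side benefit of working directly from Appendix~\ref{app:optimality three} is that its analysis holds for \emph{any} symmetric $\Rscr\in\GWregionS$ with $\R_0+3\R'+3\R=H(\Xsf_1,\Xsf_2,\Xsf_3)$, so you need only verify feasibility of the chosen tuple, not that it is the extremal $\tilde\Rscr$---this removes the first obstacle you flagged.
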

\begin{proof}
	See Appendix~\ref{app:special source three}.
\end{proof}

\begin{remark}
	For the 3-DMS considered in Corollary~\ref{cor:special source three}, operating the Gray-Wyner network at a point $\Rscr\in\GWregionS$ such that $\R_0 = H(\Vsf)$, $\R'=H_u$ and $\R = H_x$, leads to a set of descriptions that is equivalent to the library considered in \cite{yang2017centralized}, where each file is composed of four independent descriptions: $i)$ one that is common among all files,  $ii)$ two descriptions that are common with only one other file, and finally $iii)$ one that is exclusive to that file. While the scheme proposed in \cite{yang2017centralized} is designed for arbitrary number of files and receivers, its performance evaluation requires numerical optimization. For the three-file two-receiver setting, we are able to derive a closed-form expression for the achievable peak rate in Theorem~\ref{thm:achievable rate three}, and we establish its optimality in Corollary~\ref{cor:special source three}. In Sec.~\ref{subsec: sim three}, we numerically show that our proposed GW-MR scheme outperforms the scheme in \cite{yang2017centralized}.

\end{remark}

\section{Illustration of Results}\label{sec:numerical}
In this section, we numerically illustrate the results derived in the previous sections for the two-file and three-file networks under symmetric binary sources.  
\subsection{Two-File Network}
Consider, as a 2-DMS, a doubly symmetric binary source (DSBS) with joint pmf  
$$p(x_1,x_2) = \frac{1}{2} (1-p_0)\delta_{x_1,x_2} + \frac{1}{2} p_0(1-\delta_{x_1,x_2}),  \qquad x_1,x_2 \in\{0,1\},$$
and parameter $p_0\in[0,\frac{1}{2}]$.
Then,
\begin{align}
	& H(\Xsf_1) = H(\Xsf_2) = 1, \quad
	H(\Xsf_1|\Xsf_2) = H(\Xsf_2|\Xsf_1) = h(p_0),\quad
	H(\Xsf_1,\Xsf_2) = 1 + h(p_0),\notag
\end{align}
where $h(p) = -p \log(p)-(1-p)\log(1-p)$ is the binary entropy function. Even for the simple source considered here, the optimal Gray-Wyner rate region is not known. An achievable  region for a DSBS restricted to the plane $\{(\R_0,\R_1,\R_2) : \R_1=\R_2=\R\}$ was derived in \cite{wyner1975common}, which is described by the set of rate triplets  $(\R_0,\R,\R)$ with $\R_0$ given by
\begin{equation}
	\R_0 \geq \begin{cases} 1+ h(p_0) - 2\R,              &\hspace{0.2cm} 0 \leq \R < h(p_1) \\
		f(\R) 			                                &\hspace{0.2cm} h(p_1)\leq \R \leq 1
	\end{cases},
\end{equation}
where $p_1=\frac{1}{2}(1-\sqrt{(1-2p_0)})$, and
\begin{align}
	f(\R)  \triangleq 1+h(p_0)+ p_0 \log\Big(\frac{p_0}{2}\Big) &+ \Big(h^{-1}(\R)-\frac{p_0}{2} \Big)\log\Big(h^{-1}(\R)-\frac{p_0}{2} \Big)  \notag \\
	& + \Big( 1-h^{-1}(\R)-\frac{p_0}{2} \Big)\log\Big( 1-h^{-1}(\R)-\frac{p_0}{2} \Big).\notag
\end{align}

We compare the peak and average rates achieved by the proposed GW-MR scheme, $\RGW$ and $\RGWE$, given in Theorems~\ref{thm:optimality peak} and \ref{thm:optimality avg}, with: $i)$ {\em Correlation-Unaware CACM}, which refers to the best known CACM scheme proposed for independent files (a combination of \cite{tian2016caching} and \cite{yu2016exact} for  peak rate, and \cite{yu2016exact} for  average rate), $ii)$ {\em MR Lower Bound}, which refers to the lower bound on the MR rate-memory functions $\RGWstarR$ and $\RGWstarRE$, given in Theorem~\ref{thm:LowerBoundGW}, for the operating point $\Rscr$ of the proposed GW-MR scheme, i.e., the $\Rscr $ that minimizes $\Rach(M,\Rscr)$ and $\RachE(M,\Rscr)$ (see \eqref{eq:ach peak GWMR} and \eqref{eq:ach avg GWMR}), and $iii)$ {\em Optimal Lower Bound}, which refers to the lower bound on the optimal rate-memory functions $\Rlb$ and $\RlbE$,  given in Theorem~\ref{thm:LowerBound}.

The peak and average rate-memory trade-offs in a two-file network with $K=5$ receivers are shown in Fig.~\ref{fig:simulations K}  for a DSBS with $p_0 = 0.2$. In line with Theorem \ref{thm:optimality peak} and for $M_K$ defined in \eqref{eq:Mk}, Fig.~\ref{fig:simulations K} shows that the proposed GW-MR scheme meets the optimal peak rate-memory function $\Rstar$, when $M \leq  {M}_5=  0.1$ and $M\geq ( H(\Xsf_1,\Xsf_2) - 2 {M}_5) = 1.52$, and it meets the optimal average rate-memory function $\RstarE$, when $M\geq ( H(\Xsf_1,\Xsf_2) - 2  {M}_5) = 1.52$. Note that the correlation-unaware CACM is strictly suboptimal for all $M$.

\begin{figure} 
	\begin{subfigure}{0.5\linewidth}\centering
		\includegraphics[width=\linewidth]{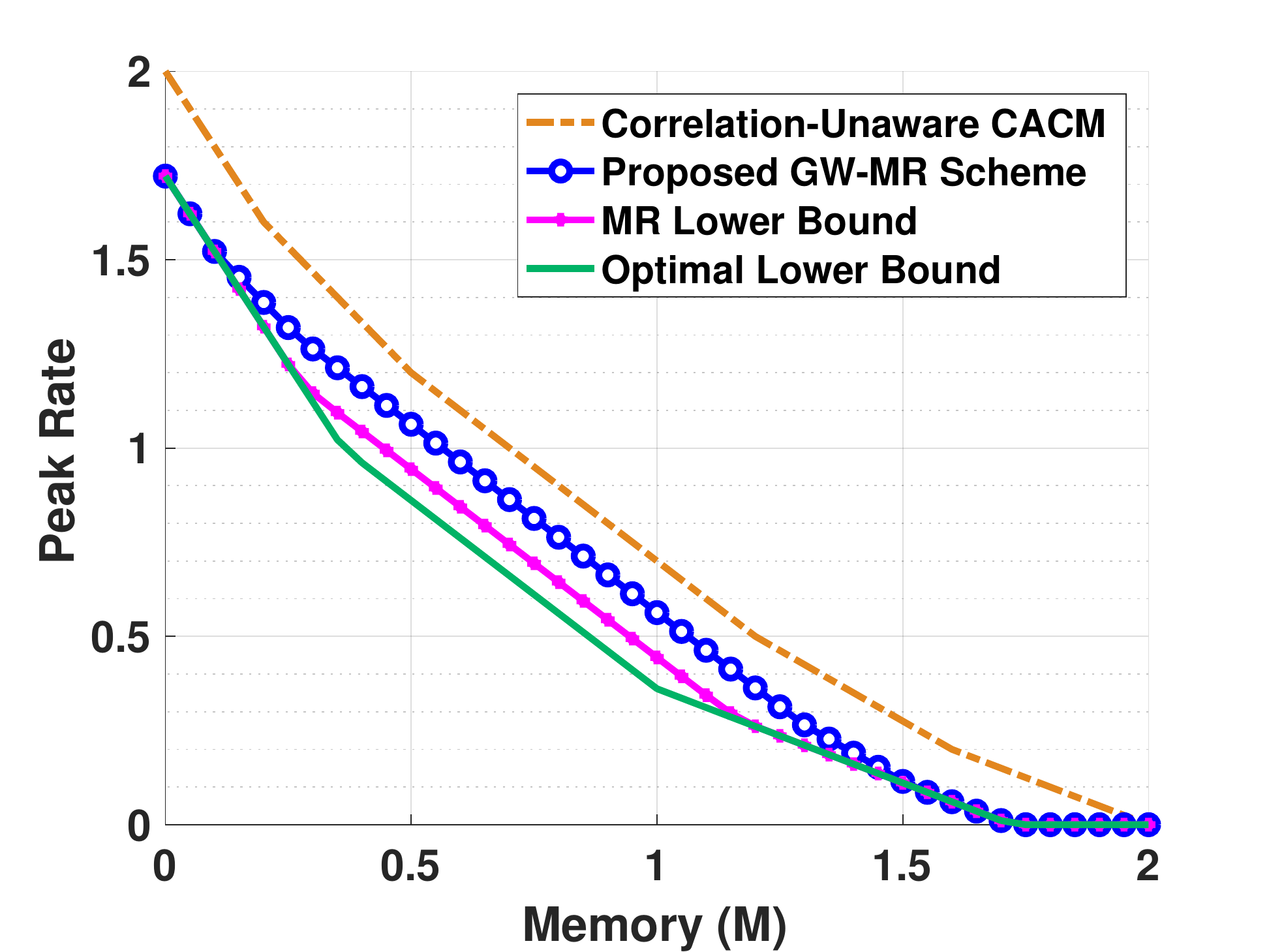}
	\end{subfigure}\hspace*{\fill}
	\begin{subfigure}{0.5\linewidth}\centering
		\includegraphics[width=\linewidth]{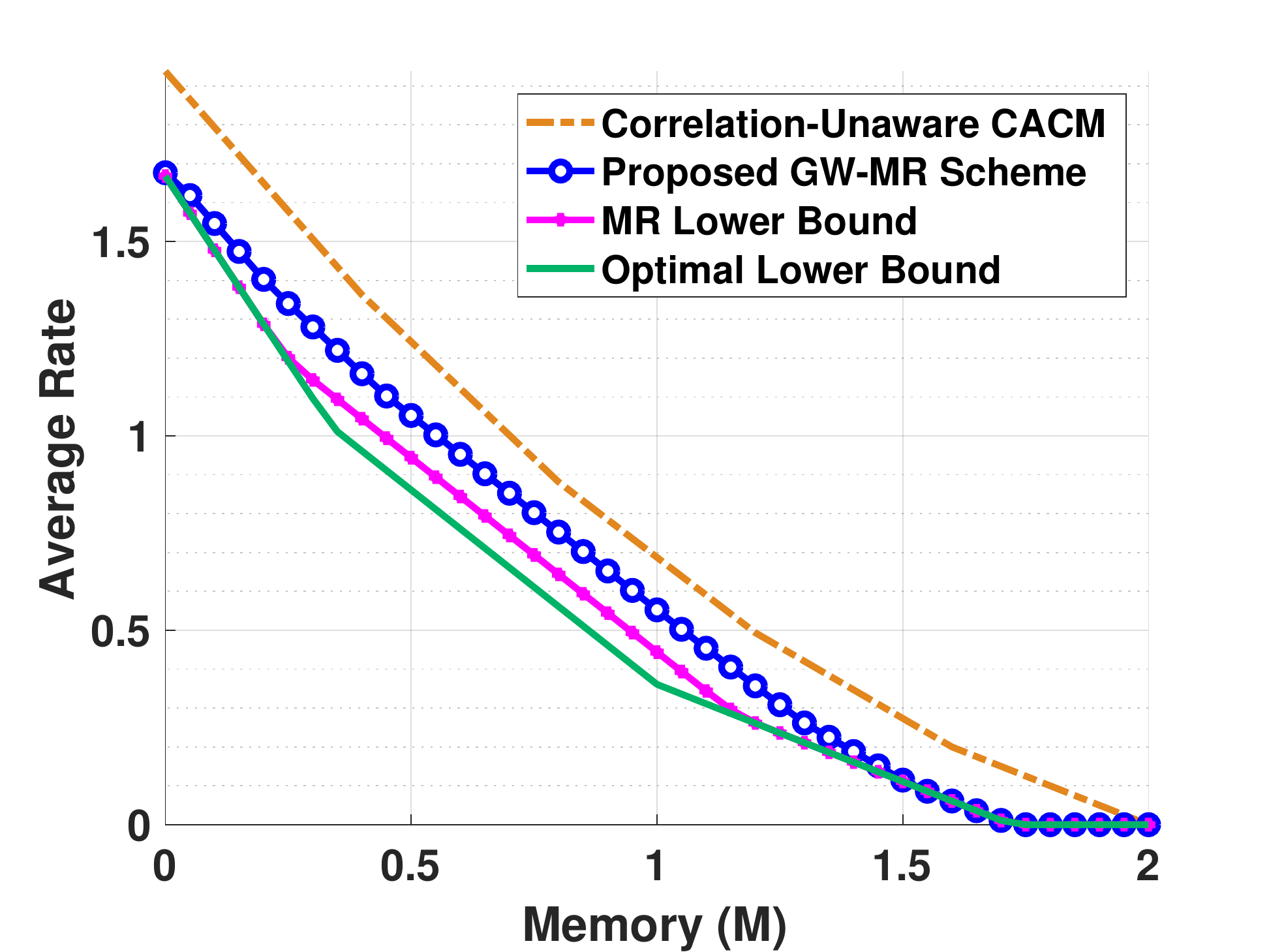}
	\end{subfigure}\hspace*{\fill}
	\caption{Rate-memory trade-off for a DSBS with $p_0 = 0.2$, $N=2$ files and $K=5$ receivers.}
	\label{fig:simulations K}
\end{figure}

Fig.~\ref{fig:simulations 2}  displays the peak and average rate-memory trade-offs in a two-file two-receiver network with $p_0 = 0.2$. In line with Theorem \ref{thm:optimality peak}, it is observed that the proposed GW-MR scheme meets the lower bound on the optimal peak rate-memory function $\Rstar$ (and is hence optimal) when $M \leq {M}_2=  0.25$ and $M\geq ( H(\Xsf_1,\Xsf_2) - 2 {M}_2) = 1.21$. In terms of average rate, and in line with Theorem \ref{thm:optimality avg}, the GW-MR scheme meets the lower bound on the average rate-memory function $\RstarE$ (and is hence optimal) when $M\geq ( H(\Xsf_1,\Xsf_2) - 2 {M}_2) = 1.21$. 
In addition, it is observed from the figure that  the proposed GW-MR scheme achieves the lower bounds on the MR peak and average rate-memory functions, $\RGWRlb$ and $\RGWRlbE$, respectively, for all values of the memory, which is in agreement with  Corollaries~\ref{col:peak and lower bound 2} and \ref{col:average and lower bound 2}.
\begin{figure} 
	\begin{subfigure}{0.5\linewidth}\centering
		\includegraphics[width=\linewidth]{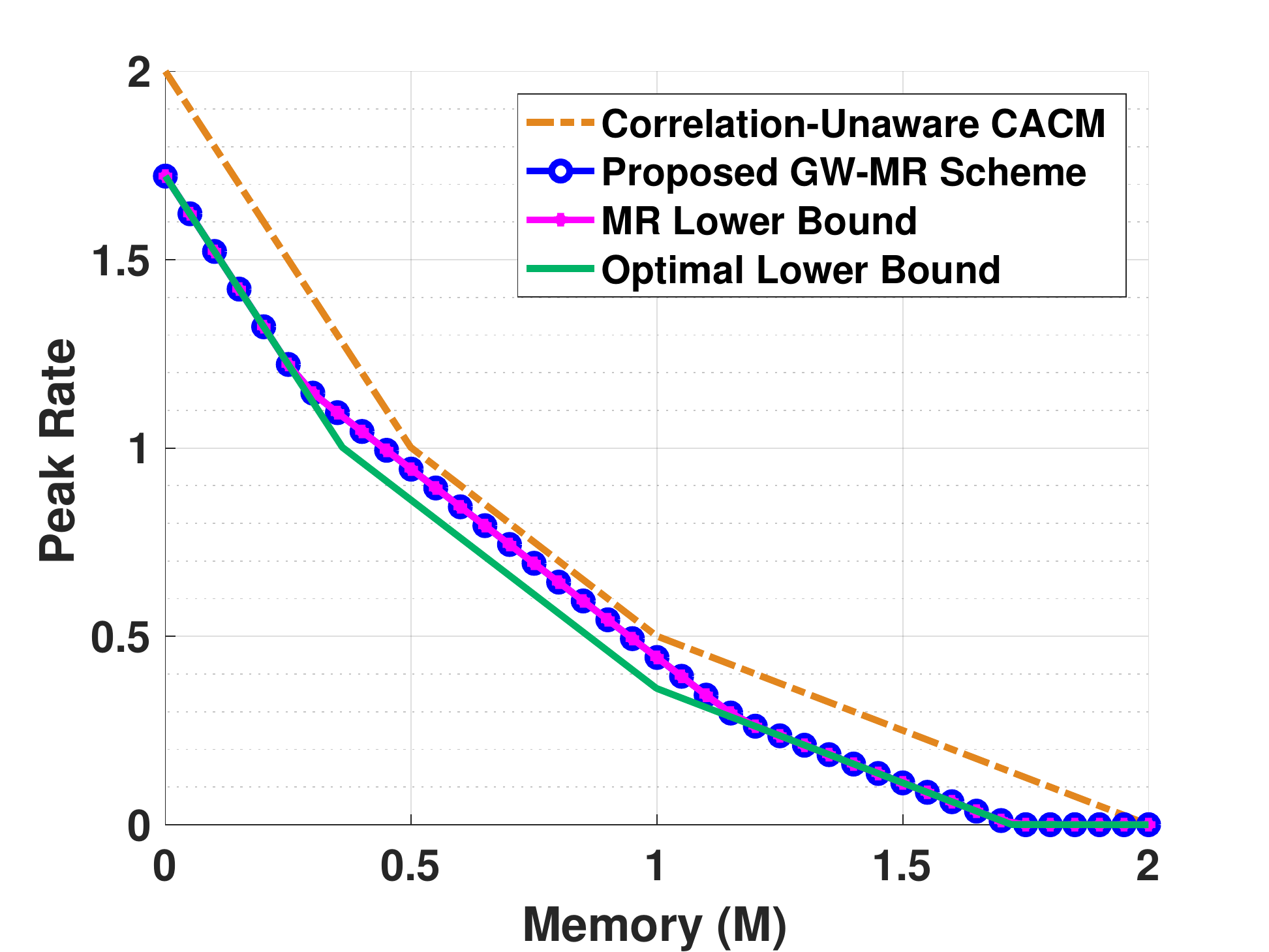}
	\end{subfigure}\hspace*{\fill}
	\begin{subfigure}{0.5\linewidth}\centering
		\includegraphics[width=\linewidth]{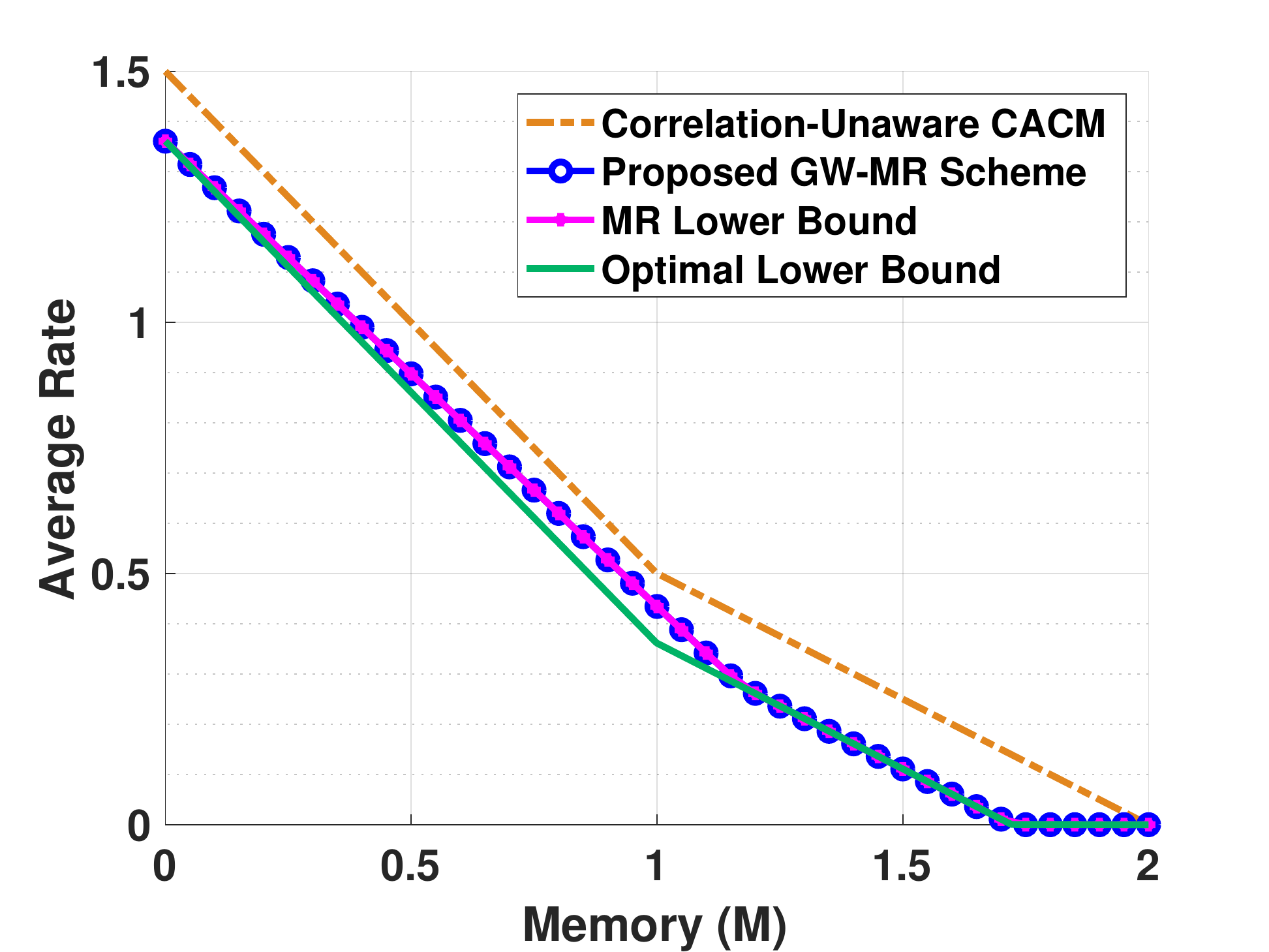}
	\end{subfigure}\hspace*{\fill}
	\caption{Rate-memory trade-off for a DSBS with $p_0 = 0.2$, $N=2$ files and $K=2$ receivers.
	} 
	\label{fig:simulations 2}
\end{figure}

\subsection{Three-File Network}\label{subsec: sim three}
For the three-file network we consider a $3$-DMS as follows. Let $\Vsf \sim$ Bern$(0.5)$, and $\Xsf_1$, $\Xsf_2$ and $\Xsf_3$, be the outputs of independent Binary Symmetric Channels (BSCs) with crossover probabilities $p_0 \in[0,0.5]$, fed by the same input $\Vsf$. Therefore,
$$p(x_1,x_2,x_3|v) =  p_1(x_1|v)\,p_2(x_2|v)\,p_3(x_3|v),\quad x_1,x_2,x_3 \in\{0,1\},$$
where 
\begin{equation}
	p_i(x_i|v) = \begin{cases}
		1-p_0 ,      & x_i = v  \\
		p_0,	    &x_i \neq v . \notag
	\end{cases}
\end{equation}
The joint pmf of the binary source is 
\begin{align}
	p(x_1,x_2, x_3) = \frac{1}{2}   {p_0}^s (1-p_0)^{(3-s)} + \frac{1}{2} {p_0}^{(3-s)} (1-p_0)^s, \label{eq:3-DMS}
\end{align}
with $s =x_1+x_2+x_3$. 
Similar to the derivation in \cite{wyner1975common}, an achievable Gray-Wyner rate region for this 3-DMS restricted to the symmetrical plane $\{\Rscr: \R_{12}=\R_{13} = \R_{23}=\R',\, \R_1=\R_2 = \R_3=\R\}$,  can be described by the following set of rate-tuples
\begin{align}\label{eq: lousy GW}
	(\R_0,\R',\R) \in \Big\{&\Big( H(\Xsf_1,\Xsf_2,\Xsf_3),\,0,\,0 \Big),\,\Big(0,\,0,\, H_x\Big),\,\Big(H_x,\,\frac{1}{3}H(\Xsf_2|\Xsf_1),\, \frac{1}{3}H(\Xsf_3|\Xsf_1,\Xsf_2)\Big),\,\notag\\
	&\Big(H_x,\,0,\, \frac{2}{3}H(\Xsf_2|\Xsf_1)\Big) 		\Big\},
\end{align}
which are achieved as follows:
\begin{itemize}
	\item $\Big( H(\Xsf_1,\Xsf_2,\Xsf_3),\,0,\,0 \Big)$ is achieved by simply transmitting $(\Xsf_1^F,\Xsf_2^F,\Xsf_3^F)$ over the common-to-all link without using the common-to-two and private links.
	\item $\Big(0,\,0,\, H_x\Big)$ is achieved by transmitting $\Xsf_1^F$, $\Xsf_2^F$ and $\Xsf_3^F$ over the private links to Gray-Wyner decoders $1$, $2$ and $3$, respectively, without using any of the common links. 
	\item $\Big(H_x,\,\frac{1}{3}H(\Xsf_2|\Xsf_1),\, \frac{1}{3}H(\Xsf_3|\Xsf_1,\Xsf_2)\Big)$ is achievable since the following three non-symmetric  points are achievable:
	\begin{itemize}
		\item[-] $\Rscr_1:\, \R_0 = H(\Xsf_1), \R_{12}=\R_{13} =0, \R_{23} =H(\Xsf_2|\Xsf_1),\R_{1}=\R_{2} =0, \R_{3} =H(\Xsf_3|\Xsf_1,\Xsf_2) $.
		\item[-] $\Rscr_2:\, \R_0 = H(\Xsf_2), \R_{12}=\R_{23} =0, \R_{13} =H(\Xsf_3|\Xsf_2),\R_{2}=\R_{3} =0, \R_{1} =H(\Xsf_1|\Xsf_2,\Xsf_3) $.
		\item[-] $\Rscr_3:\, \R_0 = H(\Xsf_3), \R_{13}=\R_{23} =0, \R_{12} =H(\Xsf_1|\Xsf_3),\R_{1}=\R_{3} =0, \R_{2} =H(\Xsf_2|\Xsf_1,\Xsf_3) $. 
	\end{itemize}  
	To see this, we describe how point $\Rscr_1$ can be achieved. By transmitting $W_{123}=\Xsf_1^F$ over the common-to-all link, and then transmitting $W_{23}$ over the common-to-two link to decoders $2$ and $3$ (with rate  $H(\Xsf_2|\Xsf_1)$), decoder $1$ losslessly reconstructs $\Xsf_1^F$, and decoders $2$ and $3$ reconstruct $\Xsf_2^F$. Finally, by transmitting $W_3$ over the private link to decoder $3$ (with rate $H(\Xsf_3|\Xsf_1,\Xsf_2)$), it can reconstruct file $\Xsf_3^F$. Given the achievability of the three points $\Rscr_1,\Rscr_2$ and $\Rscr_3$, their centroid   is also achievable, and it  lies on the symmetric Gray-Wyner rate region $\GWregionS$. 
	
	\item $\Big(H_x,\, 0,\, \frac{2}{3}H(\Xsf_2|\Xsf_1)\Big)$ is achievable since the following three non-symmetric points are achievable:
	\begin{itemize}
		\item[-] $\Rscr_4:\, \R_0 = H(\Xsf_1),\, \R_{12}=\R_{13} =\R_{23}=0,\,\R_{1}=0, \R_{2} =H(\Xsf_2|\Xsf_1)   ,\, \R_{3} =H(\Xsf_3|\Xsf_1) $.
		\item[-] $\Rscr_5:\, \R_0 = H(\Xsf_2), \,\R_{12}=\R_{13} =\R_{23}=0,\,\R_{1}=H(\Xsf_1|\Xsf_2),\, \R_{2} =0, \R_{3} =H(\Xsf_3|\Xsf_2) $.
		\item[-] $\Rscr_6:\, \R_0 = H(\Xsf_3), \,\R_{12}=\R_{13} =\R_{23}=0,\, \R_{1}=H(\Xsf_1|\Xsf_3),\, \R_{2} =H(\Xsf_2|\Xsf_3) , \R_{3} =0$.
	\end{itemize}
	To see this, we describe how point $\Rscr_4$ can be achieved. By transmitting $W_{123}=\Xsf_1^F$ over the common-to-all link, all three decoders can losslessly reconstruct $\Xsf_1^F$. Then, by transmitting $W_{2}$ and $W_{3}$ over the private links to decoders $2$ (with rate  $H(\Xsf_2|\Xsf_1)$) and decoder $3$ (with rate  $H(\Xsf_3|\Xsf_1)$), respectively, decoder $2$ reconstructs $\Xsf_2^F$, and decoder $3$  reconstructs $\Xsf_3^F$. Given the achievability of the three points $\Rscr_4,\Rscr_5$ and $\Rscr_6$, their centroid   is also achievable, and it  lies on the symmetric Gray-Wyner rate region $\GWregionS$.

\end{itemize}

%
%
%
%

Figs.~\ref{fig:3DMS}(a) and \ref{fig:3DMS}(b) display the rate-memory trade-off
as the memory size varies for a network with three files and two receivers. Fig.~\ref{fig:3DMS}(a) considers a 3-DMS distributed as \eqref{eq:3-DMS} with $p_0=0.05$, and compares the peak rate achieved with the proposed GW-MR scheme, $\RGW$, with the lower bound on $\Rstar$ given in Theorem~\ref{thm:LowerBound three}, and the optimal CACM scheme proposed for independent files in  \cite{yu2016exact}. The peak rate achieved by the GW-MR scheme,  defined in \eqref{eq:ach peak GWMR three}, is computed with respect to the achievable Gray-Wyner rate region described by the points in \eqref{eq: lousy GW}. From the figure, we observe that the proposed scheme is optimal when $M\geq 1.63$, which coincides with Theorem~\ref{thm:optimality three}, for which  $\tilde\R=0.12$ and  $H(\Xsf_1,\Xsf_2,\Xsf_3)-\frac{3}{2}\tilde\R =1.81-\frac{3}{2}(0.12)=1.63$. Moreover, the gap to optimality is less than $\frac{1}{2}H(\Xsf_1,\Xsf_2|\Xsf_3)-\tilde \R=0.29$, as stated in Theorem~\ref{thm:optimality three}. We note that the results in Fig.~\ref{fig:3DMS}(a) correspond to the achievable rate region given in  \eqref{eq: lousy GW}, however, a smaller gap can be achieved with an improved Gray-Wyner rate region.

\begin{figure}[h!]
	\begin{subfigure}{0.5\linewidth}\centering
		\includegraphics[width= \linewidth]{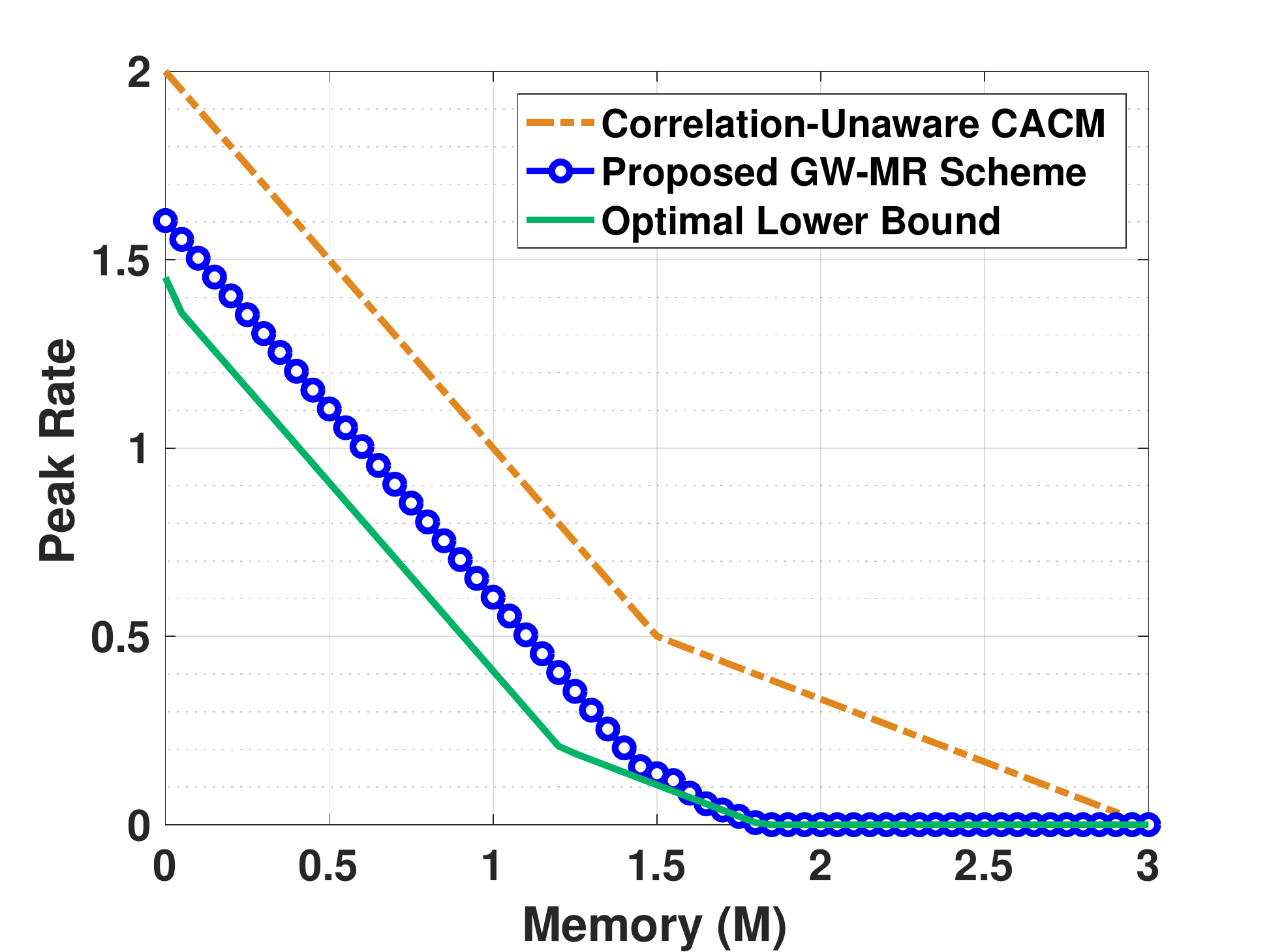}
		\subcaption{}
	\end{subfigure}\hspace*{\fill}
	\begin{subfigure} {0.5\linewidth}\centering
		\includegraphics[width= \linewidth]{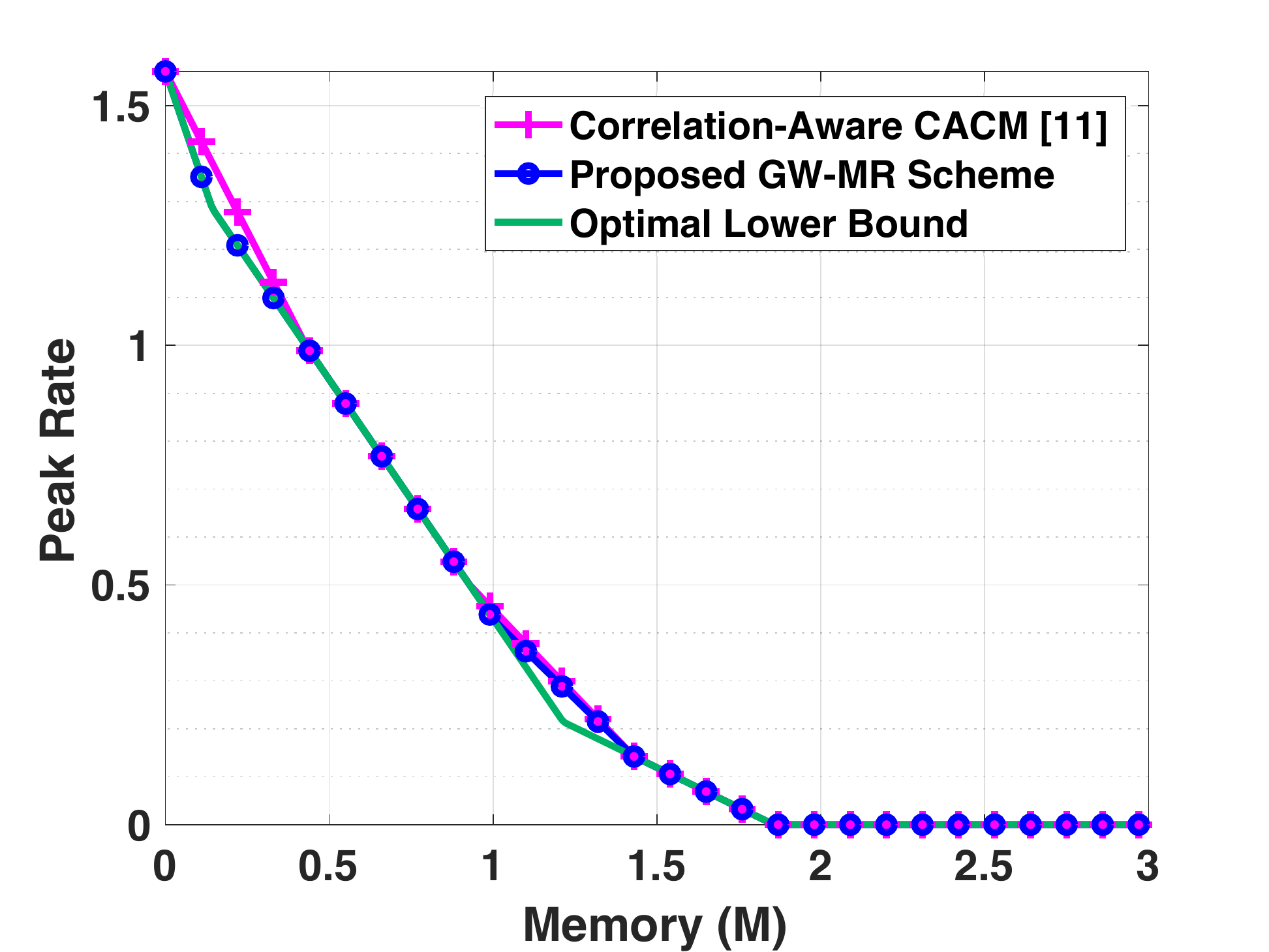}
		\subcaption{}
	\end{subfigure}\hspace*{\fill}
	\caption{Peak rate-memory trade-off for $N=3$, $K=2$, and the 3-DMS: (a) with distribution $p(x_1,x_2,x_3)$ given in \eqref{eq:3-DMS} for $p_0 = 0.05$, and (b) considered in Corollary~\ref{cor:special source three}.} 
	\label{fig:3DMS}
\end{figure}

Fig.~\ref{fig:3DMS}(b) displays the results for the 3-DMS given in Corollary~\ref{cor:special source three} for $H_x=H_u = 2H(\Vsf)=1$. It is observed that the proposed GW-MR scheme is optimal for all $M\in[0,1]\cup[1.43,3]$, and it is within a $0.07$ gap to optimality for other memory sizes, which is in line with the results in Corollary~\ref{cor:special source three}, for which $\widetilde M =1$. The plot shows that our proposed GW-MR scheme outperforms the correlation-aware scheme in \cite{yang2017centralized} for all memory sizes.


\section{Concluding Remarks}\label{sec:Conclusions}
In this paper, we have studied the fundamental rate-memory trade-off  in cache-aided communication systems under the assumption of correlated content. We 
have proposed a class of schemes based on a two-step source coding approach, in which files are first compressed using Gray-Wyner source coding, and then, the encoded descriptions are treated as independent content by a multiple-request cache-aided coded multicast scheme. We have characterized the rate-memory trade-off of such schemes, and analyzed the optimality of the overall proposed scheme with respect to a lower bound and proved its optimality or quantified its gap to optimality for different memory regimes. For the two-file network, we have shown that the two-step scheme is optimal for small and large caches sizes, and it is within half of the conditional entropy for all other memory values. The optimality holds for the regime of large memory in the three-file network, and the gap to optimality is less than the joint entropy of two of the sources conditioned on the third
source elsewhere.

\begin{appendices}
	
	\section{Optimal Cache Allocation  for Two Files  and Proof of Theorem \ref{thm:achievable rate peak}}\label{app:achievable rate}
	In this Appendix we prove the optimality of the cache allocation described in Sec.~\ref{subsec: scheme} and compute the corresponding MR peak and average rates given in Theorem~\ref{thm:achievable rate peak}. Recall that as explained in Sec.~\ref{subsec: scheme}, the proposed MR scheme for two files caches and delivers content from sublibraries $L_1$ and $L_2$  independently, as follows:
	\begin{itemize}
		\item Common Sublibrary $L_2$: Since the common description $\Wsf_0$ is required for the lossless reconstruction of either file, for any demand vector $\dbf$, it is required by all of the receivers. Therefore, for the common sublibrary it is optimal to adopt LFU caching and naive multicast delivery. Let $\mu_0\in [0, \min\{M,\R_0\}]$ denote the portion of memory allocated to sublibrary $L_2$. Then, each receiver caches the first $\mu_0F$ bits of $\Wsf_0$, and for any demand  the remaining $(\R_0 - \mu_0)F$ bits are delivered through uncoded multicast transmissions. 
		
		\item Private Sublibrary $L_1$: Let $\mu \in [0,$ $\min\{M,\R_1+\R_2\}]$ denote the portion of the memory allocated to sublibrary $L_1$. Then, the private descriptions $\{\Wsf_1,\Wsf_2\}$, with rates $\R_1$ and $\R_2$, are cached and delivered based on a correlation-unaware CACM scheme. Let $\RX(\mu,\R_1,\R_2)$ and $\REX(\mu,\R_1,\R_2)$ denote the achievable peak and average rates, respectively, for cache capacity $\mu$. 
	\end{itemize}
	Under the peak rate criterion, for a given cache capacity $M$, the cache allocations $\mu_0$ and $\mu$ satisfy $\mu_0+\mu = M$, and are optimally determined from the following optimization problem: 
	\begin{equation} \label{eq:cache allocation}
		\begin{aligned}
			\Rach(M,\Rscr) \;=\;&\min\limits_{\mu_0,\, \mu}
			& &   \R_0 - \mu_0 \,+\, \RX(\mu,\R_1,\R_2)\\
			& \text{s.t}
			& & \mu_0+\mu  \leq M, \\
			&&& 0 \leq \mu_0\leq \R_0,	\\
			&&& 0 \leq \mu\leq \R_1+\R_2	.
		\end{aligned}
	\end{equation}
	For any $M\in\Big[0,\,\R_0+\R_1+\R_2\Big]$, using the classical Lagrange multipliers method, and for 	
	\begin{align}
		{M}^*  \triangleq \min\Big\{M:\, \Big|\frac{\partial_{-}}{\partial M}\RX(M,\R_1,\R_2) \Big| < 1\Big\},  
	\end{align}
	the optimal solution to \eqref{eq:cache allocation} is as follows
	\begin{equation}\label{eq:opt mu}
		(\mu_0^*,\, \mu^*) =
		\begin{cases}
			\Big(  (M-\R_1-\R_2)^+     ,\, \min\{M,\R_1+\R_2\}  \Big) ,  \quad &\text{if } M\in  [0,\,M^*)  \\
			\Big( M-M^* ,\,  \min\{M^*,\R_1+\R_2\}  \Big) , &\text{if }  M\in [M^*,\,   \R_0 + M^*  ]
			\\
			\Big( \min\{M,\R_0\} ,\, (M-\R_0)^+  \Big)  , &\text{if }   M \in [\R_0 + M^*,\,  \R_0 +\R_1+\R_2]
		\end{cases}
	\end{equation}
	and the achievable rate provided in Theorem \ref{thm:achievable rate peak} is given by 
	$\Rach(M,\Rscr)= \R_0 - \mu_0^* \,+\, \RX(\mu^*,\R_1,\R_2)$ for $\mu_0^*$ and $\mu^*$ given in \eqref{eq:opt mu}.
	
	The slope of the tangent line to the rate functions $\RX(\mu,\R_1,\R_2)$ and $\R_0-\mu_0$ corresponding to the schemes used  for sublibraries $L_1$ and $L_2$, respectively, is a measure of their effectiveness in reducing the overall delivery rate. Given that $\RX(\mu,\R_1,\R_2)$ is a monotonically decreasing convex function, for small cache sizes, its slope (in absolute value) is larger than the one corresponding to the common sublibrary. Therefore, for $M<M^*$ it is preferable to allocate the entire cache capacity to the private sublibrary. As the memory increases the relative order of the two slopes switches such that the multicast delivery of the common description becomes more effective in reducing the  rate. At this point the additional memory is allocated to the common sublibrary, until it is fully stored.

	Under the average rate criterion, the optimal cache allocation is derived through an optimization problem similar to \eqref{eq:cache allocation}, but with respect to $\REX(\mu,\R_1,\R_2)$. This optimization leads to a solution $(\bar\mu_0^*,\bar\mu^*)$  similar to  \eqref{eq:opt mu} with $M^*$ replaced by $\bar M^*$ defined in \eqref{eq:M bar star}. Then the MR average rate in Theorem~\ref{thm:achievable rate peak} is given by $\RachE(M,\Rscr)= \R_0 - \bar\mu_0^* \,+\, \REX(\bar\mu^*,\R_1,\R_2)$.


	\section{Proof of Theorem \ref{thm:ach and GW lower bound peak}}\label{app:ach and GW lower bound peak}
	
	The proof follows from comparing $\Rach(M, \Rscr)$, the peak  rate achieved by the proposed MR scheme given in Theorem~\ref{thm:peak rate MR}, with $\RGWRlb$, the lower bound on the MR peak rate-memory function given in Theorem~\ref{thm:LowerBoundGW}, over different regions of the memory. 
	As per Theorem~\ref{thm:LowerBoundGW}, a lower bound on $\RGWstarR$ for the setting with  two files and $K$ receivers is given by 
	\begin{equation}
		\RGWRlb =
		\begin{cases}
			\R_0+\R_1+\R_2- 2M ,        
			& \; M \in \Big[0,\, \gamma\Big)  \\
			\R_0+ \frac{1}{2}\Big(\R_1+\R_2+\max\{ \R_1,\R_2\} \Big)- M ,				   
			& \;  M \in \Big[\gamma, \, \lambda \Big)\\
			\frac{1}{2}\Big( \R_0+\R_1+\R_2 - M\Big)  ,				     
			& \;  M \in \Big[\lambda, \, \R_0+\R_1+\R_2\Big]   \label{eq: LB GW peak appendix}
		\end{cases} 	
	\end{equation}
	where
	\begin{align}
		& \gamma \triangleq  \frac{1}{2} \min\{\R_1,\R_2\},  \quad \lambda \triangleq \R_0+\R_1+\R_2 - 2 \gamma= \R_0 + \max\{\R_1,\R_2\}  . \label{eq: gamma lambda}
	\end{align}
	
	For $\gamma_K$ and $\lambda_K$ defined in \eqref{eq: gamma lambda K}, when $K\geq 2$ we have $\gamma_K\leq\gamma\leq\lambda\leq\lambda_K$. Therefore, for a given $\Rscr=(\R_0,\R_1,\R_2)$:
	\begin{itemize}
		\item[(i)] When $M \in \Big[0,  \gamma_{K} \Big)$,
		\begin{align}
			\Rach(M,\Rscr) - \RGWRlb \leq 
			\; \R_0 + \R_1+\R_2-2M  - \Big(\R_0 + \R_1+\R_2 -2M\Big) = 0.  \notag
		\end{align}
		
		\item[(ii)]  When $M\in \Big[ \gamma_{K} , \,\gamma \Big)$,
		\begin{align}
			\Rach(M,\Rscr) - \RGWRlb  &\leq 
			\R_0 + \R_1+\R_2 - \frac{1}{K}\min\{\R_1,\R_2\}   -M 
			-  \Big(\R_0 + \R_1+\R_2 -2M\Big)\notag\\
			&  = M- \frac{1}{K}\min\{\R_1,\R_2\}   \notag\\
			& \stackrel{(a)}{\leq} \Big(\frac{1}{2}- \frac{1}{K}\Big) \min\{\R_1,\R_2\} ,
			\label{eq: K peak gap 1}
		\end{align}
		where $(a)$ follows from the fact that $M\leq\gamma= \frac{1}{2}  \min\{\R_1,\R_2\}$.
		
		\item[(iii)]   When $M\in \Big[ \gamma , \,\lambda \Big)$,
		\begin{align}
			\Rach(M,&\Rscr) - \RGWRlb  \notag\\
			&\leq 
			\R_0 + \R_1+\R_2 - \frac{1}{K}\min\{\R_1,\R_2\} -M 
			-  \bigg(\R_0 + \frac{1}{2}\Big( \R_1+\R_2 + \max\{\R_1,\R_2\}\Big) - M \bigg)\notag\\
			&  = \Big(\frac{1}{2}- \frac{1}{K}\Big) \min\{\R_1,\R_2\}. \label{eq: K peak gap 2}
		\end{align}
		
		\item[(iv)]  When $M\in \Big[ \lambda , \,\lambda_{K} \Big)$,
		\begin{align}
			\Rach(M,\Rscr) - \RGWRlb  &\leq 
			\R_0 + \R_1+\R_2 - \frac{1}{K}\min\{\R_1,\R_2\} -M  - \frac{1}{2}\Big(\R_0+\R_1+\R_2-M\Big)  \notag\\
			& =  \frac{1}{2}\Big(\R_0+\R_1+\R_2-M\Big) - \frac{1}{K}\min\{\R_1,\R_2\} \notag\\
			& \stackrel{(b)}{\leq} \Big(\frac{1}{2}- \frac{1}{K}\Big) \min\{\R_1,\R_2\},
			\label{eq: K peak gap 3}
		\end{align}
		where $(b)$ follows since  $M \geq \lambda=  \R_0+  \max\{\R_1,\R_2\}$.
		
		\item[(v)] When $M\in \Big[\lambda_K ,\, \R_0+\R_1+\R_2\Big]$, 
		\begin{align}
			\Rach(M,\Rscr) &- \RGWRlb \leq \frac{1}{2}(\R_0+\R_1+\R_2-M)   -  \frac{1}{2}\Big(\R_0+\R_1+\R_2-M\Big)  = 0.\notag
		\end{align}
	\end{itemize} 	
	It is observed that for any $\Rscr$, when
	\begin{align} 
		M &\in \Big[ 0,\, \frac{1}{K}\min\{\R_1,\R_2\} \Big) \bigcup
		\Big[\R_0+\R_1+\R_2- \frac{2}{K}\min\{\R_1,\R_2\} ,\,\R_0+\R_1+\R_2\Big], \notag
	\end{align} 
	we have $\Rach(M,\Rscr) = \RGWstarR =\RGWRlb$, and in the remaining memory region
	\begin{align} 
		\Rach(M,\Rscr) - \RGWstarR  \leq	\Rach(M,\Rscr) - \RGWRlb  \leq \Big(\frac{1}{2}- \frac{1}{K}\Big) \min\{\R_1,\R_2\}   . \notag
	\end{align} 
	
	

	\section{Upper Bound on $\RX (M,\R_1,\R_2)$} \label{app:CACM private peak}	
	In this appendix, we describe in detail the scheme adopted for the private sublibrary $L_1$ when considering the peak rate criterion, which is used to evaluate the performance of the proposed MR scheme for two files in Sec.~\ref{subsec: scheme}. The scheme adopted for sublibrary $L_1$  is based on memory-sharing among generalizations of the correlation-unaware CACM schemes proposed in  \cite{tian2016caching} and  \cite{yu2016exact} to files with unequal lengths. 
	The generalization is done by dividing the larger file into two parts such that one part is equal in size to the smaller file. Then, this part and the smaller file are cached and delivered according to a CACM scheme available in the literature designed for equal-length files, while the remaining part of the larger file is stored in the caches only for large enough cache capacities. We note that since, for a setting with independent files, each of the schemes proposed in \cite{tian2016caching}  and \cite{yu2016exact} is optimal (or close to optimal) for a different regime of the memory, for the generalization we use a combination of both schemes. Specifically, for small cache sizes we use a generalization of the scheme in \cite{tian2016caching}, and for large cache sizes we use a generalization of the scheme proposed in \cite{yu2016exact}. 
	The resulting scheme
	achieves a peak rate, $\RX (M,\R_1,\R_2)$,  whose upper bound is given in the following theorem. 
	\begin{theorem}\label{thm:CACM private peak}
		In the two-file $K$-receiver network, for a given cache capacity $M$ and  files with rates $\R_1$ and $\R_2$,  an  upper bound on $\RX (M,\R_1,\R_2)$ is given by
		\begin{equation}
			\RX (M,\R_1,\R_2) \leq
			\begin{cases}
				\R_1+\R_2  -2 M ,                   &   M\in \Big[0, \,\frac{1}{K}\min\{\R_1,\R_2\}\Big)  \\
				\R_1+\R_2 -   \frac{1}{K}\min\{\R_1,\R_2\} - M  ,				      &    M \in \Big[\frac{1}{K}\min\{\R_1,\R_2\}, \, \R_1+\R_2 -   \frac{2}{K}\min\{\R_1,\R_2\}  \Big)   \\
				\frac{1}{2} (\R_1 + \R_2 -  M) ,	    & M \in \Big[\R_1+\R_2 -   \frac{2}{K}\min\{\R_1,\R_2\}    , \,\R_1+\R_2\Big] \label{eq:K user UB}.
			\end{cases}
		\end{equation}	
	\end{theorem}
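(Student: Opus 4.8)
The plan is to build the scheme adopted for $L_1$ by reducing the unequal-length instance to an equal-length one, plugging in known equal-length schemes, and then optimizing the cache split; the claimed bound \eqref{eq:K user UB} will come out as the lower convex envelope of the resulting corner points. Assume without loss of generality $\R_1\le\R_2$, so $\min\{\R_1,\R_2\}=\R_1$. First I would split the longer description $\Wsf_2$ into two disjoint pieces, $\Wsf_2'$ of rate $\R_1$ and $\Wsf_2''$ of rate $\R_2-\R_1$, allocate a portion $m_1$ of the cache to the equal-length pair $(\Wsf_1,\Wsf_2')$ and the remaining $m_2=M-m_1$ to $\Wsf_2''$, and handle the two parts with no coding across them.

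For the equal-length two-file $K$-receiver instance $(\Wsf_1,\Wsf_2')$ (both of rate $\R_1$) I would use memory-sharing among the following achievable memory--rate pairs: the trivial $(0,2\R_1)$; the point $\bigl(\frac{\R_1}{K},\,2\R_1(1-\frac1K)\bigr)$, attained by the coded cache-placement scheme of \cite{chen2014fundamental,tian2016caching} rescaled to file length $\R_1$ (on $[0,\frac{\R_1}{K}]$ that scheme delivers at rate $2\R_1-2m_1$); the point $\bigl(2\R_1(1-\frac1K),\,\frac{\R_1}{K}\bigr)$, attained by the uncoded-placement scheme of \cite{yu2016exact} with parameter $t=K-1$; and $(2\R_1,0)$, obtained by caching both descriptions in full. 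This yields an achievable peak rate $g(m_1)$ for this subproblem that is convex and piecewise linear, with slopes $-2$, $-1$, $-\frac12$ on the successive intervals $[0,\frac{\R_1}{K}]$, $[\frac{\R_1}{K},2\R_1(1-\frac1K)]$, $[2\R_1(1-\frac1K),2\R_1]$. For $\Wsf_2''$ I would store $m_2$ common bits in every cache and deliver the leftover $(\R_2-\R_1-m_2)^+$ bits by plain (uncoded) multicast; since in every worst-case demand all receivers that request $\Wsf_2$ need exactly the same bits of $\Wsf_2''$, no coded gain is possible, so this part achieves the linear rate $h(m_2)=(\R_2-\R_1-m_2)^+$ of slope $-1$ on $[0,\R_2-\R_1]$.

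Because the two parts are treated independently, the peak rate of the overall scheme is $\min_{m_1+m_2=M}\{g(m_1)+h(m_2)\}$, a one-dimensional water-filling over two convex piecewise-linear curves: cache should be poured first into whichever subproblem has the steeper (more negative) marginal rate. Hence the slope-$(-2)$ segment of $g$ (length $\frac{\R_1}{K}$) is filled first, then the slope-$(-1)$ segments of $g$ and of $h$ (combined length $(2\R_1(1-\frac1K)-\frac{\R_1}{K})+(\R_2-\R_1)=\R_1+\R_2-\frac{3\R_1}{K}$) are filled in either order, and finally the slope-$(-\frac12)$ tail of $g$ (length $\frac{2\R_1}{K}$) is filled. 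This places the breakpoints at $M=\frac{\R_1}{K}$ and $M=\frac{\R_1}{K}+\bigl(\R_1+\R_2-\frac{3\R_1}{K}\bigr)=\R_1+\R_2-\frac{2\R_1}{K}$, and, starting from $g(0)+h(0)=\R_1+\R_2$, the value on the three pieces is $\R_1+\R_2-2M$, $\R_1+\R_2-\frac{\R_1}{K}-M$, and $\frac12(\R_1+\R_2-M)$ respectively; this is exactly \eqref{eq:K user UB} when $\min\{\R_1,\R_2\}=\R_1$, and the general case follows by symmetry.

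The step I expect to need the most care is the reduction itself together with the first corner point: I must check that the coded-placement scheme of \cite{chen2014fundamental,tian2016caching}, specialized to two equal-rate files and $K$ receivers and rescaled to rate $\R_1$, indeed achieves $\bigl(\frac{\R_1}{K},2\R_1(1-\frac1K)\bigr)$, and that the excess piece $\Wsf_2''$ can be cached and delivered without interfering with the (coded) cache contents used for $(\Wsf_1,\Wsf_2')$ — which is precisely why the cache is partitioned and the two parts delivered with no inter-coding. Once those building blocks are in place, the remaining argument is the routine convex-envelope and breakpoint bookkeeping sketched above; note that the adopted scheme can in fact beat $g$ for some $K$ by also using Yu's intermediate operating points, so \eqref{eq:K user UB} is stated, and need only be proved, as an upper bound.
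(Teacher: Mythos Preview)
Your proposal is correct and follows essentially the same approach as the paper: split the longer file into an equal-length piece and an excess piece, apply the coded-placement scheme of \cite{chen2014fundamental,tian2016caching} and the uncoded-placement scheme of \cite{yu2016exact} to the equal-length pair at the corner points $\bigl(\tfrac{\R_1}{K},\,2\R_1(1-\tfrac1K)\bigr)$ and $\bigl(2\R_1(1-\tfrac1K),\,\tfrac{\R_1}{K}\bigr)$, handle the excess by LFU caching plus naive multicast, and memory-share. The only cosmetic difference is that the paper states the four global memory--rate corner points directly and takes their lower convex envelope, whereas you obtain the same envelope via the equivalent infimal-convolution (water-filling) between the two subproblems.
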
	
	
	\begin{proof}
		For the purposes of our analysis we only provide the caching and delivery strategies for cache sizes 
		\begin{align}
			M\in\Big\{0,\, \frac{1}{K}\min\{\R_1,\R_2\},\, \R_1+\R_2- \frac{1}{K}\min\{\R_1,\R_2\} ,\, \R_1+\R_2\Big\}, 
		\end{align}
		and upper bound the peak rate by memory-sharing among the corresponding achievable rates.  For a setting with two files and $K\geq2$ receivers, the scheme operates as follows: 
		\begin{itemize}
			\item[$\bullet$] $M = 0$: In the worst case, i.e., when at least two of the receives request different files, the sender multicasts both files over the shared link with a total rate of $\R_1+\R_2$.
			
			\item[$\bullet$] $M = \frac{1}{K}\min\{\R_1,\R_2\}$:  File $\Wsf_i$, with $i \in\{1,2\}$, is divided into $2K+1$ packets.
			\begin{itemize}
				\item[$\circ$]   Packets $\{\Wsf_i^{(1)},\, \dots,\Wsf_i^{(2K)}\} $ with size $\frac{1}{2K}\min\{\R_1,\R_2\}$.
				\item[$\circ$]   Packet $\Wsf_i^{(2K+1)}$ with size $\R_i -  \min\{\R_1,\R_2\}$.
			\end{itemize}
			As in  \cite{tian2016caching} and \cite{chen2014fundamental}, receiver $r_k$, $k\in\{1,\dots,K \}$, fills its cache as 
			\begin{align}
				Z_{r_k}=\{ \Wsf_1^{(2k-1)}\oplus \Wsf_2^{(2k-1)} ,\,   \Wsf_1^{(2k)}\oplus \Wsf_2^{(2k)}  \} .\label{eq:cache peak 1}
			\end{align}
			The multicast codeword results from concatenating: $i)$ a codeword designed based on the delivery strategy in \cite[Sec.~V.B]{chen2014fundamental}, with rate $2(1-\frac{1}{K})\min\{\R_1,\R_2\}$, and $ii)$  packets $\Wsf_1^{(2K+1)}$ and $\Wsf_2^{(2K+1)}$, with rate $ \R_1+\R_2-2\min\{\R_1,\R_2\}$. The peak delivery rate is  
			\begin{align}
				\RX (M,\R_1,\R_2) &= 2(1-\frac{1}{K})\min\{\R_1,\R_2\}+ \R_1+\R_2-2\min\{\R_1,\R_2\} \notag\\
				&=  \R_1+\R_2-\frac{2}{K}\min\{\R_1,\R_2\}. 
			\end{align}

			\item[$\bullet$] $M =  \R_1+\R_2-\frac{2}{K}\min\{\R_1,\R_2\}$: File $\Wsf_i$, with $i \in\{1,2\}$, is divided into $K+1$ packets.
			\begin{itemize}
				\item[$\circ$] Packets  $\{\Wsf_i^{(1)},\,\dots,\,\Wsf_i^{(K)}\}$ with size $\frac{1}{K}\min\{\R_1,\R_2\}$.
				\item[$\circ$] Packet $\Wsf_i^{(K+1)}$ with size $\R_i -  \min\{\R_1,\R_2\}$.
			\end{itemize}
			Similar to the strategy in  \cite{yu2016exact}, receiver $r_k$'s cache is filled as 
			\begin{align}
				Z_{r_k}=\Big\{ \Wsf_1^{(j)},\, \Wsf_2^{(j)}: \; j =1,\dots,k-1,k+1,\dots,K+1   \Big\} .\label{eq:cache peak 2}
			\end{align}
			For any demand realization, when receivers request the same file or different files, the sender transmits a coded message designed based on the delivery strategy in \cite[Sec.~IV.B]{yu2016exact}, with peak rate 
			\begin{align}
				\RX (M,\R_1,\R_2) = \frac{1}{K}\min\{\R_1,\R_2\}.  
			\end{align}
			
			\item[$\bullet$] $M = \R_1+\R_2$: The files are fully stored at all receivers resulting in zero delivery rate. 
			
		\end{itemize}	
		We note that for other cache sizes, the schemes in \cite{tian2016caching} and \cite{yu2016exact} can be similarly generalized to files with unequal lengths but the corresponding caching and delivery strategies are not provided here. As in \cite{maddah14fundamental}, through memory-sharing the lower convex envelope of the memory-rate pairs 
		\begin{align}
			\Big(M, \RX(M,\R_1,\R_2)\Big) \in  \Bigg\{ &\Big(0,\,    \R_1+\R_2  \Big);\,  \Big(  \frac{1}{K}\min\{\R_1,\R_2\}  ,\,  \R_1+\R_2-\frac{2}{K}\min\{\R_1,\R_2\}    \Big);\,\notag\\
			&\Big(    \R_1+\R_2-\frac{2}{K}\min\{\R_1,\R_2\}  ,\, \frac{1}{K}\min\{\R_1,\R_2\}   \Big) ;  \Big(\R_1+\R_2 ,\,0 \Big)    \Bigg\},\notag
		\end{align}
		is achievable, resulting in the peak rate given in \eqref{eq:K user UB}. 
		
	\end{proof} 
	
	\begin{remark} 
		The generalized correlation-unaware scheme presented above, when particularized to $K=2$ receivers, is shown to be optimal in \cite{ITLowerBound} and \cite{li2017rate}, and outperforms schemes such as \cite{zhang2015differentsize} and \cite{cheng2017optimal} designed for files with different lengths. Furthermore, when particularized to files with equal lengths, i.e., $\R_1=\R_2$, this scheme coincides with the optimal scheme originally characterized in  \cite[Appendix A]{maddah14fundamental} for equal-length and independent files.
		
	\end{remark}
	\section{Proof of Theorem \ref{thm:ach and GW lower bound avg}}\label{app:ach and GW lower bound avg}
	
	The proof follows from comparing $\RachE(M, \Rscr)$, the average rate achieved by the proposed MR scheme given in Theorem~\ref{thm:avg rate MR}, with $\RGWRlbE$, the lower bound on the MR average rate-memory function given in Theorem~\ref{thm:LowerBoundGW}, over different regions of the memory. 
	In this appendix, in order to simplify the analysis when comparing $\RachE(M, \Rscr)$ with $\RGWRlbE$, we use a less tight lower bound by only considering the following three inequalities from Theorem \ref{thm:LowerBoundGW}
	\begin{equation}
		\RGWRlbE  \geq
		\begin{cases}
			\R_0 + \frac{3}{4}(\R_1+\R_2 )- M ,        
			& \; M \in \Big[0,\, \bar\gamma\Big)  \\
			\frac{3}{4}\R_0 + \frac{1}{2}(\R_1+\R_2 )+\frac{1}{4}\max\{\R_1,\R_2 \}-\frac{3}{4} M ,				   
			& \;     M \in \Big[\bar\gamma , \, \lambda\Big) \\
			\frac{1}{2}\Big(\R_0+\R_1+\R_2 - M\Big)   ,				     
			& \;  M \in \Big[ \lambda, \, \R_0+\R_1+\R_2\Big]    . \label{eq: LB GW avg appendix}
		\end{cases} 	
	\end{equation}	  
	with $\lambda$ defined in \eqref{eq: gamma lambda}, and 
	\begin{align}
		&	  \bar\gamma \triangleq  \R_0+\min\{\R_1,\R_2\}  .   \label{eq: gamma lambda bar} 
	\end{align}
	For $\gamma_K$ and $\lambda_K$ defined in \eqref{eq: gamma lambda K}, and for $\bar\gamma_K$ defined in \eqref{eq: gamma bar K}, when $K\geq2$ we have $ 2\gamma_K\leq \bar\gamma_K \leq \bar\gamma\leq  \lambda\leq\lambda_K $. Therefore, for a given $\Rscr = (\R_0,\R_1,\R_2)$:
	
	\begin{itemize}
		\item[(i)] When $M \in \Big[0, \, 2\gamma_K  \Big)$,  	
		\begin{align}
			\RachE(M,\Rscr)  -  \RGWRlbE  &\leq 
			\R_0+\Big(1-\frac{1}{2^K}\Big)  (\R_1+\R_2)-\Big(\frac{3}{2}-\frac{2}{2^K}  \Big)M  -\Big(\R_0 + \frac{3}{4}(\R_1+\R_2 )- M\Big)   \notag\\
			& = \Big(\frac{1}{4}-\frac{1}{2^K}\Big)  (\R_1+\R_2-2M)    \notag\\
			& \stackrel{(a)}{\leq}   \Big(\frac{1}{4}-\frac{1}{2^K}\Big)  (\R_1+\R_2 ) , \label{eq: K avg gap 1}
		\end{align}
		where $(a)$ follows since $M\geq 0$.
		
		\item[(ii)] When $M \in \Big[2\gamma_K,\,\bar\gamma_K  \Big)$,  
		\begin{align}
			\RachE(M,\Rscr)  -  \RGWRlbE   &\leq
			\R_0+\Big(  1-\frac{1}{2^K}  \Big) ( \R_1+\R_2) -   \Big( 1-\frac{4}{2^K}  \Big)   \gamma_K -M \notag\\
			& \quad  -\Big(\R_0 + \frac{3}{4}(\R_1+\R_2 )- M\Big)   \notag\\
			& =  \Big(\frac{1}{4}- \frac{1}{2^K} \Big) \Big(\R_1+\R_2 -4\gamma_K\Big) \notag\\
			& \leq \Big(\frac{1}{4}- \frac{1}{2^K} \Big) \Big(\R_1+\R_2 \Big)  .   
		\end{align}
		
		\item[(iii)] When $M \in \Big[ \bar\gamma_K,\,\bar\gamma \Big)$,  
		\begin{align}
			\RachE(M,\Rscr)  -  &\RGWRlbE    \leq
			\Big(  1-\frac{1}{2^K}  \Big)\Big( \R_0+\R_1+\R_2-M \Big)- \Big(  1-\frac{2}{2^K}  \Big) \gamma_K \notag\\  
			& \quad  -\Big(\R_0 + \frac{3}{4}(\R_1+\R_2 )- M\Big)   \notag\\
			& = \Big(  \frac{1}{4}-\frac{1}{2^K}  \Big) ( \R_1+\R_2) -\frac{1}{2^K} (\R_0-M)   -\Big(  1-\frac{2}{2^K}  \Big) \gamma_K \notag\\
			& \stackrel{(b)}{\leq}  \Big(  \frac{1}{4}-\frac{1}{2^K}  \Big) ( \R_1+\R_2) -\Big(1-\frac{K+2}{2^K} \Big) \gamma_K \notag\\
			&\leq    \Big(  \frac{1}{4}-\frac{1}{2^K}  \Big) ( \R_1+\R_2)    ,   
		\end{align}
		where $(b)$ follows due to the fact that $M< \bar\gamma=\R_0+\min\{\R_1,\R_2\}=\R_0 +K\gamma_K$.

		\item[(iv)] When $M \in \Big[\bar\gamma,\,\lambda \Big)$,  
		\begin{align}
			\RachE(M,\Rscr)  & -  \RGWRlbE  \leq 
			\Big(  1-\frac{1}{2^K}  \Big)\Big( \R_0+\R_1+\R_2-M \Big)- \Big(  1-\frac{2}{2^K}   \Big)\gamma_K \notag\\
			& \quad-   \Big(\frac{3}{4}\R_0 + \frac{1}{2}(\R_1+\R_2 )+\frac{1}{4}\max\{\R_1,\R_2 \}-\frac{3}{4} M\Big) \notag\\
			& =   \Big(  \frac{1}{4}-\frac{1}{2^K}  \Big) \Big(\R_0+\max\{\R_1,\R_2\} -M\Big)   + \Big(   \frac{1}{2} -\frac{1}{2^K}  \Big)  \Big(\min\{\R_1,\R_2\}   -2\gamma_K \Big) \notag  \\
			& \stackrel{(c)}{\leq}   \Big(  \frac{1}{4}-\frac{1}{2^K}  \Big) \max\{\R_1,\R_2\} +     \frac{1}{4}\min\{\R_1,\R_2\}   -\Big(  1-\frac{2}{2^K}   \Big)\gamma_K   \notag\\ 
			&  \stackrel{(d)}{\leq}     \Big(  \frac{1}{4}-\frac{1}{2^K}  \Big) \max\{\R_1,\R_2\} +     \frac{1}{4}\min\{\R_1,\R_2\} -\frac{1}{2^K}  \min\{\R_1,\R_2\}         \notag  \\
			&= \Big(  \frac{1}{4}-\frac{1}{2^K}  \Big) (\R_1+\R_2)    ,
		\end{align}
		where $(c)$ follows due to the fact that  $M\geq\bar\gamma=\R_0+\min\{\R_1,\R_2\} $, and $(d)$ follows since $\frac{1}{K}\Big(  1-\frac{2}{2^K}   \Big)\geq \frac{1}{2^K}$ for $K\geq 2$.
		
		\item[(v)] When $M \in \Big[\lambda,\,\lambda_K \Big)$,  
		\begin{align}
			\RachE(M,\Rscr)  -  \RGWRlbE  \leq &
			\Big(  1-\frac{1}{2^K}  \Big)\Big( \R_0+\R_1+\R_2-M \Big)- \Big(  1-\frac{2}{2^K}  \Big)\gamma_K \notag\\
			& \quad-  \frac{1}{2}\Big(\R_0+\R_1+\R_2 - M\Big) \notag\\
			&=   \Big(  \frac{1}{2}-\frac{1}{2^K}  \Big)\Big( \R_0+\R_1+\R_2-2\gamma_K  -M \Big)    \notag\\
			&\stackrel{(e)}{\leq} \Big(  \frac{1}{2}-\frac{1}{2^K}  \Big)  (K-2)\gamma_K \notag\\
			& \stackrel{(f)}{\leq} \Big(  \frac{1}{4}-\frac{1}{2^{K}}  \Big) (\R_1+\R_2)  , \label{eq: K avg gap 3}
		\end{align}
		where $(e)$ follows due to the fact that  $M\geq\lambda=\R_0+\max\{\R_1,\R_2\} $, and $(f)$ follows since $ \min\{\R_1,\R_2\}\leq (\R_1+\R_2)/2$, and since $(  \frac{1}{2}-\frac{1}{2^K}  )( \frac{1}{2}-\frac{1}{K}) \leq \frac{1}{4}-\frac{1}{2^{K}} $ for   $K\geq 2$.
		
		\item[(vi)] For $M \in \Big[\lambda_K  ,\; \R_0+\R_1+\R_2\Big]$, 
		\begin{align}
			\RachE(M,\Rscr)  &-  \RGWRlbE= \frac{1}{2}\Big(\R_0+\R_1+\R_2-M\Big)   -  \frac{1}{2}\Big(\R_0 +\R_1 + \R_2 - M\Big)  = 0.\notag
		\end{align}
		
	\end{itemize}
	
	It is observed that for any $\Rscr$, when 
	\begin{align} 
		M &\in \Big[\R_0+\R_1+\R_2- \frac{2}{K}  \min\{\R_1,\R_2\}  , \R_0+\R_1+\R_2\Big] , \notag
	\end{align} 
	we have $\RachE(M,\Rscr) = \RGWRlbE = \RGWstarRE$, and for all other cache capacities
	\begin{align} 
		\RachE(M,\Rscr) - \RGWstarRE  \leq \RachE(M,\Rscr) - \RGWRlbE  \leq \Big(  \frac{1}{4}-\frac{1}{2^{K}}  \Big) (\R_1+\R_2)  . \notag
	\end{align}

	\section{Upper Bound on $\REX (M,\R_1,\R_2)$} \label{app:CACM private avg}
	In this appendix, we describe in detail the scheme adopted for the private sublibrary $L_1$ under the average rate criterion, which is used to evaluate the performance of the proposed MR scheme for two files in Sec.~\ref{subsec: scheme}.  The scheme adopted for sublibrary $L_1$ is a generalization of the correlation-unaware CACM scheme proposed in \cite{yu2016exact} to files with unequal lengths, where as for the peak rate criterion, generalization is done by dividing the larger file into two parts such that one part is equal in length to the smaller file.  
	The resulting scheme achieves an average rate, $\REX (M,\R_1,\R_2)$,  whose upper bound is given in the following theorem. 
	
	\begin{theorem}\label{thm:CACM private avg}
		In the two-file $K$-receiver network, for a given cache capacity $M$ and  files with rates $\R_1$ and $\R_2$,  an  upper bound on $\REX (M,\R_1,\R_2)$ is given by
		\begin{equation}
			\REX(M,\R_1,\R_2)\leq \begin{cases}
				\Big(1-\frac{1}{2^K}\Big)  (\R_1+\R_2)-\Big(\frac{3}{2}-\frac{2}{2^K}  \Big)M  ,        & M \in \Big[0, \;2\,\gamma_K\Big)  \\
				\Big(  1-\frac{1}{2^K}  \Big) \Big( \R_1+\R_2 -M \Big) - 2\Big(  \frac{1}{2}-\frac{1}{2^K}  \Big) \gamma_K,				      &  M \in \Big[2\,\gamma_K, \; \R_1+\R_2 -   2\,\gamma_K \Big)  \\
				\frac{1}{2} (\R_1 + \R_2 -  M) ,	    &M \in \Big[\R_1+\R_2 -   2\,\gamma_K   , \;\R_1+\R_2\Big]\label{eq:K user UB avg}
			\end{cases}
		\end{equation}
		for $\gamma_K = \frac{1}{K}\min\{\R_1,\R_2 \} $  defined in \eqref{eq: gamma lambda K}.
	\end{theorem}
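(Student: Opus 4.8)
The plan is to follow the template already used for the peak‑rate bound in Appendix~\ref{app:CACM private peak} (Theorem~\ref{thm:CACM private peak}), replacing the worst‑case rates of the underlying equal‑length scheme by their demand‑averaged counterparts and replacing the Chen--Tian‑type corner point by the corner points of the scheme of \cite{yu2016exact}. First I would assume without loss of generality $\R_1\le\R_2$ and split the private sublibrary into an equal‑length subpair $\{\Wsf_1,\Wsf_2'\}$, where $\Wsf_2'$ consists of the first $\R_1 F$ bits of $\Wsf_2$, and a leftover piece $\Wsf_2''$ of length $(\R_2-\R_1)F$. The equal‑length subpair is cached and delivered by the scheme of \cite{yu2016exact} for two files of normalized length $\min\{\R_1,\R_2\}$, while $\Wsf_2''$ is either kept entirely out of the caches or fully replicated at every receiver, according to the cache regime. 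Since the receiver demands over the two original files are i.i.d.\ uniform, the induced demand over $\{\Wsf_1,\Wsf_2\}$ is also i.i.d.\ uniform, so the average‑rate analysis of \cite{yu2016exact} transfers directly to the subpair, and $\Wsf_2''$ must be transmitted precisely when at least one receiver requests file $2$, i.e.\ with probability $1-2^{-K}$, contributing $(1-2^{-K})(\R_2-\R_1)$ to the average rate whenever it is not cached.

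Then I would evaluate four memory--rate corner points and, as in \cite{maddah14fundamental}, take their lower convex envelope. The corner points are: $(i)$ $M=0$, uncoded delivery of all distinct requested descriptions, with average rate $(1-2^{-K})(\R_1+\R_2)$; $(ii)$ $M=2\gamma_K=\tfrac{2}{K}\min\{\R_1,\R_2\}$, obtained by running the scheme of \cite{yu2016exact} at cache point $t=1$ on the equal‑length subpair (which uses exactly $2\gamma_K$ of memory) with $\Wsf_2''$ uncoded, whose average rate works out to $(1-2^{-K})(\R_1+\R_2)-\big(3-4\cdot2^{-K}\big)\gamma_K$; $(iii)$ $M=\R_1+\R_2-2\gamma_K$, obtained by running the scheme of \cite{yu2016exact} at cache point $t=K-1$ on the subpair (which has average rate $\tfrac1K$ per unit subfile length, hence $\gamma_K$ in absolute terms) together with $\Wsf_2''$ fully cached, so that the memory used is $2\min\{\R_1,\R_2\}-2\gamma_K+(\R_2-\R_1)=\R_1+\R_2-2\gamma_K$ and the rate is $\gamma_K$; and $(iv)$ $M=\R_1+\R_2$, everything cached, rate $0$. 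The three segments of the resulting piecewise‑linear envelope have slopes $-(\tfrac32-\tfrac{2}{2^K})$, $-(1-\tfrac{1}{2^K})$ and $-\tfrac12$, which are nondecreasing for all $K\ge1$, so the envelope is exactly the concatenation of these three lines; together with $2\gamma_K\le\R_1+\R_2-2\gamma_K$ (which holds for $K\ge2$ because $\min\{\R_1,\R_2\}\le(\R_1+\R_2)/2$), reading the envelope off on the three intervals yields \eqref{eq:K user UB avg}.

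The only genuinely non‑routine step will be the average‑rate computation at corner point $(ii)$: one must average the per‑demand rate $\big(\binom{K}{t+1}-\binom{K-N_e}{t+1}\big)/\binom{K}{t}$ of the scheme of \cite{yu2016exact} over the demand distribution at $t=1$, $N=2$ --- where the number of distinct requested files is $N_e=1$ with probability $2^{1-K}$ and $N_e=2$ otherwise --- giving $2-\tfrac3K-\tfrac{2}{2^K}+\tfrac{4}{2^K K}$ per unit subfile length, then scale by $\min\{\R_1,\R_2\}=K\gamma_K$ and add the $(1-2^{-K})(\R_2-\R_1)$ term from the uncoded leftover. The other three corner points involve no averaging, since $t\in\{0,K\}$ behaves identically for every demand and $t=K-1$ yields rate $\tfrac1K$ for every demand, and the rest is the same memory‑sharing bookkeeping carried out for the peak case; I would therefore keep the write‑up short and refer back to Appendix~\ref{app:CACM private peak} for the details of the equal‑to‑unequal‑length reduction.
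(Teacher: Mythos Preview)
Your proposal is correct and follows essentially the same route as the paper's proof: both identify the same four memory--rate corner points (at $M=0$, $M=2\gamma_K$, $M=\R_1+\R_2-2\gamma_K$, $M=\R_1+\R_2$), compute the average rate at each via the unequal-length reduction to the equal-length scheme of \cite{yu2016exact} plus a separately handled leftover piece, and then take the lower convex envelope by memory sharing. One small wording slip: corner point $(i)$ at $M=0$ \emph{does} require averaging over demands (the rate is $\R_1$, $\R_2$, or $\R_1+\R_2$ depending on how many distinct files are requested), so your later remark that ``the other three corner points involve no averaging'' is not literally true---but you computed $(1-2^{-K})(\R_1+\R_2)$ correctly earlier, so this does not affect the argument.
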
	
	
	\begin{proof}
		For the purposes of our analysis we only provide the caching and delivery strategies for cache sizes
		\begin{align}
			M\in\Big\{0,\, \frac{2}{K}\min\{\R_1,\R_2\},\, \R_1+\R_2- \frac{1}{K}\min\{\R_1,\R_2\} ,\, \R_1+\R_2\Big\}. 
		\end{align}
		and upper bound the average rate by memory-sharing among the corresponding achievable rates. For a setting with two files and $K\geq2$ receivers, the scheme operates as follows:
		\begin{itemize}
			\item[$\bullet$] $M = 0$: In this case, if at least two of the receivers request different files the sender multicasts both files over the shared link, and if all receivers request the same file only one of the files is transmitted, resulting in an average rate equal to
			\begin{align}
				\REX (M,\R_1,\R_2) &= \Big(1-\frac{1}{2^K}\Big) (\R_1+\R_2)+ \frac{1}{2^K}\,\R_1+\frac{1}{2^K} \,\R_2 
				=  \Big(1-\frac{1}{2^K}\Big)  (\R_1+\R_2)  . 
			\end{align}
			\item[$\bullet$] $M =  \frac{2}{K}\min\{\R_1,\R_2\}$: File $\Wsf_i$, with $i \in\{1,2\}$, is divided into $K+1$ packets.
			\begin{itemize}
				\item   Packets $\{\Wsf_i^{(1)},\, \dots,\Wsf_i^{(K)}\} $ with size $\frac{1}{K}\min\{\R_1,\R_2\}$.
				\item   Packet $\Wsf_i^{(K+1)}$ with size $\R_i -  \min\{\R_1,\R_2\}$.
			\end{itemize}
			As in \cite{yu2016exact}, receiver $r_k$, $k\in\{1,\dots,K \}$, fills its cache as 
			$Z_{r_k}=\{ \Wsf_1^{(k)},\; \Wsf_2^{(k)}  \} .$
			When at least two of the receivers request different files, the multicast codeword results from concatenating: $i)$ a codeword designed based on the delivery strategy in \cite[Sec.~IV.B]{yu2016exact}, with rate $(2-\frac{3}{K})\min\{\R_1,\R_2\}$, and $ii)$  packets $\Wsf_1^{(K+1)}$ and $\Wsf_2^{(K+1)}$, with rate $ \R_1+\R_2-2\min\{\R_1,\R_2\}$. When all receivers request file $W_i$, the multicast codeword consists of: $i)$ a codeword designed as described in \cite[Sec.~IV.B]{yu2016exact}, with rate $(1-\frac{1}{K})\min\{\R_1,\R_2\}$, and $ii)$  packet $\Wsf_i^{(K+1)}$  with rate $ \R_i-\min\{\R_1,\R_2\}$. 
			The average delivery rate is  
			\begin{align}
				\REX (M,\R_1,\R_2) &= \frac{2}{2^K}\Big( 1-\frac{1}{K} \Big) \min\{\R_1,\R_2\}   + \Big(1-\frac{2}{2^K}\Big)\Big(2-\frac{3}{K} \Big)    \min\{\R_1,\R_2\}   \notag\\ &\qquad +\Big(1-\frac{1}{2^K}\Big)  \Big(   \R_1+\R_2-2\min\{\R_1,\R_2\}\Big)\notag\\
				&=  \Big(1-\frac{1}{2^K}\Big)  (\R_1+\R_2) \,-\, \Big(\frac{3-2^{(2-K)}}{K} \Big)\min\{\R_1,\R_2\} .
			\end{align}

			\item[$\bullet$] $M =  \R_1+\R_2-\frac{2}{K}\min\{\R_1,\R_2\}$: In this case, the cache configuration is the same as that for the peak rate given in \eqref{eq:cache peak 2} in Appendix \ref{app:ach and GW lower bound peak}-I, and for any demand realization $\dbf$, the delivery strategy in  \cite[Sec.~IV.B]{yu2016exact} achieves a rate equal to $\frac{1}{K}\min\{\R_1,\R_2\}$, resulting in an average  rate of
			\begin{align}
				\REX (M,\R_1,\R_2) = \frac{1}{K}\min\{\R_1,\R_2\}.   
			\end{align}
			\item[$\bullet$] $M = \R_1+\R_2$: The files are fully stored at all receivers resulting in zero delivery rate for any demand realization $\dbf$. 
			
		\end{itemize}	
		We note that for other cache sizes, the scheme in \cite{yu2016exact} can be similarly generalized to files with unequal lengths but the corresponding caching and delivery strategies are not provided here. Through memory-sharing the lower convex envelope of the memory-rate pairs 
		\begin{align}
			\Big(M, \REX(M,\R_1,\R_2)\Big) \in  \Bigg\{ &\Big(0,\,  \Big(1-\frac{1}{2^K}\Big) (\R_1+\R_2) \Big);\, 	 \notag\\
			& 
			\Big(\frac{2}{K} \min\{\R_1,\R_2\} ,\,    \Big(1-\frac{1}{2^K}\Big)  (\R_1+\R_2) - \Big(\frac{3-2^{(2-K)}}{K} \Big)\min\{\R_1,\R_2\}     \Big);\,	\notag\\
			&\Big(    \R_1+\R_2-\frac{2}{K}\min\{\R_1,\R_2\}  ,\, \frac{1}{K}\min\{\R_1,\R_2\}   \Big) ;\; \;  \Big(\R_1+\R_2 ,\,0 \Big)    \Bigg\},\notag
		\end{align}
		is achievable, resulting in the average rate given in \eqref{eq:K user UB avg}.

	\end{proof}
	\begin{remark} 
		The generalized correlation-unaware scheme presented above, when particularized to $K=2$ receivers, is shown to be optimal in \cite{ITLowerBound}, and outperforms schemes in the literature that are designed for files with different lengths. Furthermore, when particularized to files with equal lengths, i.e., $\R_1=\R_2$, this scheme coincides with the optimal scheme  characterized in  \cite{yu2016exact} for equal-length and independent files.
		

	\end{remark}

	\section{Proof of Theorem \ref{thm:LowerBound}}\label{app:LowerBound}
	We derive the lower bounds on the optimal peak and average rate-memory functions, $\Rstar$ and $\RstarE$, given in Theorem~\ref{thm:LowerBound}  using a result from \cite{ITLowerBound}, which is stated in the following theorem for a network with $K$ receivers $\{r_1,\dots,r_K\}$, each with cache capacity $M$, and a 
	library composed of $N$ files $\{X_1^F,\dots,X_N^F\}$ generated by $p(x_1,\dots,x_N)$.  
	\begin{theorem}\label{thm: general LB rule}
		Consider $\nu$ consecutive demands $\dbf^{(1)},\dots,\dbf^{(\nu)}$, and $\nu$ receiver subsets ${\mathcal S}_{1},\dots,{\mathcal S}_{\nu}\subseteq \{r_1,\dots,r_K\}$. The optimal sum rate required to deliver the $\nu$ demands is lower bounded by
		\begin{align}
			&
			\sum\limits_{i=1}^{\nu} R_{\dbf^{(i)}}^* \geq
			\sum\limits_{i=1}^\nu     H\Big( \Big\{ \Xsf_d: d\in \mathcal D_{{\mathcal S}_i}^{(i)}\Big\}   \Big|  	{\mathfrak X}_1,\dots,{\mathfrak X}_{i-1}   \Big)  
			-  |\, \widetilde {\mathcal S}  \,|  M  \notag\\
			& \qquad +     H\Big(\Big\{\Xsf_d: d\in  {\mathcal D}_{ \widetilde{\mathcal S}}^{(1)}\cup\dots\cup  {\mathcal D}_{ \widetilde{\mathcal S}}^{(\nu)}  \Big\}	\Big)   -    H\Big(\Big\{ \Xsf_d: d\in   {\mathcal D}_{{\mathcal S}_1}^{(1)}\cup\dots\cup  {\mathcal D}_{{\mathcal S}_\nu}^{(\nu)} \Big \} \Big)  
			,  
		\end{align}
		where $R_{\dbf^{(i)}}^*$ denotes the optimal delivery rate of demand $\dbf^{(i)}$,  ${\mathcal D}_{{\mathcal S}}^{(i)}  \triangleq \{d_{r_k}^{(i)}   :\, {r_k}\in  {\mathcal S}   \}$
		denotes the set of file indices requested by the receivers in set ${\mathcal S}$ in demand $\dbf^{(i)}$, and
		\begin{align}
			&\widetilde {\mathcal S} \triangleq  \{\mathcal S_1,\dots,\mathcal S_\nu\}, \\
			&{\mathfrak X}_\ell\triangleq   \bigcup\limits_{ j =1 }^{\ell-1}\{ \Xsf_d  : \,  d \in     {\mathcal D}_{{\mathcal S}_\ell \cap {\mathcal S}_j}^{(j)}  \}, \quad \ell\in \{1,\dots, \nu -1\}.
		\end{align}
	\end{theorem}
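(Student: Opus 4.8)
\emph{Proof proposal.}
The plan is to establish Theorem~\ref{thm: general LB rule} by a cut-set argument on the cache-demand augmented graph of \cite{llorca2013network}, refined so that, for each demand processed in turn, the descriptions already recovered by the receivers involved act as genie side information. Fix a CACM scheme whose worst-case error vanishes with $F$, and write $Y_i\triangleq Y_{\dbf^{(i)}}$, $A_i\triangleq\{X_d^F:d\in\mathcal D_{\mathcal S_i}^{(i)}\}$, $A_\cup\triangleq\{X_d^F:d\in\bigcup_{i}\mathcal D_{\mathcal S_i}^{(i)}\}$, and $B_\cup\triangleq\{X_d^F:d\in\bigcup_{i}\mathcal D_{\widetilde{\mathcal S}}^{(i)}\}$; since $\mathcal S_i\subseteq\widetilde{\mathcal S}$ we have $A_\cup\subseteq B_\cup$, so the last two terms of the claimed bound form a nonnegative conditional entropy.

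First I would pass from rates to entropies: $R^*_{\dbf^{(i)}}\ge\tfrac1F\EE[L(Y_i)]\ge\tfrac1F H(Y_i)$, so it suffices to lower bound $\sum_{i=1}^\nu H(Y_i)$. By Fano's inequality and the chain rule, $(Y_i,\{Z_r\}_{r\in\mathcal S_i})$ determines $A_i$ up to conditional entropy $O(F\epsilon_F)$ with $\epsilon_F\to0$; $\big(\{Y_i\}_{i},\{Z_r\}_{r\in\widetilde{\mathcal S}}\big)$ determines $B_\cup$ up to $O(F\epsilon_F)$; and, before demand $i$ is served, each receiver $r\in\mathcal S_i\cap\mathcal S_j$ with $j<i$ has recovered $X_{d_r^{(j)}}$ from $(Y_j,Z_r)$, so $\mathfrak X_i$ is a function of $\big(\{Y_j\}_{j<i},\{Z_r\}_{r\in\mathcal S_1\cup\dots\cup\mathcal S_{i-1}}\big)$ up to $O(F\epsilon_F)$.

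The core step is to cut the augmented graph along the $\nu$ broadcast edges and the $|\widetilde{\mathcal S}|$ cache edges of the receivers in $\widetilde{\mathcal S}$, i.e.\ to bound $H\big(Y_1,\dots,Y_\nu,\{Z_r\}_{r\in\widetilde{\mathcal S}}\big)$ in two ways. For the upper bound, introduce the running variable $V_i\triangleq\big(Y_1,\dots,Y_i,\{Z_r\}_{r\in\mathcal S_1\cup\dots\cup\mathcal S_i}\big)$, $V_0=\emptyset$; then $H(V_i)-H(V_{i-1})=H\big(Y_i,\{Z_r\}_{r\in\mathcal S_i\setminus(\mathcal S_1\cup\dots\cup\mathcal S_{i-1})}\mid V_{i-1}\big)\le H(Y_i)+\big|\mathcal S_i\setminus(\mathcal S_1\cup\dots\cup\mathcal S_{i-1})\big|\,MF$, and telescoping gives $H(V_\nu)\le\sum_i H(Y_i)+|\widetilde{\mathcal S}|\,MF$, so each cache is charged exactly once. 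For the lower bound, I would expand the same joint entropy demand by demand, at step $i$ using $\mathfrak X_1,\dots,\mathfrak X_{i-1}$ as side information (legitimate by the previous paragraph and the genie principle) so that the $i$-th block contributes at least $H\big(A_i\mid\mathfrak X_1,\dots,\mathfrak X_{i-1}\big)-O(F\epsilon_F)$, and then accounting for the additional uncertainty about the strictly larger file set $B_\cup$ — which is also resolved once the whole cache set $\widetilde{\mathcal S}$ and all codewords lie in the cut — contributing $H(B_\cup)-H(A_\cup)$. Combining the two bounds, dividing by $F$, and letting $F\to\infty$ removes the $O(\epsilon_F)$ slack and yields the stated inequality.

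The step I expect to be the main obstacle is the entropy bookkeeping in the lower bound: one must order the chain-rule expansion so that (i) the caches in $\widetilde{\mathcal S}$ are charged a total of $|\widetilde{\mathcal S}|\,MF$ rather than $\sum_i|\mathcal S_i|\,MF$, i.e.\ caches reused across demands are not re-paid; (ii) each block genuinely yields $H(A_i\mid\mathfrak X_1,\dots,\mathfrak X_{i-1})$, which hinges on $\mathfrak X_\ell$ being exactly the previously decoded descriptions and being an (asymptotically) deterministic function of the library; and (iii) the surplus term $H(B_\cup)-H(A_\cup)$ is recovered without double-counting against the per-demand terms. Uniform control of the $O(\epsilon_F)$ Fano terms over all $\nu$ demands and $K$ receivers, together with a precise identification of the cut in the augmented graph of \cite{llorca2013network}, completes the argument.
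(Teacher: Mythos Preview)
The paper does not prove Theorem~\ref{thm: general LB rule}; it is quoted from \cite{ITLowerBound} and only \emph{applied} here. So there is no in-paper argument to compare against, but your sketch has a structural gap that would make it fail.

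Your plan is to sandwich the single quantity $H(V_\nu)=H\big(Y_1,\dots,Y_\nu,\{Z_r\}_{r\in\widetilde{\mathcal S}}\big)$: above by $\sum_i H(Y_i)+|\widetilde{\mathcal S}|MF$ (correct), and below by $\sum_i H(A_i\mid\mathfrak X_1,\dots,\mathfrak X_{i-1})+H(B_\cup)-H(A_\cup)$. But $V_\nu$ is a deterministic function of the library, so $H(V_\nu)\le H(\{X_j^F\}_j)$, while the target lower bound can strictly exceed the library entropy. The paper's own Case~(iii) in Appendix~\ref{app:LowerBound} is already a counterexample: $\nu=2$, $\mathcal S_1=\{r_1\}$, $\mathcal S_2=\{r_2\}$, $A_1=A_2=\{X_1^F\}$, $A_\cup=\{X_1^F\}$, $B_\cup=\{X_1^F,X_2^F\}$, and the receiver sets are disjoint so the conditioning is empty. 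Your lower-bound target is then $2H(X_1^F)+H(X_1^F,X_2^F)-H(X_1^F)=H(X_1^F)+H(X_1^F,X_2^F)$, which exceeds $H(X_1^F,X_2^F)\ge H(V_\nu)$ whenever $H(X_1)>0$. So the inequality you need on $H(V_\nu)$ is simply false; no chain-rule bookkeeping can rescue it.

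The missing idea is that you must \emph{not} collapse $\sum_i H(Y_i)$ (plus caches) to the joint entropy $H(V_\nu)$; the extra strength of the bound comes from keeping the per-demand blocks $U_i\triangleq\big(Y_i,\{Z_r\}_{r\in\mathcal S_i\setminus(\mathcal S_1\cup\dots\cup\mathcal S_{i-1})}\big)$ separate and using entropy submodularity. Each $U_i$ (augmented by the previously decoded files) determines $A_i$, and $(U_1,\dots,U_\nu)$ determines $B_\cup$; a Han-type/submodularity step then relates $\sum_i H(U_i)$ to the per-demand conditional entropies \emph{plus} the residual $H(B_\cup)-H(A_\cup)$. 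In Case~(iii) this is precisely $H(U_1,X_1^F)+H(U_2,X_1^F)\ge H(X_1^F)+H(U_1,U_2,X_1^F)$, followed by $H(U_1,U_2,X_1^F)\ge H(B_\cup)-O(F\epsilon_F)$. Your cache-counting and Fano steps are fine; the real obstacle is that $H(V_\nu)$ is too small a pivot, not the ordering of the expansion.
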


	Since the number of distinct requested files in the two-file $K$-receiver network is at most two, using Theorem \ref{thm: general LB rule}, in the following we lower bound $\Rstar$ and $\RstarE$ with those of the two-file two-receiver network. Without loss of generality we consider receivers $r_1$ and $r_2$ for which the cache-demand-augmented graph is depicted in Fig. \ref{fig:AugmentedGraph}.  
	\begin{figure}[h!]
		\centering
		\includegraphics[width=4.5in]{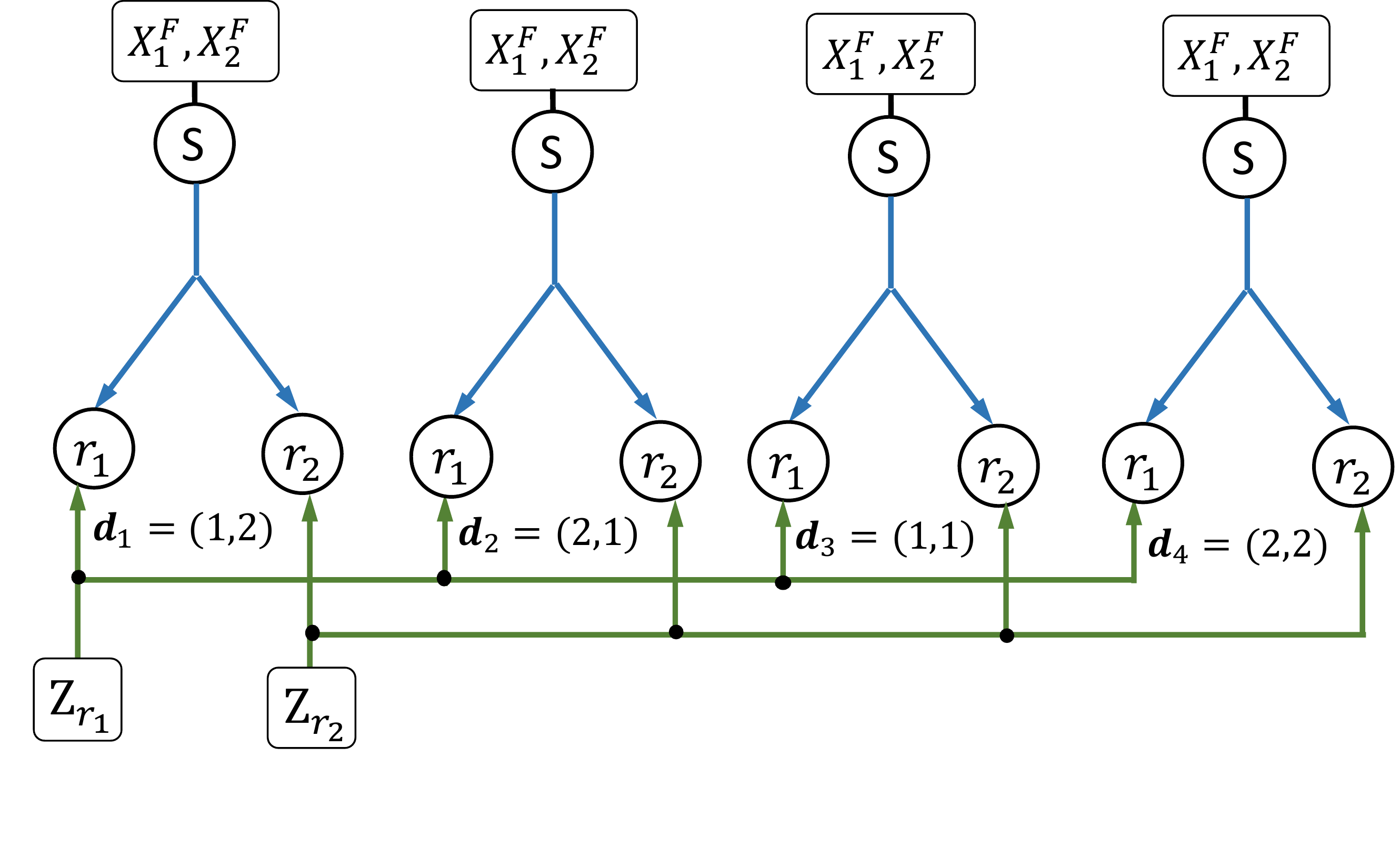}
		\caption{The cache-demand-augmented graph as described in \cite{llorca2013network} is constructed by replicating the caching-augmented graph for  different demands, while sharing the cache edges among all demands.}
		\label{fig:AugmentedGraph}
	\end{figure}
	
	\subsection*{I) \bf Lower Bound on $\mathbf \Rstar$:}  
	
	A lower bound  on the optimal peak rate-memory function, $\Rstar$, is obtained by applying Theorem~\ref{thm: general LB rule} to different sets of $\nu$  consecutive demand realizations $\dbf^{(1)},\dots,\dbf^{(\nu)}$, each taking values among the two  worst-case demands $\dbf_1=(1,2)$ and $\dbf_2=(2,1)$. Then, since $\Rstar \geq \frac{1}{\nu} \sum\limits_{i=1}^{\nu} R_{\dbf^{(i)}}^* $, a lower bound on  $\Rstar$ is obtained from normalizing the lower bound on the optimal sum rate $\sum\limits_{i=1}^{\nu} R_{\dbf^{(i)}}^* $ given by Theorem~\ref{thm: general LB rule}. Specifically, 
	
	\begin{itemize}
		\item Case $(i)$: For $\nu=1$ consecutive demand $\dbf^{(1)} ={\dbf}_1= (1,2)$ with receiver subset $\mathcal S_1 = \{r_1,r_2\}$,
		based on  Theorem~\ref{thm: general LB rule}, we have $\mathcal D_{{\mathcal S}_1}^{(1)}=\{1,2\}$ and $\widetilde{\mathcal S} =\mathcal S_1 = \{r_1,r_2\}$. Therefore,
		\begin{align}
			\Rstar	 
			& \geq R_{\dbf_1}^*\notag\\
			& \geq     H\Big( \Big\{ \Xsf_d: d\in \mathcal D_{{\mathcal S}_1}^{(1)}\Big\}   \Big)  
			-  |\, \widetilde {\mathcal S}  \,|  M  
			+     H\Big(\Big\{\Xsf_d: d\in  {\mathcal D}_{ \widetilde{\mathcal S}}^{(1)}  \Big\}	\Big)   -    H\Big(\Big\{ \Xsf_d: d\in   {\mathcal D}_{{\mathcal S}_1}^{(1)} \Big \} \Big)  \notag\\
			& = H(\Xsf_1, \Xsf_2) \,- \,2 \,M . \label{eq:LB1}
		\end{align}

		\item Case $(ii)$: For $\nu=2$ consecutive demands $\dbf^{(1)} = \dbf_1= (1,2)$ and $\dbf^{(2)} =\dbf_2= (2,1)$, with  corresponding receiver subsets $\mathcal S_1 = \mathcal S_2 = \{r_1\}$, based on Theorem \ref{thm: general LB rule}, we have $\mathcal D_{{\mathcal S}_1}^{(1)}=\{1\}$, $\mathcal D_{{\mathcal S}_2}^{(2)}=\{2\}$, $\widetilde{\mathcal S} = \{r_1\}$, and ${\mathfrak X}_1 = \{\Xsf_1\}$. Therefore,

		\begin{align}
			\Rstar	 
			& \geq \frac{1}{2}(R_{\dbf_1}^*+R_{\dbf_2}^* )\notag\\	 
			& \geq 
			\frac{1}{2} \;\bigg(    H\Big( \Big\{ \Xsf_d: d\in \mathcal D_{{\mathcal S}_1}^{(1)}\Big\}   \Big)  +  H\Big( \Big\{ \Xsf_d: d\in \mathcal D_{{\mathcal S}_2}^{(2)}\Big\}   \Big|  	{\mathfrak X}_1   \Big)  
			-  |\, \widetilde {\mathcal S}  \,|  M  \notag\\
			& \qquad +     H\Big(\Big\{\Xsf_d: d\in  {\mathcal D}_{ \widetilde{\mathcal S}}^{(1)}\cup {\mathcal D}_{ \widetilde{\mathcal S}}^{(2)}  \Big\}	\Big)   -    H\Big(\Big\{ \Xsf_d: d\in   {\mathcal D}_{{\mathcal S}_1}^{(1)}\cup   {\mathcal D}_{{\mathcal S}_2}^{(2)} \Big \} \Big) 
			\; \bigg)   \notag\\
			& = \frac{1}{2} \;\Big(   H (  \Xsf_1  )  +  H(   \Xsf_2 |  	\Xsf_1  )  
			\,-\,    M  \,+\,     H (\Xsf_1,\Xsf_2)   -    H (\Xsf_1,	\Xsf_2)  \Big)   \notag\\
			& = \frac{1}{2} \;\Big(      H(   \Xsf_1, 	\Xsf_2  )  	\,-\,    M   \Big).  \label{eq:LB2}
		\end{align}

		\item Case $(iii)$: For $\nu=2$ consecutive demands   $\dbf^{(1)} = \dbf_1=(1,2)$ and $\dbf^{(2)} =\dbf_2= (2,1)$, with corresponding receiver subsets $\mathcal S_1 = \{r_1\}$  and $\mathcal S_2 = \{r_2\}$, based on Theorem \ref{thm: general LB rule}, we have  $\mathcal D_{{\mathcal S}_1}^{(1)}=\mathcal D_{{\mathcal S}_2}^{(2)}=\{1\}$,  $\widetilde{\mathcal S} = \{r_1,r_2\}$, and  ${\mathfrak X}_1=\emptyset $. Therefore,
		\begin{align}
			\Rstar	 
			& \geq \frac{1}{2}(R_{\dbf_1}^*+R_{\dbf_2}^*  )\notag\\	 
			&\geq \frac{1}{2} \;\bigg(    H\Big( \Big\{ \Xsf_d: d\in \mathcal D_{{\mathcal S}_1}^{(1)}\Big\}   \Big)  +  H\Big( \Big\{ \Xsf_d: d\in \mathcal D_{{\mathcal S}_2}^{(2)}\Big\}   \Big|  	{\mathfrak X}_1   \Big)  
			-  |\, \widetilde {\mathcal S}  \,|  M  \notag\\
			& \qquad +     H\Big(\Big\{\Xsf_d: d\in  {\mathcal D}_{ \widetilde{\mathcal S}}^{(1)}\cup {\mathcal D}_{ \widetilde{\mathcal S}}^{(2)}  \Big\}	\Big)   -    H\Big(\Big\{ \Xsf_d: d\in   {\mathcal D}_{{\mathcal S}_1}^{(1)}\cup   {\mathcal D}_{{\mathcal S}_2}^{(2)} \Big \} \Big) 
			\; \bigg)   \notag\\
			& = \frac{1}{2} \;\Big(    H (  \Xsf_1  )  +  H(   \Xsf_1    )  
			\,-\,    2 M  \,+\,     H (\Xsf_1, 	\Xsf_2)   -    H (\Xsf_1 )  \Big)   \notag\\
			& = \frac{1}{2} \;\Big(      H(   \Xsf_1,  	\Xsf_2  )  +  H(   \Xsf_1    ) 	\,-\,   2 M   \Big)  .\notag
		\end{align}
		From considering the same $\nu=2$ demands $\dbf^{(1)}= \dbf_1$ and $\dbf^{(2)}= \dbf_2$, but with the subsets $\mathcal S_1 = \{r_2\}$  and $\mathcal S_2 = \{r_1\}$, $\Rstar$ can be lower bounded as
		\begin{align}
			\Rstar \geq \frac{1}{2} \Big(H(\Xsf_1,\,\Xsf_2)+ \max\Big\{H(\Xsf_1) ,H(\Xsf_2)  \Big\}  -   2M  \Big)  .  \label{eq:LB3}
		\end{align}
	\end{itemize}
	The optimal peak rate-memory function  is lower bounded  by eqs.~\eqref{eq:LB1}-\eqref{eq:LB3}.
	
	\subsection*{II) \bf Lower Bound on $\mathbf \RstarE$:}  
	A lower bound  on the optimal average rate-memory function, $\RstarE$, is obtained by repeatedly applying Theorem~\ref{thm: general LB rule}, to multiple sets of consecutive demand realizations, each taking values among $\dbf_1=(1,2)$, $\dbf_2=(2,1)$, $\dbf_3=(1,1)$ and $\dbf_4=(2,2)$. For a given set of $\nu$ consecutive demands $\dbf^{(1)},\dots,\dbf^{(\nu)}$, the optimal sum rate is given by $\sum\limits_{i=1}^{\nu} R_{\dbf^{(i)}}^*$, and can be lower bounded using Theorem~\ref{thm: general LB rule}. These sets of demands   are chosen such that 
	a proper linear combination of their sum rates results in $\RstarE = \frac{1}{4} \sum\limits_{j=1}^4R_{\dbf_j}^* $.
	Then, a lower bound on  $\RstarE$ is given by the same linear combination of  the lower bounds  on the 
	sum rates of the multiple sets of consecutive demands.
	
	\begin{itemize}
		\item Case $(iv)$: For the $\nu=2$ consecutive demands $\dbf^{(1)}=\dbf_1=(1,2)$ and $\dbf^{(2)}=\dbf_2 = (2,1)$ considered in case $(ii)$ in Appendix~\ref{app:LowerBound}-I, the optimal sum rate is lower bounded by   
		\begin{align}
			R_{\dbf_1}^* + R_{\dbf_2}^*  \geq    H(\Xsf_1,\Xsf_2)   -M  .  \label{eq:case 4-1}
		\end{align}
		Alternatively, for the $\nu=2$ consecutive demands $\dbf^{(1)}=\dbf_3=(1,1)$ and $\dbf^{(2)}=\dbf_4 = (2,2)$, with corresponding receiver subsets $\mathcal S_3 = \mathcal S_4= \{r_1\}$, based on Theorem \ref{thm: general LB rule}, we have $\mathcal D_{{\mathcal S}_1}^{(1)}=\{1\}$, $\mathcal D_{{\mathcal S}_2}^{(2)}=\{2\}$, $\widetilde{\mathcal S}= \{r_1\}$ and ${\mathfrak X}_1= \{\Xsf_1\}$. Therefore,
		\begin{align}
			R_{\dbf_3}^* + R_{\dbf_4}^*  
			& \geq     H\Big( \Big\{ \Xsf_d: d\in \mathcal D_{{\mathcal S}_1}^{(1)}\Big\}  \Big)   
			+  H\Big( \Big\{ \Xsf_d: d\in \mathcal D_{{\mathcal S}_2}^{(2)}\Big\}   \Big|  	{\mathfrak X}_1  \Big)  
			-  |\, \widetilde {\mathcal S}  \,|  M  \notag\\
			& \qquad +     H\Big(\Big\{\Xsf_d: d\in  {\mathcal D}_{ \widetilde{\mathcal S}}^{(1)}\cup   {\mathcal D}_{ \widetilde{\mathcal S}}^{(2)}  \Big\}	\Big)   -    H\Big(\Big\{ \Xsf_d: d\in   {\mathcal D}_{{\mathcal S}_1}^{(1)}\cup   {\mathcal D}_{{\mathcal S}_2}^{(2)} \Big \} \Big) 
			\notag\\
			& \geq    H(\Xsf_1)   + H(\Xsf_2|\Xsf_1)   -M + H(\Xsf_1,\Xsf_2)   - H(\Xsf_1,\Xsf_2)    \notag\\
			& =     H(\Xsf_1,\,\Xsf_2)   -M    . \label{eq:case 4-2}
		\end{align}
		By linearly combining the lower bounds on the optimal sum rates corresponding to the considered sets of $\nu=2$ demands, given in \eqref{eq:case 4-1} and \eqref{eq:case 4-2}, we have
		\begin{align}
			\RstarE \geq \frac{1}{4} \sum\limits_{j=1}^4R_{\dbf_j}^* \geq   \frac{1}{2}\Big(H(\Xsf_1,\,\Xsf_2)   -M  \Big)      . \label{eq: avg LB1}
		\end{align}
		
		\item Case $(v)$: For the $\nu=2$ consecutive demands  $\dbf^{(1)}=\dbf_1=(1,2)$ and $\dbf^{(2)}=\dbf_2 = (2,1)$  considered in case $(iii)$ in Appendix~\ref{app:LowerBound}-I, the optimal sum rate is lower bounded by   
		\begin{align}
			R_{\dbf_1}^* + R_{\dbf_2}^*  \geq    H(\Xsf_1,\Xsf_2) +   \max\Big\{H(\Xsf_1) ,H(\Xsf_2)  \Big\} -2M .  \label{eq:case 5-1}
		\end{align}
		By linearly combining the lower bounds given in \eqref{eq:case 5-1}  and \eqref{eq:case 4-2}, we have
		\begin{align}
			\RstarE \geq \frac{1}{4} \sum\limits_{j=1}^4R_{\dbf_j}^* \geq   \frac{1}{2}H(\Xsf_1,\,\Xsf_2)   
			+ \frac{1}{4}\max\Big\{H(\Xsf_1) ,H(\Xsf_2)  \Big\} 
			- \frac{3}{4}M      . \label{eq: avg LB2}
		\end{align}

		\item Case $(vi)$: 
		For the $\nu=2$ consecutive demands $\dbf^{(1)}=\dbf_1=(1,2)$ and $\dbf^{(2)}=\dbf_3=(1,1)$, with corresponding receiver subsets $\mathcal S_1 = \{r_1\}$ and $\mathcal S_2= \{r_2\}$, based on Theorem \ref{thm: general LB rule}, we have $\mathcal D_{{\mathcal S}_1}^{(1)}=\{1\}$, $\mathcal D_{{\mathcal S}_2}^{(2)}=\{1\}$, $\widetilde{\mathcal S}= \{r_1,r_2\}$ and ${\mathfrak X}_1= \emptyset$. Therefore,
		\begin{align}
			R_{\dbf_1}^* + R_{\dbf_3}^*  
			& \geq     H\Big( \Big\{ \Xsf_d: d\in \mathcal D_{{\mathcal S}_1}^{(1)}\Big\}  \Big)   
			+  H\Big( \Big\{ \Xsf_d: d\in \mathcal D_{{\mathcal S}_2}^{(2)}\Big\}   \Big|  	{\mathfrak X}_1  \Big)  
			-  |\, \widetilde {\mathcal S}  \,|  M  \notag\\
			& \qquad +     H\Big(\Big\{\Xsf_d: d\in  {\mathcal D}_{ \widetilde{\mathcal S}}^{(1)}\cup   {\mathcal D}_{ \widetilde{\mathcal S}}^{(2)}  \Big\}	\Big)   -    H\Big(\Big\{ \Xsf_d: d\in   {\mathcal D}_{{\mathcal S}_1}^{(1)}\cup   {\mathcal D}_{{\mathcal S}_2}^{(2)} \Big \} \Big) 
			\notag\\
			& \geq    H(\Xsf_1)   + H(\Xsf_1)   -2 M + H(\Xsf_1,\Xsf_2)   - H(\Xsf_1)    \notag\\
			& \geq     H(\Xsf_1, \Xsf_2)  + H(\Xsf_1)  -2M .   \label{eq:case 6-1}
		\end{align}
		For the $\nu=2$ consecutive demands $\dbf^{(1)}=\dbf_2=(2,1)$ and $\dbf^{(2)}=\dbf_4=(2,2)$, and receiver subsets $\mathcal S_1 = \{r_1\}$ and $\mathcal S_2= \{r_2\}$, based on Theorem \ref{thm: general LB rule}, we have $\mathcal D_{{\mathcal S}_1}^{(1)}=\{2\}$, $\mathcal D_{{\mathcal S}_2}^{(2)}=\{2\}$, $\widetilde{\mathcal S}= \{r_1,r_2\}$ and ${\mathfrak X}_1= \emptyset$. Therefore,
		\begin{align}
			R_{\dbf_2}^*+ R_{\dbf_4}^*  
			& \geq    H(\Xsf_2)   + H(\Xsf_2)   -2 M + H(\Xsf_1,\Xsf_2)   - H(\Xsf_2)    \notag\\
			& \geq     H(\Xsf_1,\Xsf_2)  + H(\Xsf_2)  -2M.   \label{eq:case 6-2}
		\end{align}
		Linearly combining the lower bounds given in \eqref{eq:case 6-1} and \eqref{eq:case 6-2} results in
		\begin{align}
			\RstarE \geq \frac{1}{4} \sum\limits_{j=1}^4R_{\dbf_j}^* \geq   \frac{1}{2} H(\Xsf_1,\Xsf_2) +  \frac{1}{4}\Big( H(\Xsf_1)+H(\Xsf_2)\Big)   -M     . \label{eq: avg LB3}
		\end{align}

	\end{itemize}
	The optimal average rate-memory function, $\RstarE$, in the two-file two-receiver network is lower bounded by eqs. \eqref{eq: avg LB1}, \eqref{eq: avg LB2} and \eqref{eq: avg LB3}, which also lower bounds $\RstarE$ for the two-file $K$-receiver network. However, in the $K$-receiver network, the lower bound on the average rate can be improved for small cache capacities since it is more probable that receivers request different files rather than the same file. Under a uniform demand distribution, the same file is requested with probability $\frac{2}{2^K}$ and distinct files are requested with probability  $1-\frac{2}{2^K}$.
	\begin{itemize}
		\item Consider demands $\dbf_1=(1,\dots,1)$ and $\dbf_2=(2,\dots,2)$, in which all $K$ receivers request the same file. The optimal rate for each demand is lower bounded as
		\begin{align}
			R_{\dbf_1}^* 	& \geq H(\Xsf_1)-M, \; \text{ and }\;  R_{\dbf_2}^* 	\geq H(\Xsf_2)-M.\label{eq: LB Kuser best}
		\end{align}
		
		\item Consider demands $\dbf_3,\dots, \dbf_{2^K}$, in which at least two of the receivers request different files. The optimal rate for each demand is lower bounded as
		\begin{align}
			R_{\dbf_j}^* 	&  \geq H(\Xsf_1,\Xsf_2)- 2 M,\quad  j=3,\dots ,2^K   .\label{eq: LB Kuser worst}
		\end{align}
		
	\end{itemize}
	Combining eqs. \eqref{eq: LB Kuser best} and \eqref{eq: LB Kuser worst}  results in
	\begin{align} 
		\RstarE \geq \frac{1}{2^K} \sum\limits_{j=1}^{2^K}R_{\dbf_j}^* 
		&\geq \frac{1}{2^K}\Big(H(\Xsf_1)+H(\Xsf_2)- 2M + (2^K-2) H(\Xsf_1,\Xsf_2) - (2^K-2)2M\Big) \notag\\
		& =    \Big(1-\frac{2}{2^K}\Big)   H(\Xsf_1,\Xsf_2)  +\frac{1}{2^K} \Big(H(\Xsf_1)+H(\Xsf_2) \Big)- \Big(2 - \frac{2}{2^K}\Big)  M   . \label{eq: lb avg star}
	\end{align}

	\section{Proof of Theorem \ref{thm:optimality peak}}\label{app:optimality peak} 
	In order to quantify the rate gap to optimality of the proposed GW-MR scheme, we need to compare the peak rate achieved by the GW-MR scheme, $\RGW$ defined in \eqref{eq:ach peak GWMR}, with a lower bound on the optimal peak rate-memory function, $\Rstar$, for different cache sizes. As per Theorem \ref{thm:LowerBound}, a lower bound on $\Rstar$ for the setting with two files and $K$ receivers is given by
	\begin{equation}
		\Rlb =
		\begin{cases}
			H(\Xsf_1,\Xsf_2) - 2M ,        
			& \; M\in [0,\alpha)  \\
			\frac{1}{2}\Big(H(\Xsf_1,\Xsf_2)+\max\Big\{ H(\Xsf_1),H(\Xsf_2)\Big\} \Big)- M ,				   
			& \;    M \in [\alpha, \, \beta ) \\
			\frac{1}{2}\Big( H(\Xsf_1,\Xsf_2) - M\Big)   ,				     
			& \;  M \in [\beta, \, H(\Xsf_1,\Xsf_2)]  . \label{eq: LB peak appendix}
		\end{cases} 	
	\end{equation}	  
	%
	%
	where
	\begin{align}
		& \alpha \triangleq \frac{1}{2}\min\Big\{H(\Xsf_1|\Xsf_2),H(\Xsf_2|\Xsf_1)\Big\},  
		\quad \beta \triangleq   \max\Big\{H(\Xsf_1),H(\Xsf_2)\Big\}= H(\Xsf_1,\Xsf_2) - 2\alpha.  \label{eq:alpha beta}
	\end{align}
	Note that based on \eqref{eq:ach peak GWMR}, for any $\Rscr\in \GWregion$, the gap to optimality satisfies 
	\begin{align}
		\RGW &-   \Rlb  \leq \Rach(M,\Rscr) -   \Rlb,
	\end{align}
	where $\Rach(M,\Rscr)$ is given in \eqref{eq: rate peak K}. 
	Therefore, in the following, we quantify the gap $\Rach(M,\Rscr) -   \Rlb$ for any $\Rscr\in\GWregion$ such that $\R_0 = I(\Xsf_1,\Xsf_2;\Usf), \, \R_1=  H(\Xsf_1|\Usf) ,\, \R_2 = H(\Xsf_2|\Usf)$, where $\Usf$ forms a Markov chain $\Xsf_1-\Usf-\Xsf_2$. For such $\Usf$, since
	\begin{align}
		I(\Xsf_1,\Xsf_2;\Usf)   =  H(\Xsf_1,\Xsf_2)-H(\Xsf_1,\Xsf_2|\Usf)  \stackrel{\Xsf_1-\Usf-\Xsf_2}{=}  H(\Xsf_1,\Xsf_2)-\Big( H(\Xsf_1|\Usf) + H(\Xsf_2|\Usf) \Big) , \notag
	\end{align}
	then 
	\begin{align}
		\R_0+\R_1+\R_2=I(\Xsf_1,\Xsf_2;\Usf)  + H(\Xsf_1|\Usf) + H(\Xsf_2|\Usf)=H(\Xsf_1,\Xsf_2) . \label{eq:special R}
	\end{align}
	From the analysis that follows it will become clear that this choice of $\Rscr$ is sufficient to achieve optimality over a certain region of the memory. For $\gamma_K$ and $\lambda_K$ defined in \eqref{eq: gamma lambda K},	it follows from Lemma~\ref{lemma:markov chain symmetry} in Appendix~\ref{app:lemma1} that $\gamma_K\leq\alpha\leq\beta\leq\lambda_K$. Then, using the upper bound on $\Rach(M,\Rscr)$ given in \eqref{eq: rate peak K} we have:
	\begin{itemize}
		\item[(i)] When $M \in \Big[0,\,  \gamma_K\Big)$, 
		\begin{align} 
			\Rach(M,\Rscr)  &-   \Rlb \leq     \R_0  + \R_1+\R_2 -2M -  \Big( H(\Xsf_1,\Xsf_2) - 2 M \Big)  =0 ,\notag
		\end{align}
		which follows from \eqref{eq:special R}.		
		\item[(ii)] When $M \in \Big[ \gamma_K,\, \alpha\Big)$, 
		\begin{align}
			\Rach(M,\Rscr)   -   \Rlb &\leq   \R_0  + \R_1+\R_2- \frac{1}{K}\min\{\R_1,\R_2\}  -M 
			-  \Big( H(\Xsf_1,\Xsf_2) - 2 M \Big)  \notag\\
			& = M - \frac{1}{K}\min\{ \R_1,\R_2  \}  \notag\\
			& \stackrel{(a)}{\leq}  \frac{1}{2}\min\{H(\Xsf_1|\Xsf_2),H(\Xsf_2|\Xsf_1)\}   -  \frac{1}{K}\min\{ \R_1,\R_2  \}   ,\notag
		\end{align}
		where $(a)$ follows from the fact that $M\leq\alpha = \frac{1}{2}\min\{H(\Xsf_1|\Xsf_2),H(\Xsf_2|\Xsf_1)\}$.

		\item[(iii)] When $M \in \Big[\alpha,\,  \beta   \Big)$:
		\begin{align}
			\Rach(M,\Rscr)    -   \Rlb  &\leq  \R_0  + \R_1+\R_2- \frac{1}{K}\min\{\R_1,\R_2\}  -M  \notag \\
			& \quad -  \frac{1}{2} \Big(H(\Xsf_1,\Xsf_2) + \max\Big\{H(\Xsf_1),H(\Xsf_2)\Big\} \Big) +M  \notag\\
			& = \frac{1}{2} \Big(H(\Xsf_1,\Xsf_2)-\max\Big\{H(\Xsf_1),H(\Xsf_2)\Big\} \Big) -  \frac{1}{K}\min\{ \R_1,\R_2\} \notag\\
			& = \frac{1}{2} \min\Big\{H(\Xsf_1|\Xsf_2),H(\Xsf_2|\Xsf_1)\Big\}  -  \frac{1}{K}\min\{ \R_1,\R_2 \} 
			.\notag
		\end{align}

		\item[(iv)] When $M \in \Big[ \beta ,\,   \lambda_K  \Big)$:
		\begin{align}
			\Rach(M,\Rscr)   -   \Rlb \leq & \R_0  + \R_1+\R_2- \frac{1}{K}\min\{\R_1,\R_2\}  -M  -  \frac{1}{2} \Big(H(\Xsf_1,\Xsf_2) - M \Big)  \notag\\
			& = \frac{1}{2} \Big(H(\Xsf_1,\Xsf_2)  -M  \Big) - \frac{1}{K}\min\{ \R_1,\R_2\} \notag\\
			& \stackrel{(b)}{\leq} \frac{1}{2} \Big(H(\Xsf_1,\Xsf_2)  -\max\{H(\Xsf_1),H(\Xsf_2)\}  \Big) - \frac{1}{K}\min\{ \R_1,\R_2\} \notag\\
			& =\frac{1}{2} \min\Big\{H(\Xsf_1|\Xsf_2),H(\Xsf_2|\Xsf_1)\Big\} - \frac{1}{K}\min\{ \R_1,\R_2\}
			, \notag 
		\end{align}
		where  $(b)$ follows from to the fact that  $M\geq\max\{H(\Xsf_1),H(\Xsf_2)\}$.

		\item[(v)] When $M \in \Big[\lambda_K ,\,H(\Xsf_1,\Xsf_2)\Big]$, 	
		\begin{align}
			\Rach(M,\Rscr)  &-   \Rlb \leq  \frac{1}{2}\Big(\R_0  + \R_1+\R_2-M\Big) -  \frac{1}{2} \Big(H(\Xsf_1,\Xsf_2) - M \Big) =0. \notag
		\end{align}

	\end{itemize}
	Based on the analysis given above it is observed that for all $\Usf$  that satisfy $\Xsf_1-\Usf-\Xsf_2$,  $\Rach(M,\Rscr) = \Rstar= \Rlb$ when $M\in \Big[ 0,\, \gamma_K \Big] \cup\Big[ H(\Xsf_1,\Xsf_2)-2\gamma_K ,\,H(\Xsf_1,\Xsf_2)\Big]$. 
	In order to maximize the region of memory where the proposed GW-MR scheme is optimal, we select  $\Usf$ to be the one that maximizes $\gamma_K$, and its maximum is given by $M_K \triangleq \max\limits_{\Xsf_1-\Usf-\Xsf_2}   \gamma_K$. In the remaining memory region, we have
	\begin{align}
		\Rstar- \Rlb \leq \Rach(M,\Rscr) - \Rlb \leq  \frac{1}{2} \min\Big\{H(\Xsf_1|\Xsf_2),H(\Xsf_2|\Xsf_1)\Big\} - \frac{1}{K}\min\{ \R_1,\R_2\} .\notag
	\end{align}

	\section{Lemma \ref{lemma:markov chain symmetry}}\label{app:lemma1}
	\begin{lemma}\label{lemma:markov chain symmetry}
		For any $\Usf$ with conditional pmf $p(u|x_1,x_2)$ forming a Markov chain $\Xsf_1 - \Usf - \Xsf_2$, we have
		\begin{align}
			&\min\Big\{H(\Xsf_1|\Usf),\, H(\Xsf_2|\Usf)\Big\}\leq \min\Big\{H(\Xsf_1|\Xsf_2),H(\Xsf_2|\Xsf_1)\Big\} ,  \label{eq:min} \\
			&\max\Big\{H(\Xsf_1|\Usf),\, H(\Xsf_2|\Usf)\Big\}\leq \max\Big\{H(\Xsf_1|\Xsf_2),H(\Xsf_2|\Xsf_1)\Big\} ,\label{eq:max} \\
			& \min\Big\{H(\Xsf_1),H(\Xsf_2)\Big\} \leq   I(\Xsf_1, \Xsf_2;\Usf) + \min\Big\{H(\Xsf_1|\Usf),H(\Xsf_2|\Usf)\Big\} , \label{eq:min H} \\
			& \max\Big\{H(\Xsf_1),H(\Xsf_2)\Big\} \leq   I(\Xsf_1, \Xsf_2;\Usf) + \max\Big\{H(\Xsf_1|\Usf),H(\Xsf_2|\Usf)\Big\} . \label{eq:max H}
		\end{align}
	\end{lemma}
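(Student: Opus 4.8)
The plan is to derive all four bounds from two elementary ingredients, together with the trivial observation that $\min$ and $\max$ are monotone under componentwise comparison: if $a\le c$ and $b\le d$, then $\min\{a,b\}\le\min\{c,d\}$ and $\max\{a,b\}\le\max\{c,d\}$, with the reverse inequalities holding when all comparisons are reversed.

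First I would establish the pair $H(\Xsf_1\mid\Usf)\le H(\Xsf_1\mid\Xsf_2)$ and $H(\Xsf_2\mid\Usf)\le H(\Xsf_2\mid\Xsf_1)$. The Markov chain $\Xsf_1-\Usf-\Xsf_2$ means $\Xsf_1$ and $\Xsf_2$ are conditionally independent given $\Usf$, so $H(\Xsf_1\mid\Xsf_2,\Usf)=H(\Xsf_1\mid\Usf)$; since conditioning on fewer variables cannot decrease entropy, $H(\Xsf_1\mid\Xsf_2)\ge H(\Xsf_1\mid\Xsf_2,\Usf)=H(\Xsf_1\mid\Usf)$, and the other inequality is symmetric. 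Plugging these two pointwise bounds into the monotonicity of $\min$ and of $\max$ yields \eqref{eq:min} and \eqref{eq:max} immediately, since $H(\Xsf_i\mid\Usf)$ is bounded by the matching term on the right in both the $\min$ and the $\max$.

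For \eqref{eq:min H} and \eqref{eq:max H} I would instead start from the chain rule $I(\Xsf_1,\Xsf_2;\Usf)=I(\Xsf_i;\Usf)+I(\Xsf_j;\Usf\mid\Xsf_i)\ge I(\Xsf_i;\Usf)$ for each $i\in\{1,2\}$, which rearranges to $I(\Xsf_1,\Xsf_2;\Usf)+H(\Xsf_i\mid\Usf)\ge H(\Xsf_i)$. Adding the constant $I(\Xsf_1,\Xsf_2;\Usf)$ commutes with $\min$ and with $\max$, so the right-hand side of \eqref{eq:min H} equals $\min_i\{I(\Xsf_1,\Xsf_2;\Usf)+H(\Xsf_i\mid\Usf)\}\ge\min_i H(\Xsf_i)$, and the right-hand side of \eqref{eq:max H} equals $\max_i\{I(\Xsf_1,\Xsf_2;\Usf)+H(\Xsf_i\mid\Usf)\}\ge\max_i H(\Xsf_i)$, which are precisely the claimed inequalities.

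Since every step is a one-line application of a standard inequality, I do not anticipate a genuine obstacle; the only care needed is to track in which direction the $\min/\max$ monotonicity is invoked (increasing for \eqref{eq:min}-\eqref{eq:max}, decreasing for \eqref{eq:min H}-\eqref{eq:max H}), and to note that the Markov hypothesis is actually used only for the first two inequalities, whereas \eqref{eq:min H}-\eqref{eq:max H} hold for an arbitrary auxiliary random variable $\Usf$.
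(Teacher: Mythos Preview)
Your proof is correct and, for \eqref{eq:min}--\eqref{eq:max}, is essentially the paper's argument streamlined: both start from the Markov-chain consequences $H(\Xsf_1\mid\Usf)\le H(\Xsf_1\mid\Xsf_2)$ and $H(\Xsf_2\mid\Usf)\le H(\Xsf_2\mid\Xsf_1)$, but the paper then splits into cases on which of $H(\Xsf_1\mid\Usf),H(\Xsf_2\mid\Usf)$ is smaller, whereas you invoke componentwise monotonicity of $\min$ and $\max$ in one line. For \eqref{eq:min H}--\eqref{eq:max H} the routes genuinely differ: the paper uses the Markov identity $H(\Xsf_1,\Xsf_2)=I(\Xsf_1,\Xsf_2;\Usf)+H(\Xsf_1\mid\Usf)+H(\Xsf_2\mid\Usf)$, combines it with \eqref{eq:min} (respectively \eqref{eq:max}) already proved, and then rearranges via $\max\{H(\Xsf_1),H(\Xsf_2)\}=H(\Xsf_1,\Xsf_2)-\min\{H(\Xsf_1\mid\Xsf_2),H(\Xsf_2\mid\Xsf_1)\}$; your approach instead bounds $I(\Xsf_1,\Xsf_2;\Usf)\ge I(\Xsf_i;\Usf)$ directly from the chain rule, which is shorter, decouples \eqref{eq:min H}--\eqref{eq:max H} from \eqref{eq:min}--\eqref{eq:max}, and, as you note, shows that \eqref{eq:min H}--\eqref{eq:max H} do not require the Markov hypothesis at all.
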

	
	\begin{proof}
		The Markov chain $\Xsf_1 - \Usf - \Xsf_2$ is such that
		\begin{align}
			& H(\Xsf_1|\Usf)\leq H(\Xsf_1|\Xsf_2), \text{ and } H(\Xsf_2|\Usf) \leq H(\Xsf_2|\Xsf_1). \label{eq:markov1}
		\end{align}
		\subsection*{I) Proof of eqs. \eqref{eq:min} and \eqref{eq:max}}
		Let us consider two cases:
		\begin{itemize}
			\item If $H(\Xsf_1|\Usf) = \min\Big\{H(\Xsf_1|\Usf),\, H(\Xsf_2|\Usf)\Big\}$, then from \eqref{eq:markov1},
			\begin{align}
				& \min\Big\{H(\Xsf_1|\Usf),\, H(\Xsf_2|\Usf)\Big\}\leq H(\Xsf_1|\Xsf_2),\label{eq: a1}\\
				& \max\Big\{H(\Xsf_1|\Usf),\, H(\Xsf_2|\Usf)\Big\}\leq H(\Xsf_2|\Xsf_1). \label{eq: a2}
			\end{align}
			Therefore, $\min\Big\{H(\Xsf_1|\Usf),\, H(\Xsf_2|\Usf)\Big\}\leq \max\Big\{H(\Xsf_1|\Usf),\, H(\Xsf_2|\Usf)\Big\}\leq  H(\Xsf_2|\Xsf_1)$, and from \eqref{eq: a1}  we have
			\begin{align}
				& \min\Big\{H(\Xsf_1|\Usf),\, H(\Xsf_2|\Usf)\Big\}\leq \min\Big\{H(\Xsf_1|\Xsf_2),H(\Xsf_2|\Xsf_1)\Big\} .\notag
			\end{align}
			\item  If $H(\Xsf_1|\Usf) = \max\Big\{H(\Xsf_1|\Usf),\, H(\Xsf_2|\Usf)\Big\}$, then from \eqref{eq:markov1},
			\begin{align}
				& \max\Big\{H(\Xsf_1|\Usf),\, H(\Xsf_2|\Usf)\Big\}\leq H(\Xsf_1|\Xsf_2),\label{eq: b1}\\
				& \min\Big\{H(\Xsf_1|\Usf),\, H(\Xsf_2|\Usf)\Big\}\leq H(\Xsf_2|\Xsf_1). \label{eq: b2}
			\end{align}
			Therefore, $\min\Big\{H(\Xsf_1|\Usf),\, H(\Xsf_2|\Usf)\Big\}\leq   H(\Xsf_1|\Xsf_2)$, and from \eqref{eq: b2}  we have
			\begin{align}
				& \min\Big\{H(\Xsf_1|\Usf),\, H(\Xsf_2|\Usf)\Big\}\leq \min\Big\{H(\Xsf_1|\Xsf_2),H(\Xsf_2|\Xsf_1)\Big\} .\notag
			\end{align}
			
		\end{itemize}
		Similarly it can be shown that
		\begin{align}
			& \max\Big\{H(\Xsf_1|\Usf),\, H(\Xsf_2|\Usf)\Big\}\leq \max\Big\{H(\Xsf_1|\Xsf_2),H(\Xsf_2|\Xsf_1)\Big\} .\notag
		\end{align}		
		\subsection*{II) Proof of eqs. \eqref{eq:min H} and \eqref{eq:max H}}
		For any $\Usf$ forming a Markov chain we have
		\begin{align}
			H(\Xsf_1,\Xsf_2) & = I(\Xsf_1, \Xsf_2;\Usf)+ H(\Xsf_1|\Usf) +  H(\Xsf_2|\Usf)\notag\\
			&= I(\Xsf_1, \Xsf_2;\Usf)+  \min\Big\{ H(\Xsf_1|\Usf), H(\Xsf_2|\Usf) \Big\}  + \max\Big\{ H(\Xsf_1|\Usf), H(\Xsf_2|\Usf) \Big\} \label{eq:lemma a1}\\
			&\stackrel{(a)}{\leq} 
			I(\Xsf_1, \Xsf_2;\Usf)+  \min\Big\{ H(\Xsf_1|\Xsf_2), H(\Xsf_2|\Xsf_1) \Big\}  + \max\Big\{ H(\Xsf_1|\Usf), H(\Xsf_2|\Usf) \Big\}, \notag
		\end{align}
		where $(a)$ follows from \eqref{eq:min}. 	Therefore,
		\begin{align}
			\max\Big\{ H(\Xsf_1), H(\Xsf_2) \Big\} & = H(\Xsf_1,\Xsf_2) - \min\Big\{ H(\Xsf_1|\Xsf_2), H(\Xsf_2|\Xsf_1) \Big\} \notag\\
			&\leq I(\Xsf_1, \Xsf_2;\Usf)+ \max\Big\{ H(\Xsf_1|\Usf), H(\Xsf_2|\Usf) \Big\} .\notag
		\end{align}
		Similarly, from \eqref{eq:lemma a1} and \eqref{eq:max} we have
		\begin{align}
			H(\Xsf_1,\Xsf_2) &  {\leq} 
			I(\Xsf_1, \Xsf_2;\Usf)+  \min\Big\{H(\Xsf_1|\Usf), H(\Xsf_2|\Usf)  \Big\}  + \max\Big\{ H(\Xsf_1|\Xsf_2), H(\Xsf_2|\Xsf_1) \Big\},
		\end{align}
		which results in
		\begin{align}
			\min\Big\{ H(\Xsf_1), H(\Xsf_2) \Big\} & \leq I(\Xsf_1, \Xsf_2;\Usf)+ \min\Big\{ H(\Xsf_1|\Usf), H(\Xsf_2|\Usf) \Big\} .\notag
		\end{align}

		
	\end{proof}

	
	\section{Proof of Theorem \ref{thm:optimality avg}}\label{app:optimality avg} 
	In order to quantify the rate gap to optimality of the proposed GW-MR scheme, we need to compare the average rate achieved by the GW-MR scheme, $\RGWE$ defined in \eqref{eq:ach avg GWMR}, with a lower bound on the optimal average rate-memory function, $\RstarE$, for different cache sizes.    A lower bound on $\RstarE$ for the setting with two files and $K$ receivers is given in Theorem \ref{thm:LowerBound}. In this Appendix, in order to simplify the analysis  we use a less tight lower bound by only considering the following inequalities from Theorem \ref{thm:LowerBound}
	\begin{equation}
		\RlbE \geq
		\begin{cases}
			\frac{1}{2}  H(\Xsf_1,\Xsf_2)+   \frac{1}{4} \Big( H(\Xsf_1)+H(\Xsf_2)\Big)- M ,        
			& \;  M \in [0,  \, \bar \alpha ) \\
			\frac{1}{2}H(\Xsf_1,\,\Xsf_2) + \frac{1}{4}\max\Big\{H(\Xsf_1) ,H(\Xsf_2)  \Big\}  - \frac{3}{4}M ,				   
			& \;   M \in [\bar \alpha, \,   \beta ) \\
			\frac{1}{2}\Big( H(\Xsf_1,\Xsf_2) - M\Big) ,				     
			& \; M \in [  \beta, \, H(\Xsf_1,\Xsf_2)]  . \label{eq: LB avg appendix}
		\end{cases} 	
	\end{equation}	   
	%
	%
	for $\beta=  \max\{H(\Xsf_1),\,H(\Xsf_2)\} $ defined in \eqref{eq:alpha beta}, and
	\begin{align}
		& \bar\alpha \triangleq  \min\{H(\Xsf_1),\,H(\Xsf_2)\}   \label{eq:alpha bar}.
	\end{align} 	
	Note that based on \eqref{eq:ach avg GWMR}, for any $\Rscr\in \GWregion$ the gap to optimality satisfies 
	\begin{align}
		\RGWE &-   \RlbE  \leq \RachE(M,\Rscr) -   \RlbE,
	\end{align}
	where $\RachE(M,\Rscr)$ is given in \eqref{eq: rate avg K}. 
	Therefore, in the following, we quantify the gap $\RachE(M,\Rscr) -   \RlbE$ for any $\Rscr\in\GWregion$ such that $\R_0 = I(\Xsf_1,\Xsf_2;\Usf), \, \R_1=  H(\Xsf_1|\Usf) ,\, \R_2 = H(\Xsf_2|\Usf)$, where $\Usf$ forms a Markov chain $\Xsf_1-\Usf-\Xsf_2$, and  from \eqref{eq:special R} we have  $\R_0+\R_1+\R_2=H(\Xsf_1,\Xsf_2)$.  
	From the analysis that follows it will become clear that this choice of $\Rscr$ is sufficient to achieve optimality over a certain region of the memory. 
	
	For $\gamma_K$ and $\lambda_K$ defined in \eqref{eq: gamma lambda K} as ${\gamma_{K} }=  \frac{1}{K}\min\{\R_1,\R_2\}$ and $\lambda_{K} =  \R_0+\R_1+\R_2-2 \gamma_{K}$, we have $2\gamma_K\leq \bar\alpha\leq  \beta\leq \lambda_K$. Then, with $\bar\gamma_K$ defined in \eqref{eq: gamma bar K} as $\bar\gamma_{K}=  \R_0+2\gamma_K$, the gap to optimality of the proposed GW-MR scheme is upper bounded as follows.	
	\begin{itemize}
		\item[(i)] When $M \in \Big[0,\, 2 \gamma_K \Big)$,	
		\begin{align} 
			\RachE&(M,\Rscr)  -   \RlbE \notag\\
			&\leq\R_0+\Big(1-\frac{1}{2^K}\Big)  (\R_1+\R_2)-\Big(\frac{3}{2}-\frac{2}{2^K}  \Big)M   -  \Big( \frac{1}{2}  H(\Xsf_1,\Xsf_2)+   \frac{1}{4} \Big( H(\Xsf_1)+H(\Xsf_2)\Big)- M \Big)  \notag\\
			&\stackrel{(a)}{=}    \frac{1}{2}H(\Xsf_1,\Xsf_2)-\frac{1}{4}\Big(H(\Xsf_1 )+H(\Xsf_2 ) \Big) -\frac{1}{2^K}  (\R_1+\R_2) -\Big(\frac{1}{2}-\frac{2}{2^K}  \Big)M \notag\\
			& =\frac{1}{4}\Big(H(\Xsf_1|\Xsf_2 )+H(\Xsf_2 |\Xsf_1) \Big) -\frac{1}{2^K}  (\R_1+\R_2) -\Big(\frac{1}{2}-\frac{2}{2^K}  \Big)M \notag\\
			&  \leq \frac{1}{4}\Big(H(\Xsf_1|\Xsf_2 )+H(\Xsf_2 |\Xsf_1) \Big) -\frac{1}{2^K}  (\R_1+\R_2) , \notag
		\end{align}
		where $(a)$ follows from $\R_0+\R_1+\R_2=H(\Xsf_1,\Xsf_2)$.

		\item[(ii)] When $M \in \Big[ 2 \gamma_K ,\, \min\{\bar\gamma_K,\bar\alpha\} \Big)$,	
		\begin{align} 
			\RachE(M,\Rscr)  -  & \RlbE  \leq	\R_0+\Big(  1-\frac{1}{2^K}  \Big) ( \R_1+\R_2) -   \Big( 1-\frac{4}{2^K}  \Big)  \gamma_K  -M \notag\\
			&\quad  -  \Big( \frac{1}{2}  H(\Xsf_1,\Xsf_2)+   \frac{1}{4} \Big( H(\Xsf_1)+H(\Xsf_2)\Big)- M \Big)  \notag\\
			& =  \frac{1}{4}\Big(H(\Xsf_1|\Xsf_2 )+H(\Xsf_2 |\Xsf_1) \Big)     -\frac{1}{2^K}  (\R_1+\R_2)-   \Big( 1-\frac{4}{2^K}  \Big)  \gamma_K  \notag\\
			&\leq \frac{1}{4}\Big(H(\Xsf_1|\Xsf_2 )+H(\Xsf_2 |\Xsf_1) \Big)     -\frac{1}{2^K}  (\R_1+\R_2) . \notag
		\end{align}

		\item[(iii)] When $M \in \Big[ \min\{\bar\gamma_K,\bar\alpha\} ,\, \max\{\bar\gamma_K,\beta\} \Big)$,	
		\begin{itemize}
			\item[$\bullet$] If $\bar\gamma_K\leq \bar\alpha$, then
			\begin{itemize}
				\item[$\circ$] When $M\in\Big[ \bar\gamma_K,\,  \bar\alpha \Big)$, 		
				\begin{align} 
					\RachE&(M, \Rscr)  -   \RlbE  \leq	\Big(  1-\frac{1}{2^K}  \Big) \Big( \R_0+\R_1+\R_2-M\Big) -   \Big( 1-\frac{2}{2^K}  \Big)   \gamma_K  \notag\\
					& \quad-  \Big( \frac{1}{2}  H(\Xsf_1,\Xsf_2)+   \frac{1}{4} \Big( H(\Xsf_1)+H(\Xsf_2)\Big)- M \Big)  \notag\\
					& \stackrel{(b)}{=}   \frac{1}{4}\Big(H(\Xsf_1|\Xsf_2 )+H(\Xsf_2 |\Xsf_1) \Big)    -\frac{1}{2^K}  		\Big(H(\Xsf_1,\Xsf_2)-M\Big)  -   \Big( 1-\frac{2}{2^K}  \Big)   \gamma_K       \notag \\
					&\stackrel{(c)}{\leq}  \frac{1}{4} \Big(H(\Xsf_1|\Xsf_2 )+H(\Xsf_2 |\Xsf_1) \Big)     -\frac{1}{2^K}  	 \max \Big\{H(\Xsf_1|\Xsf_2 ),H(\Xsf_2 |\Xsf_1) \Big\} -   \Big( 1-\frac{2}{2^K}  \Big)   \gamma_K \notag\\
					&\stackrel{(d)}{\leq}    \frac{1}{4} \Big(H(\Xsf_1|\Xsf_2 )+H(\Xsf_2 |\Xsf_1) \Big)     -\frac{1}{2^K}  	 (\R_1+\R_2)      ,   \notag
				\end{align}
				where  $(b)$ follows from $\R_0+\R_1+\R_2=H(\Xsf_1,\Xsf_2)$, and $(c)$ follows due to the fact that $M \leq \bar\alpha=\min\{H(\Xsf_1),H(\Xsf_2)\}$. Step $(d)$ follows since for $K\geq 2$
				\begin{align}
					\frac{1}{2^K}  	 \max &\Big\{H(\Xsf_1|\Xsf_2 ),H(\Xsf_2 |\Xsf_1) \Big\} +  \Big( 1-\frac{2}{2^K}  \Big)   \gamma_K \notag\\
					&\stackrel{(e)}{\geq}      \frac{1}{2^K}  	 \max  \{\R_1,\R_2\}    +  \Big( 1-\frac{2}{2^K}  \Big)   \gamma_K       \notag\\
					&    \geq   \frac{1}{2^K}  	 \max  \{\R_1,\R_2\}    + \frac{K}{2^K}  \gamma_K  =  \frac{1}{2^K}  	 (\R_1+\R_2)   , \label{eq:extra}
				\end{align}
				where $(e)$ follows from \eqref{eq:max}.
				
				\item[$\circ$] When $M\in\Big[  \bar\alpha ,\, \beta \Big)$,
				\begin{align}
					&\RachE(M,\Rscr)    -    \RlbE 
					\leq  \Big(  1-\frac{1}{2^K}  \Big)\Big( \R_0+\R_1+\R_2-M \Big)-\Big( 1-\frac{2}{2^K}  \Big)   \gamma_K   \notag \\
					& \quad   -\Big( \frac{1}{2}H(\Xsf_1,\,\Xsf_2) + \frac{1}{4}\max\Big\{H(\Xsf_1) ,H(\Xsf_2)  \Big\}  - \frac{3}{4}M\Big) \notag\\
					&=  \frac{1}{4}\min\Big\{H(\Xsf_1|\Xsf_2 ),H(\Xsf_2 |\Xsf_1) \Big\}  + \Big(  \frac{1}{4}-\frac{1}{2^K}  \Big)\Big(  H(\Xsf_1,\Xsf_2)-M\Big)  -   \Big( 1-\frac{2}{2^K}  \Big)   \gamma_K      \notag\\
					&  \stackrel{(f)}{\leq}  \frac{1}{4}\Big(H(\Xsf_1|\Xsf_2 )+H(\Xsf_2 |\Xsf_1) \Big)     -\frac{1}{2^K}\max\Big \{ H(\Xsf_1|\Xsf_2 ),H(\Xsf_2 |\Xsf_1)  \Big\}  -   \Big( 1-\frac{2}{2^K}  \Big)   \gamma_K   \notag\\
					& \stackrel{(g)}{\leq} \frac{1}{4}   \Big(H(\Xsf_1|\Xsf_2)+H(\Xsf_2|\Xsf_1)  \Big)   -\frac{1}{2^K} (\R_1+\R_2)  ,  \notag
				\end{align}
				where $(f)$ follows due to the fact that $M\geq \bar\alpha= \min\{H(\Xsf_1),\,H(\Xsf_2)\} $, and $(g)$ follows from \eqref{eq:extra}.
				
			\end{itemize}

			\item[$\bullet$] If $\bar\gamma_K>  \bar\alpha$, then 
			\begin{itemize}
				\item[$\circ$] When $M\in\Big[ \bar\alpha,\,   \min\{ \bar\gamma_K,\beta \} \Big)$, 		
				\begin{align} 
					\RachE&(M, \Rscr)  -   \RlbE \leq	\R_0+\Big(  1-\frac{1}{2^K}  \Big) ( \R_1+\R_2) -   \Big( 1-\frac{4}{2^K}  \Big)  \gamma_K -M \notag\\
					& \quad -\Big( \frac{1}{2}H(\Xsf_1,\,\Xsf_2) + \frac{1}{4}\max\Big\{H(\Xsf_1) ,H(\Xsf_2)  \Big\}  - \frac{3}{4}M\Big) \notag\\
					& =  \frac{1}{2}H(\Xsf_1,\Xsf_2 )- \frac{1}{4}\Big(M+\max\Big\{H(\Xsf_1) ,H(\Xsf_2)  \Big\}  \Big)   -\frac{1}{2^K}  		(\R_1+\R_2)    -  \Big( 1-\frac{4}{2^K}  \Big)  \gamma_K      \notag \\
					&\stackrel{(h)}{\leq} \frac{1}{4}\Big(H(\Xsf_1|\Xsf_2 )+H(\Xsf_2 |\Xsf_1) \Big)     -\frac{1}{2^K}  		(\R_1+\R_2)    -  \Big( 1-\frac{4}{2^K}  \Big)  \gamma_K       \notag\\
					&\leq  \frac{1}{4}\Big(H(\Xsf_1|\Xsf_2 )+H(\Xsf_2 |\Xsf_1) \Big)     -\frac{1}{2^K}  		(\R_1+\R_2)  , \notag
				\end{align}
				where $(h)$ follows from the fact that $M \geq \bar\alpha=\min\{H(\Xsf_1),H(\Xsf_2)\}$.

				\item[$\circ$] When $M\in\Big[  \min\{ \bar\gamma_K,\beta \}  ,\,  \max\{ \bar\gamma_K,\beta \} \Big)$,
				
				- If $\bar\gamma_K \leq \beta$, then
				\begin{align} 
					\RachE&(M, \Rscr)  -   \RlbE \leq	 \Big(  1-\frac{1}{2^K}  \Big)\Big( \R_0+\R_1+\R_2-M \Big)- \Big(  1-\frac{2}{2^K}  \Big)\gamma_K   \notag\\
					& \quad -\Big( \frac{1}{2}H(\Xsf_1,\,\Xsf_2) + \frac{1}{4}\max\Big\{H(\Xsf_1) ,H(\Xsf_2)  \Big\}  - \frac{3}{4}M\Big) \notag\\
					&=  \frac{1}{4}\min\Big\{H(\Xsf_1|\Xsf_2 ),H(\Xsf_2 |\Xsf_1) \Big\}  + \Big(  \frac{1}{4}-\frac{1}{2^K}  \Big)\Big(  H(\Xsf_1,\Xsf_2)-M\Big) - \Big(  1-\frac{2}{2^K}  \Big)\gamma_K         \notag\\
					&  \stackrel{(i)}{\leq}  \frac{1}{4}\Big(H(\Xsf_1|\Xsf_2 )+H(\Xsf_2 |\Xsf_1) \Big)     -\frac{1}{2^K}\max\Big \{ H(\Xsf_1|\Xsf_2 ),H(\Xsf_2 |\Xsf_1)  \Big\}  - \Big(  1-\frac{2}{2^K}  \Big)\gamma_K     \notag\\
					& \stackrel{(j)}{\leq} \frac{1}{4}   \Big(H(\Xsf_1|\Xsf_2)+H(\Xsf_2|\Xsf_1)  \Big)   -\frac{1}{2^K} (\R_1+\R_2)  ,  \notag
				\end{align}
				where $(i)$ follows since  $M\geq \bar\alpha= \min\{H(\Xsf_1),\,H(\Xsf_2)\} $, and $(j)$ follows from \eqref{eq:extra}.

				- If $\bar\gamma_K > \beta$, then
				\begin{align} 
					\RachE&(M, \Rscr)  -   \RlbE \leq	\R_0+\Big(  1-\frac{1}{2^K}  \Big) ( \R_1+\R_2) -   \Big( 1-\frac{4}{2^K}  \Big)  \gamma_K -M \notag\\
					& \quad -  \frac{1}{2}\Big(H(\Xsf_1,\Xsf_2) - M \Big)  \notag\\
					& =  \frac{1}{2}\Big(H(\Xsf_1,\Xsf_2) - M \Big) -\frac{1}{2^K}  		(\R_1+\R_2)    -  \Big( 1-\frac{4}{2^K}  \Big)  \gamma_K      \notag \\
					&\stackrel{(k)}{\leq} \frac{1}{2}\min\Big\{H(\Xsf_1|\Xsf_2 ),H(\Xsf_2 |\Xsf_1) \Big\}-\frac{1}{2^K}  		(\R_1+\R_2)    -  \Big( 1-\frac{4}{2^K}  \Big)  \gamma_K      \notag\\
					& \leq\frac{1}{4}\Big(H(\Xsf_1|\Xsf_2 )+H(\Xsf_2 |\Xsf_1) \Big)     -\frac{1}{2^K}  		(\R_1+\R_2)      , \notag
				\end{align}
				where $(k)$ follows since $M\geq \beta= \max\{H(\Xsf_1),\,H(\Xsf_2)\} $.

			\end{itemize}

		\end{itemize}

		\item[(iv)] When $M \in \Big[\max\{ \bar\gamma_K, \beta\},\,   \lambda_K \Big)$,
		\begin{align}  
			\RachE(M,&\Rscr)  -   \RlbE \leq \Big(  1-\frac{1}{2^K}  \Big)\Big( \R_0+\R_1+\R_2-M \Big)- \Big(  1-\frac{2}{2^K}  \Big) \gamma_K  -  \frac{1}{2}\Big(H(\Xsf_1,\Xsf_2) - M \Big) \notag\\
			& =      \frac{1}{2}  \Big(H(\Xsf_1,\Xsf_2) - M \Big)  -\frac{1}{2^K}    \Big(H(\Xsf_1,\Xsf_2) - M \Big) - \Big(  1-\frac{2}{2^K}  \Big) \gamma_K  \notag\\
			&  \stackrel{(\ell)}{\leq}    \frac{1}{2} \min\Big\{H(\Xsf_1|\Xsf_2),H(\Xsf_2|\Xsf_1) \Big\} -\frac{1}{2^K}  \max\Big\{H(\Xsf_1|\Xsf_2),H(\Xsf_2|\Xsf_1) \Big\}      - \Big(  1-\frac{2}{2^K}  \Big) \gamma_K  \notag\\
			&  \stackrel{(m)}{\leq}  \frac{1}{4}   \Big(H(\Xsf_1|\Xsf_2)+H(\Xsf_2|\Xsf_1)  \Big)  -\frac{1}{2^K}      \max\Big\{H(\Xsf_1|\Xsf_2),H(\Xsf_2|\Xsf_1) \Big\}    - \Big(  1-\frac{2}{2^K}  \Big) \gamma_K  \notag\\
			&   \stackrel{(n)}{\leq}     \frac{1}{4}   \Big(H(\Xsf_1|\Xsf_2)+H(\Xsf_2|\Xsf_1)  \Big)   -\frac{1}{2^K} (\R_1+\R_2)  , \notag
		\end{align}
		where $(\ell)$ follows by upper bounding the first term using the fact that $M\geq \beta = \max\{H(\Xsf_1),$ $H(\Xsf_2)\} $, and upper bounding the second term using the fact that $M\geq \min\{H(\Xsf_1),H(\Xsf_2)\}$. Step $(m)$ follows since the minimum of the two terms is no more than their arithmatic mean, and finally, $(n)$ follows from \eqref{eq:extra}.

		\item[(v)] When  $M \in \Big[\lambda_K,\, H(\Xsf_1,\Xsf_2)\Big]$, 
		\begin{align}
			\RachE(M,\Rscr)  &-   \RlbE \leq    \frac{1}{2}  \Big( \R_0+\R_1+\R_2   -M\Big ) -  \frac{1}{2} \Big(H(\Xsf_1,\Xsf_2) - M \Big) =0 .\notag
		\end{align}
		
	\end{itemize}
	Based on the analysis given above it is observed that for all $\Usf$  that satisfy $\Xsf_1-\Usf-\Xsf_2$,  $\RachE(M,\Rscr) = \RstarE= \RlbE$ when $M\in  \Big[ \lambda_K ,\,H(\Xsf_1,\Xsf_2)\Big]$.  In order to maximize the region of memory where the proposed GW-MR  scheme is optimal, we select  $\Usf$ to be the one that minimizes $\lambda_K$, or equivalently maximizes $\gamma_K=H(\Xsf_1,\Xsf_2)-2\gamma_K$, and its maximum is given by $M_K \triangleq \max\limits_{\Xsf_1-\Usf-\Xsf_2}   \gamma_K$. In the remaining memory region, we have
	\begin{align}
		\RstarE- \RlbE \leq \RachE(M,\Rscr) - \RlbE \leq  \frac{1}{4}\Big(H(\Xsf_1|\Xsf_2 )+H(\Xsf_2 |\Xsf_1) \Big)     -\frac{1}{2^K}  		(\R_1+\R_2) .\notag
	\end{align}

	\section{Optimal Cache Allocation for Three Files and Proof of Theorem \ref{thm:achievable rate three} }\label{app:achievable rate three}
	In this Appendix we prove the optimality of the cache allocation described in Sec.~\ref{sec: achievable three} and we compute the corresponding MR peak rate given in Theorem~\ref{thm:achievable rate three}. As explained in Sec.~\ref{sec: achievable three}, the proposed MR scheme for three files  caches and delivers content from sublibraries $L_1$, $L_2$ and $L_3$ independently, as follows:
	\begin{itemize}
		\item Common-to-all sublibrary $L_3$: Since the common description $W_{123}$ is required for the lossless reconstruction of any of the files, for any demand realization $\dbf$, it is required by both receivers. Therefore, it is optimal to adopt LFU caching and naive multicasting delivery for sublibrary $L_3$. Let $\mu_0\in [0, \min\{M,\R_0\}]$ denote the portion of memory  allocated to this sublibrary. Then, each receiver caches the first $\mu_0F$ bits of $\Wsf_{123}$, and for any demand the remaining $(\R_0 - \mu_0)F$ bits are delivered through uncoded multicast transmissions. 
		
		\item Common-to-two sublibrary $L_2$: Let $\mu'\in\Big [0, \min\{M,3\R'\}\Big]$ denote the portion of memory allocated to sublibrary $L_2$. Then, descriptions $\{W_{12},W_{13},W_{23}\}$ with rate $\R'$ are cached and delivered according to the two-request CACM scheme proposed in Sec.~\ref{sec: new scheme}, whose achievable peak rate, denoted by $R_{L_2}(\mu',\, \R')$,  is given in  \eqref{eq:two request rate}.

		\item Private sublibrary $L_1$: Let $\mu\in\Big [0, \min\{M,3\R\}\Big]$ denote the portion of memory allocated to sublibrary $L_1$. Then, the private descriptions $\{W_{1},W_{2},W_{3}\}$ with rate $\R$ are cached and delivered according to the correlation-unaware scheme proposed in \cite{yu2016exact}, whose achievable peak rate, denoted by $R_{L_1}(\mu,\, \R)$,  is given by  
		\begin{equation}
			R_{L_1}(\mu,\R) = \begin{cases}
				2\R-\mu ,                
				& \; \mu \in \Big[0,\, \frac{3}{2}\R\Big)  \\
				\R-\frac{1}{3}\mu ,                
				& \; \mu \in \Big[\frac{3}{2}\R, \,3\R\Big] \notag
			\end{cases}
		\end{equation}
		
	\end{itemize}
	
	The optimal cache allocations among the three sublibraries, $(\mu_0^*,\mu'^*,\mu^*)$, are derived from the following linear program
	\begin{equation}\label{eq:general GW 3 files}
		\begin{aligned}
			&\min\limits_{\mu_0,\, \mu',\, \mu}
			& & \Rach (M,\Rscr) =   \R_0 - \mu_0   \,+\,   R_{ L_2}(\mu',\, \R') \,+\, R_{L_1}(\mu,\, \R) \\
			& \text{s.t}
			& & \mu_0+\mu'+\mu  \leq M, \\
			&&& 0 \leq \mu_0\leq \R_0,	\\
			&&& 0 \leq \mu'\leq 3\R',	\\
			&&& 0 \leq \mu\leq 3\R	.
		\end{aligned}
	\end{equation}
	For any $M\in\Big[0,\,\R_0+\R_1+\R_2\Big]$, the rate is minimized when the cache capacity is divided among the sublibraries such that a larger portion of the capacity is assigned to the sublibrary that achieves a larger reduction in delivery rate, i.e., whose rate function has the steepest descending slope. For given rate functions $R_{L_1}(\mu,\R)$, $R_{L_2}(\mu',\R')$  and $\R_0-\mu_0$  corresponding to the schemes adopted for sublibraries $L_1$, $L_2$ and $L_3$, respectively, the cache is optimally allocated as follows: 
	\begin{itemize}
		\item[$\bullet$] When $M \in \Big[0, \frac{1}{2}\R' \Big)$, since $\mu'\leq M\leq \frac{1}{2}\R'$, the two-request scheme used for sublibrary $L_2$ achieves a delivery rate equal to $R_{L_2}(\mu,\, \R') = 3\R'-2\mu'$, which has a larger slope (in absolute value) compared to $R_{L_1}(M,\, \R)$ and $\R_0-M$, i.e., it is the most effective in minimizing the delivery rate. 
		Therefore, $\mu'^* = M $ and $\mu_0^*= \mu^* = 0$, and
		\begin{align}
			\Rach(M,\Rscr)  & = \R_0 + 3\R'+2\R - \mu_0^* - 2\mu'^{*} - \mu^* = \R_0 + 3\R' + 2\R - 2M . \notag 
		\end{align}
		
		\item[$\bullet$] When $M \in \Big[   \frac{1}{2}\R', \, \R_0+\frac{3}{2}(\R'+\R ) \Big)$, for any cache allocation $(\mu_0,\mu',\mu)$ with $\mu_0\in[0,\R_0]$, $\mu'\in[\frac{1}{2}\R',\frac{3}{2}\R')$ and $\mu = M-\mu_0-\mu'$, the slope of the line tangent to the rate  functions of all three adopted schemes is equal. Therefore, any choice of $(\mu_0^*,\mu'^*,\mu^*)$ satisfying $\mu_0^*+\mu'^*+\mu^*=M$, and such that $\mu_0^*\in[0,\R_0]$,  $\mu'^*\in[\frac{1}{2}\R',\frac{3}{2}\R')$ and $\mu^*\in[0,\frac{3}{2}\R)$ is optimal. Then,	
		\begin{align}
			\Rach(M,\Rscr)  & = \R_0 + \frac{5}{2}\R'+2\R - \mu_0^* - \mu'^* - \mu^*  =\R_0 + \frac{5}{2}\R' + 2\R - M . \notag 
		\end{align}
		
		\item[$\bullet$] When $M \in \Big[ \R_0+\frac{3}{2}(\R'+\R )  , \,   \R_0+3\R'+\frac{3}{2}\R  \Big)$, for any $(\mu_0,\mu',\mu)$ with $\mu_0 =\R_0$, $\mu'\in[\frac{3}{2}\R',3\R']$, and $\mu\in[\frac{3}{2}\R,3\R]$, the steepest slope corresponds to the rate function of sublibrary $L_2$, $R_{L_2}(M,\, \R') = 2\R'-\frac{2}{3}M$. Therefore, the optimal cache allocations are given by  $\mu_0^* = \R_0$,  $\mu^* = \frac{3}{2}\R$ and  $\mu'^* = M -\mu_0^*-\mu^*$, and  hence,
		\begin{align}
			\Rach(M,\Rscr)  & = \R_0 + 2\R'+2\R - \mu_0^* - \frac{2}{3}\mu'^* - \mu^*  =\frac{2}{3}\R_0 + 2\R' + \frac{3}{2}\R - \frac{2}{3}M . \notag 
		\end{align}
		
		\item[$\bullet$] When $M \in \Big[ \R_0 +  3\R' + \frac{3}{2}\R,\, \R_0 +  3(\R' +\R) \Big]$, for any $(\mu_0,\mu',\mu)$ with $\mu_0=\R_0$, $\mu'=3\R '$, and $\mu\in[\frac{3}{2}\R,3\R]$, the steepest slope corresponds to the rate function of sublibrary $L_1$, $R_{L_1}(M,\, \R) = \R-\frac{1}{3}M$. Therefore, the optimal cache allocations are given by $\mu_0^* = \R_0$, $\mu'^* = 3\R ' $ and $ \mu^* = M-\mu_0^*- \mu'^*$, and hence,
		\begin{align}
			\Rach(M,\Rscr) 
			& = \R_0  + 2\R'+\R -  \mu_0^* - \frac{2}{3}\mu'^* -\frac{1}{3}\mu^* =  \frac{1}{3}\R_0  + \R'+ \R  \,-\, \frac{1}{3} M .  \notag
		\end{align}
		
	\end{itemize}
	The optimal caching strategy described above results in the rate provided in Theorem~\ref{thm:achievable rate three}.

	\section{Optimality of the two-request CACM scheme}\label{app:tworequest}
	In this Appendix, we prove that the two-request scheme proposed in Sec.~\ref{sec: new scheme} is optimal at the memory-rate pairs given in \eqref{eq: opt points of TR}. We refer to the network with two receivers and three independent files $\{W_{12},W_{13},W_{23}\}$ with length $\R' F$ bits, where each receiver requests two files as described in Sec.~\ref{sec: new scheme}, as the {\em two-request network}. The following lemma provides a lower bound on the optimal peak rate-memory trade-off in the two-request network, which we use to establish the optimality for any $M\in[0,\frac{3}{2}\R']$.
	\begin{lemma}
		For a given cache capacity $M$ and description rate $\R'$, a lower bound on the optimal peak rate-memory trade-off in the two-request network adopted for sublibrary $L_2$, denoted by $R_{L_2}^{LB} (M,\R')$,  is given by  
		\begin{align}
			R_{L_2}^{LB} (M,\R')  = \inf \Big \{ R: \quad
			&  R \,\geq\, 3\R'- 2M ,  \;\;
			R \,\geq\, \frac{5}{2}\R' -M, \;\;
			R \,\geq\,  \frac{3}{2} \R' - \frac{1}{2}M  
			\Big \}. \label{eq: LB two-request}
		\end{align}
	\end{lemma}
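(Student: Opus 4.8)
The plan is to reduce the two-request network to an ordinary single-request cache-aided coded multicast network over a \emph{correlated} library of three virtual files and then invoke the cut-set-type outer bound of Theorem~\ref{thm: general LB rule}. Set $X_1 \triangleq (W_{12},W_{13})$, $X_2 \triangleq (W_{12},W_{23})$ and $X_3 \triangleq (W_{13},W_{23})$, built from the three independent descriptions $W_{12},W_{13},W_{23}$, each treated as a file of normalized entropy $\R'$. In the two-request network a receiver that requests the two descriptions forming $X_i$ must losslessly reconstruct exactly $X_i$, whereas the cache sizes, the transmitted codewords, and the error criterion are unchanged; hence the two-request network and the single-request network with library $\{X_1,X_2,X_3\}$ are equivalent and share the same optimal peak rate-memory function, so Theorem~\ref{thm: general LB rule} applies to $\{X_1,X_2,X_3\}$ directly. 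From the mutual independence of $W_{12},W_{13},W_{23}$ I would first record the joint entropies $H(X_i)=2\R'$, $H(X_i,X_j)=H(X_1,X_2,X_3)=3\R'$ for $i\neq j$, and $H(X_i\mid X_j)=\R'$ for $i\neq j$.

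I would then instantiate Theorem~\ref{thm: general LB rule} three times, paralleling cases $(i)$--$(iii)$ of Appendix~\ref{app:LowerBound}. With $\nu=1$, the demand $\dbf^{(1)}$ in which $r_1$ requests $X_1$ and $r_2$ requests $X_2$, and receiver subset $\mathcal S_1=\{r_1,r_2\}$, the bound collapses to $R_{\dbf^{(1)}}^{*}\ge H(X_1,X_2)-2M=3\R'-2M$. With $\nu=2$, demands $\dbf^{(1)}=(r_1\!:\!X_1,\,r_2\!:\!X_2)$ and $\dbf^{(2)}=(r_1\!:\!X_3,\,r_2\!:\!X_1)$, and subsets $\mathcal S_1=\mathcal S_2=\{r_1\}$ (so $|\widetilde{\mathcal S}|=1$ and the side-information term is $\mathfrak X_1=\{X_1\}$), the bound gives $R_{\dbf^{(1)}}^{*}+R_{\dbf^{(2)}}^{*}\ge H(X_1)+H(X_3\mid X_1)-M=3\R'-M$, i.e.\ $R\ge \frac{3}{2}\R'-\frac{1}{2}M$. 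With $\nu=2$ and the same two demands but subsets $\mathcal S_1=\{r_1\}$, $\mathcal S_2=\{r_2\}$ (so $|\widetilde{\mathcal S}|=2$ and $\mathfrak X_1=\emptyset$ since $\mathcal S_1\cap\mathcal S_2=\emptyset$), the union of all files requested across the two demands is $\{X_1,X_2,X_3\}$ while the set $\mathcal D_{\mathcal S_1}^{(1)}\cup\mathcal D_{\mathcal S_2}^{(2)}$ is just $\{X_1\}$, so the bound gives $R_{\dbf^{(1)}}^{*}+R_{\dbf^{(2)}}^{*}\ge 2H(X_1)-2M+H(X_1,X_2,X_3)-H(X_1)=5\R'-2M$, i.e.\ $R\ge \frac{5}{2}\R'-M$. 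Since each $\dbf^{(i)}$ above is a legitimate demand of the two-request network, $R_{\dbf^{(i)}}^{*}$ is at most the optimal peak rate; dividing the three displayed inequalities by $\nu$ thus gives all three inequalities defining $R_{L_2}^{LB}(M,\R')$ in \eqref{eq: LB two-request}, which is the claim.

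The main obstacle is not any single calculation but making the reduction and the accounting watertight: verifying that the correspondence between two-request demands and single virtual-file requests is a bijection (and that worst-case demands map to distinct virtual-file requests), and evaluating the nested side-information terms $\mathfrak X_\ell$ together with the union/conditioning index sets in Theorem~\ref{thm: general LB rule} correctly for each demand-and-subset pattern, so that the three linear pieces come out with exactly the coefficients appearing in \eqref{eq: LB two-request}. Once \eqref{eq: LB two-request} is established, comparing it with the achievable peak rate $R_{L_2}(M,\R')$ of \eqref{eq:two request rate} shows the two agree for every $M\in[0,\frac{3}{2}\R']$, which supplies the remaining optimality statement of this subsection.
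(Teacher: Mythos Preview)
Your proposal is correct and follows essentially the same route as the paper: both invoke the cut-set bound of Theorem~\ref{thm: general LB rule} with the same three demand/receiver-subset patterns (one demand with $\mathcal S_1=\{r_1,r_2\}$; two demands with $\mathcal S_1=\mathcal S_2=\{r_1\}$; two demands with $\mathcal S_1=\{r_1\},\ \mathcal S_2=\{r_2\}$) to obtain the three linear pieces of \eqref{eq: LB two-request}. The only difference is packaging: the paper applies Theorem~\ref{thm: general LB rule} directly to the two-request demands $(\{12,13\},\{12,23\})$ and $(\{12,23\},\{12,13\})$ with the descriptions $W_{12},W_{13},W_{23}$ playing the role of files, whereas you first bundle the descriptions into the correlated virtual files $X_1,X_2,X_3$ and reduce to a single-request problem, which lets you quote Theorem~\ref{thm: general LB rule} exactly as stated. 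Your choice of second demand $(X_3,X_1)$ rather than the paper's $(X_2,X_1)$ is immaterial since both yield the same entropy expressions.
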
	
	\begin{proof}
		The lower bound is derived using Theorem~\ref{thm: general LB rule} in Appendix~\ref{app:LowerBound}. Since the procedure is similar to those in Appendices \ref{app:LowerBound} and \ref{app:LowerBound three}, here, we only delineate the different sets of $\nu$ consecutive demands and receiver subsets which are used to compute the lower bound.
		
		\begin{itemize}
			
			\item Case $(i)$: For $\nu=1$ consecutive demand   
			$\dbf^{(1)}  = \Big( \{ 12,13 \},\{12,23 \} \Big)$ with receiver subset $\mathcal S_{1} = \{r_1,r_2\}$, based on  Theorem~\ref{thm: general LB rule}, we have $\mathcal D_{{\mathcal S}_1}^{(1)}=\{  12,13 ,23 \} $ and $\widetilde{\mathcal S} =\mathcal S_1 = \{r_1,r_2\}$. Therefore,
			\begin{align}
				\Rstar	 
				& \geq R_{\dbf^{(1)}}^* \geq H(W_{12},W_{13}, W_{23}) \,- \,2 \,M =3\R'-2M. \label{eq:TR LB 1}
			\end{align}

			\item Case $(ii)$: For the $\nu=2$ consecutive demands $\dbf^{(1)}  = \Big( \{ 12,13 \},\{12,23 \} \Big)$ and $\dbf^{(2)}  = \Big( \{ 12,23 \},\{12,13 \}  \Big)$ with receiver subsets $\mathcal S_{1} = \{r_1\}$ and $\mathcal S_{2} = \{r_2\}$, based on  Theorem~\ref{thm: general LB rule}, we have $\mathcal D_{{\mathcal S}_1}^{(1)}= \mathcal D_{{\mathcal S}_2}^{(2)}=\{ 12,13  \}$ and $\widetilde{\mathcal S} = \{r_1,r_2\}$, and ${\mathfrak X}_1 = \emptyset$.  Therefore,
			\begin{align}
				\Rstar	 
				& \geq \frac{1}{2}\Big(R_{\dbf^{(1)}}^*+ R_{\dbf^{(2)}}^*  \Big)\geq \frac{1}{2}\Big( H(W_{12},W_{13})+  H(W_{12},W_{13}, W_{23})  -2M\Big) =\frac{5}{2}\R'-M.  \label{eq:TR LB 2}
			\end{align}

			\item Case $(iii)$: For the $\nu=2$ consecutive demands $\dbf^{(1)}  = \Big(\{ 12,13 \},\{12,23 \}  \Big)$ and $\dbf^{(2)}  = \Big( \{ 12,23 \},\{12,13 \} \Big)$ with receiver subsets $\mathcal S_{1} =\mathcal S_{2} = \{r_1\}$, based on  Theorem~\ref{thm: general LB rule}, we have $\mathcal D_{{\mathcal S}_1}^{(1)}= \{ 12,13  \}$, $\mathcal D_{{\mathcal S}_2}^{(2)}= \{ 12,23 \} $ and $\widetilde{\mathcal S} = \{r_1\}$, and ${\mathfrak X}_1 = \{W_{12}\}$.  Therefore,
			\begin{align}
				\Rstar	 
				& \geq \frac{1}{2}\Big(R_{\dbf^{(1)}}^*+ R_{\dbf^{(2)}}^*  \Big)\geq \frac{1}{2}\Big( H(W_{12},W_{13})+  H(W_{23})  -M\Big) =\frac{3}{2}\R'-\frac{1}{2}M.  \label{eq:TR LB 3}
			\end{align}

		\end{itemize}
		A lower bound on the optimal peak rate-memory trade-off is given by eqs. \eqref{eq:TR LB 1}-\eqref{eq:TR LB 3}.
	\end{proof}
	From comparing $R_{L_2} (M,\R')$, the achievable peak rate given in \eqref{eq:two request rate}, with $R_{L_2}^{LB} (M,\R')$, the lower bound given in \eqref{eq: LB two-request}, we observe that the proposed two-request CACM scheme achieves the lower bound, and is therefore optimal for any $M\in[0,\, \frac{3}{2}\R']$.

	\section{Proof of Theorem \ref{thm:LowerBound three} }\label{app:LowerBound three}
	A lower bound  on the optimal peak rate-memory function, $\Rstar$, is obtained by applying Theorem~\ref{thm: general LB rule} given in Appendix~\ref{app:LowerBound} to different sets of $\nu$  consecutive demand realizations $\dbf^{(1)},\dots,\dbf^{(\nu)}$, each taking values among the worst-case demands $\dbf = (i,j)$ with $i,j\in\{1,2,3\}$, and $i \neq j$. Then, since $\Rstar \geq \frac{1}{\nu} \sum\limits_{i=1}^{\nu} R_{\dbf^{(i)}}^* $, a lower bound on  $\Rstar$ is obtained from normalizing the lower bound on the optimal sum rate $\sum\limits_{i=1}^{\nu} R_{\dbf^{(i)}}^* $ given by Theorem~\ref{thm: general LB rule}. Specifically,

	\begin{itemize}
		\item Case $(i)$: For $\nu=1$ consecutive demand $\dbf^{(1)} = (i,j)$ with receiver subset $\mathcal S_{1} = \{r_1,r_2\}$, based on  Theorem~\ref{thm: general LB rule}, we have $\mathcal D_{{\mathcal S}_{1}}^{(1)}=\{i,j\}$ and  $\widetilde{\mathcal S} = \{r_1,r_2\}$. Therefore,
		\begin{align}
			\Rstar	 
			& \geq R_{\dbf^{(1)}}^*\notag\\
			& \geq     H\Big( \Big\{ \Xsf_d: d\in \mathcal D_{{\mathcal S}_{1}}^{(1)}\Big\}   \Big)  
			-  |\, \widetilde {\mathcal S}  \,|  M  
			+     H\Big(\Big\{\Xsf_d: d\in  {\mathcal D}_{ \widetilde{\mathcal S}}^{(1)}  \Big\}	\Big)   -    H\Big(\Big\{ \Xsf_d: d\in   {\mathcal D}_{{\mathcal S}_{1}}^{(1)} \Big \} \Big)  \notag\\
			& = H(\Xsf_i, \Xsf_j) \,- \,2 \,M  .  \label{eq:LB1 three}
		\end{align}
		
		\item Case $(ii)$: For $\nu=2$ consecutive demands $\dbf^{(1)} = (i,j)$ and $\dbf^{(2)} = (j,i)$ with corresponding receiver subsets $\mathcal S_{1} = \mathcal S_{2} = \{r_1\}$, based on  Theorem~\ref{thm: general LB rule}, we have $\mathcal D_{{\mathcal S}_{1}}^{(1)}=\{i\}$, $\mathcal D_{{\mathcal S}_{2}}^{(2)}=\{j\}$, $\widetilde{\mathcal S} = \{r_1\}$, and ${\mathfrak X}_{1} = \{\Xsf_i\}$. Therefore,
		\begin{align}
			\Rstar	 
			& \geq \frac{1}{2}(R_{\dbf^{(1)}}^*+R_{\dbf^{(2)}}^*  )\notag\\	 
			& \geq 
			\frac{1}{2} \;\bigg[    H\Big( \Big\{ \Xsf_d: d\in \mathcal D_{{\mathcal S}_{1}}^{(1)}\Big\}   \Big)  +  H\Big( \Big\{ \Xsf_d: d\in \mathcal D_{{\mathcal S}_{2}}^{(2)}\Big\}   \Big|  	{\mathfrak X}_{t_1}   \Big)  
			-  |\, \widetilde {\mathcal S}  \,|  M  \notag\\
			& \qquad +     H\Big(\Big\{\Xsf_d: d\in  {\mathcal D}_{ \widetilde{\mathcal S}}^{(1)}\cup {\mathcal D}_{ \widetilde{\mathcal S}}^{(2)}  \Big\}	\Big)   -    H\Big(\Big\{ \Xsf_d: d\in   {\mathcal D}_{{\mathcal S}_{1}}^{(1)}\cup   {\mathcal D}_{{\mathcal S}_{2}}^{(2)} \Big \} \Big) 
			\; \bigg]   \notag\\
			& = \frac{1}{2} \;\Big[    H (  \Xsf_i  )  +  H(   \Xsf_j |  	\Xsf_i  )  
			\,-\,    M  \,+\,     H (\Xsf_i,	\Xsf_j)   -    H (\Xsf_i,	\Xsf_j)  \Big]   \notag\\
			& = \frac{1}{2} \;\Big(      H(   \Xsf_i,  	\Xsf_j )  	\,-\,    M   \Big) . \label{eq:LB2 three}
		\end{align}
		
		\item Case $(iii)$: For $\nu=3$ consecutive demands $\dbf^{(1)} = (i,j)$, $\dbf^{(2)} = (j,k)$ and $\dbf^{(3)} = (k,i)$ with  receiver subsets $\mathcal S_{1} = \mathcal S_{2} = \mathcal S_{3} = \{r_1\}$, based on  Theorem~\ref{thm: general LB rule}, we have $\mathcal D_{{\mathcal S}_{1}}^{(1)}=\{i\}$, $\mathcal D_{{\mathcal S}_{2}}^{(2)}=\{j\}$, $\mathcal D_{{\mathcal S}_{3}}^{(3)}=\{k\}$,  $\widetilde{\mathcal S} = \{r_1\}$, ${\mathfrak X}_{1} = \{\Xsf_i\}$, and ${\mathfrak X}_{2} = \{\Xsf_i,\Xsf_j\}$. Therefore,
		\begin{align}
			\Rstar	 
			& \geq \frac{1}{3}(R_{\dbf^{(1)}}^*+R_{\dbf^{(2)}}^* +R_{\dbf^{(3)}}^* )\notag\\	 
			& \geq 
			\frac{1}{3} \;\bigg[   \sum\limits_{p=1}^3 H\Big( \Big\{ \Xsf_d: d\in \mathcal D_{{\mathcal S}_{p}}^{(p)}\Big\} \Big|    {\mathfrak X}_{p},\dots,	{\mathfrak X}_{{p-1}}  \Big)    
			-  |\, \widetilde {\mathcal S}  \,|  M  \notag\\
			& \qquad +     H\Big(\Big\{\Xsf_d: d\in  \bigcup\limits_{p=1}^3  {\mathcal D}_{ \widetilde{\mathcal S}}^{(p)}  \Big\}	\Big)   -    H\Big(\Big\{ \Xsf_d: d\in  \bigcup\limits_{p=1}^3  {\mathcal D}_{{\mathcal S}_{p}}^{(p)} \Big \} \Big) 
			\; \bigg]   \notag\\
			& = \frac{1}{3} \;\Big[    H (  \Xsf_i  )  +  H(   \Xsf_j |  	\Xsf_i  )  +  H(   \Xsf_k |  	\Xsf_i,	\Xsf_j  )  
			-   M  +    H (\Xsf_1,	\Xsf_2,\Xsf_3)   -     H (\Xsf_1,	\Xsf_2,\Xsf_3) \Big]   \notag\\
			& = \frac{1}{3} \;\Big(      H(   \Xsf_1,  	\Xsf_2 ,	\Xsf_3 )  	\,-\,    M   \Big)  . \label{eq:LB3 three}
		\end{align}	
		
		\item Case $(iv)$: For $\nu=2$ consecutive demands $\dbf^{(1)} = (i,j)$ and $\dbf^{(2)} = (k,i)$ with receiver subsets $\mathcal S_{1} =\{r_1\}$ and $\mathcal S_{2} = \{r_2\}$, based on  Theorem~\ref{thm: general LB rule}, we have $\mathcal D_{{\mathcal S}_{1}}^{(1)}=\mathcal D_{{\mathcal S}_{2}}^{(2)}=\{i\}$, $\widetilde{\mathcal S} = \{r_1,r_2\}$ and ${\mathfrak X}_{1} = \emptyset$. Therefore,
		
		\begin{align}
			\Rstar	 
			& \geq \frac{1}{2}(R_{\dbf^{(1)}}^*+R_{\dbf^{(2)}}^* )\notag\\	 
			&\frac{1}{2} \;\bigg[    H\Big( \Big\{ \Xsf_d: d\in \mathcal D_{{\mathcal S}_{1}}^{(1)}\Big\}   \Big)  +  H\Big( \Big\{ \Xsf_d: d\in \mathcal D_{{\mathcal S}_{2}}^{(2)}\Big\}   \Big|  	{\mathfrak X}_{1}   \Big)  
			-  |\, \widetilde {\mathcal S}  \,|  M  \notag\\
			& \qquad +     H\Big(\Big\{\Xsf_d: d\in  {\mathcal D}_{ \widetilde{\mathcal S}}^{(1)}\cup {\mathcal D}_{ \widetilde{\mathcal S}}^{(2)}  \Big\}	\Big)   -    H\Big(\Big\{ \Xsf_d: d\in   {\mathcal D}_{{\mathcal S}_{1}}^{(1)}\cup   {\mathcal D}_{{\mathcal S}_{2}}^{(2)} \Big \} \Big) 
			\; \bigg]   \notag\\
			& = \frac{1}{2} \;\Big[    H (  \Xsf_i  )  +  H(   \Xsf_i    )  
			\,-\,    2 M  \,+\,     H (\Xsf_1, 	\Xsf_2,\Xsf_3)   -    H (\Xsf_i )  \Big]   \notag\\
			& = \frac{1}{2} \;\Big(      H(   \Xsf_1, 	\Xsf_2 ,\Xsf_3 )  +  H(   \Xsf_i    ) 	\,-\,   2 M   \Big) .\label{eq:LB4 three}
		\end{align}
		
	\end{itemize}
	A lower bound on the optimal peak rate-memory function is given by eqs.~\eqref{eq:LB1 three}-\eqref{eq:LB4 three} for $i,j\in\{1,2,3\}$.

	\section{Proof of Theorem \ref{thm:optimality three} }\label{app:optimality three} 
	In order to quantify the rate gap to optimality of the proposed GW-MR scheme, we need to compare the peak rate achieved by the GW-MR scheme, $\RGW$ defined in \eqref{eq:ach peak GWMR three}, with a lower bound on the optimal peak rate-memory function, $\Rstar$, for different cache sizes. As per Theorem \ref{thm:LowerBound three}, a lower bound on $\Rstar$ for the setting with three files and two receivers is given by
	\begin{equation}
		\Rlb =
		\begin{cases}
			\max\limits_{i,j} H(\Xsf_i, \Xsf_j)  \,-\,2 M,        
			& \; M \in \Big[0,\eta\Big)  \\
			\frac{1}{2}  H(\Xsf_1, \Xsf_2, \Xsf_3) \,+\, \frac{1}{2}    \max\limits_{i}  H(\Xsf_i)  -M,				   
			& \;   M \in \Big[\eta, \, \zeta \Big)\\
			\frac{1}{3}\Big(H(\Xsf_1, \Xsf_2 , \Xsf_3)\,-\,M\Big)   ,				     
			& \;  M \in \Big[\zeta, \,  H(\Xsf_1,\Xsf_2,\Xsf_3)  \Big] . \label{eq: LB three appendix}
		\end{cases} 	
	\end{equation}
	%
	%
	where
	\begin{align}
		&\eta \triangleq \max\limits_{i,j}\; H(\Xsf_i,\Xsf_j)- \frac{1}{2} H(\Xsf_1,\Xsf_2,\Xsf_3) -\frac{1}{2}  \max\limits_{i} \; H(\Xsf_i), \label{eq:eta}\\
		&\zeta \triangleq \frac{1}{4}H(\Xsf_1,\Xsf_2,\Xsf_3) +  \frac{3}{4}  \max\limits_{i} \; H(\Xsf_i).   \label{eq:zeta}
	\end{align}
	Note that based on \eqref{eq:ach peak GWMR three}, for any $\Rscr\in \GWregionS$ defined in \eqref{eq:symmetric GW}, the gap to optimality satisfies 
	\begin{align}
		\RGW &-   \Rlb  \leq \Rach(M,\Rscr) -   \Rlb,
	\end{align}
	where $\Rach(M,\Rscr)$ is given in \eqref{eq:rate three}.  
	Therefore, in the following, we quantify the gap $\Rach(M,\Rscr) -   \Rlb$ for any $\Rscr\in\GWregionS$ such that  $\R_0+3\R'+3\R =H(\Xsf_1,\Xsf_2,\Xsf_3)$. From the analysis that follows it will become clear that this choice of $\Rscr$ is sufficient to achieve optimality over a certain region of the memory. Let
	\begin{align}
		&\eta_{\Rscr} \triangleq  \frac{1}{2}\R'   
		,\quad
		\zeta_{\Rscr} \triangleq \R_0 + \frac{3}{2}\R'+\frac{3}{2}\R     
		,\quad  
		\chi_{\Rscr} \triangleq  \R_0 + 3\R'+\frac{3}{2}\R . 
		\label{eq:eta R}
	\end{align}
	From Lemma~\ref{lemma:threefiles} in Appendix~\ref{app:lemma3} it follows that 
	$\eta_{\Rscr}\leq \eta  \leq   \min\{\zeta, \zeta_{\Rscr} \}    \leq \max\{\zeta, \zeta_{\Rscr} \}  \leq \chi_{\Rscr}$, 
	and we have 
	\begin{itemize}
		\item[(i)] When $M \in \Big[0, \,  \eta_{\Rscr}  \Big)$,
		\begin{align}
			\Rach(M,\Rscr) - \Rlb & \leq   \R_0  + 3\R'+2\R -  2 M -\Big( \max\limits_{i,j}\; H(\Xsf_i,\Xsf_j) -2 M\Big) \notag\\
			& \stackrel{(a)}{=}   H(\Xsf_1,\Xsf_2,\Xsf_3)  -\R  -  \max\limits_{i,j}\; H(\Xsf_i,\Xsf_j) \notag\\
			& =   \min\limits_{i}  H(\Xsf_i|\Xsf_j, \Xsf_k)  -\R  \notag\\
			&  \stackrel{(b)}{ \leq} \frac{1}{2} \min_i  H(\Xsf_j,\Xsf_k|\Xsf_i)-\R ,  \notag  
		\end{align}
		where $(a)$ follows from $\R_0  + 3\R'+2\R = H(\Xsf_1,\Xsf_2,\Xsf_3)  -\R$, and $(b)$ follows from Lemma~\ref{lemma:lem4} in Appendix~\ref{app:lemma3}.

		\item[(ii)]  When $M \in \Big[ \eta_{\Rscr}    ,\,\eta     \Big)$,
		\begin{align}
			\Rach(M,\Rscr) - \Rlb & \leq   \R_0  + \frac{5}{2}\R'+2\R -   M -\Big(  \max\limits_{i,j} H(\Xsf_i,\Xsf_j)  -2 M\Big) \notag\\
			& = H(\Xsf_1, \Xsf_2,\Xsf_3)  -\, \frac{1}{2}\R'-\R \, -     \max\limits_{i,j} H(\Xsf_i,\Xsf_j) - M \notag\\
			& =  \min\limits_{i}  H(\Xsf_i|\Xsf_j, \Xsf_k)  -\, \frac{1}{2}\R'-\R \,-  M \notag\\   
			& \stackrel{(c)}{\leq}  \min\limits_{i}  H(\Xsf_i|\Xsf_j, \Xsf_k)-  \R' -  \R \notag\\
			& \stackrel{(d)}{\leq} \frac{1}{2} \min\limits_{i} H(\Xsf_j,\Xsf_k|\Xsf_i)-\R ,   \notag 
		\end{align}
		where $(c)$ is due to  $M\geq\eta_{\Rscr}$, and $(d)$ follows from Lemma~\ref{lemma:lem4} in Appendix~\ref{app:lemma3}.

		\item[(iii)] When $M \in \Big[ \eta     ,\,   \min\{\zeta, \zeta_{\Rscr} \}     \Big)$, 
		\begin{align}
			\Rach(M,\Rscr) - \Rlb & \leq   \R_0  + \frac{5}{2}\R'+2\R -   M -\frac{1}{2}\Big( H(\Xsf_1, \Xsf_2,\Xsf_3)  + \max\limits_{i}  H(\Xsf_i)  -2 M\Big) \notag\\
			& = H(\Xsf_1, \Xsf_2,\Xsf_3)  -\, \frac{1}{2}\R'-\R \,   -\frac{1}{2}  H(\Xsf_1, \Xsf_2,\Xsf_3)  -\frac{1}{2} \max\limits_{i}  H(\Xsf_i)    \notag\\
			& = \frac{1}{2} \min\limits_{i}     H(\Xsf_j, \Xsf_k|\Xsf_i)  - \frac{1}{2}\R' -\R \leq \frac{1}{2} \min\limits_{i}     H(\Xsf_j, \Xsf_k|\Xsf_i)    -\R  . \notag 
		\end{align}

		\item[(iv)] When $M \in \Big[   \min\{\zeta, \zeta_{\Rscr} \}    ,\,   \max\{\zeta, \zeta_{\Rscr} \}     \Big)$,
		\begin{itemize}
			\item If $\min\{ \zeta, \zeta_{\Rscr} \} =   \zeta$, for $M \in \Big[ \zeta, \zeta_{\Rscr}\Big)$
			\begin{align}
				\Rach(M,\Rscr) - \Rlb & \leq   \R_0  + \frac{5}{2}\R'+2\R -   M -\frac{1}{3}\Big(H(\Xsf_1, \Xsf_2 , \Xsf_3)\,-\,M\Big)  \notag\\
				& = H(\Xsf_1, \Xsf_2,\Xsf_3)  -\, \frac{1}{2}\R'-\R \,   -\frac{1}{3}  H(\Xsf_1, \Xsf_2,\Xsf_3)  -\frac{2}{3} M \notag\\
				& \stackrel{(e)}{\leq} \frac{2}{3}    H(\Xsf_1, \Xsf_2,\Xsf_3)  -\, \frac{1}{2}\R'-\R -\frac{1}{6}  H(\Xsf_1,\Xsf_2,\Xsf_3) - \frac{1}{2}  \max\limits_{i} \; H(\Xsf_i)
				\notag \\
				&= \frac{1}{2}    \min\limits_{i}     H(\Xsf_j, \Xsf_k|\Xsf_i)  - \frac{1}{2}\R' -\R \leq \frac{1}{2} \min\limits_{i}     H(\Xsf_j, \Xsf_k|\Xsf_i)    -\R  , \notag 
			\end{align}
			where $(e)$ follows since $M\geq\zeta$.

			\item  If $\min\{ \zeta, \zeta_{\Rscr} \} =   \zeta_{\Rscr}$, for $M \in \Big[ \zeta_{\Rscr}, \zeta\Big)$
			\begin{align}
				\Rach(M,\Rscr) - \Rlb & \leq  \frac{2}{3} \R_0  + 2\R'+\frac{3}{2}\R   -  \frac{2}{3} M -\frac{1}{2}\Big( H(\Xsf_1, \Xsf_2,\Xsf_3)  + \max\limits_{i}  H(\Xsf_i) -2 M\Big) \notag\\
				& = \frac{2}{3}  H(\Xsf_1, \Xsf_2,\Xsf_3)  -\frac{1}{2}\R - \frac{1}{2}H(\Xsf_1, \Xsf_2 , \Xsf_3) -\frac{1}{2} \max\limits_{i}  H(\Xsf_i)    +  \frac{1}{3}  M \notag\\
				& \stackrel{(f)}{\leq}    \frac{1}{6}  H(\Xsf_1, \Xsf_2,\Xsf_3)-\frac{1}{2} \max\limits_{i}  H(\Xsf_i)  -\frac{1}{2}\R \notag\\
				& \qquad+  \frac{1}{3} \Big(  \frac{1}{4} H(\Xsf_1, \Xsf_2,\Xsf_3)+ \frac{3}{4}\max\limits_{i}  H(\Xsf_i)  \Big) \notag\\
				& =  \frac{1}{4}  \min\limits_{i}  H(\Xsf_j , \Xsf_k|\Xsf_i)    -\frac{1}{2}\R , \notag
			\end{align}
			where $(f)$ follows since $M<\zeta$.
			
		\end{itemize}

		\item[(v)] When $M \in \Big[ \max\{ \zeta, \zeta_{\Rscr} \}   ,\,   \chi_{\Rscr}    \Big)$,
		\begin{align}
			\Rach(M,\Rscr) - \Rlb & \leq   \frac{2}{3}\R_0 +2\R'+\frac{3}{2}\R-\frac{2}{3}M - \frac{1}{3}\Big(H(\Xsf_1, \Xsf_2 , \Xsf_3)-M\Big)   \notag\\
			& = \frac{2}{3}  H(\Xsf_1, \Xsf_2,\Xsf_3)  -\frac{1}{2}\R- \frac{1}{3}H(\Xsf_1, \Xsf_2 , \Xsf_3)  -  \frac{1}{3}  M \notag\\
			& \stackrel{(g)}{\leq} \frac{1}{3}  H(\Xsf_1, \Xsf_2,\Xsf_3)  -\frac{1}{2}\R -  \frac{1}{3}  \Big(  \frac{1}{4}H(\Xsf_1,\Xsf_2,\Xsf_3) +  \frac{3}{4}  \max\limits_{i} H(\Xsf_i)    \Big)   \notag\\
			& =  \frac{1}{4}   \min\limits_{i}  H(\Xsf_j , \Xsf_k|\Xsf_i)   -\frac{1}{2}\R  , \notag
		\end{align}
		where  $(g)$ follows form the fact that $M\geq\zeta$.

		\item[(vi)] When $M \in \Big[ \chi_{\Rscr}  ,\,  H(\Xsf_1,\Xsf_2,\Xsf_3)\Big]$,
		\begin{align}
			\Rach(M,\Rscr) - \Rlb &\leq  \frac{1}{3} \R_0  + \R'+\R -  \frac{1}{3}M -  \frac{1}{3}\Big(H(\Xsf_1, \Xsf_2 , \Xsf_3)-M\Big) = 0 .\notag
		\end{align}

	\end{itemize}
	
	Based on the analysis given above it is observed that for any symmetric $\Rscr\in\GWregionS$ that satisfies $\R_0+3\R'+3\R = H(\Xsf_1, \Xsf_2,\Xsf_3)$,  we have $\Rach(M,\Rscr) = \Rstar=\Rlb$ when $M\in[   \chi_{\Rscr}  , \,  H(\Xsf_1, \Xsf_2,\Xsf_3) ]$. In order to maximize the region of memory where the proposed GW-MR scheme is optimal, we select the operating point $\Rscr$ to be the one that minimizes $ \chi_{\Rscr} = H(\Xsf_1, \Xsf_2,\Xsf_3) - \frac{3}{2}\R$, or equivalently the one that maximizes the private description rate $\R$. Note that this choice of $\R$ also reduces the rate gap to optimality in other regions of the memory.

	\section{Lemma \ref{lemma:threefiles} and Lemma \ref{lemma:lem4}}\label{app:lemma3} 
	In the following we provide two lemmas that are used in Appendix~\ref{app:optimality three}.
	\begin{lemma}\label{lemma:threefiles}
		For any symmetric rate-tuple $\Rscr\in\GWregionS$ satisfying $\R_0  + 3\R'+3\R = H(\Xsf_1,\Xsf_2,\Xsf_3)$, and for $\eta$ and $\zeta$ defined in \eqref{eq:eta} and \eqref{eq:zeta}, and for $\eta_{\Rscr}$, $\zeta_{\Rscr}$ and $\chi_{\Rscr}$ defined in \eqref{eq:eta R} we have
		\begin{align}
			\eta_{\Rscr} \leq \eta \leq   \min\{\zeta, \zeta_{\Rscr} \}    \leq \max\{\zeta, \zeta_{\Rscr} \}  \leq \chi_{\Rscr}.
		\end{align}
	\end{lemma}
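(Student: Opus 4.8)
The plan is to reduce everything to the Gray--Wyner constraints together with the defining identity $\R_0+3\R'+3\R = H(\Xsf_1,\Xsf_2,\Xsf_3)$, and then verify the four inequalities one at a time. First I would record, for the three-file Gray--Wyner network restricted to the symmetric plane, the basic consequences of $\Rscr\in\GWregionS$: reconstructibility of $\Xsf_i$ from $\{\Wsf_{123},\Wsf_{ij},\Wsf_{ik},\Wsf_i\}$ forces $\R_0+2\R'+\R\ge H(\Xsf_i)$, reconstructibility of the pair $(\Xsf_i,\Xsf_j)$ forces $\R_0+3\R'+2\R\ge H(\Xsf_i,\Xsf_j)$ (the description $\Wsf_{ijk}$ together with all four $\Wsf$'s touching $\{i,j\}$ namely $\Wsf_{ij},\Wsf_{ik},\Wsf_{jk},\Wsf_i,\Wsf_j$ has total rate $\R_0+3\R'+2\R$), and of course $\R,\R'\ge 0$. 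Combining with $\R_0+3\R'+3\R = H(\Xsf_1,\Xsf_2,\Xsf_3)$ these become the clean inequalities $\R\le H(\Xsf_1,\Xsf_2,\Xsf_3)-\max_{i,j}H(\Xsf_i,\Xsf_j) = \min_i H(\Xsf_i\mid\Xsf_j,\Xsf_k)$ and $\R'+\R\ge H(\Xsf_1,\Xsf_2,\Xsf_3)-\max_i\big(\R_0+2\R'+\R\big)\ge \ldots$, i.e. lower bounds on the ``per-file'' and ``per-pair'' missing rates. These are exactly the ingredients needed.

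Next I would substitute the definitions. Recall $\eta_{\Rscr}=\frac12\R'$, $\zeta_{\Rscr}=\R_0+\frac32\R'+\frac32\R$, $\chi_{\Rscr}=\R_0+3\R'+\frac32\R$, and $\eta=\max_{i,j}H(\Xsf_i,\Xsf_j)-\frac12 H(\Xsf_1,\Xsf_2,\Xsf_3)-\frac12\max_iH(\Xsf_i)$, $\zeta=\frac14 H(\Xsf_1,\Xsf_2,\Xsf_3)+\frac34\max_iH(\Xsf_i)$. The inequality $\eta_{\Rscr}\le\eta$ is equivalent, after using $H(\Xsf_1,\Xsf_2,\Xsf_3)=\R_0+3\R'+3\R$ and $\max_{i,j}H(\Xsf_i,\Xsf_j)\le\R_0+3\R'+2\R$, to something of the form $\frac12\R'\le \R_0+3\R'+2\R-\frac12(\R_0+3\R'+3\R)-\frac12\max_iH(\Xsf_i)$, which rearranges to $\max_iH(\Xsf_i)\le\R_0+2\R'+\R$ — precisely the per-file Gray--Wyner bound. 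The middle inequality $\eta\le\min\{\zeta,\zeta_{\Rscr}\}$ splits into $\eta\le\zeta$ and $\eta\le\zeta_{\Rscr}$: the first is purely an entropy inequality, $\max_{i,j}H(\Xsf_i,\Xsf_j)\le\frac34 H(\Xsf_1,\Xsf_2,\Xsf_3)+\frac54\max_iH(\Xsf_i)$ up to rearrangement, which follows from $H(\Xsf_i,\Xsf_j)\le H(\Xsf_1,\Xsf_2,\Xsf_3)$ and submodularity-type bounds; the second again unwinds to the per-file and per-pair Gray--Wyner constraints plus $\R'\ge0$. The last inequality $\max\{\zeta,\zeta_{\Rscr}\}\le\chi_{\Rscr}$ similarly splits: $\zeta_{\Rscr}\le\chi_{\Rscr}$ is just $\frac32\R'\le 3\R'$, i.e. $\R'\ge0$, trivially, and $\zeta\le\chi_{\Rscr}$ uses the per-file bound once more together with $\R'\ge0$, since $\zeta=\frac14 H(\Xsf_1,\Xsf_2,\Xsf_3)+\frac34\max_iH(\Xsf_i)\le\frac14(\R_0+3\R'+3\R)+\frac34(\R_0+2\R'+\R)=\R_0+\frac{9}{4}\R'+\frac{3}{2}\R\le\R_0+3\R'+\frac32\R=\chi_{\Rscr}$.

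The main obstacle I anticipate is bookkeeping rather than any deep idea: making sure each $\max_{i,j}$ and $\max_i$ is controlled by the \emph{right} Gray--Wyner rate constraint (the per-pair bound must be applied to the pair achieving the maximum, and likewise for the per-file bound), and checking that no step secretly needs a constraint stronger than ``reconstructibility'' — in particular that I never need the (unknown) optimal characterization of the three-file Gray--Wyner region, only the operational achievability/converse inequalities that hold for every $\Rscr\in\GWregionS$. I would therefore carry out each of the four chains above with the maximizing indices fixed explicitly, and present the calculations as a short sequence of displayed inequalities, flagging at each use which of the two structural bounds ($\R_0+2\R'+\R\ge H(\Xsf_i)$ or $\R_0+3\R'+2\R\ge H(\Xsf_i,\Xsf_j)$) is being invoked, and where $\R'\ge0$ enters.
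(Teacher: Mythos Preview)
Your overall plan and most of the steps line up with the paper's: verify the chain using the cut-set inequalities $\R_0+2\R'+\R\ge \max_i H(\Xsf_i)$ and $\R_0+3\R'+2\R\ge \max_{i,j}H(\Xsf_i,\Xsf_j)$, the sum-rate identity, and nonnegativity of $\R,\R'$. Your treatment of $\zeta\le\chi_{\Rscr}$ is identical to the paper's, $\zeta_{\Rscr}\le\chi_{\Rscr}$ is indeed just $\R'\ge 0$, and your remark that $\eta\le\zeta$ is a pure entropy fact (which the paper leaves implicit) is correct. For $\eta\le\zeta_{\Rscr}$ your route via the per-pair bound also works, though the paper instead uses subadditivity $\max_{i,j}H(\Xsf_i,\Xsf_j)\le 2\max_i H(\Xsf_i)$ followed by the per-file bound.

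The one genuine gap is in your argument for $\eta_{\Rscr}\le\eta$. You substitute the per-pair upper bound $\max_{i,j}H(\Xsf_i,\Xsf_j)\le\R_0+3\R'+2\R$ into $\eta$ and then reduce to the per-file bound. But $\max_{i,j}H(\Xsf_i,\Xsf_j)$ appears in $\eta$ with a \emph{positive} sign, so this substitution yields an \emph{upper} bound on $\eta$, not a lower bound; showing that $\eta_{\Rscr}$ lies below an upper bound on $\eta$ says nothing about $\eta_{\Rscr}\le\eta$. The paper proceeds differently here: it first rewrites
\[
\eta \;=\; \tfrac12\bigl(H(\Xsf_1,\Xsf_2,\Xsf_3)-\max_i H(\Xsf_i)\bigr)\;-\;\min_i H(\Xsf_i\mid\Xsf_j,\Xsf_k),
\]
and then upper-bounds $\max_i H(\Xsf_i)$ (which carries a \emph{negative} sign) via the per-file constraint, obtaining a legitimate lower bound $\eta\ge\tfrac12\R'+\R-\min_i H(\Xsf_i\mid\Xsf_j,\Xsf_k)$; the per-pair constraint is then invoked to compare $\R$ with $\min_i H(\Xsf_i\mid\Xsf_j,\Xsf_k)$. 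In your write-up, be explicit about the sign each entropy term carries in $\eta$ before deciding whether to plug in an upper or a lower bound.
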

	%
	%
	%

	\begin{proof}		
		For the three-file Gray-Wyner network, any achievable symmetric rate-tuple $\Rscr\in \GWregionS$ satisfies the following cut-set bounds
		\begin{align}
			&H(\Xsf_1,\Xsf_2,\Xsf_3) \leq   \R_0+3\R'+3\R, \label{eq:three1}\\
			&H(\Xsf_i,\Xsf_j) \leq \R_0+3\R'+2\R   ,\quad i, j \in \{1,2,3\} \label{eq:three2}\\
			&H(\Xsf_i) \leq      \R_0+2\R'+\R   .\quad\qquad i \in \{1,2,3\}  \label{eq:three3}
		\end{align}
		Therefore, for any $\Rscr$ satisfying $\R_0  + 3\R'+3\R = H(\Xsf_1,\Xsf_2,\Xsf_3)$, $\eta$ is lower bounded as follows 
		\begin{align}
			\eta &= \max\limits_{i,j}\; H(\Xsf_i,\Xsf_j)- \frac{1}{2} H(\Xsf_1,\Xsf_2,\Xsf_3) -\frac{1}{2}  \max\limits_{i} \; H(\Xsf_i)\notag\\
			&\stackrel{(a)}{=}  \max\limits_{i,j}\; H(\Xsf_i,\Xsf_j)- \Big( H(\Xsf_1,\Xsf_2,\Xsf_3) - \frac{1}{2} H(\Xsf_1,\Xsf_2,\Xsf_3)\Big) -\frac{1}{2}  \max\limits_{i} \; H(\Xsf_i)  \notag\\
			&=  \frac{1}{2} \Big(H(\Xsf_1,\Xsf_2,\Xsf_3) - \max\limits_{i} \; H(\Xsf_i)  \Big) -  \min\limits_{i} H(\Xsf_i|\Xsf_j,\Xsf_k)\notag\\
			&\stackrel{(b)}{\geq}  \frac{1}{2}\Big( H(\Xsf_1,\Xsf_2,\Xsf_3) - \R_0+2\R'+\R\Big) -  \min\limits_{i} H(\Xsf_i|\Xsf_j,\Xsf_k)\notag\\
			&\stackrel{(c)}{=}  \frac{1}{2}\R'+\R -  \min\limits_{i} H(\Xsf_i|\Xsf_j,\Xsf_k), \label{eq:lemma 3 middle step}
		\end{align}
		where $(a)$ follows from replacing $\frac{1}{2} H(\Xsf_1,\Xsf_2,\Xsf_3) = H(\Xsf_1,\Xsf_2,\Xsf_3)  -\frac{1}{2} H(\Xsf_1,\Xsf_2,\Xsf_3) $, and $(b)$ follows form \eqref{eq:three3}, and in $(c)$ we have used the fact that $\R_0  + 3\R'+3\R = H(\Xsf_1,\Xsf_2,\Xsf_3)$.  Note that for such $\Rscr$, and for any $i\in\{1,2,3 \}$ we have
		\begin{align}
			H(\Xsf_i|\Xsf_j,\Xsf_k) 
			&= H(\Xsf_1,\Xsf_2,\Xsf_3) -  H(\Xsf_j,\Xsf_k)   =\R_0  + 3\R'+3\R - H(\Xsf_j,\Xsf_k) \; \stackrel{(d)}{\geq} \; \R   ,\notag
		\end{align}
		where $(d)$ is due to $H(\Xsf_j,\Xsf_k)\leq \R_0  + 3\R'+2\R$  from \eqref{eq:three2}. Therefore,  $\R -  \min\limits_{i} H(\Xsf_i|\Xsf_j,\Xsf_k) \leq 0$, and from \eqref{eq:lemma 3 middle step} it follows that
		\begin{align}
			\eta\geq \frac{1}{2} \R' = \eta_{\Rscr}.\label{eq:lem eq1}
		\end{align}

		We can upper bound $\eta$ as follows.
		\begin{align}
			\eta &= \max\limits_{i,j}\; H(\Xsf_i,\Xsf_j)- \frac{1}{2} H(\Xsf_1,\Xsf_2,\Xsf_3) -\frac{1}{2}  \max\limits_{i} \; H(\Xsf_i)\notag\\
			&\stackrel{(e)}{\leq}  2\max\limits_{i}\; H(\Xsf_i)- \frac{1}{2} H(\Xsf_1,\Xsf_2,\Xsf_3) -\frac{1}{2}  \max\limits_{i} \; H(\Xsf_i) \notag\\
			&=  \frac{3}{2}\max\limits_i H(\Xsf_i) - \frac{1}{2}(\R_0+3\R'+3\R) \notag\\
			&\stackrel{(f)}{\leq} \R_0+\frac{3}{2}\R' \leq \zeta_{\Rscr}, \label{eq:lem eq2}
		\end{align}
		where $(e)$ follows since $ \max\limits_{i,j}\; H(\Xsf_i,\Xsf_j) \leq 2\max\limits_{i}\; H(\Xsf_i)$, and $(f)$ follows form \eqref{eq:three3} and since $\zeta_{\Rscr} = \R_0  + \frac{3}{2}\R'+\frac{3}{2}\R$. 
		
		We can upper bound $\zeta$ as follows.
		\begin{align}
			\zeta &=  \frac{1}{4}H(\Xsf_1,\Xsf_2,\Xsf_3) +  \frac{3}{4} \max\limits_{i}\; H(\Xsf_i) \notag\\
			&\stackrel{(g)}{\leq}  \frac{1}{4}H(\Xsf_1,\Xsf_2,\Xsf_3) +  \frac{3}{4} \Big( \R_0+2\R'+\R   \Big)  \notag\\
			&\stackrel{(h)}{=}  \R_0  + \frac{9}{4}\R'+\frac{3}{2}\R  \leq \chi_{\Rscr},\label{eq:lem eq3}
		\end{align}
		where $(g)$ follows from \eqref{eq:three3}, and $(h)$ follows from replacing $\R_0  + 3\R'+3\R = H(\Xsf_1,\Xsf_2,\Xsf_3)$  and since $ \chi_{\Rscr} = \R_0  + 3\R'+\frac{3}{2}\R$.

		From \eqref{eq:lem eq1}, \eqref{eq:lem eq2} and \eqref{eq:lem eq3} we conclude that $\eta_{\Rscr} \leq \eta \leq   \min\{\zeta, \zeta_{\Rscr} \}    \leq \max\{\zeta, \zeta_{\Rscr} \}  \leq \chi_{\Rscr} $.
	\end{proof}

	\vspace{0.2cm}

	\begin{lemma}\label{lemma:lem4}
		For the three random variables $\Xsf_1,\Xsf_2,\Xsf_3$ with distribution $p(x_1,x_2,x_3)$, we have  
		\begin{align}
			\min_{i} H(\Xsf_i|\Xsf_j,\Xsf_k) \leq \frac{1}{2} \min_{i} H(\Xsf_j,\Xsf_k|\Xsf_i)  .\label{eq:half}
		\end{align}	
		
	\end{lemma}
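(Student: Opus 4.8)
The plan is to reduce \eqref{eq:half} to an elementary inequality among the marginal, pairwise, and joint entropies of $\Xsf_1,\Xsf_2,\Xsf_3$ via the chain rule, and then close it with a single application of submodularity of entropy.

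First I would use the chain rule to rewrite, for each $i$ with $\{j,k\}=\{1,2,3\}\setminus\{i\}$, both $H(\Xsf_i|\Xsf_j,\Xsf_k) = H(\Xsf_1,\Xsf_2,\Xsf_3) - H(\Xsf_j,\Xsf_k)$ and $H(\Xsf_j,\Xsf_k|\Xsf_i) = H(\Xsf_1,\Xsf_2,\Xsf_3) - H(\Xsf_i)$. Minimizing over $i$ then turns the left-hand side of \eqref{eq:half} into $H(\Xsf_1,\Xsf_2,\Xsf_3)$ minus the \emph{largest} pairwise entropy, and the right-hand side into $H(\Xsf_1,\Xsf_2,\Xsf_3)$ minus the \emph{largest} marginal entropy. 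After this substitution and a routine rearrangement, \eqref{eq:half} is seen to be equivalent to the bound
\[
\tfrac12 H(\Xsf_1,\Xsf_2,\Xsf_3) + \tfrac12 \max_i H(\Xsf_i) \;\le\; \max_{\{j,k\}} H(\Xsf_j,\Xsf_k).
\]

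Next I would assume, without loss of generality (by relabeling), that $H(\Xsf_1)=\max_i H(\Xsf_i)$, and invoke the submodularity (Shannon) inequality for the two overlapping variable sets $\{\Xsf_1,\Xsf_2\}$ and $\{\Xsf_1,\Xsf_3\}$, whose union is all three variables and whose intersection is $\{\Xsf_1\}$. This gives $H(\Xsf_1,\Xsf_2)+H(\Xsf_1,\Xsf_3)\ge H(\Xsf_1,\Xsf_2,\Xsf_3)+H(\Xsf_1)$; bounding the left-hand side by twice the larger of these two pairwise entropies (which is at most $\max_{\{j,k\}}H(\Xsf_j,\Xsf_k)$) and dividing by two yields exactly the displayed inequality, and hence \eqref{eq:half}.

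The argument is short and uses only basic Shannon inequalities; no bound on the alphabet sizes and no structural assumption on $p(x_1,x_2,x_3)$ is needed. The one point requiring (minor) care — and the closest thing to an obstacle — is the choice of which two pairs to feed into submodularity: they must be the two two-element sets \emph{containing the variable of largest marginal entropy}, so that the intersection term is precisely $\max_i H(\Xsf_i)$ and the union term is the full joint entropy. Everything else is bookkeeping.
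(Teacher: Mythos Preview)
Your proposal is correct. The route differs in organization from the paper's proof, though both ultimately rest on the same Shannon inequality.

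The paper argues directly in conditional-entropy form: for any fixed $i$, the minimum over $\ell$ of $H(\Xsf_\ell\mid\Xsf_{\{1,2,3\}\setminus\{\ell\}})$ is at most the average $\tfrac12\big(H(\Xsf_j\mid\Xsf_k,\Xsf_i)+H(\Xsf_k\mid\Xsf_i,\Xsf_j)\big)$; then it drops $\Xsf_j$ from the conditioning in the second term and applies the chain rule to obtain $\tfrac12 H(\Xsf_j,\Xsf_k\mid\Xsf_i)$, valid for every $i$ and hence for the minimizing one. Your approach instead unwinds both sides to unconditional entropies, reducing \eqref{eq:half} to $\tfrac12 H(\Xsf_1,\Xsf_2,\Xsf_3)+\tfrac12\max_i H(\Xsf_i)\le \max_{\{j,k\}}H(\Xsf_j,\Xsf_k)$, and closes with submodularity on the two pairs containing the largest-entropy variable. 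The key inequality is the same in both cases (the paper's ``drop conditioning'' step is exactly $I(\Xsf_j;\Xsf_k\mid\Xsf_i)\ge 0$, i.e., submodularity), so the difference is purely in packaging: the paper's argument is a bit shorter and stays conditional throughout, while yours makes the underlying entropy inequality explicit and perhaps more transparent as to which Shannon inequality is doing the work.
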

	\begin{proof}
		\begin{align}
			\min_{i} H(\Xsf_i|\Xsf_j,\Xsf_k) &\stackrel{(i)}{\leq}  \frac{1}{2}  H(\Xsf_j|\Xsf_k,\Xsf_i) +\frac{1}{2}  H(\Xsf_k|\Xsf_i,\Xsf_j) \notag\\
			& \leq  \frac{1}{2}  H(\Xsf_j|\Xsf_k,\Xsf_i) +\frac{1}{2}  H(\Xsf_k|\Xsf_i) \notag\\
			&= \frac{1}{2}  H(\Xsf_j,\Xsf_k|\Xsf_i)   \notag\\
			&\leq \frac{1}{2}  \min_{i} H(\Xsf_j,\Xsf_k|\Xsf_i)  , \notag
		\end{align}	
		where $(i)$ is due to the fact that the minimum of $H(\Xsf_i|\Xsf_j,\Xsf_k)$ across all choices of $i,j$ and $k$ is no more than the arithmetic mean of two of them.
	\end{proof}

	\section{Proof of Corollary \ref{cor:special source three}}\label{app:special source three}
	For the 3-DMS considered in Corollary \ref{cor:special source three}, the symmetric rate-tuple $\Rscr$ with $\R_0 = H(\Vsf)$,  $\R_{12} = \R_{13} = \R_{23} =\R'=H_u$ and $\R_1  =\R_2 = \R_3=\R=H_x$  belongs to $\GWregionS$, for which
	\begin{align}
		&H(\Xsf_i|\Xsf_j, \Xsf_k) =  H(\Xsf_1,\Xsf_2, \Xsf_3) - H(\Xsf_j, \Xsf_k) = \R  ,\quad  i,j,k \in \{1,2,3\}\label{eq:ex three 1}\\
		&H(\Xsf_j,\Xsf_k| \Xsf_i) =  H(\Xsf_1,\Xsf_2, \Xsf_3) - H(\Xsf_j) = \R'+2\R  ,\quad  i,j,k \in \{1,2,3\}\label{eq:ex three 2}
	\end{align}
	
	For $\eta,\eta_{\Rscr} ,\zeta_{\Rscr}, \zeta,\chi_{\Rscr} $ defined in \eqref{eq:eta}, \eqref{eq:zeta} and \eqref{eq:eta R}, we have	$ \eta = \eta_{\Rscr}  \leq \zeta_{\Rscr} \leq \zeta\leq \chi_{\Rscr}$. Therefore,

	\begin{itemize}
		\item[(i)] When $M \in \Big[0, \,   \zeta_{\Rscr}  \Big)$, from Appendix~\ref{app:optimality three}, we have 
		\begin{align}
			\Rach(M,\Rscr) - \Rlb & \leq   \min\limits_{i}  H(\Xsf_i|\Xsf_j, \Xsf_k)  -\R =0 , \notag
		\end{align}		
		which follows from \eqref{eq:ex three 1}.
		
		\item[(ii)] $M \in \Big[ \zeta_{\Rscr}   ,\,  \chi_{\Rscr}    \Big]$, from Appendix~\ref{app:optimality three}, we have 
		\begin{align}
			\Rach(M,\Rscr) - \Rlb & \leq  \frac{1}{4}  \min\limits_{i}  H(\Xsf_j , \Xsf_k|\Xsf_i)    -\frac{1}{2}\R    =  \frac{1}{4} \R' , \notag
		\end{align}
		which follows from \eqref{eq:ex three 2}.
		
		\item[(iii)] $M \in \Big[ \chi_{\Rscr}  ,\,  H(\Xsf_1,\Xsf_2,\Xsf_3)\Big]$, the GW-MR scheme is optimal as proven in Appendix~\ref{app:optimality three}.
	\end{itemize}
\end{appendices}

 
\bibliographystyle{IEEEtran}
\bibliography{references.bib}

\end{document}